\newtheorem{fact}{Fact}
\newtheorem{claim}{Claim}
\newcommand{\new}[1]{\textcolor{black}{#1}}
\newcommand{\ident}[2]{\newcommand{#1}{\ensuremath{\mathrm{#2}\xspace}}}
\newcommand{\defDom}[2]{\newcommand{#1}{\ensuremath{\mathbb {#2}}}\xspace}
\defDom{\N}{N}
\defDom{\Z}{Z}
\defDom{\Q}{Q}
\defDom{\R}{R}
\defDom{\C}{C}
\defDom{\Pri}{P}
\defDom{\D}{D}
\newcommand{\aut}[1]{\ensuremath{\mathcal {#1}}\xspace}
\newcommand{\autclass}[1]{\ensuremath{\new{\mathfrak {#1}}}\xspace}
\newcommand{\boldclass}[3]{\ensuremath{\mathbf{#1}^{#2}_{#3}}}
\newcommand{\bsigma}[1]{\boldclass{\Sigma}{0}{#1}}
\newcommand{\bpi}[1]{\boldclass{\Pi}{0}{#1}}
\newcommand{\bdelta}[1]{\boldclass{\Delta}{0}{#1}}
\newcommand{\asigma}[1]{\boldclass{\Sigma}{1}{#1}}
\newcommand{\api}[1]{\boldclass{\Pi}{1}{#1}}
\newcommand{\adelta}[1]{\boldclass{\Delta}{1}{#1}}
\newcommand{\mathnm}[1]{#1}
\newcommand{\wadgeq}{\leq_{\mathrm{\mathnm{W}}}}
\newcommand{\mathcalsym}[1]{\ensuremath{\mathcal{#1}}\xspace}
\newcommand{\Aa}{\mathcalsym{A}}
\tikzstyle{ubrace} = [draw, thick, decoration={brace, mirror, raise=0.1cm}, decorate,
\tikzstyle{rbrace} = [draw, thick, decoration={brace, mirror, raise=0.1cm}, decorate,
\tikzstyle{obrace} = [draw, thick, decoration={brace, raise=0.1cm}, decorate,
\tikzstyle{lbrace} = [draw, thick, decoration={brace, raise=0.1cm}, decorate,
\newcommand{\fun}[3]{\ensuremath{#1\colon #2 \to #3}}
\newcommand{\parfun}[3]{\ensuremath{#1\colon #2 \rightharpoonup #3}}
\ident{\dom}{dom}
\ident{\rg}{rg}
\ident{\id}{id}
\ident{\sgn}{sgn}
\ident{\prof}{Profiles}
\ident{\type}{tp}
\ident{\types}{Tp}
\ident{\errTrace}{NotTrace}
\ident{\errProfile}{NoProfile}
\ident{\close}{cl}
\ident{\inter}{int}
\newcommand{\lang}{L}
\newcommand{\restr}{\!\upharpoonright}
\ident{\BC}{BC}
\newcommand{\trees}[1]{\ensuremath{\mathrm{Tr}_{{#1}}}\xspace}
\newcommand{\partrees}[1]{\ensuremath{\mathrm{PTr}_{{#1}}}\xspace}
\ident{\holes}{holes}
\ident{\graph}{Gph}
\ident{\cmp}{Comp}
\newcommand{\eve}{\ensuremath{\exists}\xspace}
\newcommand{\adam}{\ensuremath{\forall}\xspace}
\newcommand{\W}[2]{\ensuremath{W_{{#1},{#2}}}\xspace}
\ident{\RM}{RM}
\ident{\RMD}{RM\Delta}
\newcommand{\RModd}[1]{\ensuremath{\mathbf{\Sigma}^{RM}_{#1}}\xspace}
\newcommand{\RMeven}[1]{\ensuremath{\mathbf{\Pi}^{RM}_{#1}}\xspace}
\newcommand{\RMdelta}[1]{\ensuremath{\mathbf{\Delta}^{RM}_{#1}}\xspace}
\newcommand{\rmin}{i}
\newcommand{\rmax}{j}
\ident{\br}{Branching}
\ident{\level}{level}
\ident{\class}{class}
\ident{\hlev}{RMlev}
\newcommand{\tran}[1]{\xrightarrow{#1}}
\ident{\game}{\mathbf{G}}
\ident{\agame}{{\mathbf{G}}}
\ident{\rhogame}{{\mathbf{G_\rho}}}
\newcommand\thmheadfont{\upshape\bf}
\newcommand{\herewasA}{}
\renewcommand{\setminus}{-}
\newcommand{\dt}{\mathrm{det}}
\newcommand{\dL}{\mathtt{L}}
\newcommand{\dR}{\mathtt{R}}
\ident{\weak}{\mathit{w}}
\ident{\wclass}{wclass}
\ident{\leave}{\textit{leave}}
\ident{\stay}{\textit{stay}}
\newcommand{\wRMeven}[1]{\ensuremath{\mathbf{\Pi}^{\weak}_{#1}}\xspace}
\newcommand{\wRModd}[1]{\ensuremath{\mathbf{\Sigma}^{\weak}_{#1}}\xspace}
\newcommand{\wRMdelta}[1]{\ensuremath{\mathbf{\Delta}^{\!\weak}_{#1}}\xspace}
\newcommand{\wRM}{\textrm{\bf{RM}}^{\weak}}
\newcommand{\iloop}{
  \node (p)  [] at (0,0) {$p$};
  
  \path[->]
    (p) edge[in=-30, out=+30, looseness=0.8, loop, distance=2cm] node[xshift=-15pt] {$3$} node[xshift=7pt] {$w$} (p);   
}
\newcommand{\ijloop}{
  \node (p1)  [] at (6,0) {$p'$};
  \node (q1)  [] at (8,0) {$q$};
  \node       [] at (8.8,0) {$a$};
  \node (ql)  [] at (9,1) {$q_\dL$};
  \node (qr)  [] at (9,-1) {$q_\dR$};
  
  \path[->]
    (p1) edge node[yshift=5pt] {$w$} (q1)
    (q1) edge[out=0, in=-75] node {} (ql)
    (q1) edge[out=0, in=+75] node {} (qr)
    (ql) edge[in=+40, out=+125] node[yshift=+7pt] {$w_\dL$} node[yshift=-15pt, xshift=12pt] {$1$} (p1)
    (qr) edge[in=-40, out=-125] node[yshift=-7pt] {$w_\dR$} node[yshift=+15pt, xshift=12pt] {$4$} (p1);
}
\newcommand{\figILoop}{
\begin{figure}
\centering
\begin{tikzpicture}[->,>=stealth',scale=0.85, every node/.style={scale=0.77}]
\iloop
\ijloop
\end{tikzpicture}
\caption{A $3$-loop rooted in $p$ and a $(1,4)$-loop rooted in $p'$.}
\label{fig:iloop}
\end{figure}
}
\newcommand{\figEdelBasic}{
\begin{figure}
\centering
\begin{tikzpicture}[->,>=stealth',scale=0.85, every node/.style={scale=0.77}]

  \node (p1)  [] at (-5+6,0) {$p$};
  \node (q1)  [] at (-5+8,0) {$q^{\adam}$};
  \node (ql)  [] at (-5+9,1) {$q^{\adam}_\dL$};
  \node (qr)  [] at (-5+9,-1) {$q^{\adam}_\dR$};
  
  \path[->]
    (p1) edge (q1)
    (q1) edge[out=0, in=-75] (ql)
    (q1) edge[out=0, in=+75] (qr)
    (ql) edge[in=+40, out=+125] node[yshift=-15pt, xshift=12pt] {$1$} (p1)
    (qr) edge[in=-40, out=-125] node[yshift=+15pt, xshift=12pt] {$2$} (p1);

  \node (p1)  [] at (+5-6,0) {$p$};
  \node (q1)  [] at (+5-8,0) {$q^{\eve}$};
  \node (ql)  [] at (+5-9,1) {$q^{\eve}_\dL$};
  \node (qr)  [] at (+5-9,-1) {$q^{\eve}_\dR$};
  
  \path[->]
    (p1) edge (q1)
    (q1) edge[out=180, in=-105] (ql)
    (q1) edge[out=180, in=+105] (qr)
    (ql) edge[in=140, out=+55] node[yshift=-15pt, xshift=-12pt] {$1$} (p1)
    (qr) edge[in=-140, out=-55] node[yshift=+15pt, xshift=-12pt] {$0$} (p1);       
\end{tikzpicture}
\caption{$(0,1)$-edelweiss and $(1,2)$-edelweiss.}
\label{fig:edelweiss-basic}
\end{figure}
}
\newcommand{\edel}[8]{
  \node (p)  [] at (#1+0,0) {$p$};
  \node (e)  [] at (#1-1.5,1) {$q^{#2}$};  
  \node (a)  [] at (#1+1.5,1) {$q^{#3}$};     
  \node (el) [] at (#1-2.6,0.6) {$q^{#2}_\dL$};  
  \node (er) [] at (#1-1.5,2.15) {$q^{#2}_\dR$};  
  \node (al) [] at (#1+1.5,2.15) {$q^{#3}_\dL$};   
  \node (ar) [] at (#1+2.6,0.6) {$q^{#3}_\dR$};

  \node at (#1-1.5,0.3) {$#6$};
  \node at (#1-0.8,1.2) {$#7$};
  \node at (#1+0.8,1.2) {$#7$};
  \node at (#1+1.5,0.3) {$#8$};
  
  \path[->]
    (p) edge[in=-80, out=-30, looseness=0.8, loop, distance=2cm] node[above left] {$#4$} (p)
    (p) edge[in=-100, out=210, looseness=0.8, loop, distance=2cm] node[above right] {$#5$} (p)
    (p) edge node {} (e)
    (e) edge[out=150, in=75] node {} (el)
    (e) edge[out=150, in=215] node {} (er)
    (p) edge node {} (a)
    (a) edge[out=30, in=-35] node {} (al)
    (a) edge[out=30, in=105] node {} (ar)
    (al) edge[bend right] node {} (p)
    (ar) edge[bend left] node {} (p)
    (el) edge[bend right] node {} (p)
    (er) edge[bend left] node {} (p);   
}
\newcommand{\figEdelNew}{
\begin{figure}
\centering
\begin{tikzpicture}[->,>=stealth',scale=0.85, every node/.style={scale=0.77}]
\edel{-4}{\eve}{\adam}01234
\edel{ 4}{\adam}{\eve}12345
\end{tikzpicture}
\caption{$(0,4)$-edelweiss and $(1,5)$-edelweiss.}
\label{fig:edelweiss}
\end{figure}
}
\begin{document}

% Page heads
\markboth{A. Facchini, F. Murlak, M. Skrzypczak}{Index problems for game automata}

% Title portion
\title{Index problems for game automata}

\author{
Alessandro Facchini
\affil{IDSIA, Switzerland}
Filip Murlak
\affil{University of Warsaw}
Micha{\l} Skrzypczak\thanks{This author has been supported by Poland's National Science Centre (decision DEC-2012/05/N/ST6/03254).}
\affil{University of Warsaw}
}

\begin{abstract}
For a given regular language of infinite trees, one can ask about the
minimal number of priorities needed to recognize this language with a
non-deterministic,  alternating, or weak alternating parity
automaton. These questions are known as, respectively, the
non-deterministic, alternating, and weak Rabin-Mostowski index
problems. Whether they can be answered effectively is a long-standing
open problem, solved so far only for languages recognizable by
deterministic automata (the alternating variant trivializes). 

We investigate a wider class of regular languages, recognizable by
so-called game automata, which can be seen as the closure of
deterministic ones under complementation and composition. Game
automata are known to recognize languages arbitrarily high in the
alternating Rabin-Mostowski index hierarchy; that is, the alternating
index problem does not trivialize any more. 

Our main contribution is that all three index problems are decidable for languages recognizable by game automata. Additionally, we show that it is decidable whether a given regular language can be recognized by a game automaton.
\end{abstract}

%
% The code below should be generated by the tool at
% http://dl.acm.org/ccs.cfm
% Please copy and paste the code instead of the example below. 
%
 \begin{CCSXML}
<ccs2012>
<concept>
<concept_id>10003752.10003766.10003770</concept_id>
<concept_desc>Theory~Languages</concept_desc>
%Theory of computation~Automata over infinite objects</concept_desc>
<concept_significance>500</concept_significance>
</concept>
</ccs2012>
\end{CCSXML}

\ccsdesc[500]{Theory~Languages}
%Theory of computation~Automata over infinite objects}

\keywords{Automata over infinite trees, Alternation, Parity games, Rabin-Mostowski index}

\acmformat{Alessandro Facchini, Filip Murlak, Micha{\l} Skrzypczak, 2015. Index problems for game automata.}

\begin{bottomstuff}
The third author has been supported by Poland's National Science Centre (decision DEC-2012/05/N/ST6/03254).
\end{bottomstuff}

\maketitle

\section{Introduction}

Finite state automata running over infinite words and infinite binary trees lie at the core of the seminal works of B\"uchi~\cite{buchi_decision} and Rabin~\cite{rabin_s2s}. Known to be equivalent to the monadic second-order (MSO) logic and the modal $\mu$-calculus on both classes of structures, they subsume all standard linear and branching temporal logics. Because of these properties, they constitute fundamental tools in the theory of verification and model-checking, where the model-checking problem is reduced to the non-emptiness problem for automata: a given formula is translated into an automaton recognizing its models. From this perspective, a natural question is, which parameter in the definition of an automaton reflects the complexity of the language recognized by it. A na\"ive approach is to look at the number of states;  a more meaningful one is to consider the infinitary behaviour of \new{an automaton}, captured by the complexity of its  acceptance condition. 

Out of different acceptance conditions proposed for tree automata, B\"uchi, Muller, Rabin, Streett, and parity~\cite{mostowski_parity_games,mostowski_standard},  the last one has proved to be the most appropriate, as it enabled unveiling the subtle correspondences between games, automata, and the modal $\mu$-calculus~\cite{niwinski_rudiments,jutla_determinacy}. In a parity automaton, each state is assigned a natural number, called its priority. A sequence of states is said to be accepting if the lowest priority occurring infinitely often is even (min-parity condition). The pair $(\rmin,\rmax)$ consisting of the minimal priority $\rmin$ and the  maximal priority $\rmax$ in a given automaton is called its Rabin-Mostowski index. The index of a language is the minimal index of a recognizing automaton. Practical importance of this parameter comes from the fact that the best known algorithms deciding emptiness of (non-deterministic) automata are exponential in the number of priorities.

Given a regular tree language, what is the minimal range of priorities needed to recognize it? The answer to this question depends on which mode of computation is used, i.e, whether the automata are deterministic, non-deterministic, alternating, or weak alternating. While weak alternating and deterministic automata are weaker, non-deterministic and alternating parity automata recognize all regular tree languages. Still, alternating automata often need less priorities than non-deterministic ones. Thus, for each of these four classes there is the respective index problem.

\smallskip

\noindent   {\bf $\autclass{C}$ Index Problem:} \emph{Given $i,j$ and a regular language $L$, decide if $L$ is recognized by an automaton in class
$\autclass{C}$ of Rabin-Mostowski index $(i,j)$. }

\smallskip

The solution of this problem for the most important cases---when $\autclass{C}$ is the class of non-deterministic, alternating, or weak alternating automata---seems still far away. The results of~\cite{otto_recursion,kuster_first_level,walukiewicz_low_levels}, later extended in~\cite{bojanczyk_boolean}, show that it is decidable if a given regular tree language can be recognized by a combination of reachability and safety conditions (which corresponds to the Boolean combination of open sets). It is also known that the non-deterministic (min-parity) index problem is decidable for $(i,j) = (1,2)$, and for $(i,j) = (0,1)$ if the input language is given by an alternating automaton of index $(1,2)$~\cite{boom_phd,kuperberg_phd,colcombet_weak}. The non-deterministic index problem has been reduced to the uniform universality problem for so-called distance-parity automata~\cite{loding_index_to_bounds}, but decidability of \new{the} latter problem remains open. 

The index problems become easier when we restrict the input to languages recognized by deterministic automata. This is mostly due to the fact that in a deterministic automaton, each sub-automaton can be substituted with any automaton recognizing a language of the same index, without influencing the index of the whole language. This observation has been essential in providing a full characterization of the combinatorial structure of a language $L$ in terms of certain patterns in a deterministic automaton recognizing $L$. This so-called pattern method~\cite{murlak_phd} has been successfully used for solving all four index problems for languages recognized by deterministic automata:
\begin{theorem}
\label{thm:detindex}
For languages recognized by deterministic automata the following problems are decidable:
\begin{enumerate}
\item the \emph{deterministic} index problem~\cite{niwinski_relating};
\item the \emph{non-deterministic} index problem~\cite{urbanski_det_buchi,niwinski_deterministic}; 
\item the \emph{alternating} index problem~\cite{niwinski_gap}; and
\item the \emph{weak alternating} index problem~\cite{murlak_weak_index}.
\end{enumerate}
\end{theorem}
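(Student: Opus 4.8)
The plan is to prove each of the four items by the same overarching strategy, the \emph{pattern method}: for each target class one isolates a finite family of combinatorial configurations (``patterns'') that may occur inside the transition graph of a deterministic automaton $\aut{A}$ recognizing $L$, and one shows that $L$ belongs to the class precisely when none of the forbidden patterns is \emph{reachable} in $\aut{A}$ from its initial state. Reachability of a fixed pattern is decidable --- it is a plain graph-search on the product of $\aut{A}$ with a bounded gadget --- so the pattern characterization immediately yields the decision procedures. The structural feature that powers the whole argument is \emph{replaceability}: since in a deterministic automaton the run on each subtree is uniquely determined, one may excise the sub-automaton rooted at any state and graft in any other automaton recognizing a language of the same (or lower) complexity without affecting the index of $L$. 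This is exactly the closure property that fails for non-deterministic or alternating input automata, and it is why the restriction to deterministic $\aut{A}$ is essential.

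Next I would fix the patterns. For the \emph{deterministic} and \emph{non-deterministic} index problems the relevant objects are \emph{flowers}: nested chains of loops $\ell_i \subseteq \ell_{i+1} \subseteq \cdots \subseteq \ell_j$ around a common state in which the minimal priority on $\ell_k$ has parity $k\bmod 2$, so that a reachable $(i,j)$-flower witnesses hardness for the index one step above $(i,j)$ (for the non-deterministic problem one additionally tracks through which branch each loop passes, obtaining ``forked'' flowers). For the \emph{weak} index one uses the analogous chains restricted to \emph{weak} loops, i.e. loops confined to a single strongly connected component where the branching occurs. For the \emph{alternating} index one uses a single ``split'' pattern in the spirit of Niwi\'nski--Walukiewicz. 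Then one proves the two directions. \emph{Lower bound:} given a reachable forbidden pattern, build a continuous --- indeed automaton-definable --- reduction of a fixed complete language for the class just above the target into $L$, by steering an input word along the chain of loops; hardness of the canonical complete set then transfers to $L$. \emph{Upper bound:} assuming no forbidden pattern is reachable, perform surgery on $\aut{A}$ --- decompose it along its loop structure, use the absence of the pattern to bound, in each piece, the alternation of priority parities, and, invoking replaceability, re-assemble the pieces with a relabelled, bounded range of priorities (respectively as a deterministic, a non-deterministic, a weak, or an index-$(1,2)$ alternating automaton).

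The main obstacle is the upper-bound direction, specifically the bookkeeping showing that the absence of the forbidden pattern genuinely caps the priority range after restructuring; this demands a careful induction over the decomposition of $\aut{A}$ into its maximal loops, tracking how the loop structure interacts with the parity condition as sub-automata are relabelled and re-glued. For the \emph{alternating} index there is a second, more topological difficulty: one must prove the \emph{gap}, namely that a deterministic $L$ that is not recognizable by an alternating automaton of index $(1,2)$ is automatically $\api{1}$-hard, with no intermediate alternating levels occupied. Here one exploits most heavily the tree-like shape of the accepting runs of a deterministic automaton, analysing the set of runs via an unfolding / Banach--Mazur game argument to extract the coanalytic lower bound directly from the split pattern. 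Combining the gap with the flower characterizations for items~(1), (2) and~(4) gives, in each case, a decidable criterion on $\aut{A}$, completing the proof.
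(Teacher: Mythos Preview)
The paper does not prove Theorem~\ref{thm:detindex}: it is stated in the introduction as a summary of prior results, with each item attributed to the cited references and no argument given. There is therefore no ``paper's own proof'' to compare against. Your proposal is not competing with anything in this paper; it is a sketch of what the cited papers do.

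That said, your sketch of the pattern method is broadly faithful to the literature, and the paper itself names this as the method behind Theorem~\ref{thm:detindex}. A few points of calibration. For item~(3), the alternating index problem for deterministic languages genuinely \emph{trivializes} (as the abstract says): the Niwi\'nski--Walukiewicz gap theorem shows that a deterministic tree language is either weakly recognizable---hence sits at the very bottom of the alternating hierarchy---or is $\api{1}$-complete, with nothing in between; so the ``decision procedure'' is just the split-pattern test for weak recognizability, and there is no separate ladder of alternating levels to climb. Your description of a ``single split pattern'' is right, but the phrasing ``not recognizable by an alternating automaton of index $(1,2)$'' slightly undersells the dichotomy: the gap is between \emph{weak} (equivalently $\RMdelta{1}$) and $\api{1}$-complete. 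For item~(4), Murlak's characterization is more elaborate than ``weak loops confined to an SCC''---it tracks alternating chains of accepting and rejecting loops together with the branching structure---but your outline captures the spirit. The replaceability observation you highlight is exactly the property the present paper isolates and generalizes to game automata.
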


The pattern method cannot be applied in general to non-deterministic or alternating automata; the reason is that both these types of automata naturally implement set-theoretic union of languages and union is not an operation that preserves the index of languages. But how far can we push the pattern method beyond deterministic automata? 

In this paper we give a precise answer to this question. We present a syntactic class of automata for which substitution preserves the index of languages---we call them \emph{game automata}---and show that it is the largest such class satisfying natural closure conditions. Relying on the first property we extend Theorem~\ref{thm:detindex} (2), (3), (4) and prove the following.  
\begin{theorem}\label{thm:gameindex}
For languages recognized by game automata the following problems are decidable:
%Let $\aut{A}$ be a game automaton, and $q$ one of its states. %The following are decidable:
\begin{enumerate}
\item the \emph{non-deterministic} index problem,
\item the \emph{alternating} index problem,
\item the \emph{weak alternating} index problem.
\end{enumerate}
\end{theorem}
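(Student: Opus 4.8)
We may assume the input language $L$ is presented by a game automaton $\aut{A}$. The plan is to lift the pattern method behind Theorem~\ref{thm:detindex} from deterministic to game automata; the engine is the \emph{substitution property} of game automata --- in a game automaton the sub-automaton rooted at a state $q$ may be replaced by any automaton recognizing a language of the same index without affecting the index of the whole, uniformly for the deterministic, non-deterministic, alternating and weak alternating notions of index. Granting this, the index of $L$ depends only on a finite datum, the \emph{index profile} of $\aut{A}$: the indices of all sub-languages, arranged along the transition structure of $\aut{A}$. Hence there is a finite catalogue of \emph{patterns} in game automata such that, for a fixed target $(i,j)$, the language $L$ has index $(i,j)$ iff $\aut{A}$ avoids every pattern witnessing hardness strictly above $(i,j)$; deciding the index problem then reduces to searching $\aut{A}$ for finitely many patterns.

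Concretely, first we normalize $\aut{A}$ (all states productive, no redundant priorities) and make explicit its DAG of strongly connected components. Outside the SCCs the structure is acyclic, so sub-languages --- hence the index profile --- are computed bottom-up; all the difficulty is internal to an SCC, where the relevant sub-languages are mutually recursive. Every SCC of a game automaton has a rigid shape: deterministic $\dL/\dR$ splits interleaved with $\exists$- and $\forall$-branches carrying priorities. We analyse it through the acceptance game, classifying the loops it contains by the multiset of priorities they traverse and by whether they pass through $\exists$-choices only, $\forall$-choices only, or a genuine alternation of both. These annotated loops are the atoms from which patterns are built, and the index contributed by an SCC is read off combinatorially from them, not from further sub-languages, so the recursion over the SCC-DAG is well-founded.

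For each of the three problems we then isolate the relevant forbidden patterns and prove a two-sided characterization. \emph{Hardness:} a pattern of the appropriate type, with the index profile inlined into its loops, yields --- reading the acceptance game as a reduction --- a continuous reduction from a language complete for the level $(i,j)$, so $L$ is not in the class below. \emph{Easiness:} if $\aut{A}$ avoids all such patterns, we build an automaton of index $(i,j)$ by recursion over the SCC-DAG, invoking the substitution property at each step so that composing the pieces does not raise the index. For the non-deterministic and weak alternating problems the patterns refine those of~\cite{niwinski_deterministic} and~\cite{murlak_weak_index} by allowing a loop to sit on either side of a $\dL/\dR$ split and to pass through $\exists$-branches; for the alternating problem --- where, unlike the deterministic case~\cite{niwinski_gap}, the hierarchy does not collapse --- the patterns are genuinely new \emph{alternating flowers} that alternate $\exists$- and $\forall$-loops of increasing priority, and their presence is exactly what pushes $L$ up the alternating index hierarchy.

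The main obstacle is the easiness direction for the alternating index: pinning down the exact family of alternating-flower patterns, showing that their absence supplies enough structure to assemble a low-index alternating automaton, and controlling the interaction of the two dimensions along which a pattern can spread --- the $\exists/\forall$ alternation of the acceptance game and the $\dL/\dR$ branching of the tree --- which has no analogue in the deterministic setting and is the reason the alternating hierarchy does not collapse.
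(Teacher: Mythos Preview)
Your high-level strategy --- SCC decomposition, patterns for hardness, inductive automaton construction for easiness --- matches the paper for the alternating and weak alternating cases, but two points deserve correction, and the non-deterministic case is handled quite differently.

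\textbf{Substitution is a corollary, not a premise.} You present the substitution property (replacing a sub-automaton by one of equal index leaves the overall index unchanged) as the \emph{engine} that justifies the pattern method. In the paper the logic runs the other way: the decision procedures are developed first, and the substitution property is then \emph{derived} from the resulting pattern characterizations (Propositions~\ref{pro:nondet-preserve}, \ref{pro:alt-preserve}, \ref{pro:weak-preserve}). Your inference ``substitution holds, hence the index depends only on a finite profile, hence there is a finite catalogue of patterns'' does not go through: even granting substitution, the profile consists of \emph{indices of sub-languages}, which is precisely what you are trying to compute, so you have not escaped the recursion; and nothing in the substitution property alone tells you that the index inside an SCC is determined by a local structural pattern rather than by global behaviour. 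The paper avoids this circularity by giving an explicit recursive formula (over $n$-components, a priority-stratified refinement of SCCs) whose correctness is proved directly, with patterns (the \emph{edelweisses}) appearing only in the lower-bound argument.

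\textbf{The non-deterministic case is not pattern-based.} The paper does not extend the deterministic patterns of~\cite{niwinski_deterministic} to game automata. Instead it observes (Proposition~\ref{prop:nondetgame}) that for a game automaton $\aut{A}$ one can build a \emph{deterministic} automaton $\aut{D}$, over the alphabet $A\times\{\dL,\dR,\star\}$, recognizing the set of pairs (tree, encoded winning strategy for $\exists$); then $\lang(\aut{A},q_I)$ and $\lang(\aut{D},q_I)$ have the same non-deterministic index, because projection onto the first coordinate recovers $\lang(\aut{A},q_I)$ and, conversely, a non-deterministic automaton for $\lang(\aut{A},q_I)$ can be turned into one for $\lang(\aut{D},q_I)$ of the same index. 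This reduces (1) to Theorem~\ref{thm:detindex}(2) in one step.

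\textbf{Alternating and weak alternating: close in spirit, different in detail.} For (2) the paper's algorithm recurses over $n$-components (not plain SCCs): at each level it distinguishes $(n{+}1)$-components that are $\exists$-branching from those that are not, and combines their classes by explicit rules. The hardness patterns are the $(i,j)$-\emph{edelweisses} (Definition~\ref{def:ijloop}): a state carrying an $(i',j')$-loop for $\exists$, an $(i'',j'')$-loop for $\forall$, and plain $k$-loops for the intermediate priorities --- essentially your ``alternating flowers'', but with a specific shape that makes the reduction from $W_{i,j}$ go through (Lemmas~\ref{lemma:edel_hard}, \ref{lm:12_hardness}, Proposition~\ref{pro:123_hardness}). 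For (3) the paper does not refine the deterministic weak patterns directly; instead, inside each SCC it strips $\forall$-branching (turning the SCC into a co-deterministic automaton), calls the deterministic weak-index procedure of~\cite{murlak_weak_index} as a black box, and combines the result with the classes of the exits via operations $(\cdot)^\exists$, $(\cdot)^\forall$ reflecting how the exits are replicated. Lower bounds are topological, via Skurczy\'nski's languages.
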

\noindent 
Decidability of the non-deterministic index problem for languages recognized by game automata is obtained via an easy reduction to the non-deterministic index problem for deterministic automata (Section~\ref{sec:ndindex}). 

As game automata recognize the game languages $\W{\rmin}{\rmax}$~\cite{arnold_strict}, the alternating index problem does not trivialize, unlike for deterministic automata, and is much more difficult than the non-deterministic index problem. We solve it by providing a recursive procedure computing the alternating index of the language recognized by a given game automaton (Section~\ref{sec:altindex}). 

Similar techniques are applied to solve the weak alternating index problem (Section~\ref{sec:weak_index}).

Finally, we give an effective characterization of languages recognized by game automata, within the class of all regular languages (Section~\ref{sec:isgame}). As the characterization effectively yields an equivalent game automaton, we obtain  procedures computing the alternating, weak alternating, and non-deterministic index for a given alternating automaton equivalent to some game automaton. 

This paper collects results from two conference papers:~\cite{murlak_game_auto} and~\cite{game_wollic}. Additionally, it contains a
discussion of the maximality of the class of game automata, which
adapts a reasoning  from~\cite{murlak_weak_game} to the index problem.

%%% Local Variables:
%%% TeX-master: "journal_TOCL"
%%% End:

\section{Preliminaries}
\label{sec:prelims}

% The parity index introduces two hierarchies: the alternating index hierarchy and non-deterministic index hierarchy: we say that a given language $L$ is an alternating $(\rmin,\rmax)$ language if $L$ can be recognized by some alternating automaton of index $(\rmin,\rmax)$, $L$ is non-deterministic $(\rmin, \rmax)$ language if $L$ can be recognized by a non-deterministic $(\rmin, \rmax)$ automaton. It was shown that both hierarchies are strict: there are languages requiring arbitrarily high indices~\cite{niwinski_nondet_strict},~\cite{arnold_strict}?.

To simplify the presentation of inductive arguments, all our definitions allow partial objects: trees have leaves, automata have exits (where computation stops) and games have final positions (where the play stops and no player wins). The definitions become standard when restricted to \emph{total} objects: trees without leaves, automata without exits, and games without final positions. We also do not distinguish the initial state of an automaton but treat it as an additional parameter for the recognized language. 

\subsection{Trees}

For a function $f$ we write $\dom(f)$ for the domain of $f$ and $\rg(f)$ for the range of $f$. For a finite alphabet $A$, we denote by $\partrees{A}$ the set of partial trees over $A$, i.e., functions $\fun{t}{\dom(t)}{A}$ from a prefix-closed subset $\dom(t)\subseteq \{\dL,\dR\}^\ast$ to $A$. By $\trees{A}$ we denote the set of \emph{total} trees, i.e., trees $t$ such that $\dom(t) = \{\dL,\dR\}^\ast$.  For a direction $d\in\{\dL,\dR\}$ by $\bar{d}$ we denote the opposite direction. For $v \in \dom(t)$, $t\restr_v$ denotes the subtree of $t$ rooted at $v$. The sequences $u,v\in\{\dL,\dR\}^\ast$ are naturally ordered by the prefix relation: $u\preceq v$ if $u$ is a prefix of $v$.

A tree that is not total contains \emph{holes}. A \emph{hole} of a tree $t$ is a minimal sequence $h\in\{\dL,\dR\}^\ast$ that does not belong to $\dom(t)$. By $\holes(t)\subseteq\{\dL,\dR\}^\ast$ we denote the set of holes of a tree $t$. If $h$ is a hole of $t\in\partrees{A}$, for $s\in\partrees{A}$ we define the partial tree $t [h:=s]$ obtained by putting the root of $s$ into the hole $h$ of $t$.

\subsection{Games}
\noindent A \emph{parity game} $\game$ is a tuple $\langle V=V_\eve\cup V_\adam, v_I, F, E, \Omega \rangle$, where
\begin{itemize}
\item $V$ is a countable \emph{arena};
\item $V_\eve,V_\adam\subseteq V$ are positions of the game \emph{belonging}, respectively, to player \eve and player \adam, $V_\eve\cap V_\adam=\emptyset$;
\item $v_I\in V$ is the initial position of the game;
\item $F$ is a countable set of \emph{final positions}, $F\cap V=\emptyset$;
\item $E\subseteq V{\times}\left(V\cup F\right)$ is the transition relation;
\item $\fun{\Omega}{V}{\{\rmin,\ldots,\rmax\}}\subseteq \mathbb{N}$ is a \emph{priority function}.
\end{itemize}
%A parity game with $F=\emptyset$ is called \emph{total}.
%
We assume that all parity games are finitely branching (for each $v\in V$ there are only finitely many $u\in V \cup F$ such that $(v,u)\in E$), and that there are no dead-ends (for each $v\in V$ there is at least one $u\in V\cup F$ such that $(v,u)\in E$). %We also implicitly assume that sets $V, F$ are restricted to elements that are accessible by $E$ from $v_I$.

A \emph{play} in a parity game $\game$ is a finite or infinite sequence $\pi$ of positions starting from $v_I$. If $\pi$ is finite then $\pi=v_I v_1\ldots v_n$ and $v_n$ is required to be a final position (that is $v_n\in F$). In that case $v_n$ is called the \emph{final position of $\pi$}. An infinite play $\pi$ is \emph{winning} for $\eve$ if $\liminf_{n\to\infty} \Omega(\pi(n))$ is even. Otherwise $\pi$ is winning for $\adam$.

A (positional) \emph{strategy} $\sigma$ for a player
$P\in\{\eve,\adam\}$ in a game $\game$ is defined as usual, as a
function assigning to every $P$'s position $v\in V_P$  the chosen
successor $\sigma(v)\in V\cup F$ such that $(v,\sigma(v))\in E$. 
\new{Strategies can be also seen as trees labelled with
positions and final positions: we label the root with the initial
position $v_I$, and then for each node labelled with a (non-final) position
of the player $P$ we add one child,  corresponding to the move determined
by the strategy, and for each node labelled with a (non-final)
position of the opponent we add a child for each possible move.} 
A play $\pi$ \emph{conforms to $\sigma$} if whenever $\pi$ visits a
vertex $v\in V_P$, the next position of $\pi$ is $\sigma(v)$\new{;
that is, if $\pi$ is a prefix of a branch of the strategy $\sigma$
viewed as a tree.}  We
say that a strategy $\sigma$ is \emph{winning} for $P$ if every
infinite play conforming to $\sigma$ is winning for $P$. For a winning
strategy $\sigma$ we define the \emph{guarantee of $\sigma$} as the
set of all final positions that can be reached in plays conforming to
$\sigma$ \new{(the labels of the leaves of $\sigma$ viewed as a
  tree)}. \new{Due to final positions, both players can have a winning 
strategy; in such case the intersection of their guarantees is nonempty,
as two winning strategies used against each other must lead the play
to a final position.
Like for parity games without final positions,  at least one player} has a (positional)
winning strategy~\cite{jutla_determinacy,mostowski_parity_games}. 

% AUTOMATA ==========================================================

\subsection{Automata}\label{subsec:automata}

For the purpose of the inductive argument we incorporate into the definition of automata a finite set of \emph{exits}. Therefore, an alternating automaton $\aut{A}$ is defined as a tuple $\langle A, Q, F,\delta,\Omega\rangle$, where $A$ is a finite alphabet, $Q$ is a finite set of states, $F$ is a finite set of exits disjoint from $Q$, $\Omega\colon Q \to \mathbb{N}$ is a function assigning to each state of $\aut{A}$ its priority, and $\delta$ assigns to each pair $(q,a)\in Q\times A$ the transition $b=\delta(q,a)$ built using the grammar
\[ b\ ::=\ \top\ \bigm|\ \bot\ \bigm|\ (q,d) \ \bigm|\ (f,d)\ \bigm|\ b\lor b\ \bigm|\ b \land b\]
for states $q\in Q$, $f\in F$, and directions $d\in\{\dL,\dR\}$.

For an alternating automaton $\aut{A}$, a state $q_I\in Q$, and a partial tree  $t\in\partrees{A}$ we define the game $\agame(\aut{A}, t, q_I)$ as follows:
\begin{itemize}
\item  $V=\dom(t)\times (S_\delta\cup Q)$, where $S_\delta$ is the set of all sub\-formulae of formulae in $\rg(\delta)$; all positions of the form $(v,b_1\lor b_2)$ belong to $\eve$ and the remaining ones to $\adam$;\footnote{Positions $(v,(q,d)), (v,q), (v,\bot),  (v,\top)$ offer no choice, so their owner is irrelevant.}
\item $F=\left(\holes(t)\times \left(Q\cup F\right)\right) \cup \dom(t)\times F$;
\item $v_I = (\epsilon, q_I)$;
\item $E$ contains the following pairs (for all $v \in \dom(t)$):
\begin{itemize}
\item $\big ((v, b), (v, b) \big)$ for $b\in\{\top,\bot\}$,
\item $\big ((v, b), (v, b_i) \big)$ for $b=b_1 \land b_2$ or $b=b_1 \lor b_2$,
\item $\big ((v, (q,d)), (vd,q) \big)$ for $d\in\{\dL,\dR\}$, $q\in Q\cup F$,
\item $\big ((v, q), (v, \delta(q,t(v))) \big)$ for $q\in Q$;
\end{itemize}
\item $\Omega(v,\top)=0$, $\Omega(v,\bot)=1$, $\Omega(v,q)=\Omega_\Aa(q)$ for $q\in Q$, $v\in\dom(t)$, and 
for  other positions $\Omega$ is  $\max(\rg(\Omega_\Aa))$, where
$\Omega_\Aa$ is the priority function of $\Aa$.
\end{itemize}

% Note that the set of final positions of $G(\aut{A}, t, q_I)$ can be split into two disjoint parts: positions inside $t$ visited in an exit $f\in F$ of $\aut{A}$, and positions in holes of $t$, visited in a state or an exit.

An automaton $\aut{A}$ is \emph{total} if $F=\emptyset$. A total automaton $\aut{A}$ \emph{accepts} a total tree $t\in\trees{A}$ from $q_I\in Q$ if $\eve$ has a winning strategy in $\agame(\aut{A}, t, q_I)$. By $\lang(\aut{A}, q_I)$ we denote the set of total trees accepted by a total automaton $\aut{A}$ from a state $q_I$. A total automaton $\aut{A}$ \emph{recognizes} a language $L\subseteq \trees{A}$ if $\lang(\aut{A},q_I)=L$ for some $q_I\in Q$. A state $q\in Q$ is \emph{non-trivial} if $\emptyset \subsetneq \lang(\aut{A},q) \subsetneq \trees{A}$. Without loss of generality, when a total automaton $\aut{A}$ recognizes a non-trivial language, i.e. $ \lang(\aut{A},q_I) \notin \{\emptyset, \trees{A}\}$ for some $q_I \in Q$, we implicitly assume that $\aut{A}$ has only non-trivial states.  
% (see Fact~\ref{ft:state-reduced}).  

The \emph{(Rabin-Mostowski) index} of an automaton $\aut{A}$ is the pair $(\rmin,\rmax)$ where $\rmin$ is the minimal and $\rmax$ is the maximal priority of the states of $\aut{A}$ ($\bot$ and $\top$ are
counted as additional looping states with odd and even 
priority, respectively). In that case $\aut{A}$ is called an $(\rmin,\rmax)$-automaton.

An automaton $\aut{A}$ is \emph{deterministic} if all its transitions are deterministic, i.e., of the form $\top$, $\bot$, $(q_d,d)$, or $(q_\dL,\dL) \land (q_\dR,\dR)$, for $d\in\{\dL,\dR\}$. Similarly, $\aut{A}$ is \emph{non-deterministic} if its transitions are (multifold) disjunctions of deterministic transitions.

An automaton $\aut{A}$ is \emph{weak} if whenever $\delta(q,a)$
contains a state $q'$ then $\Omega(q)\leq \Omega(q')$. For weak
automata, allowing \new{\emph{trivial}} transitions $\top$ or
$\bot$, interferes with the index much more \new{than} for strong automata: essentially, it adds one
more change of priority. To reflect this, when defining the index of
the automaton, we count $\bot$ and $\top$ as additional looping states
with priorities assigned so that the weakness condition above is
satisfied: $\bot$ gets the lowest \emph{odd} priority $\ell$ such that $\bot$ is
accessible only from states of priority at most $\ell$, and dually for
$\top$. That is,  if the automaton uses priorities $i, i+1, \dots,
2k-1$, we can use $\bot$ for free (with priority $2k-1$), but for
$\top$ we may need to pay with an additional priority $2k$, yielding index
$(i, 2k)$. To emphasize the fact that an automaton in question is
weak, we often call its index the \emph{weak index}. 

%An automaton $\aut{A}$ is \emph{weak} if whenever $\delta(q,a)$ contains a state $q'$ then $\Omega(q)\leq \Omega(q')$. For weak automata,  allowing transitions $\top$ or $\bot$ interferes with the index much more than for strong automata: essentially, it adds one more change of priority. To reflect this, when defining the index of the automaton, we count $\bot$ and $\top$ as additional looping states with priorities assigned so that the weakness condition above is satisfied. That is,  if the automaton uses priorities $i, i+1, \dots,   2k-1$, we can use $\bot$ for free, but for $\top$ we may need to pay with an additional priority $2k$, yielding index $(i, 2k)$. To emphasize the fact that an automaton in question is weak, we often call its index the \emph{weak index}.
 
% COMPOSITIONALITY =========================================================

\subsection{Compositionality}

Let $\aut{A}=\left<A,Q, F, \delta, \Omega\right>$ be an alternating
automaton and $Q'\subseteq Q$ be a set of states. By
$\aut{A}\restr_{Q'}$ we denote the \emph{restriction of $\aut{A}$ to
  $Q'$} obtained by replacing the set of states by $Q'$, the set of
exits by $F\cup \left(Q\setminus Q'\right)$, the priority function by
$\Omega\restr_{Q'}$, and the transition function by
$\delta\restr_{Q'\times A}$.
\new{Let us stress that in the restricted automaton exits are
either original exits or original states not in $Q'$ (see Fig.~\ref{fig:exits}).}
We say that $\aut{B}$ is a \emph{sub-automaton of $\aut{A}$} (denoted $\aut{B}\subseteq\aut{A}$) if $\aut{B} = \aut{A}\restr_{Q^\aut{B}}$.

\begin{figure}%[scale=2.0]
\centering
\begin{tikzpicture}
\tikzstyle{edge}=[draw, thick, ->]

\coordinate (zero) at (0,0);

\path ($(zero)+(0,-1.4)$) edge[lbrace] node {$Q'$} ($(zero)+(0,-0.1)$);

\path[draw=black, rounded corners=5] ($(zero)+(0,0)$) rectangle ($(zero)+(2,-3)$);

\path[draw=black, rounded corners=5] ($(zero)+(2.3,0)$) rectangle ($(zero)+(3.3,-3)$);

\node at ($(zero)+(1,0.4)$) {$Q$};
\node at ($(zero)+(2.8,0.4)$) {$F$};

\foreach \y in {1,...,4} {
  \draw[edge] ($(zero)+(1.8, 0.5-\y*0.8)$) -- ++(0.7,0);
}

\foreach \x in {1,...,2} {
  \draw[edge] ($(zero)+(\x*0.4,-1.15)$) -- ++(0,-0.7);
}

\foreach \x in {1,...,2} {
  \draw[edge] ($(zero)+(2-\x*0.4,-1.85)$) -- ++(0,0.7);
}

\coordinate (cu0) at ($(zero)+(1.0,-0.6)$);

\draw[edge] ($(cu0)+(0,0.3)$) arc (90:-90:0.3);
\draw[edge] ($(cu0)-(0,0.3)$) arc (90:-90:-0.3);

\coordinate (cu1) at ($(zero)+(1.0,-2.4)$);

\draw[edge] ($(cu1)+(0,0.3)$) arc (90:-90:0.3);
\draw[edge] ($(cu1)-(0,0.3)$) arc (90:-90:-0.3);

\draw[dotted] ($(zero)+(0,-1.5)$) -- ($(zero)+(2,-1.5)$);

\coordinate (zero) at (5,0);

\path[draw=black, rounded corners=5] ($(zero)+(0,0)$) rectangle ($(zero)+(2,-1.35)$);

\path[draw=black, rounded corners=5] ($(zero)+(2.3,0)$) --
	++(1.0,0) --
    ++(0,-3) --
    ++(-3.3,0) --
    ++(0,1.35) --
    ++(2.3,0) -- cycle;

%rectangle ($(zero)+(3.3,-3)$);

\node at ($(zero)+(1,0.4)$) {$Q'$};
\node at ($(zero)+(2.8,0.4)$) {$F'$};

\foreach \y in {1,...,2} {
  \draw[edge] ($(zero)+(1.8, 0.5-\y*0.8)$) -- ++(0.7,0);
}

\foreach \x in {1,...,2} {
  \draw[edge] ($(zero)+(\x*0.4,-1.15)$) -- ++(0,-0.7);
}

\coordinate (cu0) at ($(zero)+(1.0,-0.6)$);

\draw[edge] ($(cu0)+(0,0.3)$) arc (90:-90:0.3);
\draw[edge] ($(cu0)-(0,0.3)$) arc (90:-90:-0.3);

\end{tikzpicture}
\caption{\new{An alternating automaton $\aut{A}$ with states $Q$ and exits $F$; and the restriction $\aut{A}\restr_{Q'}$ for $Q'\subseteq Q$. The edges illustrate the transitions of $\aut{A}$.}}
\label{fig:exits}
\end{figure}
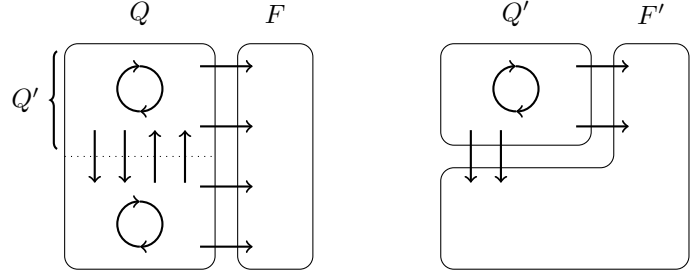

For automata $\aut{A}$, $\aut{B}$ over an alphabet $A$ with
$Q^\aut{A}\cap Q^\aut{B}=\emptyset$, we define the composition
$\aut{A}\cdot\aut{B}$ as the automaton over $A$, with states
$Q=Q^\aut{A}\cup Q^\aut{B}$, exits $\left(F^\aut{A}\cup
F^\aut{B}\right)\setminus Q$, transitions
$\delta^\aut{A}\cup\delta^\aut{B}$, and priorities
$\Omega^\aut{A}\cup\Omega^\aut{B}$.  Note here that some exits of
$\aut{A}$ may be states of $\aut{B}$ and \emph{vice versa}.

\begin{fact}
If $\aut{A}$ is an alternating automaton and $Q=Q_1\cup Q_2$ is a
partition of the states of $\aut{A}$ then
$\aut{A}\restr_{Q_1}\cdot\;\aut{A}\restr_{Q_2}=\aut{A}$.
\end{fact}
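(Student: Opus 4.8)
The plan is to prove the identity by simply unfolding the two definitions and comparing the composition with $\aut{A}$ component by component; there is nothing deeper going on than bookkeeping with states and exits.

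First I would fix notation: write $\aut{A}=\langle A,Q,F,\delta,\Omega\rangle$ and recall that the hypothesis that $Q=Q_1\cup Q_2$ is a \emph{partition} means $Q_1\cup Q_2=Q$ and $Q_1\cap Q_2=\emptyset$; in particular $Q\setminus Q_1=Q_2$ and $Q\setminus Q_2=Q_1$. Then I would spell out the two restrictions directly from the definition: $\aut{A}\restr_{Q_1}$ is the automaton $\langle A,\,Q_1,\,F\cup Q_2,\,\delta\restr_{Q_1\times A},\,\Omega\restr_{Q_1}\rangle$, and symmetrically $\aut{A}\restr_{Q_2}=\langle A,\,Q_2,\,F\cup Q_1,\,\delta\restr_{Q_2\times A},\,\Omega\restr_{Q_2}\rangle$. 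Since $Q_1\cap Q_2=\emptyset$, the state sets of these two automata are disjoint, so their composition is indeed defined.

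Next I would evaluate the components of $\aut{A}\restr_{Q_1}\cdot\;\aut{A}\restr_{Q_2}$ against the definition of composition. The alphabet is $A$; the state set is $Q_1\cup Q_2=Q$; the transition function is $\delta\restr_{Q_1\times A}\cup\delta\restr_{Q_2\times A}=\delta\restr_{(Q_1\cup Q_2)\times A}=\delta$, using that $\delta$ has domain $Q\times A$; and likewise the priority function is $\Omega\restr_{Q_1}\cup\Omega\restr_{Q_2}=\Omega\restr_{Q}=\Omega$. Finally, the exit set of the composition is $\big((F\cup Q_2)\cup(F\cup Q_1)\big)\setminus(Q_1\cup Q_2)=(F\cup Q)\setminus Q$, which equals $F$ because $F$ is disjoint from $Q$ by the definition of an automaton. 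Hence $\aut{A}\restr_{Q_1}\cdot\;\aut{A}\restr_{Q_2}=\langle A,Q,F,\delta,\Omega\rangle=\aut{A}$.

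There is no real obstacle: the only two points that require (trivial) care are that the partition hypothesis lets one replace $Q\setminus Q_i$ by the complementary block, and that the standing assumption $F\cap Q=\emptyset$ makes $(F\cup Q)\setminus Q$ collapse back to $F$ — precisely the behaviour of exits illustrated in Fig.~\ref{fig:exits}.
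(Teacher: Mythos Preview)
Your proof is correct and is exactly the routine unfolding of definitions that the paper has in mind; the paper states this as a Fact without proof, treating it as immediate from the definitions of restriction and composition, which is precisely what you carry out.
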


\section{Game automata}
\label{sec:game}

% For any reasonable class of automata substitution
% preserves language equality, i.e., for every $A, B$ (over the same
% alphabet), every transition $\delta(p,\sigma)$ of $A$ and every
% occurrence of a state $q$ in  $\delta(p,\sigma)$, 
% $$L(A_q) = L(B) \implies L(A_B) = L(A)\,,$$
% where $A_q$ is $A$ with the initial state changed to $q$, and $A_B$ is
% obtained by replacing the occurrence of $q$ in $\delta(p,\sigma)$
% with the initial state of $B$. Having this property for a coarser
% equivalence relation on recognized languages,   
% like  Wadge equivalence, we would be able to
% determine the equivalence class of a language by modifying the
% recognizing automaton componentwise to produce an equivalent
% cannonical representative (cf. \cite{murlak6,nw05,wagner2}).  

\new{Let ${\cal A}$ and ${\cal B}$ be automata over the same alphabet. For an occurrence of
a state (or an exit)} $q$ in a transition $\delta(p,a)$ of ${\cal A}$, and a state
\new{$q_0^{\aut{B}}$} of $\aut{B}$, the substitution ${\cal A}_{\cal B}$ is obtained
by \new{taking the disjoint union of $\aut{A}$ and $\aut{B}$ (the
  state space is the disjoint union of $Q^{\aut{A}}$ and
  $Q^{\aut{B}}$, etc.),} and replacing the occurrence of $q$ in $\delta(p,a)$ with \new{$q_0^{\aut{B}}$}. The
mapping ${\cal B} \mapsto {\cal A}_ {\cal B}$ induces an operation on
recognized languages, but it need not preserve coarser equivalence
relations, like having the same index. For a class of automata ${\autclass{C}}$ \new{over a common alphabet}, we say that \emph{substitution preserves
(alternating, nondeterministic, etc.) index in ${\autclass{C}}$} if
\new{for all ${\cal A}, {\cal B}_1, {\cal B}_2, \in {\autclass{C}}$, if
$\lang({\cal B}_1, q_0^{{\cal B}_1})$ and $\lang({\cal B}_2,
q_0^{{\cal B}_2})$ have the same (alternating, nondeterministic, etc.)
index, then so do 
$\lang(\aut{A}_{{\cal B}_1}, q_0^{\aut{A}})$ and 
$\lang(\aut{A}_{{\cal B}_2}, q_0^{\aut{A}})$ 
for any  $q_0^{\aut{A}}$.}

As pointed out in the introduction, the operation of union does not
preserve index. The same is true \new{for} intersection.  

\begin{example} \label{ex:union}
Take $A=\{0,1,2\}$ and consider \new{$\omega$-word languages}
$(A^*(1+2))^\omega$ and $(A^*2)^\omega$.  Both these languages can be
recognized by deterministic automata of index $(1,2)$, and not lower
\new{than} this. Taking union with $A^*0^\omega$, we obtain
$(A^*(1+2))^\omega \cup A^*0^\omega = A^\omega$, and $(A^*2)^\omega
\cup A^*0^\omega$. To recognize the language $(A^*2)^\omega \cup
A^*0^\omega$, a deterministic automaton requires three priorities and
an alternating one needs two. This makes it much more complex than the
whole space $A^\omega$, which can be recognized by a deterministic
automaton with a single state, whose priority is $0$. \new{Similarly,
intersecting $(A^*(1+2))^\omega$ and $(A^*2)^\omega$ with $(A^*(0 +
1))^\omega$ we obtain respectively $A^*(0^*1)^\omega$, and the empty
set}, which have very different complexity. \new{This example can be
transferred to trees by encoding $\omega$-words as \new{sequences} of
labels on the left-most branches.}
\end{example}

Example~\ref{ex:union} illustrates  a more general
phenomenon. The following notion is designed to capture how an
automaton can simulate union or intersection: 
% We shall now see that to guarantee that substitution preserves index
% it suffices to prevent the automata from simulating union and
% intersection.
we call a transition $\delta(q,a)$ \emph{ambiguous} if
it contains two occurrences of some direction
$d\in\{\dL,\dR\}$. Recall that a transition is trivial if it is of
the form $\bot$ or $\top$; as discussed in Section~\ref{subsec:automata},
trivial transitions are just a convenient notation for immediate
acceptance and rejection, and can be easily replaced with looping
states of appropriate priority.

\begin{fact} \label{fact:motivation}
Let $\autclass{C}$ be a class of (alternating) automata without trivial
transitions, over a fixed alphabet $A$ containing at least two
letters, that 
\begin{itemize}
\item is closed under substitution, and
\item contains automata recognizing $\emptyset$, $\trees{A}$, and some language $X$ of
non-trivial index.
\end{itemize} 
It is only possible that substitution preserves index in ${\autclass{C}}$, if no automaton in
$\autclass{C}$ contains an ambiguous transition.
\end{fact}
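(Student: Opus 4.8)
The plan is to show the contrapositive: assuming some automaton $\aut{A}\in\autclass{C}$ has an ambiguous transition $\delta(q,a)$, I would build, out of the two sublanguages $\emptyset$, $\trees{A}$ and $X$ given by hypothesis, two substitution instances whose recognized languages have the same index but such that the resulting languages do not. Concretely, since $\delta(q,a)$ contains two occurrences of the same direction $d$, as a Boolean formula over atoms $(r,d)$ and $(r,\bar d)$ it genuinely depends on at least two distinct $d$-atoms, say $(p_1,d)$ and $(p_2,d)$. Fix a letter $b\neq a$ (using that $|A|\ge 2$) and, by reading $d$-successors only, think of $\aut{A}$ started in state $q$ on trees whose $d$-spine carries the relevant information; this is the tree analogue of the $\omega$-word encoding used in Example~\ref{ex:union}. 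The key point is that, up to the fixed behaviour of $\aut{A}$ on the other coordinates, the language computed from $q$ is a positive Boolean combination (in particular, contains as a ``section'' either a union or an intersection) of the languages substituted at the $d$-occurrences.

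Next I would plug in the three reference automata. Using substitution (legal because $\autclass{C}$ is closed under it), replace one $d$-occurrence by an automaton for $X$ and, in one instance, replace the other $d$-occurrence by an automaton for $\trees{A}$, and in the other instance by an automaton for $\emptyset$. Now $\lang(\trees A)$ and $\lang(\emptyset)$ are as far apart in index as possible --- wait, that is the wrong direction; instead I would use the two instances that keep one argument equal to $X$ and vary the other between two languages of the \emph{same} index, exactly as in Example~\ref{ex:union}, where $(A^*(1+2))^\omega$ and $(A^*2)^\omega$ both have index $(1,2)$ but behave oppositely under $\cup A^*0^\omega$ and under $\cap (A^*(0+1))^\omega$. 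So: take $\aut{B}_1,\aut{B}_2$ to be automata (built by substitution inside $\autclass{C}$ from $\emptyset,\trees A,X$, which is possible since $\autclass{C}$ is closed under substitution and contains these three) recognizing $(A^*(1+2))^\omega$ and $(A^*2)^\omega$ --- or rather their tree encodings --- which have equal index; then the ambiguous transition of $\aut{A}$, by simulating $\cup Y$ or $\cap Y$ for a suitable $Y$ again assembled from the three building blocks, yields $\lang(\aut{A}_{\aut{B}_1},q)$ and $\lang(\aut{A}_{\aut{B}_2},q)$ of different index. This contradicts ``substitution preserves index in $\autclass{C}$''.

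The main obstacle, and the step deserving the most care, is the bookkeeping needed to turn ``$\delta(q,a)$ is ambiguous'' into a clean statement that the computation from $q$ genuinely implements either set-theoretic union or set-theoretic intersection of the two substituted languages, \emph{restricted to an appropriate slice of inputs}. One has to fix the behaviour of $\aut{A}$ on the $\bar d$-subtree and on all states reachable without going through the two marked occurrences, so that the only freedom left is which of $\aut{B}_1,\aut{B}_2$ sits at the marked occurrences; and one must ensure the ``slice'' on which we observe the union/intersection is itself regular and of controlled index, so that computing the index of $\lang(\aut{A}_{\aut{B}_i},q)$ reduces to computing the index of a union or intersection of two concrete languages. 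The disjunction-vs-conjunction case split (ambiguity can occur inside a $\lor$ or inside a $\land$, or mixed) is what forces us to prepare \emph{both} the union witnesses and the intersection witnesses from Example~\ref{ex:union}; once both are in hand, whichever connective the two $d$-occurrences are joined under gives the desired index mismatch. The remaining verifications --- that the building-block languages are obtainable by substitution within $\autclass{C}$, and that the encodings of $\omega$-word languages as left-branch trees preserve the relevant index gaps --- are routine given Example~\ref{ex:union} and the closure hypotheses.
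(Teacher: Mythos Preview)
Your overall strategy---contrapositive, reduce the ambiguous transition to a pure $\lor$ or $\land$ of two same-direction atoms, then exhibit two same-index languages whose union (resp.\ intersection) has a different index---is exactly right and matches the paper. The gap is in your construction of the witnesses $\aut{B}_1,\aut{B}_2$.

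You claim that automata recognizing the specific languages $(A^*(1{+}2))^\omega$, $(A^*2)^\omega$, $A^*0^\omega$, $(A^*(0{+}1))^\omega$ (or their tree encodings) can be ``built by substitution inside $\autclass{C}$ from $\emptyset,\trees{A},X$''. This is not justified and is in general false. The only hypothesis is that $\autclass{C}$ contains automata for $\emptyset$, $\trees{A}$, and one unspecified language $X$ of non-trivial index, and is closed under substitution. Substitution merely lets you replace a single occurrence of a state in a transition by the initial state of another automaton in $\autclass{C}$; it does not let you synthesise arbitrary regular languages. There is no reason $\autclass{C}$ should contain automata for the particular $\omega$-regular languages of Example~\ref{ex:union} (which moreover live over a three-letter alphabet, while $A$ is arbitrary).

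The paper avoids this entirely by building the witnesses directly out of $X$ and the hypothesis $|A|\ge 2$. Pick any automaton in $\autclass{C}$ with a state $p$ and two letters $a\neq b$; since there are no trivial transitions, $\delta(p,a)$ and $\delta(p,b)$ each contain an atom. By substituting the automata for $\emptyset$, $\trees{A}$, and $X$ at the right places one gets $\aut{A}_a,\aut{A}_b\in\autclass{C}$ with
\[
\lang(\aut{A}_a,p)=a(X,\trees{A})\cup b(\trees{A},\trees{A}),\qquad
\lang(\aut{A}_b,p)=a(\trees{A},\trees{A})\cup b(X,\trees{A}).
\]
Both have the index of $X$, yet their union is all of $\trees{A}$. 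Dually, $\aut{A}'_a,\aut{A}'_b$ recognize $a(X,\trees{A})$ and $b(X,\trees{A})$, which have the index of $X$ but empty intersection. These are the ``same-index, different-union/intersection'' pairs you need, and they are obtainable by substitution from what is actually assumed to lie in $\autclass{C}$. After simplifying the ambiguous transition to $(d,q_1)\lor(d,q_2)$ or $(d,q_1)\land(d,q_2)$ (which you correctly identified as the main bookkeeping step, and which the paper also does by substituting $\emptyset$ and $\trees{A}$), plugging $\aut{A}_a$ at $q_1$ and then $\aut{A}_a$ versus $\aut{A}_b$ at $q_2$ yields languages of different index, contradicting preservation.
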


\begin{proof}  
Let us take an arbitrary automaton $\aut{A}\in {\autclass{C}}$ and a state
$p$ of $\aut{A}$.  Assume that $\delta(p,a)$ contains $(\dL, p_a)$ and
$\delta(p,b)$ contains $(\dL, p_b)$ for some distinct letters $a,b\in A$ and
some states $p_a, p_b$ (the remaining three cases are symmetric).
Starting from the automaton $\aut{A}$ and the automata recognizing
$\emptyset$, $\trees{A}$, and $X$, we obtain by substitution automata
$\aut{A}_a, \aut{A}_b, \aut{A}'_a, \aut{A}'_b \in {\autclass{C}}$ such that
\begin{align*}
\lang(\aut{A}_a, p)&=a(X,\trees{A}) \cup b(\trees{A}, \trees{A}),  
&\lang(\aut{A}'_a, p) &=a(X,\trees{A}), \\
\lang(\aut{A}_b, p)&=a(\trees{A},\trees{A}) \cup b(X, \trees{A}), 
&\lang(\aut{A}'_b, p)& = b(X,\trees{A}),
\end{align*}
where for $c\in A$ and $Y,Z\subseteq\trees{A}$,
\[c(Y,Z) = \left \{ t \in \trees{A} \bigm | t(\epsilon) = \new{c}, \; t\!\!\upharpoonright _\dL \in Y,
t\!\!\upharpoonright _\dR \in Z \right\}.\] Note that  
$\lang(\aut{A}_a, p) \cup \lang(\aut{A}_b, p)  = \trees{A}$ and
$\lang(\aut{A}'_a, p) \cap \lang(\aut{A}'_b, p) = \emptyset$. 

Let $\aut{B} \in {\autclass{C}}$ and let $q_0$ be a state of $\aut{B}$ such
that for some $c$, $\delta(q_0, c)$ is an ambiguous transition. By
substituting appropriately the automata recognizing $\emptyset$ and
$\trees{A}$ we can assume that $\delta(q_0,c) = (d, q_1) \lor (d,
q_2)$ or $\delta(q_0,c) = (d, q_1) \land (d, q_2)$ for some states
$q_1, q_2$, and no tree with a label $c'\neq c$ in the root is accepted from $q_0$.
Assume $\delta(q_0,c) = (d, q_1) \lor (d, q_2)$, and let
$\aut{B}'$ be the result of replacing the occurrence of $q_1$ with the
state $p$ of $\aut{A}_a$.  Now, $\lang(\aut{A}_a, p)$ and
$\lang(\aut{A}_b, p)$ have the same index, but by substituting in
$\aut{B}'$ at $(d, q_2)$ the automaton $\aut{A}_a$ or $\aut{A}_b$
(with initial state $p$), we get two languages of different index. For
$\delta(q_0,c) = (d, q_1) \land (d, q_2)$ the argument is analogous
but uses the other two automata.
\end{proof}

\medskip 

In the light of this result we propose the following definition.

\begin{definition} 
A {\em game automaton} is an alternating automaton 
without ambiguous transitions; that is, it has only transitions of the following forms: 
\[ \top\,, \;\; \bot\,, \;\; (q_\dL, \dL)\,, \;\; (q_\dR, \dR)\,, \;\; (q_\dL, \dL) \vee (q_\dR, \dR)\,, \;\; (q_\dL, \dL) \wedge (q_\dR, \dR)\]
for  $q_\dL,q_\dR\in Q\cup F$.
\end{definition}

\new{ 
In the course of the paper we shall see that for game automata
substitution preserves
the non-deterministic index (Proposition~\ref{pro:nondet-preserve}),
the alternating index (Proposition~\ref{pro:alt-preserve}), 
and the weak alternating index (Proposition~\ref{pro:weak-preserve}). Together with Fact~\ref{fact:motivation} this
will imply that game automata are the largest
non-trivial subclass of alternating automata closed under substitution
for which substitution preserves the index. 
}
%  In fact, a more general property holds.

% \begin{fact} \label{fact:simple}
% For every $GA$ $A,B,B'$, every state $q$ of $A$,  and $A_B$, $A_{B'}$ 
% obtained by replacing an occurrence of $q$ with the initial state of
% $B$ and $B'$ respectively, it holds that 
% % (1) $L(B) \leq_W L(B')$ implies $L(A_B)
% % \leq_W L(A_{B'})$, and (2)~$L(A_q) \leq_W L(A)$.
% \begin{enumerate}
% \item $L(B) \leq_W L(B')$ implies $L(A_B) \leq_W L(A_{B'})$, 
% \item $L(A_q) \leq_W L(A)$.
% \end{enumerate}
% \end{fact}

The class of languages recognized by game automata is closed under
complementation: the usual complementation procedure of increasing the
priorities by one and swapping existential and universal transitions
works. However they are neither closed under union nor
intersection. For instance, let $L_{\new{c}} = \{t\in
T_{\{a,b\}}\colon t(\dL)=t(\dR)=\new{c}\}$ \new{for
$c=a,b$}. Obviously, $L_a$ and $L_b$ are recognizable by game
automata, but $L_a\cup L_b$ is not. Note that the last example also
shows that game automata do not recognize all regular languages. On
the other hand they extend across the whole alternating index
hierarchy, as they recognize so-called game languages
$\W{\rmin}{\rmax}$. We discuss this in more detail in
Section~\ref{ssec:lower}.

The main similarity between game automata and deterministic automata
is that their acceptance can be expressed in terms of \emph{runs},
which are relabelings of input trees induced uniquely by
transitions. For a total game automaton $\aut{A}$ and an initial state
$q_I$, with each partial tree $t$ one can associate the \emph{run}
\[\fun{\rho(\aut{A}, t,q_I)}{\dom(t) \cup \holes(t)}{Q^\aut{A} \cup\{\top, \bot, \ast\}}\]
such that $\rho(\varepsilon) = q_I$ and for all $v\in\dom(t)$, if
$\rho(v) = q$, $\delta(q,t(v))=b_v$, then
\begin{itemize}
\item if $b_v$ is $(q_\dL, \dL) \lor (q_\dR, \dR)$ or $(q_\dL, \dL) \land (q_\dR, \dR)$, then  $\rho(v\dL)=q_\dL$ and $\rho(v\dR)=q_\dR$;
\item if $b_v = (q_d, d)$ for some $d\in\{\dL,\dR\}$, then  $\rho(vd)=q_d$ and $\rho(v\bar d\,) = \ast$;
\item if $b_v = \bot$ then $\rho(v\dL)=\rho(v\dR)=\bot$, and dually for $\top$;
\end{itemize} 
and if $\rho(v)\in \{\top, \bot,\ast\}$, then $\rho(v\dL)=\rho(v\dR)=\ast$.
Observe that $\rho(v)$ is uniquely determined by the labels of $t$ on
the path leading to $v$.
% Hence, for each $\aut{A}$, $q_I$ and $t$ the run is uniquely
% determined. We shall denote this run by $\rho(\aut{A}, t,q_I)$.  

The run $\rho=\rho(\aut{A}, t,q_I)$ is naturally interpreted as a game
$\rhogame(\aut{A}, t, q_I)$ with positions
$\dom(t)\setminus\rho^{-1}(\ast)$, final positions $\holes(t)$, where
edges follow the child relation and loop on
$\rho^{-1}(\{\top,\bot\})$, priority of $v$ is
$\Omega^\aut{A}(\rho(v))$ with $\Omega^\aut{A}(\bot)=1$,
$\Omega^\aut{A}(\top)=0$, and the owner of $v$ is \eve iff
$\delta(\rho(v), t(v)) = (q_\dL, \dL) \lor (q_\dR, \dR)$ for some
$q_\dL, q_\dR\in Q^\aut{A}$. Clearly $\rhogame(\aut{A}, t, q_I)$ is
equivalent to $\agame(\aut{A}, t, q_I)$. \new{Note that the arena of
$\rhogame(\aut{A}, t, q_I)$ is a subtree of $t$ (with additional edges
looping over positions labelled with $\top$ or $\bot$ in the run of
$\aut{A}$ on $t$). Consequently,  a strategy in $\rhogame(\aut{A}, t,
q_I)$ can be also viewed as a subtree of $t$ (we can ignore the
looping edges, as there is no choice there anyway.}  If $t$ is total, we say that
$\rho$ is \emph{accepting}, if $\eve$ has a winning strategy in
$\rhogame(\aut{A}, t, q_I)$.

% \new{It will be useful to see strategies as treeidentify strategies
% for player $P$ with $\rhogame(\aut{A}, t, q_I)$ with subtrees of
% $\rho$: include the root, and then for each node added so far, if it
% belongs to player $P$, and the unique node of $\rho$}

\new{
A direct consequence of the acceptance being definable in terms of
runs is that each part of the automaton matters: for each state $q$
reachable from the given initial state, one can find a family of trees
that spans all possible behaviours of the automaton from the state $q$,
and the outcome of the computation depends exclusively on this
behaviour. As we show below, it can be done by taking all full trees
extending an appropriately chosen partial tree with a single hole,
corresponding to the state $q$.  
}
Let $t\in\partrees{A}$ be a partial tree and $\rho=\rho(\aut{A}, t,
q_I)$ be the run of an automaton $\aut{A}$ on $t$. We say that $t$
\emph{resolves $\aut{A}$ from $q_I\in Q^\aut{A}$} if $\rho(h) \neq
\ast$ for each hole $h$ of $t$ and whenever $t\restr_{vd}$ is the only
total tree in $\{t\restr_{v\dL}, t\restr_{v\dR}\}$, either $\rho(vd) =
\ast$ or $vd$ is losing for the owner of $v$ in $\rhogame(\aut{A}, t,
q_I)$. 
\new{The following establishes the property we discussed above and
  its analogue for transitions.}

\begin{fact}\label{ft:resolve}
Assume that $t$ resolves $\aut{A}$ from $q_I$ and $\rho=\rho(\aut{A},
t, q_I)$. If $t$ has a single hole $h$ then $t[h:=s]\in\lang(\aut{A},
q_I)$ iff $s\in\lang(\aut{A}, \rho(h))$ for all $s\in\trees{A}$. 

If $t$ has two holes $h, h'$ whose closest common ancestor $u$
satisfies $\delta_\aut{A}(\rho(u), t(u)) = (q_\dL,\dL) \land
(q_\dR,\dR)$ for some $q_\dL, q_\dR$, then $t[h:=s,
h':=s']\in\lang(\aut{A}, q_I)$ iff $s\in\lang(\aut{A}, \rho(h))$ and
$s'\in\lang(\aut{A}, \rho(h'))$ for all $s,s'$; dually for
$(q_\dL,\dL) \lor (q_\dR,\dR)$.  
% \qed
\end{fact}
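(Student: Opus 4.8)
The plan is to exploit the fact that a game automaton is \emph{run-deterministic}: once the path from the root is fixed, the run $\rho=\rho(\aut{A},t,q_I)$ is uniquely determined, and $\rhogame(\aut{A},t,q_I)$ is equivalent to $\agame(\aut{A},t,q_I)$ (both proved in the excerpt). So I would work entirely inside the run game $\rhogame$, where a strategy for either player is just a subtree of the input tree, as noted above. The single-hole case is the heart of the matter; the two-hole cases follow by one extra step using the first.

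For the single-hole case, fix a partial tree $t$ resolving $\aut{A}$ from $q_I$ with unique hole $h$, and let $q_h=\rho(h)\neq\ast$. Plug in $s\in\trees{A}$ and let $u=t[h:=s]$; its run is $\rho$ on $\dom(t)\cup\{h\}$ and the run of $\aut{A}$ on $s$ from $q_h$ below $h$. First I would observe that in $\rhogame(\aut{A},u,q_I)$ the position $h$ is actually reachable under \emph{any} pair of strategies that both players would rationally use, because of the resolving hypothesis: whenever a node $v$ on the path to $h$ has the choice of going towards $h$ or into some total sibling subtree $t\restr_{v\bar d}$ (where $vd$ leads towards $h$), the resolving condition says that going into the sibling is \emph{losing} for the owner of $v$ (or the sibling position is labelled $\ast$, i.e.\ not a real choice). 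Hence the owner of $v$, playing optimally, must move towards $h$; and at $\lor$/$\land$ nodes on the path where both children are genuine positions, one child is $h$-ward and the owner is forced there for the same reason. So along the finite path $\varepsilon=v_0,v_1,\dots,v_k=h$, optimal play reaches $h$.

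From that observation the equivalence is routine but I would spell it out via strategy transfer. For the ($\Leftarrow$) direction: given a winning strategy $\tau$ for $\eve$ in $\rhogame(\aut{A},s,q_h)$, build a strategy in $\rhogame(\aut{A},u,q_I)$ that plays the forced $h$-ward moves along the path $v_0,\dots,v_k$ (and, at $\adam$-owned detour points, simply accepts whatever $\adam$ does — but then $\adam$'s detour branch is a safety/parity-losing region for $\adam$, so those plays are won by $\eve$), and below $h$ copies $\tau$. Every infinite play either stays on the path finitely then enters $s$ (winning by $\tau$), or detours into a sibling where by the resolving condition $\eve$ wins. For ($\Rightarrow$): a winning strategy $\sigma$ for $\eve$ in $\rhogame(\aut{A},u,q_I)$ must, against $\adam$ forced likewise towards $h$, reach $h$; restricting $\sigma$ to the subtree below $h$ yields a winning strategy in $\rhogame(\aut{A},s,q_h)$, since any infinite play there lifts to an infinite play of $u$ conforming to $\sigma$ with the same priority sequence from $h$ on (the finite prefix along the path does not affect $\liminf$). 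By determinacy of parity games with final positions (cited in the excerpt), ``$\eve$ wins'' can equivalently be phrased via $\adam$ not winning, which also handles the ``$s\notin\lang$'' side symmetrically. Thus $t[h:=s]\in\lang(\aut{A},q_I)\iff s\in\lang(\aut{A},q_h)$.

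For the two-hole case with common ancestor $u$ and $\delta_\aut{A}(\rho(u),t(u))=(q_\dL,\dL)\land(q_\dR,\dR)$: the node $u$ is $\adam$-owned and has two genuine children $u\dL,u\dR$, one above $h$ and one above $h'$. I would reduce to the one-hole case twice. Consider $t_1 = t[h':=s']$: this still has the single hole $h$, and one checks it still resolves $\aut{A}$ from $q_I$ — the only new potential detour points are inside the now-completed subtree under $u$ on the $h'$ side, but those are handled exactly as in the single-hole analysis of $t[h':=s']$ from the child state $\rho(u\,\text{-}\,\text{towards-}h')$. Then apply the one-hole result: $t[h:=s,h':=s']\in\lang(\aut{A},q_I)\iff s\in\lang(\aut{A},\rho(h))$, \emph{provided} $s'\in\lang(\aut{A},\rho(h'))$; and symmetrically, if $s'\notin\lang(\aut{A},\rho(h'))$ then $\adam$ wins by first going to $u$, then to the $h'$-side, then using the $\adam$-winning strategy there, so $t[h:=s,h':=s']\notin\lang(\aut{A},q_I)$ regardless of $s$. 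Combining the two directions gives the stated iff; the $\lor$ case is dual (swap the roles of $\eve$ and $\adam$, ``and'' becomes ``or''). I expect the main obstacle to be the bookkeeping in the strategy-transfer argument for $(\Leftarrow)$ in the one-hole case: one must argue carefully that the detour branches into total sibling subtrees are genuinely won by the right player — this is where the precise form of the resolving definition (``$vd$ is losing for the owner of $v$'') is used, and where it matters that we have determinacy so that ``not winning for the owner'' of the detour equals ``winning for the opponent.''
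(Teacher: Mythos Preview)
Your single-hole argument is correct and essentially matches the paper's: both establish that any winning strategy (for either player) must pass through $h$, because the resolving condition makes every detour into a total sibling losing for the owner of the branching node, and then restrict the strategy below $h$. The paper handles both directions uniformly by restriction plus determinacy (if $\eve$ wins the big game, restrict her strategy; if $\adam$ wins, restrict his), whereas you build the $(\Leftarrow)$ strategy by explicit composition; either is fine.

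For the two-hole case your route differs from the paper's and has a small wrinkle worth flagging. The paper decomposes $t$ into three one-hole pieces --- $t\restr_{u\dL}$ from $q_\dL$, $t\restr_{u\dR}$ from $q_\dR$, and $t$ with a hole at $u$ from $q_I$ --- observes that each of these resolves from the indicated state (the resolving condition is local, so it is inherited), and applies the first claim three times; no case split on $s'$ is needed. Your approach of filling $h'$ first and then invoking the one-hole case is also valid, but your justification that $t[h':=s']$ resolves is slightly misplaced: the \emph{new} node at which the resolving condition must be verified is $u$ itself, since one child of $u$ becomes total for the first time, not the nodes strictly between $u$ and $h'$ (at those nodes both children are now total, so the condition becomes vacuous). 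The resolving condition at $u$ says exactly that the $h'$-side child is losing for $\adam$, and by the one-hole case applied to $t\restr_{ud'}$ this holds iff $s'\in\lang(\aut{A},\rho(h'))$. So $t[h':=s']$ resolves only under that proviso, which is why you (correctly) need the separate case $s'\notin\lang(\aut{A},\rho(h'))$. The paper's three-piece decomposition avoids this case split because it never plugs an arbitrary tree into a hole before invoking the first claim.
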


\begin{proof}
Let us prove the first claim. There are two cases. 
\begin{itemize}
\item 
One of the players $P\in\{\eve,\adam\}$ has a winning strategy
$\sigma$ in the game associated to $\rho$ such that node $h$ does not
belong to $\sigma$ \new{viewed as a subtree of $t$}. In that case,
there exists an ancestor $u$ of $h$ such that the player $P$ owns $u$
and $\sigma$ moves to from $u$ to $ud$ such that $ud$ is not ancestor
of $h$. In that case $\sigma$ is a winning strategy for $P$ in the
subtree under $ud$, which contradicts the definition of a resolving
tree.
\item 
Whenever $\sigma$ is a winning strategy for a player
$P\in\{\eve,\adam\}$ in the game associated to $\rho$, then the node
corresponding to $h$ belongs to $\sigma$ \new{viewed as a subtree of
$t$}. Take any total tree $s$. If $t[h:=s]\in \lang(\aut{A}, q_I)$
then $\eve$ has a winning strategy in the game associated with
$\rho(\aut{A}, t[h:=s], q_I)$. In particular, she can win from the
position $h$ in this game. Therefore, by the definition,
$s\in\lang(\aut{A}, \rho(\new{h}))$. If $t[h:=s]\notin \lang(\aut{A}, q_I)$
then the property is symmetrical: $\adam$ has a winning strategy and
$s\notin \lang(\aut{A}, \rho(h))$.
\end{itemize}

For the second claim, it follows easily that in this case the trees
$t\restr_{u\dL}$, $t\restr_{u\dR}$ and the tree obtained by putting a
hole in $t$ instead of $u$, resolve $\aut{A}$ from $q_\dL$, $q_\dR$,
and  $q_I$, respectively. We obtain the second claim by applying the
first claim three times. 
\end{proof}

%%% Local Variables:
%%% TeX-master: "journal_TOCL"
%%% End:

\section{Non-deterministic index problem}
\label{sec:ndindex}

The decidability of the non-deterministic index problem for languages recognized by game automata is an immediate consequence of the decidability of the non-deterministic index problem for deterministic tree languages~\cite{niwinski_deterministic} and the following observation. 

\begin{proposition}
\label{prop:nondetgame}
For each game automaton $\aut{A}$ and a state $q_I^\aut{A}\in Q^\aut{A}$ one can effectively construct a deterministic automaton $\aut{D}$ with an initial state $q_I^\aut{D}$, such that $\lang(\aut{A},q_I^\aut{A})$ is recognized by a non-deterministic automaton of index $(i,j)$ if and only if so is $\lang(\mathcal{D},q_I^\aut{D})$.
\end{proposition}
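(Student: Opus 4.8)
The plan is to reduce the problem to Theorem~\ref{thm:detindex}(2) by turning $\aut{A}$ into a deterministic automaton $\aut{D}$ that hard‑codes Eve's strategies into the alphabet, and then to show that $\lang(\aut{A},q_I^{\aut{A}})$ and $\lang(\aut{D},q_I^{\aut{D}})$ have the same least non‑deterministic index. The starting point is that a game automaton is branch‑deterministic, so acceptance of $t$ from $q_I$ is witnessed by an Eve‑strategy in the acceptance game, i.e.\ a choice of one direction at each $\vee$‑node (at $\wedge$‑nodes and deterministic nodes there is nothing to choose), and $t$ is accepted iff some such strategy makes every branch of the induced strategy subtree satisfy the parity condition of $\aut{A}$. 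Accordingly I would take the alphabet $A' = A \times \{\dL,\dR\}$ and let $\aut{D}$ be the automaton over $A'$, with the states and priorities of $\aut{A}$ and initial state $q_I^{\aut{D}} := q_I^{\aut{A}}$, that in state $q$ at a node labelled $(a,e)$ applies $\delta_{\aut{A}}(q,a)$ unchanged when it is $\top$, $\bot$, $(q_d,d)$ or $(q_{\dL},\dL)\wedge(q_{\dR},\dR)$, and resolves it as $(q_e,e)$ when it is $(q_{\dL},\dL)\vee(q_{\dR},\dR)$, using the second coordinate as Eve's advice. Then $\aut{D}$ is deterministic and effectively computable, and $\lang(\aut{D},q_I^{\aut{D}})$ is precisely the set of $A'$‑trees whose $A$‑projection is accepted by $\aut{A}$ via the Eve‑strategy encoded in the $\{\dL,\dR\}$‑coordinates. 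By Theorem~\ref{thm:detindex}(2) the least non‑deterministic index of $\lang(\aut{D},q_I^{\aut{D}})$ is computable, so everything reduces to proving the two languages share this index. (If $\lang(\aut{A},q_I^{\aut{A}})$ is trivial, take $\aut{D}$ to be a trivial deterministic automaton recognising the same language; otherwise all states of $\aut{A}$ are non‑trivial, which I use below.)

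For ``index of $\lang(\aut{A})$ $\le$ index of $\lang(\aut{D})$'', I would take a non‑deterministic $(i,j)$‑automaton for $\lang(\aut{D},q_I^{\aut{D}})$ and build one for $\lang(\aut{A},q_I^{\aut{A}})$ that, reading an $A$‑tree $t$, guesses a direction $e$ at every node and runs the given automaton on the resulting $A'$‑tree. Since $t \in \lang(\aut{A},q_I^{\aut{A}})$ iff some Eve‑strategy is winning iff some completion of $t$ lies in $\lang(\aut{D},q_I^{\aut{D}})$, and guessing the extra coordinate costs no priorities (the outer disjunction over $e$ of disjunctions of deterministic transitions is again such a disjunction), this is a non‑deterministic $(i,j)$‑automaton for $\lang(\aut{A},q_I^{\aut{A}})$.

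For the converse I would exhibit a reduction $f$ from $\lang(\aut{D},q_I^{\aut{D}})$ to $\lang(\aut{A},q_I^{\aut{A}})$ computed by a deterministic finite‑state top‑down transducer (one output node per input node), so that composing a non‑deterministic $(i,j)$‑automaton for $\lang(\aut{A},q_I^{\aut{A}})$ with $f$ yields one of the same index for $\lang(\aut{D},q_I^{\aut{D}})$. Given $(t,\sigma)$, let $f(t,\sigma)$ copy the $A$‑labels of $t$ along the strategy subtree determined by $\sigma$, and plant, at the child $v\bar e$ of each $\vee$‑node $v$ of that subtree where $\sigma$ chose $e$, a fixed \emph{regular} tree that $\aut{A}$ rejects from the state its run would assign there (such a tree exists because that state is non‑trivial); the children that the run of $\aut{A}$ ignores are filled with an arbitrary fixed tree. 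The transducer just tracks the run state of $\aut{A}$ and reads $\sigma$ to branch at $\vee$‑nodes, hence is finite‑state. The crux is that the planted subtrees are \emph{rejecting}: on $f(t,\sigma)$ any deviation of Eve from $\sigma$ at a $\vee$‑node leads into a subtree she cannot win, so Eve wins the game on $f(t,\sigma)$ iff $\sigma$ wins the game on $t$, i.e.\ iff $(t,\sigma) \in \lang(\aut{D},q_I^{\aut{D}})$; thus $f$ is a correct reduction.

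The transducer bookkeeping, and the fact that composition with a deterministic finite‑state transducer preserves the non‑deterministic index, are routine. The part carrying the actual idea, and the one I expect to be the main obstacle to get exactly right, is this converse reduction: one must choose the planted subtrees to be rejecting so that the encoded strategy becomes the \emph{only} viable one for Eve — which is where the convention that all states of $\aut{A}$ are non‑trivial enters — and then verify the game‑theoretic equivalence ``Eve wins on $f(t,\sigma)$ iff $\sigma$ wins on $t$'' with due care for loop positions and for the children that the run of $\aut{A}$ does not inspect.
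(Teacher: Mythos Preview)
Your proposal is correct and follows essentially the same approach as the paper: both build $\aut{D}$ by hard-coding Eve's choices into the alphabet (the paper uses $A\times\{\dL,\dR,\star\}$ with $\star$ marking ``no choice'', you use $A\times\{\dL,\dR\}$ and simply ignore the second coordinate at non-$\vee$ nodes), both get one direction by projection, and both get the converse by planting fixed rejecting regular trees at the branches Eve's strategy discards (the paper calls this tree $t\odot s$ and passes through an explicit product-and-projection construction, whereas you phrase it as composition with a finite-state top-down transducer). The only point to be a little careful about in your write-up is that your transducer is not literally a relabelling---at planted subtrees it ignores the input labels and emits a fixed regular tree---so the index-preservation of the composition relies on the composed automaton carrying the (finite) state of the tree-generator in addition to the state of the non-deterministic automaton; this is routine, but worth spelling out.
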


\begin{proof}
Essentially, $\aut{D}$ recognizes the set of winning strategies for \eve in games induced by the runs of $\aut{A}$. For two total trees $t\in \trees{A}$, $s\in \trees{B}$ let $t\otimes s \in \trees{A\times B}$ be given by $(t \otimes s) (v) = (t(v), s(v))$. Let $W^\eve_{\aut{A}, q_I}$ be the set of all total trees $t \otimes s$ over the alphabet $A^\aut{A} \times \{ \dL,\dR, \star\}$ such that $s$ encodes a winning strategy for $\eve$ in the game $\rhogame(\aut{A}, t, q_I)$ in the following sense: if $s(v)\in\{\dL,\dR\}$, $\eve$ should choose $v\cdot s(v)$, and $s(v)=\star$ means that \eve has no choice in $v$. It is easy to see that $W^\exists_{\mathcal{A}, q_I}$ can be recognized by a deterministic automaton $\aut{D}$. It inherits the state-space and the \new{priority} function from $\aut{A}$ and its transitions are modified as follows: for all $q\in Q$, $a\in A$, $d\in \{\dL,\dR\}$, if $\delta_\aut{A}(q,a) = (q_\dL, \dL) \lor (q_\dR, \dR)$ for some $q_\dL, q_\dR$, then 
\[ \delta(q, (a,d)) = (q_{\new{d}}, d)\,, \quad \delta(q,(a,\star)) = \bot\,,\] 
otherwise, \[\delta(q,(a,d)) = \bot\,, \quad \delta(q,(a,\star)) = \delta_\aut{A}(q,a)\,.\] 
It is easy to check that $\lang(\aut{D},q_I) = W^\exists_{\aut{A}, q_I}$.

Note that
\[\lang(\mathcal{A},q_I^\aut{A}) = \big \{ t\in\trees{A^\aut{A}} \bigm | \exists\, s. \ t\otimes s \in W^\eve_{\mathcal{A}, q_I^\aut{A}} \big \}.\]
Hence, if $W^\exists_{\mathcal{A}, q_I^\aut{A}} = \lang(\aut{B}, q_I^\aut{B})$ for some non-deterministic automaton $\aut{B}$ then $\lang(\aut{A},q_I^\aut{A}) = \lang(\aut{B}',q_I^\aut{B})$, where $\aut{B}'$ is the standard projection of $\aut{B}$ on the alphabet $A^\aut{A}$: for all $q\in Q^\aut{A}$ and $a \in A^\aut{A}$, $\;\delta^{\aut{B}'}(q, a) = \delta^{\aut{B}}(q, (a,\dL)) \lor \delta^{\aut{B}}(q, (a,\dR))\lor \delta^{\aut{B}}(q, (a, \star))$. The projection does not influence the index. 

For the other direction, the proof is based on the following observation. For $t \in \trees{A^\aut{A}}$ and $s \in\trees{ \{ \dL,\dR, \star\}}$ let $ t\odot s \in \trees{A^\aut{A}}$ be the tree obtained from $t$ by the following operation:  for each $v$, if $\rho_{t,q_I} (v) = q$,  $\delta(q, t(v)) = (q_\dL, \dL) \lor (q_\dR, \dR)$, and $s(v)=\dL$,  then replace the subtree of $t$ rooted at $v\dR$ by some fixed regular tree in the complement of $\lang(\aut{A}, q_\dR)$; dually for $s(v)=\dR$.  (Recall that $\aut{A}$ has only non-trivial states, so $\lang(\aut{A}, q_\dR) \subsetneq \trees{A^\aut{A}}$\new{; by the Rabin's theorem the complement of $\lang(\aut{A},q_\dR)$ contains a regular tree}.) If $s$ encodes a strategy $\sigma_s$ for \eve in $\rhogame(\aut{A}, t, q_I^\aut{A})$, then $\sigma_s$ is winning if and only if $t\odot s \in \lang(\aut{A},q_I^\aut{A})$. Hence, $t\otimes s \in W^\exists_{\mathcal{A}, q_I^\aut{A}}$ if and only if  $s$ encodes a strategy for \eve in $\rhogame(\aut{A}, t, q_I^\aut{A})$ and $t\odot s \in \lang(\aut{A},q_I^\aut{A})$. These conditions can be checked by a non-deterministic automaton of index $(i,j)$ as soon as $\lang(\aut{A},q_I^\aut{A})$ can be recognized by such an automaton.

\ident{\strat}{St}
\ident{\expstrat}{StE}
\ident{\expgoodstrat}{StEW}
\ident{\goodstrat}{StW}

Indeed, assume that $\lang(\aut{A},q_I) = \lang(\aut{B}, q^\aut{B}_I)$ for some non-deterministic automaton $\aut{B}$ of index $(\rmin,\rmax)$. \new{ To construct a non-deterministic automaton $\aut{C}$ of index at most $(\rmin,\rmax)$ recognizing $W^\exists_{\aut{A}, q_I}$, we first define a sequence of auxiliary languages and argue that each of them can be recognized by such an automaton. Let}
\begin{align*}
\strat &=\left\{t\otimes s:\ \text{$s$ is a strategy for \eve in $\rhogame(\aut{A},t,q_I)$}\right\},\\
\expstrat&=\left\{t\otimes s\otimes t':\ t\otimes s\in \strat \wedge t\odot s=t'\right\},\\
\expgoodstrat&=\left\{t\otimes s\otimes t'\in \expstrat:\ t'\in\lang(\aut{B},q_I^B)=\lang(\aut{A},q_I)\right\},\\
\goodstrat&=\left\{t\otimes s\in \strat:\ t\odot s\in\lang(\aut{A},q_I)\right\}.
\end{align*}
\new{Then, }
\begin{itemize}
\item $\strat$ corresponds to a safety condition that can be verified both by a deterministic automaton of index $(0,1)$ and by a deterministic automaton of index $(1,2)$,
%$\cmp_0$ deterministic automaton,
\item $\expstrat$ additionally enforces that the respective subtrees equal $t_q$, as above it can be checked both by a deterministic automaton of index $(0,1)$ and by a deterministic automaton of index $(1,2)$,
% $\cmp_0$ deterministic automaton,
\item $\expgoodstrat$ can be recognized by a product of automata recognizing $\expstrat$ and $\aut{B}$---the resulting non-deterministic automaton can be constructed in such a way that its index is $(\rmin,\rmax)$,
\item $\goodstrat$ is obtained as the projection of $\expgoodstrat$ onto the first two coordinates, as such can also be recognized by a non-deterministic $(\rmin,\rmax)$-automaton.
\end{itemize}
\new{It remains to show that} 
\begin{equation}%\label{eq:goodeq}
W^\exists_{\mathcal{A}, q_I} = \goodstrat \nonumber
\end{equation} 

First assume that $t\otimes s \in W^\exists_{\mathcal{A}, q_I}$. In that case $s$ encodes a winning strategy $\sigma$ for \eve in $\rhogame(\aut{A}, t, q_I)$. Let $t'=t\odot s$ and $D=\dom(\sigma)$ be the set of vertices belonging to $\sigma$. Note that if $v\in D$ then $t(v)=t'(v)$, so also $\rho_{t,q_I}(v)=\rho_{t',q_I}(v)$. Therefore, the strategy $\sigma$ is also winning in $\rhogame(\aut{A}, t', q_I)$. So $t'\in \lang(\aut{A}, q_I)$\new{, which} implies that $t\otimes s\otimes t'\in \expgoodstrat$ and $t\otimes s\in\goodstrat$.

Now assume that $t\otimes s\in \goodstrat$. Let $t'=t\odot s$ and $\sigma$ be the strategy for \eve in $\rhogame(\aut{A},t, q_I)$ encoded by $s$. By the definition of $\expgoodstrat$ we obtain that $t'\in\lang(\aut{A},q_I)$ so there exists a winning strategy $\sigma'$ for \eve in $\rhogame(\aut{A}, t', q_I)$. Similarly as above, let $D$ (resp. $D')$ be the set of vertices in $\sigma$ (resp. $\sigma'$). If $D'\not\subseteq D$ then there exists a minimal (w.r.t. the prefix order) vertex $v\in D'\setminus D$. By the definition of $t\odot s$ we obtain that $t'\restr_v$ is $t_q$ for $q=\rho(\aut{A},t,q_I)(v)$. Therefore, since $t_q\notin\lang(\aut{A},q)$, so there is no winning strategy for \eve in $\rhogame(\aut{A}, t_q, q)$ and we obtain a contradiction. Therefore $D'\subseteq D$ and for every $v\in D'$ we have $\rho(\aut{A},t,q_I)(v)=\rho(\aut{A},t',q_I)(v)$, so $\sigma'$ is also a strategy in $\rhogame(\aut{A}, t', q_I)$. Since strategies form an anti-chain with respect to inclusion, so $\sigma=\sigma'$, $t'\in\lang(\aut{A},q_I)$, and $t\otimes s \in  W^\exists_{\mathcal{A}, q_I}$.
\end{proof}

As a direct corollary from the proof of Proposition~\ref{prop:nondetgame} and the criteria for the non-deterministic index of deterministic languages \cite{niwinski_deterministic}, we obtain the converse of Fact~\ref{fact:motivation} for the non-deterministic index. 

\begin{proposition}\label{pro:nondet-preserve}
For game automata, substitution preserves the non-deterministic index.
\end{proposition}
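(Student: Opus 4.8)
The plan is to derive Proposition~\ref{pro:nondet-preserve} as a corollary of the explicit construction in the proof of Proposition~\ref{prop:nondetgame}, together with the known effective characterization of the non-deterministic index of deterministic tree languages from~\cite{niwinski_deterministic}. The starting observation is that for a game automaton $\aut{A}$ and a state $q_I^\aut{A}$, the construction of $\aut{D}$ (recognizing $W^\exists_{\aut{A},q_I^\aut{A}}$) is entirely \emph{syntactic}: $\aut{D}$ inherits the state space, priorities, and essentially the transition function from $\aut{A}$, with only a bookkeeping modification on the input alphabet. In particular, a substitution of a sub-automaton $\aut{B}$ inside $\aut{A}$ corresponds directly to a substitution of the matching sub-automaton $\aut{D}_\aut{B}$ inside $\aut{D}_\aut{A}$: the map $\aut{A}\mapsto\aut{D}_\aut{A}$ commutes with substitution. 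So to control the non-deterministic index of $\lang(\aut{A}_{\aut{B}}, q_0^\aut{A})$ it suffices, by Proposition~\ref{prop:nondetgame}, to control the non-deterministic index of the deterministic language $\lang(\aut{D}_{\aut{A}_{\aut{B}}}, q_0^\aut{D})=W^\exists_{\aut{A}_\aut{B}, q_0^\aut{A}}$.

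First I would recall the criterion of~\cite{niwinski_deterministic}: the non-deterministic index of a deterministic tree language is characterized by the presence or absence of certain \emph{flower}-like patterns (alternating chains of loops of appropriate parities) reachable in a deterministic automaton recognizing it. The key feature of this criterion for our purposes is that, as in the whole pattern method, in a deterministic automaton a sub-automaton can be replaced by any deterministic automaton recognizing a language of the same non-deterministic index without changing the patterns available in the whole automaton — this is exactly the substitution-invariance quoted after Theorem~\ref{thm:detindex}. Thus if $\aut{B}_1$ and $\aut{B}_2$ (game automata) recognize languages of the same non-deterministic index, then $\aut{D}_{\aut{B}_1}$ and $\aut{D}_{\aut{B}_2}$ are deterministic automata recognizing languages of the same non-deterministic index (by Proposition~\ref{prop:nondetgame} applied to $\aut{B}_1,\aut{B}_2$ themselves), hence plugging either of them into the deterministic automaton $\aut{D}_\aut{A}$ in the slot vacated by the corresponding sub-automaton yields deterministic languages of the same non-deterministic index. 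By Proposition~\ref{prop:nondetgame} again (now applied to $\aut{A}_{\aut{B}_1}$ and $\aut{A}_{\aut{B}_2}$), the languages $\lang(\aut{A}_{\aut{B}_1}, q_0^\aut{A})$ and $\lang(\aut{A}_{\aut{B}_2}, q_0^\aut{A})$ have the same non-deterministic index, which is what we need.

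The step I expect to require the most care is the commutation claim: that $\aut{D}_{\aut{A}_{\aut{B}}}$ is, up to renaming, the substitution of $\aut{D}_\aut{B}$ into $\aut{D}_\aut{A}$ at the corresponding occurrence — in particular that the initial state $q_0^\aut{D}$ and the behaviour of $\aut{D}$ at states coming from $\aut{B}$ really do depend only on $\lang(\aut{B},q_0^\aut{B})$ and not on the ambient $\aut{A}$. Here one uses that $\aut{A}$ is a game automaton, so it has a run semantics (Fact~\ref{ft:resolve}): the game $\rhogame(\aut{A}_{\aut{B}}, t, q_0^\aut{A})$ decomposes along the occurrence of $\aut{B}$, and the winning-strategy encoding that $\aut{D}$ reads off is built locally from the transitions, so it splits accordingly. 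Once this is checked, the rest is a routine invocation of Proposition~\ref{prop:nondetgame} and the deterministic criterion, and no new combinatorics is needed. I would therefore present the argument as: (i) observe $\aut{A}\mapsto\aut{D}_\aut{A}$ commutes with substitution; (ii) cite substitution-invariance of the non-deterministic index for deterministic languages; (iii) chain three applications of Proposition~\ref{prop:nondetgame}.
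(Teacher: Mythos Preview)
Your proposal is correct and follows essentially the same approach as the paper: reduce to deterministic automata via the construction of Proposition~\ref{prop:nondetgame}, observe that this construction commutes with substitution, and then invoke the pattern-based characterization of~\cite{niwinski_deterministic} to see that substitution preserves the non-deterministic index for deterministic automata. One small remark: the commutation step you flag as requiring the most care is actually simpler than you suggest---the construction $\aut{A}\mapsto\aut{D}_\aut{A}$ is purely syntactic (same state space, same priorities, disjunctive transitions uncoupled into two non-branching ones, some $\bot$'s added), so $(\aut{A}_{\aut{B}})' = (\aut{A}')_{\aut{B}'}$ holds by inspection and there is no need to invoke the run semantics or Fact~\ref{ft:resolve}.
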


\begin{proof} 
For a given game automaton $\aut{A}$, let $\aut{A}'$ be the deterministic automaton constructed in the proof of Proposition~\ref{prop:nondetgame}. Recall that $\aut{A}'$ is obtained from $\aut{A}$ by adding some transitions of the form $\bot$, and uncoupling each disjunctive transition over letter $a$ into two non-branching transitions over letters $(a,\dL)$ and $(a,\dR)$; the state-space remains the same. It follows that the construction commutes with substitution: $(\aut{A}_{\aut{B}})'$ coincides with $(\aut{A}')_{\aut{B}'}$ for all game automata $\aut{A}$ and $\aut{B}$. 

The construction was designed to preserve the non-deterministic index of the recognized language.
Consequently, assuming that substitution preserves the index for deterministic automata, we can show that the same holds for game automata. Indeed, if $\lang(\aut{B}, q_I^{\aut{B}})$ and $\lang(\aut{C}, q_I^{\aut{C}})$ have the same index, then  
$\lang(\aut{B}', q_I^{\aut{B}})$ and $\lang(\aut{C}', q_I^{\aut{C}})$ have the same index. If substitution preserves the index for deterministic automata, we can conclude that $\lang((\aut{A}_{\aut{B}})', q_I^{\aut{A}})=\lang(\aut{A'}_{\aut{B}'}, q_I^{\aut{A}})$ and $\lang((\aut{A}_{\aut{C}})', q_I^{\aut{A}}) = \lang(\aut{A'}_{\aut{C}'}, q_I^{\aut{A}})$ have the same index. Hence, $\lang(\aut{A}_{\aut{B}}, q_I^{\aut{A}})$ and $\lang(\aut{A}_{\aut{C}}, q_I^{\aut{A}})$ have the same index. 

Preservation of the index for deterministic automata follows immediately from the characterization of the levels of the index hierarchy among deterministic languages \cite{niwinski_deterministic}: it asserts that for every $(i,j)$, a language $\lang(\aut{B}, q_I^{\aut{B}})$ is recognized by a non-deterministic automaton of index $(i,j)$ if and only if the automaton $\aut{B}$ does not contain a certain characteristic, strongly connected subgraph reachable from $q_I^{\aut{B}}$. Consequently, if $\lang(\aut{B}, q_I^{\aut{B}})$ and $\lang(\aut{C}, q_I^{\aut{C}})$ have the same index, $\aut{B}$ and $\aut{C}$ contain the same characteristic subgraphs reachable from the respective initial states.  By the definition of substitution, no strongly connected subgraph in $\aut{A}_{\aut{B}}$ can use states from $\aut{A}$ and from $\aut{B}$. Consequently, $\aut{A}_{\aut{B}}$ and $\aut{A}_{\aut{C}}$ contain the same characteristic subgraphs reachable from $q_I^{\aut{A}}$, and so $\lang(\aut{A}_{\aut{B}}, q_I^{\aut{A}})$ and $\lang(\aut{A}_{\aut{C}}, q_I^{\aut{A}})$ have the same index. 
\end{proof}

%%% Local Variables:
%%% TeX-master: "journal_TOCL"
%%% End:

\section{Alternating index problem}
\label{sec:altindex}

In this section we show that the alternating index problem is decidable for game automata. Let us start with some notation.

\begin{definition}
For $\rmin< \rmax\in\N$, let $\RM(\rmin, \rmax)$ denote the class of languages recognized by alternating tree automata of index $(\rmin, \rmax)$. Let
\begin{align*}
\RMeven{\rmax}&=\RM(0, \rmax),\\
\RModd{\rmax}&=\RM(1,\rmax+1),\\
\RMdelta{\rmax} &= \RM(0,\rmax) \cap \RM(1,\rmax+1).
\end{align*}
The above classes are naturally ordered by inclusion.
\end{definition}

%We use $\leq$ to denote the natural inclusion order on these classes. 

The result we prove not only gives decidability of the alternating index problem but also shows that languages recognizable by game automata collapse inside the $\RMdelta i$ classes. To express it precisely we recall the so-called \emph{comp} classes~\cite{arnold_separation} that can be defined in terms of strongly connected components (SCCs) of a graph naturally associated with each alternating automaton.

\begin{definition}
Let $\aut{A}$ be an alternating automaton. Let $\graph(\aut{A})$ be the directed edge-labelled graph over the set of vertices $Q$ such that there is an edge $p \tran{(a,d)} q$ whenever $(q,d)$ occurs in $\delta(p,a)$. Additionally, vertices of $\graph(\aut{A})$ are labelled by values of $\Omega$. We write $p\tran{w}q$ if there is a path in  $\graph(\aut{A})$ whose edge-labels yield the word $w$.
\end{definition}

\begin{definition}\label{def:comp}
An alternating automaton $\aut{A}$ is in $\cmp(\rmin,\rmax)$ if (ignoring edge-labels) each SCC in $\graph(\aut{A})$ has priorities between $\rmin$ and $\rmax$ or between $\rmin+1$ and $\rmax+1$.
%We write $\cmp(\rmin,\rmax)$ also for the class of regular languages recognizable by $\comp(\rmin,\rmax)$ automata.
\end{definition}

It follows from the definition that each $\cmp(\rmin,\rmax)$ automaton is a $(\rmin,\rmax+1)$ automaton, and can be transformed into an equivalent $\cmp(\rmin+1,\rmax+2)$ automaton by scaling the priorities. We write $\cmp_{j}$ for the class of languages recognized by $\cmp(0, j)$ automata. We then have
\[\RMeven{j} \cup \RModd{j} \subseteq \cmp_{j} \subseteq \RMdelta{j+1}\,.\]
The class $\cmp_0$ corresponds to the class of weak alternating automata. 
%They recognize exactly those languages that are definable in weak monadic second-order logic. 
An important result obtained in~\cite{rabin_separation} states that $\RMdelta{1}$ coincides with
the class of languages definable in weak monadic second order logic (WMSO).
Since WMSO definability and weak recognizability are coextensive
concepts \cite{MullerSS86}, Rabin's result proves that classes $\cmp_{0}$ and $\RMdelta{1}$ coincide.
However, as shown by Arnold and Santocanale~\cite{arnold_separation}, for higher levels the inclusion is strict
\[\cmp_{j} \subsetneq \RMdelta{j+1} \quad \text{ for } j>0\,,\] 
i.e., there are examples of regular languages in $\RMdelta{j+1}$ but not in $\cmp_{j}$. It turns out that, as a consequence of our characterization, in the case of languages recognizable by game automata the respective classes $\cmp_{j}$ and $\RMdelta{j+1}$ coincide for all levels.

\begin{theorem}\label{thm:alt_index}
For each game automaton $\aut{A}$ and an initial state $q_I$, the language $\lang(\aut{A}, q_I)$ belongs to exactly one of the classes: $\cmp_{0}$, $\RMeven{i}\setminus\RModd{i}$, $\RModd{i}\setminus \RMeven{i}$, or $\cmp_{i} \setminus\left(\RMeven{i} \cup \RModd{i}\right )$, for $i>0$. 
Moreover, it can be effectively decided which class it is and an automaton from this class can be constructed.
\end{theorem}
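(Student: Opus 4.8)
The plan is to proceed by structural induction on the game automaton $\aut{A}$, using the fact (Fact~\ref{ft:resolve}) that acceptance is definable in terms of runs, so that the index of $\lang(\aut{A}, q_I)$ is determined by how the SCCs of $\graph(\aut{A})$ reachable from $q_I$ are arranged and connected by deterministic/disjunctive/conjunctive transitions. First I would isolate the SCC $C$ of $q_I$ in $\graph(\aut{A})$. Removing $C$ splits the remaining part of the automaton into sub-automata rooted at the states to which $C$ ``exits'' via transitions leaving $C$; by Fact~\ref{ft:resolve}, for each such exit state $q$ the language $\lang(\aut{A}, q)$ is computed recursively, and we obtain its class $\alpha(q)$ among $\cmp_0$, $\RMeven{i}\setminus\RModd{i}$, $\RModd{i}\setminus\RMeven{i}$, $\cmp_i\setminus(\RMeven i\cup\RModd i)$. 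The base cases are trivial states and SCCs that are single non-looping states, where the class is $\cmp_0$ (indeed WMSO/weak-recognizable).

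Next I would combine the classes of the exit languages with the internal structure of the SCC $C$. There are two ingredients here. The \emph{composition part}: knowing that substitution preserves the alternating index for game automata (Proposition~\ref{pro:alt-preserve}, cited as available), the class of $\lang(\aut{A}, q_I)$ depends only on $C$ together with the \emph{classes} $\alpha(q)$ of its exits, not on the particular sub-automata plugged in. So I can normalise: replace each exit sub-automaton by a canonical game automaton realising the class $\alpha(q)$ (e.g. a suitable restriction of a game automaton for $\W{i}{j}$ or its complement), and then it suffices to analyse, for each possible ``shape'' of $C$ and each assignment of canonical classes to its exits, what index the resulting (finite, explicitly bounded) automaton has. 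The \emph{SCC part}: within $C$ one must read off the priorities occurring on loops and, crucially, whether $C$ can force alternation between an even-dominated loop and an odd-dominated loop in a way that cannot be avoided — this is where the $\cmp_i\setminus(\RMeven i\cup\RModd i)$ case arises. Here I would use the game-language lower bounds of Arnold~\cite{arnold_strict} and the $\cmp$-vs-$\RMdelta$ analysis of Arnold--Santocanale~\cite{arnold_separation}: a game automaton whose relevant SCC embeds a ``full'' alternating pattern of priorities $[0,i]$ or $[1,i+1]$ recognises something Wadge-above $\W{0}{i}$ or its dual, hence is not in $\RModd i$ resp.\ $\RMeven i$; combining both witnesses inside one SCC gives a language in $\cmp_i\setminus(\RMeven i\cup\RModd i)$, while if only one side is present we land in $\RMeven i\setminus\RModd i$ or $\RModd i\setminus\RMeven i$, and if the priorities collapse (the SCC is ``weak-like'' after accounting for exits) we get $\cmp_0$.

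The upper-bound direction — showing that in each case $\lang(\aut{A}, q_I)$ actually \emph{is} in the claimed class — I would handle by explicitly rewriting the automaton: scaling and shifting priorities inside $C$ so that the whole automaton becomes a $\cmp(0,j)$ automaton (or a $(0,j)$- resp.\ $(1,j+1)$-automaton) with $j$ as small as the structure permits, invoking the inclusions $\RMeven j\cup\RModd j\subseteq\cmp_j\subseteq\RMdelta{j+1}$ recalled in the text. The lower-bound direction uses the embeddings of game languages together with Proposition~\ref{pro:alt-preserve} to transport non-membership from the canonical witnesses. Finally, effectivity: every step — finding SCCs, reading priorities on loops, detecting the characteristic alternating patterns, recursing on finitely many strictly smaller sub-automata, and outputting the rescaled automaton — is plainly computable, so the recursive procedure terminates and produces both the class and a witnessing automaton, giving decidability.

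The main obstacle I expect is the SCC part: pinpointing the exact combinatorial condition on a single strongly connected component of a game automaton (in interaction with the classes of its exit sub-automata) that separates $\RMeven i$ from $\RModd i$ from $\cmp_i$. One must show on the one hand that the presence of a certain loop pattern \emph{forces} a language Wadge-hard for the corresponding game language (a lower bound, needing the flag/branching structure that game automata can realise and the Arnold strictness results), and on the other hand that the \emph{absence} of that pattern lets one re-colour the automaton to stay in the lower class — and these two must meet exactly, with no gap, which is precisely what makes the four-way dichotomy clean and is the heart of the argument.
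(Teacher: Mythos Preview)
Your overall architecture---decompose the automaton, identify a combinatorial pattern for the lower bound, rewrite priorities for the upper bound---matches the paper's. However, there are two genuine problems.

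First, and most seriously, your ``composition part'' is circular. You invoke Proposition~\ref{pro:alt-preserve} (substitution preserves the alternating index for game automata) as an available tool, but in the paper this proposition is a \emph{corollary} of Theorem~\ref{thm:alt_index}: its proof (Section~\ref{ssec:alt-corols}) rests on Proposition~\ref{pro:index-depends-on-edels}, which in turn is extracted from the very algorithm and lower-bound lemmas you are trying to establish. So you cannot use it to normalise the exit sub-automata before analysing the SCC; you must instead prove the upper and lower bounds directly for the whole automaton, which is what the paper does via the notion of \emph{simulation} (Definition~\ref{def:Asim}) for the upper bound and explicit continuous reductions for the lower bound.

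Second, your decomposition is by SCCs of $\graph(\aut{A})$, whereas the paper's algorithm works with the finer nested structure of \emph{$n$-components} (SCCs of $\graph(\aut{A}^{\geq n})$), processed from $n=\rmax$ down to $n=\rmin$. This matters: the combinatorial condition that separates $\RMeven{i}$, $\RModd{i}$, and $\cmp_i$ is the presence of an \emph{$(i,j)$-edelweiss} (Definition~\ref{def:ijloop}), a pattern built from $(n,n{+}1)$-loops for the appropriate player together with lower-priority loops, and detecting it naturally requires peeling off priority layers one at a time. A single-SCC analysis with exit classes does not directly expose this structure. Your last paragraph correctly anticipates that pinning down this pattern is the heart of the matter, but the proposal does not supply it; the paper's contribution here is precisely the edelweiss definition, Lemma~\ref{lm:invariant} (the algorithm's output is witnessed by an edelweiss), and the reductions $W_{i,j}\leq_W \widehat W_{i,j}\leq_W \lang(\aut{A},q_I)$ of Lemmas~\ref{lemma:edel_hard}, \ref{lm:12_hardness} and Proposition~\ref{pro:123_hardness}.
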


The rest of this section is devoted to showing this result. Section~\ref{ssec:algo} describes a recursive procedure to compute the \emph{class} of the given language $\lang(\aut{A},q_I)$, i.e., $\RMeven{i}$, $\RModd{i}$, or $\cmp_{i}$, depending on which of the possibilities holds. Sections~\ref{ssec:upper},~\ref{ssec:lower} show that the procedure is correct. The estimation of Section~\ref{ssec:upper} is in fact an effective construction of an automaton from the respective class.

\subsection{The algorithm}
\label{ssec:algo}

Let $\aut{A}$ be an alternating automaton of index $(\rmin, \rmax)$. For $n\in\N$ we denote by $\aut{A}^{\geq n}$ the sub-automaton obtained from $\aut{A}$ by restricting to states of priority at least $n$. Observe that the index of $\aut{A}^{\geq n}$ is at most $(n,\rmax)$. A sub-automaton $\aut{B}\subseteq\aut{A}$ is an \emph{$n$-component of $\aut{A}$} if $\graph(\aut{B})$ is a strongly connected component of $\graph(\aut{A}^{\geq n})$. We say that $\aut{B}$ is \emph{non-trivial} if $\graph(\aut{B})$ contains at least one edge. Our algorithm computes the class of each $n$-component $\aut{B}$ of $\aut{A}$, based on the classes of $(n+1)$-components of $\aut{B}$ and transitions between them. (We shall see that for $n$-components the class does not depend on the initial state.)

We begin with a simple preprocessing. An automaton $\aut{A}$ is \emph{priority-reduced} if for all $n>0$, each $n$-component of $\aut{A}$ is non-trivial and contains a state of priority $n$.

\begin{lemma}
\label{lemma:reduced}
Each game automaton can be effectively transformed into an equivalent priority-reduced game automaton.
\end{lemma}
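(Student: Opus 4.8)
The goal is to show any game automaton can be turned into an equivalent priority-reduced one, where "priority-reduced" means: for every $n>0$, each $n$-component is non-trivial (has an edge) and contains a state of priority exactly $n$. The plan is to work top-down on the priority levels (from the highest priority $\rmax$ down to $1$), and for each level perform two kinds of local surgery on the priority function: (i) normalize priorities inside SCCs of $\graph(\aut{A}^{\geq n})$ so that they match the desired range, and (ii) eliminate trivial $n$-components by lowering the priority of the offending state.

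Let me describe the two operations. First, observe that within a fixed SCC $\aut{B}$ of $\graph(\aut{A}^{\geq n})$, all states already have priority $\geq n$; if no state in $\aut{B}$ has priority exactly $n$, then every state has priority $>n$, so $\aut{B}$ is in fact contained in a single SCC of $\graph(\aut{A}^{\geq n+1})$, which will have been handled at the previous (higher) level $n+1$. So the real work is at the bottom of each SCC. The key fact I will use repeatedly is the standard priority-compression principle for parity conditions: if an SCC of the graph uses priorities only in the set $\{n+1, n+2, \dots, \rmax\}$ (i.e.\ it misses priority $n$), then one may uniformly subtract $1$ from the priority of every state in that SCC without changing the winner of any parity game played on (a tree unfolding of) the automaton — because on any infinite branch staying within such an SCC, shifting all priorities by a constant does not change the parity of the $\liminf$ only if the shift is even, so instead one argues more carefully: the correct statement is that one may replace each priority $p$ in the SCC by $p-2$ as long as $p\geq n+2$ throughout, which preserves both the value of the automaton and its index class. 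I will need to be slightly careful here and use the version: if an SCC avoids a priority value $v$ with $n \le v$ and all priorities in the SCC are $\geq v+1$... — more robustly, I will invoke the well-known fact that if a strongly connected sub-automaton uses a contiguous-after-gap set of priorities, gap-filling / compression reduces to the minimal equivalent range, and this is effective.

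Concretely, the algorithm is: for $n = \rmax, \rmax-1, \dots, 2, 1$: compute the SCCs of $\graph(\aut{A}^{\geq n})$; for each such SCC $\aut{B}$, if $\aut{B}$ contains no state of priority $n$ (so all its priorities are $\geq n+1$), subtract $2$ from every priority in $\aut{B}$ (legitimate since all these priorities are $\ge n+1$, and in fact $\ge n+2$ after the previous rounds have compressed; one checks the $\liminf$-parity on every branch is unchanged because such a branch eventually stays in one SCC where the shift is by the constant $2$); if after this $\aut{B}$ is still trivial, i.e.\ it is a single state $q$ with a self-loop-free neighbourhood in $\graph(\aut{A}^{\geq n})$ and $\Omega(q)=n$ with $n>0$, then lower $\Omega(q)$ to $n-1$ — this is sound because $q$ lies on no cycle through priority-$\ge n$ states, so $q$'s priority is visited only finitely often on any relevant branch, hence irrelevant to the $\liminf$, except that we must ensure we do not violate the requirement at level $n-1$, which is why we process levels top-down and re-examine. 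Since $\{\dL,\dR\}$-game automata have transitions that never merge directions, the graph $\graph(\aut{A})$ is unchanged by these operations (only $\Omega$ changes), so "game automaton" is preserved, and equivalence of the recognized language follows from the branch-wise parity argument together with the run characterization of acceptance (the run $\rho(\aut{A},t,q_I)$ and the associated game $\rhogame$ are insensitive to the relabelling on exactly the branches where the compression was applied).

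The main obstacle I anticipate is making the soundness argument for the priority-shift and priority-lowering operations fully rigorous: one must verify that after the shift the automaton is still well-defined (priorities stay nonnegative and the index bound $\aut{A}^{\geq n}$ still "captures" the right states — i.e.\ lowering $\Omega(q)$ from $n$ to $n-1$ could remove $q$ from $\aut{A}^{\geq n}$, potentially splitting or merging SCCs), and that the process terminates. Termination follows because each operation strictly decreases the multiset of priorities (in the shortlex order on the sorted priority sequence), and there are finitely many states. Correctness — that the final automaton is priority-reduced and equivalent — follows by downward induction on $n$: once level $n$ is stabilized, no later operation (at a strictly lower level $n'<n$) touches states of priority $\ge n$, hence cannot re-create a trivial $n$-component or a priority-$n$-free $n$-component. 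The equivalence is maintained at each step by the branch-wise $\liminf$-parity argument sketched above, invoking the run semantics of game automata from Fact~\ref{ft:resolve} and the preceding discussion.
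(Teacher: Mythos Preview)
Your proposal is essentially the same as the paper's: both use exactly the two operations you identify (subtract $2$ from all priorities in a non-trivial $n$-component lacking a priority-$n$ state; lower the priority of a trivial $n$-component to $n{-}1$), and both argue termination by a decrease in the multiset of priorities. The paper's proof is considerably terser: it does not impose any top-down ordering but simply says ``as long as some $n$-component is not priority-reduced, pick one and fix it'', relying on the strict decrease for termination.

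One remark: your specific claim that ``once level $n$ is stabilized, no later operation at a strictly lower level $n'<n$ touches states of priority $\geq n$'' is not obviously correct as stated. An $n'$-component $\aut{C}$ with no state of priority $n'$ may well contain states of priority $\geq n$, and subtracting $2$ from all of $\aut{C}$ will lower those too, potentially altering the $m$-component structure for $m>n'$. This does not break your argument, since your multiset-based termination is independent of the processing order and the resulting fixpoint is priority-reduced regardless; but the top-down single-pass picture you sketch would need either iteration or a more careful invariant. The paper sidesteps this by not committing to an order at all.
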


\begin{proof}
We iteratively decrease priorities in the $n$-components of $\aut{A}$, for $n\geq 1$. As long as there is an $n$-component that is not priority-reduced, pick any such $n$-component, if it is trivial, set all its priorities to $n-1$, if it is non-trivial but does not contain a state of priority $n$, decrease all its priorities by $2$ (this does not influence the recognized language). After finitely many steps the automaton is priority-reduced. Note that no trivial states are introduced. 
\end{proof}

The main algorithm uses three simple notions. An $(n+1)$-component $\aut{B}_0$ of $\aut{B}$ is \emph{\eve-branching} if $\aut{B}$ contains a transition
\[\delta(p,a) = (q_\dL,\dL) \lor (q_\dR,\dR)\]
with $p, q_\dL \in Q^{\aut{B}_0}$ or $p, q_\dR\in Q^{\aut{B}_0}$. For $\adam$ replace $\lor$ with $\land$.

For a class $K$, operations $K^{\eve}$ and $K^{\adam}$ are defined as
\begin{align*}
\left(\RMeven m\right)^{\eve} = \left(\RModd{m-1}\right)^{\eve} = \left(\cmp_{m-1}\right)^{\eve} = \RMeven{m},\\
\left(\RModd m\right)^{\adam} = \left(\RMeven{m-1}\right)^{\adam} = \left(\cmp_{m-1}\right)^{\adam} = \RModd{m}.
\end{align*}
We write $\bigvee_{\ell=1}^k K_\ell$ for the largest class among $K_1, K_2, \dots, K_\ell$ if it exists, or $\cmp_{m}$ if among these classes there are two maximal ones, $\RMeven{m}$ and $\RModd{m}$.

Let $\aut{A}$ be a priority-reduced game automaton of index $(\rmin, \rmax)$. The algorithm starts from $n=\rmax$ and proceeds downward. Let $\aut{B}$ be an $n$-component. 

\begin{itemize}
\item If $\aut{B}$ has only states of priority $n$, set $\class(\aut{B})=\cmp_0$.

\item If $\aut{B}$ has no states of priority $n$, it coincides with a single $1$-component $\aut{B}_1$. Set $\class(\aut{B})=\class(\aut{B}_1)$.

\item Otherwise,  assume that $n$ is even (for odd $n$ replace \eve with \adam).  Let $\aut{B}_1,\aut{B}_2,\ldots,\aut{B}_k$, be the $(n+1)$-components of $\aut{B}$ that are \eve-branching, and let $\aut{C}_1,\aut{C}_2,\ldots,\aut{C}_{k'}$ be the ones that are not  \eve-branching.  We set 
\[\class(\aut{B}) = \bigvee_{\ell=1}^k \class(\aut{B}_\ell)^{\eve} \vee \bigvee_{\ell=1}^{k'} \class(\aut{C}_\ell) \,,\] 
\end{itemize}
Let $\class(\aut{A}, q_I) = \bigvee_{\ell=1}^k \class(\aut{A}_\ell)$ where $\aut{A}_1, \aut{A}_2, \dots, \aut{A}_k$ are the $i$-components  of  $\aut{A}$ reachable from $q_I$ in $\graph(\aut{A})$.

\subsection{Upper bounds}
\label{ssec:upper}

In this subsection we show that $L(\aut{A},q_I)$ can be recognized by a $\class(\aut{A},q_I)$-automaton. The argument will closely follow the recursive algorithm, pushing through an invariant guaranteeing that  each $n$-component $\aut{B}$ of $\aut{A}$ can be replaced with an ``equivalent''  $\class(\aut{B})$-automaton. The notion of equivalence for non-total automata is formalized by \herewasA simulations. 

\begin{definition} \label{def:Asim}
An alternating automaton $\aut{S}$ \emph{\herewasA simulates} a game automaton $\aut{A}$ if  $F^{\aut{S}}\subseteq F^{\aut{A}}$ and 
there exists an embedding $\fun{\iota}{Q^{\aut{A}}}{Q^{\aut{S}}}$ (usually $Q^{\aut{A}}\subseteq Q^{\aut{S}}$)  such that for all $t\in\trees{A}$, $q_I^\aut{A}\in Q^{\aut{A}}$,  and for each winning strategy $\sigma$ for player $P$ in  $\agame(\aut{A}, t, q_I^\aut{A})$ there is a winning strategy $\sigma^S$ for $P$ in  $\agame(\aut{S}, t, \iota(q_I^\aut{A}))$ such that the guarantee of $\sigma^S$ is contained in the guarantee of $\sigma$, and if there is an infinite play conforming to $\sigma^S$ then there is an infinite play conforming to $\sigma$.
\end{definition}

\noindent Note that if $\aut{A}$ and $\aut{S}$ are total and $\aut{S}$ \herewasA simulates $\aut{A}$ then $\lang(\aut{A}, q_I^\aut{A})=\lang(\aut{S}, \iota(q_I^\aut{A}))$.

\begin{lemma}
For each $n$-component $\aut{B}$ of a game automaton $\aut{A}$, 
$\aut{B}$ can be \herewasA simulated by a $\class(\aut{B})$-automaton.
\end{lemma}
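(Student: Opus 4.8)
The plan is to proceed by induction on $n$, going downward from $n=\rmax$, exactly mirroring the recursive algorithm. The induction hypothesis is that every $(n+1)$-component of $\aut{A}$ can be $\herewasA$ simulated by an automaton from its computed class. Fix an $n$-component $\aut{B}$. We split into the three cases of the algorithm. If $\aut{B}$ has only priority-$n$ states, then $\aut{B}$ is already a single-priority automaton, hence trivially $\cmp_0$, and it $\herewasA$ simulates itself via the identity embedding; this is the base case. If $\aut{B}$ has no state of priority $n$, then $\aut{B}$ is a single $(n{+}1)$-component (in fact a $1$-component after renumbering) and we just invoke the induction hypothesis verbatim. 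The interesting case is when $\aut{B}$ contains states of priority $n$ and of higher priority; assume $n$ even (the odd case is dual via complementation).

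In the interesting case, first apply the induction hypothesis to each $(n+1)$-component $\aut{B}_\ell$ and $\aut{C}_\ell$ of $\aut{B}$, replacing each by an $\herewasA$ simulating automaton $\aut{S}_\ell$, respectively $\aut{T}_\ell$, of the class $\class(\aut{B}_\ell)$, resp. $\class(\aut{C}_\ell)$. Because $\herewasA$ simulation only shrinks guarantees and preserves existence of infinite plays, plugging these simulators back into $\aut{B}$ in place of the corresponding components yields an automaton $\aut{B}'$ that $\herewasA$ simulates $\aut{B}$ (one composes the local simulation strategies along the strategy tree, using that no SCC crosses a component boundary, so each infinite play eventually stays inside one replaced component where the local simulation applies, and the finitely-many switches between components contribute only priority-$n$ vertices). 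Now one must re-index $\aut{B}'$ so that its whole SCC lies in the target class $\class(\aut{B}) = \bigvee_\ell \class(\aut{B}_\ell)^{\eve} \vee \bigvee_\ell \class(\aut{C}_\ell)$. The key combinatorial point: for an $\eve$-branching component $\aut{B}_\ell$, player $\eve$ controls whether the play stays inside $\aut{B}_\ell$ or exits to priority $n$; so inside $\aut{B}'$ we can afford to treat $\aut{B}_\ell$ as if its ``bad'' parity were resolved by $\eve$, which is why one applies the $(\cdot)^{\eve}$ operation — it lets the priority-$n$ (even, good for $\eve$) loop absorb the worst odd priority of $\aut{B}_\ell$. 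For a non-$\eve$-branching component $\aut{C}_\ell$, no such saving is available and we keep its class as is, but since it is not $\eve$-branching its only entry into the priority-$n$ part is through $\adam$'s choices or through deterministic edges, which is compatible with simply nesting its index underneath priority $n$. Concretely, one rescales the priorities of each simulator so that all of them plus the fresh priority $n$ fit into a single index $(\rmin',\rmax')$ realizing $\class(\aut{B})$: if the join is $\RMeven{m}$ one arranges everything into $(0,m)$; if $\RModd{m}$ into $(1,m+1)$; if it is $\cmp_m$ (two incomparable maxima $\RMeven m$ and $\RModd m$) one keeps the component structure and certifies it is a $\cmp(0,m)$ automaton by checking every SCC has priorities in $\{0,\dots,m\}$ or $\{1,\dots,m+1\}$.

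The main obstacle I expect is the bookkeeping in this last rescaling step — showing that the priorities of the various simulators, which a priori live in unrelated ranges, can be shifted simultaneously so that (i) each simulator's internal parity behaviour is preserved, (ii) the new priority $n$ assigned to the priority-$n$ states of $\aut{B}$ is even (resp. odd) and is the minimum (resp.\ correctly placed) so that an infinite play that infinitely often leaves every $(n+1)$-component is won by $\eve$ (resp.\ $\adam$) exactly as in $\aut{B}$, and (iii) the resulting automaton is genuinely in $\class(\aut{B})$, i.e.\ in the $\cmp$ case every SCC satisfies Definition~\ref{def:comp}. The delicate sub-point is the $(\cdot)^{\eve}$ operation: one must verify that when $\aut{B}_\ell$ is $\eve$-branching and has class $\RModd{m-1}$ (or $\cmp_{m-1}$), the transition $\delta(p,a)=(q_\dL,\dL)\lor(q_\dR,\dR)$ witnessing $\eve$-branching genuinely lets $\eve$ escape to the priority-$n$ part whenever staying would be losing for her, so that the merged automaton of index $(0,m)$ is still correct — this is where the structural hypothesis on game automata (no ambiguous transitions, acceptance definable via runs, Fact~\ref{ft:resolve}) is used, and it is the heart of the argument. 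The rest is careful but routine composition of $\herewasA$ simulations, using transitivity of $\herewasA$ simulation and the observation after Definition~\ref{def:Asim} that for total automata $\herewasA$ simulation collapses to language equality, which will eventually give $L(\aut{A},q_I)=L(\aut{S},\iota(q_I))$ once we reach $n=0$ and take the join over the $i$-components reachable from $q_I$.
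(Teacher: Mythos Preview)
Your overall induction scheme matches the paper, but the treatment of the non-$\eve$-branching components $\aut{C}_\ell$ has a real gap. When you ``plug the simulators back into $\aut{B}$'', the resulting automaton has $\aut{C}_\ell^S$ sitting in the \emph{same} SCC as the priority-$n$ states, because $\aut{B}$ is a single $n$-component. So if $\class(\aut{C}_\ell)=\RModd{m}$, the simulator has priorities in $\{1,\dots,m{+}1\}$ (after shifting) and the composed SCC has priorities $\{0,\dots,m{+}1\}$, i.e.\ class $\RMeven{m+1}=\class(\aut{C}_\ell)^{\eve}$, not $\class(\aut{C}_\ell)$. Your proposed ``rescaling'' cannot fix this: there is no way to shift priorities of $\aut{C}_\ell^S$ so that its parity semantics is preserved \emph{and} the union with priority $0$ stays in $\{0,\dots,m\}$ or $\{1,\dots,m{+}1\}$. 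Likewise, in the $\cmp_m$ case your plan to ``keep the component structure and check every SCC'' fails because after naive substitution there is only one SCC.

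The paper handles this with a genuine construction you are missing: each $\aut{C}_\ell^S$ is replaced by \emph{two} copies. One copy $\aut{C}_\ell^R$ has all priorities flattened to $n$ (so it contributes nothing to the SCC's index), with added $\varepsilon$-moves letting $\adam$ jump to the second copy $\aut{C}_\ell^T$, which is $\aut{C}_\ell^S$ with all non-$\aut{B}$-exit transitions replaced by $\top$. Because $\aut{C}_\ell^T$ cannot return to the rest of $\aut{B}^S$, it lies outside the main SCC and its priorities are accounted for separately, yielding $\class(\aut{C}_\ell)$ and not $\class(\aut{C}_\ell)^{\eve}$. The correctness argument for $\adam$'s strategy is where non-$\eve$-branching is actually used: since every $\lor$-transition from $\aut{C}_\ell$ has both successors outside $\aut{C}_\ell$, once inside $\aut{C}_\ell$ the play conforming to $\adam$'s strategy is a single path, so $\adam$ can detect when it stays forever and safely jump to $\aut{C}_\ell^T$. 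Your intuition is also inverted here: the $\eve$-branching components are the \emph{easy} case (naive composition already gives $\class(\aut{B}_\ell)^{\eve}$); the delicate construction is needed precisely to \emph{avoid} paying the $(\cdot)^{\eve}$ penalty on the non-$\eve$-branching ones.
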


\begin{proof} Assume that the index of $\aut{A}$ is $(i,j)$. We proceed by induction on $n=j, j-1, \dots, i$. If all states of $\aut{B}$ have priority $n$ or all have priority strictly greater \new{than} $n$,  the claim is immediate.
% For $n=j$ the claim is immediate. The case when $n<j$ but there are no states with priority $n$,  is also immediate.
Let us assume that neither is the case. By symmetry it is enough to give the construction for even $n$.

Suppose $\aut{B}$ has only \eve-branching $n+1$ components,  $\aut{B}_1, \aut{B}_2, \dots, \aut{B}_k$.  Then $\class(\aut{B}) = \bigvee_{\ell}\class(\aut{B}_\ell)^{\eve} = \RMeven m$ for some $m\geq 1$. By the inductive hypothesis  we get a $\class(\aut{B}_\ell)$-automaton $\aut{B}_\ell^S$,  \herewasA simulating $\aut{B}_\ell$. Since  $\RMeven m \geq \class(\aut{B}_\ell)^{\eve}$, $\aut{B}_\ell^S$ can be assumed to be an $(n,n+m)$-automaton.  Hence, we can put 
\[\aut{B}^S =  \aut{B} \restr_{\Omega^{-1}(n)} \cdot \,\aut{B}_1^S \cdot  \aut{B}_2^S \cdot \ldots \cdot \aut{B}_k^S\] 
to get an $(n,n+m)$-automaton \herewasA simulating $\aut{B}$.

Now, assume that $\aut{B}$ contains also $n+1$ components  $\aut{C}_1, \aut{C}_2, \dots, \aut{C}_{k'}$ that are not \eve-branching. Repeating the construction above would now result in an automaton of index $\bigvee_\ell \class(\aut{B}_\ell)^{\eve} \vee \bigvee_\ell \class(\aut{C}_\ell)^{\eve} $, potentially higher than $\class(\aut{B}) = \bigvee_\ell \class(\aut{B}_\ell)^{\eve} \vee \bigvee_\ell \class(\aut{C}_\ell)$. Hence, instead of $\aut{C}^S_\ell$ we shall use $\aut{C}^R_\ell \cdot \aut{C}^T_\ell$, where
\begin{itemize}
\item $\aut{C}_\ell^T$ is a copy of $\aut{C}_\ell^S$ with each transition leading to an exit of $\aut{C}_\ell^S$ that is not an exit of $\aut{B}$, replaced  with a transition to $\top$ (losing for \adam);  
\item $\aut{C}^R_\ell$ is $\aut{C}^S_\ell$ with all priorities set to $n$ and additional $\varepsilon$-transitions (which can be eliminated in the usual way): for each state $q$ of $\aut{C}_\ell^R$ allow \adam to decide to stay in $q$ or move to the copy of $q$ in $\aut{C}_\ell^T$ (treated as an exit in $\aut{C}_\ell^R$). 
\end{itemize}
Thus, 
\[\aut{B}^S =  \aut{B} \restr_{\Omega^{-1}(n)} \cdot \,\aut{B}_1^S \cdot \ldots \cdot \aut{B}_k^S \cdot \aut{C}_1^R \cdot \aut{C}_1^T \cdot  \ldots \cdot \aut{C}_{k'}^R \cdot \aut{C}_{k'}^T\,.\] 
The composition of automata $\aut{B} \restr_{\Omega^{-1}(n)}$,  $\aut{B}^S_\ell$, $\aut{C}^R_\ell$ gives a $\class(\aut{B})$-automaton (each $\aut{C}^S_\ell$ was replaced with an $(n,n)$-automaton $\aut{C}^R_\ell$). This is further composed with $\class(\aut{C}_\ell)$-automata $\aut{C}^T_\ell$ in a loop-less way.  Hence, $\aut{B}^S$ is a $\class(\aut{B})$-automaton. 

Let us see that $\aut{B}^S$ \herewasA simulates $\aut{B}$. Let $\iota$ be defined as identity on $\aut{B}\restr_{\Omega^{-1}(n)}$, on $Q^{\aut{B}_\ell}$ as the embedding $Q^{\aut{B}_\ell} \to Q^{\aut{B}^S_\ell}$, and on  $Q^{\aut{C}_\ell}$ as the embedding $Q^{\aut{C}_\ell} \to Q^{\aut{C}^R_\ell}$. Consider a tree $t\in \trees{A}$, a state $q_I^\aut{B}$ of $\aut{B}$, and games $\agame(\aut{B},t,q_I^\aut{B})$ and $\agame(\aut{B}^S,t,\iota(q_I^\aut{B}))$.

First, consider a strategy $\sigma$ for \eve in $\agame(\aut{B},t,q_I^\aut{B})$. We decompose this strategy into parts corresponding to the sub-automata $\aut{B}_\ell$ and $\aut{C}_\ell$, for each part we use the fact that $\aut{B}_\ell^S$ \herewasA simulates $\aut{B}_\ell$ and $\aut{C}_\ell^S$ \herewasA simulates $\aut{C}_\ell$. This gives us a strategy for \eve on parts of $\agame(\aut{B}^S, t, \iota(q_I^\aut{B}))$ corresponding to sub-automata $\aut{B}_\ell^S$, $\aut{C}_\ell^R$, $\aut{C}_\ell^T$. Outside of $\aut{B}_\ell^S$, $\aut{C}_\ell^R$, and $\aut{C}_\ell^T$, \eve has the same choices in $\aut{B}^S$ as in $\aut{B}$. Therefore, she can make her choices according to $\sigma$. This gives a complete strategy $\sigma^S$. Now consider any play conforming to $\sigma^S$. Such a play either visits infinitely many times a state of priority $n$ in $\aut{B}^S$, and so is winning for \eve, or from some point on it stays in some sub-automaton $\aut{B}_\ell^S$, $\aut{C}_\ell^R$ or $\aut{C}_\ell^T$. In this case the play is also winning for \eve, by the assumption on $\sigma$ and by the fact that all the changes of priorities in $\aut{C}_\ell^R\,$'s and transitions in $\aut{C}_\ell^T\,$'s are favourable to \eve. By the definition of $\sigma^S$, the guarantee of $\sigma^S$ is contained in the guarantee of $\sigma$, and if there is an infinite play conforming to $\sigma^S$ then there is an infinite play conforming to $\sigma$.

For a winning strategy $\sigma$ for \adam in $\agame(\aut{B},t,q_I^\aut{B})$, we construct a winning strategy $\sigma^S$ for \adam in $\agame(\aut{B}^S,t,\iota(q_I^\aut{B}))$ as follows:
\begin{itemize}
\item in positions corresponding to states of priority $n$ in $\aut{B}$ the strategy $\sigma^S$ follows the decisions of $\sigma$;
\item in components $\aut{B}_\ell^S$, $\aut{C}_\ell^R$, $\aut{C}_\ell^T$ the strategy $\sigma^S$ simulates $\sigma$ (using the fact that $\aut{C}_\ell^R$ and $\aut{C}_\ell^T$ have the same states and exits as the automaton $\aut{C}_\ell^S$ that \herewasA simulates $\aut{C}_\ell$) with the following exception:
\adam immediately moves from $\aut{C}^R_\ell$ to $\aut{C}^T_\ell$ whenever each extension of the current play, conforming to the simulating strategy, stays forever in $\aut{C}^R_\ell$ (possibly reaching an exit that is also an exit of $\aut{B}^S$).
\end{itemize}
An easy inductive argument shows that 
\begin{enumerate}
\item each position $(v,p)$ with $p\in \aut{B}\restr_{\Omega^{-1}(n)}$ that is reached in some play conforming to $\sigma^S$ is also reached in some play in $\agame(\aut{B}, t, q_I^\aut{B})$ conforming to  $\sigma$;
\item whenever a play conforming to $\sigma^S$ enters $\aut{B}_\ell^S$ (resp.~$\aut{C}^R_\ell$) in a position $(v,p)$, then $p=\iota(q)$ for some $q\in \aut{B}_\ell$ (resp.~$q\in \aut{C}_\ell$) and $(v,q)$ is reached in some play in $\agame(\aut{B}, t,q_I^\aut{B})$ conforming to $\sigma$. 
\end{enumerate}

% By the definition of \herewasA simulation, each finite segment of $b^S$ staying within a sub-component $\aut{B}_\ell^S$ or $\aut{B}^R_\ell$ and $\aut{B}^T_\ell$, can be replaced with segments of a plays in $\agame(\aut{B},q_I,t)$ such that the resulting play $b$ is conforming to $\sigma$. We define $b$ segment by segment, starting from the initial position. As long as the play $b^S$ stays outside of sub-components, $b$ is equal to $b^S$. When at some point $b^S$ enters some sub-component, then by definition of $\sigma^S$ it is either $\aut{B}_\ell^S$ or $\aut{B}^R_\ell$, the current position in $b$ corresponds to some state $q \in \aut{B}_\ell$ and the current position in $b^S$ corresponds to $\iota(q)$. When $b^S$ reaches an exit  in  $\aut{B}_\ell^S$, $\aut{B}^R_\ell$ or $\aut{B}^T_\ell$,  there is an extension of the so-far defined initial segment of $b$, conforming to $\sigma$,  that reaches the same exit in $\aut{B}_\ell$. If the exit reached is a state in $\aut{B}$, we continue as before. Otherwise, both plays end. 

Consider any play $b^S$ conforming to $\sigma^S$.%  We shall prove that $b^S$ is winning for \adam.

Assume that $b^S$ is a finite play leading to a final position $(v,f)$. Unless $(v,f)$ is entered directly from some $\aut{C}_\ell^T$, by the two observations above (and by the definition of $\sigma^S$) it follows that $(v,f)$ can also be reached in some play conforming to $\sigma$. Assume that $(v,f)$ is entered directly from some $\aut{C}_\ell^T$. Let $(w,\iota(q))$ be the last moment when $b^S$ entered $\aut{C}_\ell^R$ (recall that $\aut{C}_\ell^T$ is only entered from $\aut{C}_\ell^R$). Since $\sigma^S$ in $\aut{C}_\ell^R$ and  $\aut{C}_\ell^T$ mimics the simulating strategy in $\aut{C}_\ell^S$, the final position $(v,f)$ can be reached in some play in $\agame(\aut{B}, t, q_I^\aut{B})$ starting in $(w,q)$, conforming to $\sigma$. By observation 2 it follows that $(v,f)$ is reached in a play conforming to $\sigma$ and starting in $(\varepsilon, q_I^\aut{B})$.

The remaining case is when $b^S$ is an infinite play. Should $b^S$ visit infinitely often positions of priority $n$, by the observation 2 and by the definition of $\sigma^S$ we would define a play in $\agame(\aut{B}, t, q_I^\aut{B})$ conforming to $\sigma$ that visits infinitely often positions of priority $n$. This is impossible since $\sigma$ is winning for \adam. It follows that from some point on $b^S$ stays in some sub-component. If the sub-component is $\aut{B}^S_\ell$,  \adam wins as he is playing with a winning strategy in $\aut{B}_\ell^S$.  The other possibility is that $b^S$ stays forever in $\aut{C}_\ell^R \cdot \aut{C}_\ell^T$ for some $\ell$. Since $\aut{C}_\ell$ is not \eve-branching, in each transition of the form $\delta(p,a) = (q_\dL,\dL) \lor (q_\dR,\dR)$ with $p\in Q^{\aut{B}_\ell}$, at least one of the states $q_\dL,q_\dR$ is an exit state in $\aut{B}$, or both are outside of $\aut{C}_\ell$. Hence, after entering $\aut{C}_\ell$, $\sigma$ becomes a single path in $\aut{C}_\ell$, with all the branchings (choices of \eve) going directly to exits of $\aut{B}$. In general, this path may end in a position belonging to \eve, such that both choices lead outside of $\aut{C}_\ell$ (not necessarily to exits of $\aut{B}$.) In our case the path must stay in $\aut{C_\ell}$ forever: since $b^S$ is infinite and stays forever in  $\aut{C}_\ell^R \cdot \aut{C}_\ell^T$, there is an infinite play conforming to the strategy simulating $\sigma$ in $\aut{C}^S_\ell$ and, by Definition~\ref{def:Asim}, an infinite play conforming to $\sigma$ in $\aut{C}_\ell$. Consequently, all exits reachable with $\sigma$ in $\aut{C}_\ell$ are also exits of $\aut{B}$. Hence, as soon as $b^S$ enters $\aut{C}^R_\ell$ for the last time, $\sigma^S$ tells  \adam to move to $\aut{C}^T_\ell$ where \adam wins all infinite plays.
\end{proof}

\medskip

It follows easily that $\lang(\aut{A}, q_I)$ can be recognized by a $\class(\aut{A}, q_I)$-automaton: the automaton can be obtained as a loop-less composition of the $\class(\aut{A}_\ell)$-automata \herewasA simulating the $i$-components $\aut{A}_\ell$ of $\aut{A}$ reachable from $q_I$. In other words, the alternating index bounds \new{as} computed by the algorithm in Section~\ref{ssec:algo} are correct. 

\subsection{Lower bounds}
\label{ssec:lower}

It remains to see that $\lang(\aut{A},q_I)$ cannot be recognized by an alternating automaton of index lower than $\class(\aut{A},q_I)$. Our proof uses the concept of topological hardness. A classical notion of topological hardness relies on the Borel hierarchy and the projective hierarchy~\cite{kechris_descriptive}, but  these notions are not suitable for us, since most regular tree languages live on the same level of these hierarchies: $\adelta 2$.  We use a more refined notion based on continuous reductions~\cite{wadge_phd} and so-called game languages~\cite{arnold_strict,bradfield_simplifying,niwinski_strict}.

\begin{definition} 
For $\rmin< \rmax$ consider the following alphabet 
\[A_{\rmin,\rmax} = \{\eve, \adam\}\times \{\rmin, \rmin+1,\ldots, \rmax\}.\]
With each  $t\in\partrees{A_{\rmin,\rmax}}$ we associate a parity game $\game_t$ where
\begin{itemize}
\item $V=\dom(t)$, $F=\holes(t)$,
\item $E =  \big \{(v,vd) \bigm| v\in \dom(t),  d\in\{\dL,\dR\} \big\}$,
\item if $t(v){=}(P,n)$ then $\Omega(v){=} n$ and $v\in V_P$ for $P\in\{\eve,\adam\}$.
\end{itemize}
Let $\W{i}{j}$ be the set of \emph{total trees} over $A_{i,j}$ such that $\eve$ has a winning strategy in $\game_t$.
\end{definition}

% \begin{definition} 
% For $\rmin\leq \rmax$ consider the following alphabet 
% \[A_{\rmin,\rmax} = \big(\{\eve, \adam\}\times \{\rmin, \rmin+1,\ldots, \rmax\}\big)\cup\{\top,\bot\}.\]
% With each  $t\in\partrees{A_{\rmin,\rmax}}$ we associate a parity game $\game_t$:
% \filip{Why $\top$ and $\bot$? clean this!}
% \begin{itemize}
% \item $V=\dom(t)$, $V_P$ contains $v$ such that $t(v)=(P, n)$.
% \item $F=\holes(t)$,
% \item $E$ contains pairs $(v,vd)$ for $v\in t$ and $d\in\{L,R\}$ and also $(\top,\top)$ and $(\bot,\bot)$,
% \item $\Omega(v)= n$ if $t(v)=(P,n)$ and $\Omega(\top)=0,\Omega(\bot)=1$.
% \end{itemize}
% \end{definition}

Let us assume the usual
Cantor-like topology on the space of trees, with the open sets
defined as arbitrary unions of \new{finite intersections of} sets of the form $\{t\in\trees{A} \bigm
| t(v) = a\}$ for $v\in\{\dL,\dR\}^*$ and $a\in A$.
Topological hardness of languages can be compared using continuous reductions. 
A \emph{continuous reduction} of $L_1\subseteq X$ to $L_2\subseteq Y$ is a continuous function $\fun{f}{X}{Y}$ such that $f^{-1}(L_2)=L_1$. The fact that $L_1$ can be continuously reduced to $L_2$ is denoted by $L_1\leq_W L_2$. On Borel sets,  the pre-order $\leq_W$ induces the so-called Wadge hierarchy (see~\cite{wadge_phd}) which greatly refines the Borel hierarchy and has the familiar ladder shape with pairs of mutually dual classes alternating with single self-dual classes. Here, we are interested in the following connection between continuous reductions, languages $\W{i}{j}$, and the alternating index hierarchy.

\begin{fact}[\cite{arnold_strict,bradfield_simplifying,niwinski_strict}]\mbox{} For all $\rmin < \rmax$,
\begin{enumerate}
\item $\W{\rmin}{\rmax}$ is regular and $\W{\rmin}{\rmax}\in\RM(\rmin, \rmax)$, 
\item $L\leq_W \W{\rmin}{\rmax}$ for each $L\in\RM(\rmin, \rmax)$,
\item $\W{\rmin}{\rmax} \not \leq_W \W{\rmin+1}{\rmax+1}$,
\item $\W{\rmin}{\rmax} \in \adelta 2$, $\W{0}{1}$ is $\asigma 1$-complete, $\W{1}{2}$ is $\api 1$-complete.
\end{enumerate}
\end{fact}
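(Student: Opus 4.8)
The plan is to dispatch the four items separately, in order. For item~1 I would exhibit an explicit game automaton $\aut{A}$ recognising $\W{\rmin}{\rmax}$: over $A_{\rmin,\rmax}$ take states $q_\rmin,\dots,q_\rmax$ with $\Omega(q_n)=n$, no exits, and transitions $\delta(q_n,(\eve,m))=(q_m,\dL)\vee(q_m,\dR)$ and $\delta(q_n,(\adam,m))=(q_m,\dL)\wedge(q_m,\dR)$. On a total tree $t$ the game $\agame(\aut{A},t,q_\rmin)$ is, branch by branch, a copy of $\game_t$: the choice of direction at a node $v$ belongs to the player named by $t(v)$ in both games, and the priorities seen along a branch of $\agame(\aut{A},t,q_\rmin)$ agree with those of the corresponding branch of $\game_t$ up to a one-step shift towards the root and the insertion of auxiliary priorities equal to $\rmax$, which dominates everything. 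Since every branch of a total tree is infinite, neither change affects any $\liminf$, so $\aut{A}$ accepts $t$ from $q_\rmin$ iff $\eve$ wins $\game_t$; hence $\W{\rmin}{\rmax}$ is regular and in $\RM(\rmin,\rmax)$.

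For item~2, given $L=\lang(\aut{C},q_I)$ with $\aut{C}$ alternating of index $(\rmin,\rmax)$ (so total, hence exit-free), I would define $f\colon\trees{B}\to\trees{A_{\rmin,\rmax}}$ sending $t$ to the unravelling of $\agame(\aut{C},t,q_I)$ as a full binary tree over $A_{\rmin,\rmax}$: a position $(v,q)$ with $q$ a state becomes a node of priority $\Omega_{\aut{C}}(q)$ whose two children both encode $(v,\delta(q,t(v)))$; $(v,b_1\vee b_2)$ becomes an $\eve$-node of priority $\rmax$ with children encoding $b_1,b_2$, dually for $\wedge$; $(v,(q,d))$ is forwarded, with a duplicated child, to $(vd,q)$; and a $\top$ (resp.\ $\bot$) position is replaced by the constant full binary tree over a fixed even (resp.\ odd) priority of $A_{\rmin,\rmax}$, which exists since $\rmin<\rmax$. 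As $\agame(\aut{C},t,q_I)$ has only infinite plays, this encoding preserves the winner, and $f(t)(w)$ depends only on $t$ restricted to depth at most $|w|$, so $f$ is continuous. Thus $f^{-1}(\W{\rmin}{\rmax})=L$ and $L\leq_W\W{\rmin}{\rmax}$.

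Item~3 is the step I expect to be the real obstacle, and the one I would not redo from scratch. I would first record the duality coming from positional determinacy of parity games: the bijective relabelling $(\eve,n)\mapsto(\adam,n+1)$, $(\adam,n)\mapsto(\eve,n+1)$ is a homeomorphism $\trees{A_{\rmin,\rmax}}\to\trees{A_{\rmin+1,\rmax+1}}$ carrying $\comp{\W{\rmin}{\rmax}}$ exactly onto $\W{\rmin+1}{\rmax+1}$, so $\W{\rmin}{\rmax}\leq_W\W{\rmin+1}{\rmax+1}$ would entail $\W{\rmin}{\rmax}\leq_W\comp{\W{\rmin}{\rmax}}$, i.e.\ self-duality of $\W{\rmin}{\rmax}$. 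Excluding this is precisely the strictness of the alternating Rabin--Mostowski index hierarchy, established by Bradfield and by Arnold and Niwi\'nski via a diagonal argument in which a hypothetical continuous reduction, applied to a tree that encodes a copy of itself, produces a single play won by both players. I would invoke this from~\cite{arnold_strict,bradfield_simplifying,niwinski_strict}.

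Finally, for item~4: $\W{\rmin}{\rmax}\in\adelta 2$ for the reason every regular tree language is $\adelta 2$, namely ``$\eve$ wins $\game_t$'' is both $\asigma 2$ (there is a positional winning strategy for $\eve$) and, by determinacy, $\api 2$ (there is no positional winning strategy for $\adam$). For $\W{0}{1}$ the game $\game_t$ is a B\"uchi game, and $\eve$ wins iff there exist $S\subseteq\dom(t)$ and a positional strategy $\sigma$ on $S$ such that the root lies in $S$, $\sigma$ never leaves $S$, and from every node of $S$ the tree of $\sigma$-conforming plays avoiding priority~$0$ is finite (finite, rather than merely well-founded, by K\"onig's lemma, the arena being binary); this condition is arithmetic in $(t,S,\sigma)$, so $\W{0}{1}\in\asigma 1$, and $\asigma 1$-hardness comes from the standard continuous reduction of the set of ill-founded trees, letting $\eve$ walk a binary encoding of the input tree with a priority-$0$ node marking each descent and a priority-$1$ sink reached upon stepping outside the tree. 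Since $\comp{\W{0}{1}}$ is homeomorphic to $\W{1}{2}$ via the relabelling of item~3, $\W{1}{2}$ is $\api 1$-complete.
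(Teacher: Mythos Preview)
The paper does not prove this statement at all: it is stated as a \texttt{fact} with a citation to \cite{arnold_strict,bradfield_simplifying,niwinski_strict} and immediately followed by Corollary~\ref{cor:criterion}, with no intervening argument. So there is nothing in the paper to compare your proposal against; you are supplying a proof where the authors deliberately chose to quote the literature.

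That said, your sketch is sound and follows the standard route. Item~1 is correct; the one-step priority shift and the interleaved priority~$\rmax$ positions indeed leave every $\liminf$ unchanged. Item~2 is the usual unravelling argument and your treatment of single-successor positions (duplication) and of $\top/\bot$ (constant subtrees of some even/odd priority in $\{\rmin,\dots,\rmax\}$, available because $\rmin<\rmax$) is exactly what is needed; continuity is clear for the reason you give. For item~3 you correctly reduce $\W{\rmin}{\rmax}\leq_W\W{\rmin+1}{\rmax+1}$ to self-duality of $\W{\rmin}{\rmax}$ via the owner/priority shift, and then---appropriately---invoke the diagonal argument from the very references the paper cites; this is the one genuinely non-elementary step, and deferring it is honest. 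In item~4 your $\asigma1$ upper bound for $\W{0}{1}$ is fine once you note that, by K\"onig's lemma in a binary arena, ``every $\sigma$-play from $v$ reaches priority~$0$'' is an arithmetic condition in $(t,\sigma,v)$; for $\asigma1$-hardness the cleanest instantiation of your idea is to reduce from the $\asigma1$-complete set of $\{0,1\}$-labelled full binary trees having a branch with infinitely many~$0$'s, via $t\mapsto (\eve,t(\cdot))$. The $\api1$-completeness of $\W{1}{2}$ then follows from your relabelling, and $\W{\rmin}{\rmax}\in\adelta2$ is immediate from positional determinacy as you say.
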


This gives  a criterion for proving index lower bounds. 

\begin{corollary} \label{cor:criterion}
If $\W{\rmin}{\rmax}  \leq_W L$ then $L\notin \RM(\rmin+1, \rmax+1)$.
\end{corollary}

% \begin{fact}
% For every alternating total automaton $\aut{A}$ of index $(\rmin,\rmax)$ and 
% any initial state $q_I\in Q^A$ there is the canonical continuous function reducing 
% $\lang(\aut{A}, q_I)$ to $\W \rmin\rmax$.
% \end{fact}

% Let $t\in \partrees{A_{\rmin,\rmax}}$ and $n\in\N$. We define the restriction of $t$ 
% to priorities at least $n$ (denoted $t^{\geq n}$) as $t\restr_X$ where $X$ contain those
% nodes $v\in t$ such  that for every $u\prec v$ the letter $t(u)$ has form $(P,n')$ 
% with $n'\geq n$ and $t(v)$ is not of the form $(P,n')$ for $n'<n$.
% Note that $t^{\geq n}\in\partrees{A_{n,\rmax}}$.

In consequence, in order to show that the index bound computed by the algorithm from Section~\ref{ssec:algo} is tight, it suffices to show that if $\RM(\rmin, \rmax) \leq \class(\aut{A}, q_I)$, then $\W{\rmin}{\rmax} \leq_W \lang(\aut{A},q_I)$. We construct the reduction is three steps:
\begin{enumerate}
\item we show that if the class computed by the algorithm (i.e.~$\class(\aut{A}, q_I)$) is at least $\RM(i,j)$, then this is witnessed with a certain subgraph in $\graph(\aut{A})$, called $(i,j)$-edelweiss; 
\item we introduce intermediate languages $\widehat W_{i,j}$, whose internal structure corresponds precisely to $(i,j)$-edelweisses, and \new{in consequence} $\widehat W_{i,j} \leq_W \lang(\aut{A},q_I)$ if only  $\aut{A}$ contains an $(i,j)$-edelweiss reachable from $q_I$;
\item we prove that $W_{i,j} \leq_W \widehat W_{i,j}$.
\end{enumerate}
The combinatorial core of the argument is the last step.

\figILoop

\begin{definition} \label{def:ijloop}
We say that in a game automaton $\aut{B}$ there is an \emph{$i$-loop rooted in $p$} if there exists a word $w$ such that on the path $p\tran{w} p$ in $\graph(\aut{B})$ the minimal priority is $i$ \new{(see the left-hand side of Fig.~\ref{fig:iloop})}.

\new{A game} automaton $\aut{B}$ contains an \emph{$(i, j)$-loop for $\eve$ rooted in $p$} \new{(see the right-hand side of Fig.~\ref{fig:iloop})}, if there exist states $q, q_\dL,q_\dR$ of $\aut{B}$, a letter $a$, and words $w,w_\dL,w_\dR$ such that:
\begin{itemize}
\item $\delta(q,a) = (q_\dL, \dL) \lor (q_\dR,\dR)\,$; 
\item $p\tran{w} q$; $q_\dL\tran{w_\dL} p$; $q_\dR\tran{w_\dR} p\,$; 
\item on one of the paths $p\tran{w\, (a,\dL)\, w_\dL} p$ or $p\tran{w\, (a,\dR)\, w_\dR} p$ the minimal priority is $i$ and on the other it is $j$.
\end{itemize}
For \adam dually, with $\lor$ replaced with $\land$.

%Automaton $\aut{B}$ contains  \emph{$(1,2)$-edelweiss} rooted in $p$ if for some even $n$ it contains  $(n+1,n+2)$-loop for $\adam$ rooted in $p$.\linebreak
For an \emph{even} $j > i$, $\aut{B}$ contains an \emph{$(i,j)$-edelweiss rooted in $p$} (see \new{Fig.~\ref{fig:edelweiss-basic} and} Fig.~\ref{fig:edelweiss})
%$\RM(i,2j)>\RM(1,2)$, 
if for some even $n$ it contains 
\begin{itemize}
\item $(n+k)$-loops for $k = i, i+1, \dots,  j-3\,$,
\item  $(n+j-2,n+j-1)$-loop for $\eve\,$, if $i\leq j-2$; and
\item  $(n+j-1,n+j)$-loop for $\adam\,$; 
\end{itemize}
all rooted in $p$. For odd $j$ swap \adam and \eve but keep $n$ even.
\end{definition}

\figEdelBasic

\begin{lemma} \label{lm:invariant}
Let $\aut{A}$ be a game automaton and $q_I$ a state of $\aut{A}$. If $\class(\aut{A}, q_I) \geq\RM(i,j)$ then $\aut{A}$ contains an $(i,j)$-edelweiss rooted in a state reachable from  $q_I$.
\end{lemma}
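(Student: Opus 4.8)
\emph{Proof sketch.} The plan is to mirror the recursive algorithm of Section~\ref{ssec:algo}, proving by downward induction on $n$ over the $n$-components of $\aut{A}$ a statement from which the lemma follows. We may assume $\aut{A}$ is priority-reduced (Lemma~\ref{lemma:reduced}).

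\textbf{Reducing to one component.} Since $\class(\aut{A},q_I)$ is a join $\bigvee_\ell\class(\aut{A}_\ell)$ over components reachable from $q_I$, and since $\RM(i,j)$ is always of the form $\RMeven{\ell}$ or $\RModd{\ell}$ and never a $\cmp$-class, it suffices to find one component $\aut{A}_{\ell_0}$ with $\class(\aut{A}_{\ell_0})\ge\RM(i,j)$ and then locate the edelweiss inside it; its states are reachable from $q_I$ because $\graph(\aut{A}_{\ell_0})$ is strongly connected. The existence of $\aut{A}_{\ell_0}$ rests on the elementary lattice fact that $\RM(i,j)\le\cmp_m$ already forces $\RM(i,j)\le\RMeven m$ or $\RM(i,j)\le\RModd m$ (indeed $\cmp_m\subseteq\RM(0,m+1)\cap\RM(1,m+2)$ bounds the width $j-i$ by $m$, after which parity decides which side $\RM(i,j)$ fits in); hence whichever of the at most two maximal classes realising the join dominates $\RM(i,j)$.

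\textbf{The inductive invariant.} I would prove: for every $n$-component $\aut{B}$ of $\aut{A}$, every state $p$ of $\aut{B}$, and all $i\le j$, if $\RM(i,j)\le\class(\aut{B})$ then $\aut{B}$ contains an $(i,j)$-edelweiss rooted in $p$, with the even base $n_0$ of Definition~\ref{def:ijloop} taken to be the least priority of $\aut{B}$ (rounded down by one if odd, with $\eve$ and $\adam$ swapped, i.e.\ the ``odd $j$'' case of the definition). The universally quantified root $p$ is what makes the induction close: $\graph(\aut{B})$ is strongly connected, so any loop can be rerouted through $p$, and a loop confined to a single $(n{+}1)$-component keeps its minimal priority under such rerouting precisely because escaping that component forces a visit to a priority-$n$ state --- which is only ever needed by the bottom $n_0$-loop of the edelweiss. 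The base cases are immediate, exactly as in Section~\ref{ssec:upper}: if all states of $\aut{B}$ have priority $n$ then $\class(\aut{B})=\cmp_0$ and $\RM(i,j)\le\cmp_0$ only when $i=j$; if all have priority $>n$ then $\aut{B}$ is a single $1$-component and the statement is a reindexing.

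\textbf{The inductive step and the obstacle.} Take $n$ even (odd $n$ is symmetric), with $(n{+}1)$-components $\aut{B}_1,\dots,\aut{B}_k$ ($\eve$-branching) and $\aut{C}_1,\dots,\aut{C}_{k'}$ (not), so that $\class(\aut{B})=\bigvee_\ell\class(\aut{B}_\ell)^{\eve}\vee\bigvee_\ell\class(\aut{C}_\ell)\ge\RM(i,j)$. Using the join analysis again, fix a single witnessing $(n{+}1)$-component. If it is a non-$\eve$-branching $\aut{C}_{\ell_0}$, then $\class(\aut{C}_{\ell_0})\ge\RM(i,j)$ directly, the induction hypothesis supplies an $(i,j)$-edelweiss in $\aut{C}_{\ell_0}$, and we reroot it at $p$ and, if $i=0$, complete it with a priority-$n$ loop through $p$ (available since $\aut{B}$ is priority-reduced and strongly connected). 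If it is an $\eve$-branching $\aut{B}_{\ell_0}$ with $\class(\aut{B}_{\ell_0})^{\eve}=\RMeven m\ge\RM(i,j)$ and $\class(\aut{B}_{\ell_0})$ is already $\RMeven m$, we proceed as above; otherwise $\class(\aut{B}_{\ell_0})\in\{\RModd{m-1},\cmp_{m-1}\}$ is one level lower, the induction hypothesis gives a narrower edelweiss inside $\aut{B}_{\ell_0}$, and we use the transition $\delta(q,a)=(q_\dL,\dL)\lor(q_\dR,\dR)$ witnessing $\eve$-branchingness --- one child inside $\aut{B}_{\ell_0}$, the other routed back to $p$ through the priority-$n$ level --- to adjoin the missing top $\eve$-loop, whose two branches then carry minimal priorities $n{+}1$ (via a priority-$(n{+}1)$ state inside) and $n$ (outside), exactly as Definition~\ref{def:ijloop} demands; the top $\adam$-loop comes, one level deeper, from the $\adam$-branching $(n{+}2)$-components inside $\aut{B}_{\ell_0}$. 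The load-bearing difficulty throughout is the index bookkeeping: translating the normalised outputs $\cmp_\ell,\RMeven\ell,\RModd\ell$ into the concrete priority windows $[n_0{+}i,\,n_0{+}j]$ of the edelweiss, aligning the parities of $n_0$, $i$ and $j$ across the even/odd-$n$ swap and the $\cmp$ versus $\RMeven/\RModd$ branches, and checking that every rerouting and every newly adjoined loop attains its minimal priority on the nose rather than being spoiled by a dip to a lower level --- which is exactly what the $(n{+}1)$-component decomposition is there to guarantee.
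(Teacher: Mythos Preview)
Your inductive framework (downward on $n$, over $n$-components) is different from the paper's: the paper argues by induction on the width $j-i$, with explicit base cases $(0,1)$, $(1,2)$, $(0,2)$, and in the inductive step it locates an $(n{+}i{+}1)$-component of class at least $\RM(i{+}1,j)$, applies the hypothesis, and then adjoins a single \emph{simple} $(n{+}i)$-loop at the bottom of the edelweiss. Your approach could in principle be made to work, but as written it contains a structural misreading of the edelweiss that breaks the step for $j-i>2$.

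In Definition~\ref{def:ijloop} (even $j$), the $\eve$-loop lives at priorities $(n_0{+}j{-}2,\,n_0{+}j{-}1)$ and the $\adam$-loop at $(n_0{+}j{-}1,\,n_0{+}j)$: these are the \emph{highest} priorities of the edelweiss, while the simple $k$-loops fill the low end $n_0{+}i,\dots,n_0{+}j{-}3$. When you say that the $\eve$-branching transition of the outer $n$-component (with $n$ even) supplies ``the missing top $\eve$-loop, whose two branches then carry minimal priorities $n{+}1$ and $n$'' and that ``the top $\adam$-loop comes, one level deeper, from the $\adam$-branching $(n{+}2)$-components'', you have the picture inverted. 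An $(n,n{+}1)$-loop for $\eve$ sits at the \emph{bottom} of the priority range; for $j>2$ the edelweiss does not require an $\eve$-loop there at all, only a simple $n$-loop. The $\eve$- and $\adam$-loops you need are already present at the \emph{top} of the $(1,m)$-edelweiss supplied by the inductive hypothesis for $\aut{B}_{\ell_0}$; what is missing is the bottom simple loop, not a new branching loop. Your construction incidentally provides that simple $n$-loop as one branch of your $(n,n{+}1)$-$\eve$-loop, so the step is salvageable, but the narrative (and the separate appeal to $(n{+}2)$-components for the $\adam$-loop) would not pass a loop-by-loop verification for $j\ge 4$.

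A second issue: your invariant demands an edelweiss rooted at \emph{every} $p\in Q^{\aut{B}}$. In the non-$\eve$-branching subcase you inherit the edelweiss from some $(n{+}1)$-component $\aut{C}_{\ell_0}$ and propose to ``reroot it at $p$''. If $p\notin Q^{\aut{C}_{\ell_0}}$, any cycle through $p$ must leave $\aut{C}_{\ell_0}$ and hence visit a priority-$n$ state (that is the only way to pass between distinct $(n{+}1)$-components inside a priority-reduced $n$-component), collapsing every rerouted loop to an $n$-loop and destroying the edelweiss. The paper only ever reroots \emph{within} a single $(n{+}1)$-component (see its treatment of the $(0,2)$ case), and proves existence at \emph{some} reachable state. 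Weakening your invariant to ``some $p\in Q^{\aut{B}}$'' removes this obstruction and is all the lemma requires.
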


\begin{proof} Let us first assume that $(i,j)=(0,1)$. Analyzing the algorithm we see that the only case when $\class(\aut{A}, q_I)$ jumps to $\RM(0,1)$ is when for some even $n$ there is an $n$-component $\aut{B}$ in $\aut{A}$, reachable from $q_I$, and containing states of priority $n$, such that some $n+1$ component $\aut{B}_\ell$ of $\aut{B}$ is \eve-branching in $\aut{B}$, i.e., $\aut{B}$ contains a transition of the form 
\[\delta(p,a) = (q_\dL,\dL) \lor (q_\dR,\dR)\]
with $p,q_\dL \in Q^{\aut{B}_\ell}$, $q_\dR\in Q^{\aut{B}}$ (or symmetrically, $p,q_\dR \in Q^{\aut{B}_\ell}$, $q_\dL\in Q^{\aut{B}}$). 
Since $\aut{A}$ is priority-reduced, $p$ is reachable from $q_\dL$ within $\aut{B}_\ell$ via a state of priority $n+1$, and from $q_\dR$ within $\aut{B}$ via a state of priority $n$. This gives an $(n, n+1)$-loop for \eve (a $(0,1)$-edelweiss) rooted in a state reachable from $q_I$. The argument for $(1,2)$ is entirely dual. 

Next, assume that $(i,j)=(0,2)$. It follows immediately from the algorithm that $\aut{A}$ contains an $n$-component $\aut{B}$ (reachable from $q_I$, containing states of priority $n$) such that $n$ is even and there exists an \eve-branching $(n+1)$-component $\aut{B}_\ell$ in $\aut{B}$ such that $\class(\aut{B}_\ell)=\RModd 1$ or $\class(\aut{B}_\ell)=\cmp_1$. In either case, $\class(\aut{B}_\ell)\geq \RM(1,2)$ and by the previous case $\aut{B}_\ell$ contains an $(n', n'+1)$-loop for \adam, for some odd $n'\geq n$. Since $\aut{A}$ is priority-reduced, for each state $q$ in $\aut{B}_\ell$  and each $r$ between $n$ and $\Omega(q)$, there is a loop from $q$ to $q$ with the lowest priority $r$. Hence, the $(n', n'+1)$-loop can be  turned into an $(n+1, n+2)$-loop. Thus, $\aut{B}_\ell$ contains an $(n+1, n+2)$-loop for \adam, rooted in a state $p$. We claim that $\aut{B}$ contains an $(n, n+1)$-loop for \eve, also rooted in $p$ (giving a $(0,2)$-edelweiss rooted in $p$). Indeed, since $\aut{B}_\ell$ is \eve-branching, arguing like for $(0,1)$, we obtain an $(n, n+1)$-loop for \eve rooted in a state $p'$ in $\aut{B}_\ell$. Since $\aut{B}_\ell$ is an $n+1$-component, there are paths in $\aut{B}_\ell$ from $p$ to $p'$ and back; the lowest priority on these paths is at least $n+1$. Using these paths one easily transforms the $(n, n+1)$-loop rooted in $p'$ into an $(n, n+1)$-loop rooted in $p$.

% In the second case, we also get an $(n+1, n+2)$-loop for \adam, rooted in a state $p$ of $\aut{B}_\ell$. Moreover, the first claim implies as well that $\aut{B}_\ell$ contains an $(n'', n''+1)$-loop for \eve, for some even $n''\geq n$. Arguing like in the second case we turn the latter loop into an $(n, n+1)$-loop for \eve rooted in $p$.

The inductive step is easy. Suppose that $j-i>2$. Then, for some even $n$, $\aut{A}$ contains an $(n+i)$-component $\aut{B}$ (reachable from $q_I$, containing states of priority $n+i$), which has an $(n+i+1)$-component $\aut{B}_\ell$ such that $\class(\aut{B}_\ell) = \RM(i+1,j)$ or $\class(\aut{B}_\ell) = \cmp(i+1,j)$. Since for each state $p$ in $\aut{B}_\ell$, $\aut{B}$ contains an $(n+i)$-loop rooted in $p$, we can conclude by the inductive hypothesis. 
\end{proof}

% \begin{definition} Let $\widehat W _{i,j} =\lang(\aut{C}_{i,j}, i)$  where $\aut{C}_{i,j}$ is a game automaton defined as follows. 

% $\aut{C}_{0,1}$ is an automaton over the alphabet $\{e,a\}$ with states $\{0,1\}$, $\Omega(n)=n$,  and transitions 
% \[\delta(n,e) = (0,L) \lor (1,R)\,, \quad  \delta(n, a) = \bot\] 
% for $n=0,1$.

% $\aut{C}_{i,2k}$ is an automaton over the alphabet $\{i, i+1, \dots 2k-3, e,a\}$ with states $\{i, i+1, \dots, 2k\}$, $\Omega(n)=n$,  and transitions 
% \begin{align*}
% &\delta(n,e) = (2k-2,L)\vee (2k-1,R)\,, \\
% &\delta(n,a) = (2k-1,L)\wedge(2k,R)\,,\\
% & \delta(n,\ell) = (\ell,L)\,,
% \end{align*}
% for $n=i,i+1, \dots, 2k$ and $\ell=i,i+1, \dots, 2k-3$.

% $\aut{C}_{1,2}$ and $\aut{C}_{i,2k+1}$ are defined dually.
% \end{definition}

% \begin{definition} 
% For $j=2k-1$, let $\widehat W_{j,2k}$ be the set of trees  $t \in W_{j,2k}$ such that for all $v$ 
% \begin{itemize}
% \item  if $t(v)=(\adam, n)$, then $t(vL)=(P,j)$ for some $P$;
% \item  if $t(v)=(\eve, n)$, then  $t(vL)=t(vR)=(\eve,2k)$.
% \end{itemize}
% For $j\leq 2k-2$, let  $\widehat W_{i,2k}$ be the set of trees  $t \in W_{i,2k}$ such that for all $v$ 
% \begin{itemize}
% \item  if $t(v)=(\adam, n)$, $t(vL)=(P_L,n_L)$, then $n_L \leq 2k-1$;
% \item  if $t(v)=(\eve, n)$, $t(vL)=(P_L,n_L)$, and $t(vR)=(P_R,n_R)$, then $n_L \leq 2k-2$ and $n_L \leq 2k-1$.
% \end{itemize}
% The languages $\widehat W_{i,2k+1}$ are defined dually, with \eve and \adam swapped.
% \end{definition}

\figEdelNew

\begin{definition} 
For $\rmin\leq 2k-2$ consider the alphabet 
\[\widehat A_{\rmin, 2k} = \{\rmin, \rmin+1,\ldots, 2k-3, e, a \}.\]
With each  $t\in\partrees{\widehat A_{\rmin,2k}}$ we associate a parity game $\widehat \game_t$ with positions $\dom(t)$ and final positions $\holes(t)$ such that 
\begin{itemize}
\item if $t(v)=a$, then in $v$ player $\adam$ can choose to go to  $vL$ or to $v\dR$, and  $\Omega(v\dL)=2k-1$,  $\Omega(v\dR)=2k$,
\item if $t(v)=e$, then in $v$ player $\eve$ can choose to go to  $v\dL$ or to $v\dR$, and  $\Omega(v\dL)=2k-2$,  $\Omega(v\dR)=2k-1$,
\item if $t(v)\in\{i, i+1, \dots, 2k-3\}$, the only move from $v$ is to $v\dL$ and $\Omega(vL) = t(v)$.
\end{itemize}
For $\rmin = 2k-1$, let $\widehat A_{\rmin, 2k} = \{a,\top\}$, and let $\widehat \game_t$ be defined like above, except that if $t(v)=\top$ then $\Omega(v)=2k$ and the only move from $v$ is back to $v$.

Let $\widehat W_{i,2k}$ be the set of all total trees over $A_{i,2k}$ such that $\eve$ has a winning strategy in $\widehat\game_t$.

The languages $\widehat W_{i,2k+1}$ are defined dually, with $e,a$ and $\eve,\adam$ swapped, and $\top$ replaced with $\bot$.
\end{definition}

\begin{lemma}\label{lemma:edel_hard} 
If a total game automaton $\aut{A}$ contains an $(i,j)$-edelweiss rooted in a state reachable from an initial state $q_I$ then $\widehat W_{i,j} \leq_W \lang(\aut{A},q_I)$.
\end{lemma}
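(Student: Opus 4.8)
The plan is to exhibit a continuous reduction $\fun{f}{\trees{\widehat A_{i,j}}}{\trees{A^{\aut{A}}}}$ such that for every total tree $t$, player $\eve$ wins $\widehat\game_t$ if and only if $f(t)\in\lang(\aut{A},q_I)$, i.e.\ if and only if $\eve$ wins $\rhogame(\aut{A},f(t),q_I)$. The tree $f(t)$ simulates the game $\widehat\game_t$ step by step inside the run game of $\aut{A}$: we first walk from $q_I$ to the root $p$ of the edelweiss along a fixed path in $\graph(\aut{A})$ (a \emph{stem}), and then, following the shape of $t$, we glue together gadgets supplied by the loops of the edelweiss, one gadget per node of $t$, each returning control to the state $p$ so that the next gadget can be appended. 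Since the edelweiss realises the priorities $\{n+i,n+i+1,\dots,n+j\}$ for some \emph{even} $n$ while $\widehat\game_t$ uses $\{i,\dots,j\}$, the simulation will preserve priorities up to the shift by the even constant $n$, which is irrelevant for the min-parity condition. I describe the construction for even $j$; the odd case is symmetric, swapping the two players throughout (consistently with the edelweiss definition and with the definition of $\widehat W_{i,j}$).

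There are three kinds of gadgets. For a forced-move letter $\ell\in\{i,\dots,j-3\}$ we use the $(n+\ell)$-loop $p\tran{w}p$: we lay out a \emph{tube} in $f(t)$ following the directions of $w$, placing at each step the letter triggering the corresponding transition of $\aut{A}$; whenever that transition is branching, $(q_\dL,\dL)\vee(q_\dR,\dR)$ or $(q_\dL,\dL)\wedge(q_\dR,\dR)$, we plug into the direction not used by $w$ a fixed regular tree \emph{resolving} the corresponding exit state $q'$ --- one outside $\lang(\aut{A},q')$ for a $\vee$, one inside $\lang(\aut{A},q')$ for a $\wedge$ --- which exists because all states of $\aut{A}$ are non-trivial and, by Rabin's theorem, the language of a non-trivial state and its complement both contain a regular tree. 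Such a tube is effectively non-escapable and the least priority seen along it is $n+\ell$. For the letters $a$ and $e$ we use the loop for $\adam$ and the loop for $\eve$ respectively: a forced tube $p\tran{w}q$ to the genuine branching state $q$ (left \emph{unresolved}, so its owner really chooses a direction), followed by forced tubes $q_\dL\tran{w_\dL}p$ and $q_\dR\tran{w_\dR}p$; we append the gadget for the current node's left child at the end of whichever branch has least priority equal to the priority charged to that child (namely $n+j-1$ for $a$ and $n+j-2$ for $e$), and the right child's gadget at the other branch. For the letter $\top$ (present only when $j=i+1$) we plug in a regular tree in $\lang(\aut{A},p)$, making the position winning for $\eve$ and terminal; dually with $\bot$ in the odd-$j$ case. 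All directions of $f(t)$ left unused are filled with a fixed regular tree. As the gadget for a node of $t$ depends only on the letter there and sits at a position determined by the bounded-size gadgets of its ancestors, $f$ is well defined, total-valued, and continuous.

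Correctness is then a routine strategy translation in both directions; the only point needing care is the priority bookkeeping. Call a play of $\rhogame(\aut{A},f(t),q_I)$ \emph{on-structure} if it never enters a resolving subtree; such a play traverses a sequence of gadgets mirroring a branch of $t$ together with the choices made, and inside the $m$-th gadget every priority is $\geq n+c_m$ and the least one equals $n+c_m$, where $c_m$ is the priority that $\widehat\game_t$ charges to the corresponding move. Hence the $\liminf$ of priorities along the run equals $n+\liminf_m c_m$, which has the same parity as $\liminf_m c_m$ since $n$ is even. Therefore a winning strategy for $\eve$ in $\widehat\game_t$ yields a strategy for $\eve$ in the run game that stays on-structure, copies $\eve$'s genuine choices at $q$-positions, and never deviates into a resolving subtree: every conforming play is either on-structure (won by $\eve$ by the parity computation and the matching of choices) or enters a subtree resolved in $\eve$'s favour (only $\adam$ can push a conforming play off the spine, and the off-directions of tube-internal $\wedge$-transitions are winning for $\eve$). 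Conversely, a winning strategy for $\eve$ in the run game may be assumed never to leave the structure at an $\eve$-branch, so its on-structure plays determine a strategy for $\eve$ in $\widehat\game_t$ (read off her choices at the $q$-positions); every play of $\widehat\game_t$ conforming to it lifts to an on-structure run-game play conforming to $\eve$'s strategy, hence is won by $\eve$ by the same computation. Since $t$ and $f(t)$ are total, both games have no final positions, so there are no finite plays to treat.

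The only genuinely delicate points are (i) designing the forced tubes so that the resolving trees really make a tube non-escapable without disturbing the charged priorities, and (ii) checking that the $\liminf$ along the run equals $n$ plus the $\liminf$ of the $\widehat\game_t$-priorities --- both handled by the facts that the designated priority of each loop is its minimum and that $n$ is even. The real combinatorial difficulty of the lower-bound argument lies not here but in the next step, the reduction $\W{i}{j}\leq_W\widehat W_{i,j}$.
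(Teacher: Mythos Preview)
Your proposal is correct and follows essentially the same approach as the paper: build the reduction by first laying a resolving ``stem'' from $q_I$ to the edelweiss root $p$, then co-inductively glue one gadget per node of $t$, each gadget being the appropriate loop of the edelweiss with all side branches resolved so that the tube is non-escapable and the designated priority is realized. The paper presents only the case $(i,j)=(1,2)$ explicitly and declares the rest ``entirely analogous''; your write-up spells out the general case (including the $\ell$-loops for $\ell\le j-3$, the matching of left/right children to the two branches of the $(\eve/\adam)$-loops, and the $\liminf$ bookkeeping with the even shift $n$), but the construction and the correctness argument are the same.
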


\begin{proof} 
We only give the proof for $(i,j)=(1,2)$; for other values of $(i,j)$ the argument is entirely analogous. By the definition, $\aut{A}$ contains an $(1,2)$-loop for \adam, rooted in a state $p$ reachable from $q_I$. Since $\aut{A}$ is a game automaton and has no trivial states, it follows that there exist
\begin{itemize}
\item a partial tree $t_I$ resolving $\aut{A}$ from $q_I$, with a single hole $v$,  labelled with $p$ in $\rho(\aut{A},t_I,q_I)$;
\item a partial tree $t_a$ resolving $\aut{A}$ from $p$ with two holes $v_1, v_2$, such that in $\rho(\aut{A},t_a,p)$ both holes are labelled $p$, the lowest priority on the path from the root to $v_i$ is $i$, and the closest common ancestor $u$ of $v_1$ and $v_2$ is labelled with a state $q$ such that $\delta_\aut{A}(q, t(u)) = (q_\dL,\dL)\land(q_\dR,\dR) $ for some $q_\dL, q_\dR$; and 
\item a total tree $t_\top \in \lang(\aut{A}, p)$.
\end{itemize}
Let us see how to build $t_a$. The paths $p\tran{w\, (a,\dL)\, w_\dL} p$, $p\tran{w\, (a,\dR)\, w_\dR} p$ guaranteed by Definition~\ref{def:ijloop} give as a partial tree $s$ with a single branching in some node $u$ and two leaves $v_1, v_2$, which we replace with holes. For $\rho=\rho(\aut{A},s,p)$, $\rho(v_1) = \rho(v_2) = p$ and $\delta(\rho(u),t(u)) = (q_\dL,\dL)\land(q_\dR,\dR)$.  At each hole of $s$, except $v_1$ and $v_2$, we substitute a total tree such that the run on the resulting tree with two holes resolves $\aut{A}$ from $p$, e.g., if $vL$ is a hole and $\delta(s(v), \rho(v)) = (q',\dL) \lor(q'', \dR)$, we substitute at $v\dL$ any tree that is not in $\lang(\aut{A}, q')$, relying on the assumption that $\aut{A}$ has no trivial states.
% if $\delta(s(v), \rho_{s,p}(v)) = (q'', R)$, substitute any total tree. 

Let us define the reduction $g \colon \trees{\{a,\top\}} \to \trees{A^\aut{A}}$.
Let $t\in \trees{\{a, \top\}}$. For $v \in \dom(t)$, define $t_v$ co-inductively as follows: if $t(v) = \top$, set $t_v=t_\top$; if $t(v)=a$, then $t_v$ is obtained by plugging in the holes $v_1, v_2$ of $t_a$ the trees $t_{v\dL}$ and $t_{v\dR}$. Let $g(t)$ be obtained by plugging $t_{\varepsilon}$ in the hole of $t_I$. It is easy to check that $g$ continuously reduces $\widehat W_{1,2}$ to $\lang(\aut{A},q_I)$. 
\end{proof}

\medskip

It remains to see that $\W{i}{j} \leq_W \widehat W _{i,j}$. For the lowest level we give a separate proof.

\begin{lemma} \label{lm:12_hardness}  
$\W{0}{1} \leq_W \widehat W_{0,1}$ and $\W{1}{2} \leq_W \widehat W_{1,2}$.
\end{lemma}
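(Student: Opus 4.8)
The plan is to reduce the statement to the classical topological completeness of the two lowest game languages. It is known (recalled in the Fact above, last item) that $\W{0}{1}$ is $\asigma{1}$-complete and $\W{1}{2}$ is $\api{1}$-complete; in particular $\W{0}{1}\in\asigma{1}$ and $\W{1}{2}\in\api{1}$. Hence it suffices to prove that $\widehat W_{0,1}$ is $\asigma{1}$-hard and $\widehat W_{1,2}$ is $\api{1}$-hard: then $\W{0}{1}\leq_W\widehat W_{0,1}$ and $\W{1}{2}\leq_W\widehat W_{1,2}$ follow at once, because a $\asigma{1}$ set reduces to any $\asigma{1}$-hard set (dually for $\api{1}$). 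I would establish hardness by reducing the canonical complete sets. Write $\mathrm{Tr}$ for the zero-dimensional Polish space of trees $T\subseteq\omega^{<\omega}$ and $[T]$ for the set of infinite branches of $T$; then $\mathrm{IF}=\{T\in\mathrm{Tr}:[T]\neq\emptyset\}$ is $\asigma{1}$-complete and $\mathrm{WF}=\mathrm{Tr}\setminus\mathrm{IF}$ is $\api{1}$-complete. So it is enough to build continuous maps $g_0\colon\mathrm{Tr}\to\trees{\widehat A_{0,1}}$ and $g_1\colon\mathrm{Tr}\to\trees{\widehat A_{1,2}}$ with $g_0^{-1}(\widehat W_{0,1})=\mathrm{IF}$ and $g_1^{-1}(\widehat W_{1,2})=\mathrm{WF}$; composing with a continuous reduction of $\W{0}{1}$ to $\mathrm{IF}$ (resp.\ of $\W{1}{2}$ to $\mathrm{WF}$), which exists since $\W{0}{1}\in\asigma{1}$ (resp.\ $\W{1}{2}\in\api{1}$), yields the claim.

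The first step is to turn the (terse, duality-defined) descriptions of $\widehat W_{0,1}$ and $\widehat W_{1,2}$ into plain path criteria. In $\widehat\game_{\hat t}$ for $\widehat W_{0,1}$ only \eve\ ever moves (at nodes carrying the $\eve$-choice letter, where one child gets the even priority and the other the odd one), and the remaining letter is an absorbing position of odd priority; consequently $\hat t\in\widehat W_{0,1}$ if and only if $\hat t$ has an infinite path starting at the root that stays on $\eve$-choice nodes and goes into the even-priority child infinitely often. Dually, in $\widehat\game_{\hat t}$ for $\widehat W_{1,2}$ only \adam\ moves, the other letter is an absorbing even-priority position, and $\hat t\in\widehat W_{1,2}$ if and only if $\hat t$ has \emph{no} infinite path from the root that stays on $\adam$-choice nodes and goes into the odd-priority child infinitely often.

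It remains to realise branches of $T$ as such paths. For $g_0$, replace each $u\in\omega^{<\omega}$ by an ``option list'' of $\eve$-choice nodes $x_u^0,x_u^1,x_u^2,\dots$, where the odd-priority child of $x_u^m$ is $x_u^{m+1}$ (skip to the next option, at the cost of one odd-priority visit) and the even-priority child of $x_u^m$ leads to $x_{u\frown m}^0$ when $u\frown m\in T$ and to an absorbing odd-priority sink otherwise (extend the $T$-branch, at the cost of one even-priority visit); take $x_\varepsilon^0$ as the root. A play that stays forever on a single option list, or reaches a sink, is losing for \eve, so an infinite winning play corresponds exactly to an infinite branch of $T$; hence $g_0(T)\in\widehat W_{0,1}$ iff $T\in\mathrm{IF}$, and $g_0$ is continuous because each label of $g_0(T)$ depends on at most one value of $T$. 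The map $g_1$ is the mirror image: the $\eve$-choice letter is replaced by the $\adam$-choice letter and the odd-priority sink by the even-priority one, the two children being arranged so that extending the $T$-branch still costs one odd-priority visit; then an infinite $\adam$-path going into the odd-priority child infinitely often corresponds exactly to an infinite branch of $T$, so $g_1(T)\in\widehat W_{1,2}$ iff $T\in\mathrm{WF}$. The only delicate points are extracting the two path criteria from the definition and verifying that no play inside a gadget can ``stall'' itself into a win without following a genuine $T$-branch; both are routine once the bookkeeping is fixed, and no topological work beyond the completeness of $\mathrm{IF}$, $\mathrm{WF}$ and the quoted Fact is required.
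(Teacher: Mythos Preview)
Your argument is correct and takes a genuinely different route from the paper. The paper builds a \emph{direct} continuous reduction $f\colon \trees{A_{0,1}}\to\trees{\{e,\bot\}}$: it first characterises $\exists$-winning in $\game_t$ via K\"onig's lemma as the existence of an increasing chain of finite strategies whose leaves all have priority~$0$, and then encodes the search for such a chain as a one-player $\{e,\bot\}$-tree (an infinite right spine of $e$-nodes listing all candidate extensions, with the left child committing to one). You instead \emph{factor through} the canonical $\asigma{1}$-complete set $\mathrm{IF}$: you use item~(4) of the quoted Fact ($\W{0}{1}\in\asigma{1}$) as a black box, and prove only that $\widehat W_{0,1}$ is $\asigma{1}$-hard by reducing $\mathrm{IF}$ to it. Amusingly, your gadget $g_0$ (an infinite chain of $e$-nodes at each $u\in T$, with the even-priority child committing to a child in $T$) is isomorphic to the paper's gadget; the only difference is what the ``options'' enumerate---children of a node of an $\omega$-tree for you, finite strategy extensions for the paper. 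What your approach buys is modularity: the K\"onig-lemma analysis of $\W{0}{1}$ is already packaged inside its $\asigma{1}$-completeness, so you never open that box. What the paper's approach buys is self-containment: it does not invoke the completeness of $\mathrm{IF}$ or $\mathrm{WF}$, and the argument generalises in spirit to the higher levels treated in Proposition~\ref{pro:123_hardness}. One small point to make explicit in your write-up: when you extract the ``path criterion'' for $\widehat W_{0,1}$, be precise about which child carries which priority (the paper's dual definition leaves this implicit), since the correctness of $g_0$ depends on matching ``extend the branch'' with the even-priority move.
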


\begin{proof} 
By the symmetry it is enough to prove the first claim. Let us take $t\in \trees{A_{0,1}}$. By K\"onig's lemma, Player $\exists$ has a winning strategy in $\game_t$ if and only if  she can produce a sequence of finite strategies $\sigma_0, \sigma_1, \sigma_2, \dots$ (viewed as subtrees of $t$) such that
\begin{enumerate}
\item $\sigma_0$ consists of the root only;
\item for each $n$ the strategy $\sigma_{n+1}$ extends $\sigma_n$ in such a way that below each leaf of $\sigma_n$ a non-empty subtree is added, and all \new{the} leaves of $\sigma_{n+1}$ have priority $0$.
\end{enumerate}

\new{
Clearly, the union of such a sequence of finite strategies $(\sigma_n)_{n\in\N}$ is a total strategy for \eve in $\game_t$. Additionally, the strategies $\sigma_n$ witness that their union visits a node of priority $0$ infinitely many times on each branch. Therefore, \eve wins $\game_t$.}

\new{Consider the opposite direction: we assume that \eve wins in $\game_t$ using a strategy $\sigma$ and we want to define the strategies $\sigma_n$. Let $\sigma_0$ consist of the root only and let $\sigma_{n+1}\subseteq \sigma$ extend the strategy $\sigma_n$ until the next node of priority $0$ is seen on every branch. We need to prove that all the strategies $\sigma_n$ are finite. Assume contrarily that $\sigma_n$ is finite but $\sigma_{n+1}$ is not. Let $v$ be a leaf of $\sigma_n$ such that $\sigma_{n+1}\restr_v$ is infinite. By K\"onig's lemma we know that there exists an infinite branch $\pi$ of $\sigma_{n+1}$ such that $v\prec \pi$. In that case there is no node of priority $0$ on $\pi$ after $v$. Therefore, $\pi$ treated as a play is winning for \adam and is consistent with $\sigma$. It contradicts the assumption that $\sigma$ was a winning strategy for \eve.}

Using \new{such approximating strategies $\sigma_n$} we can define the \new{required} reduction. Let $\left(\tau_i\right)_{i\in\N}$ be the list of all finite unlabelled binary trees. Some of these trees naturally induce a strategy for \eve in $\game_t$. For those we define $t_{\tau_i}\in \trees{\{e,\bot\}}$ co-inductively, as follows:
\begin{itemize}
\item $t_{\tau_i}(\dR^j) = e$ for all $j$;
\item if $\tau_j$ induces in $\game_t$ a strategy that is a legal extension of the strategy induced by $\tau_i$ in the sense of item 2) above, then the subtree of $t_{\tau_i}$ rooted at $\dR^j \dL$  is  $t_{\tau_j}$;
\item otherwise, all \new{the} nodes in this subtree are labelled with $\bot$. 
\end{itemize}
Let $f(t) = t_{\sigma_0}$. By the initial observation, $t_{\sigma_0} \in \widehat W_{0,1}$ if and only if \eve has a winning strategy in $\game_t$\new{: a winning strategy for \eve in $t_{\sigma_0}$ corresponds to the successive choices of strategies $\sigma_0\subseteq\sigma_1\subseteq\ldots\;$.}

Additionally, the function $f$ is continuous: to determine the labels in nodes $\dR^{n_1}\dL\dR^{n_2}\dL\dots \dR^{n_k}$ and $\dR^{n_1}\dL\dR^{n_2}\dL\dots \dR^{n_k}\dL$ we only need to know the restriction of $t$ to the union of the domains of $\tau_{n_1}, \tau_{n_2}, \dots, \tau_{n_k}$. Hence, $f$ continuously reduces $\W{0}{1}$ to $\widehat W_{0,1}$.
\end{proof}

%\medskip

% For the remaining levels we give an inductive argument. In order to formulate the inductive claim, we need to adjust our notions to partial trees, just like for the upper bounds. We replace continuous reduction with a topological counterpart of  \herewasA simulation, defined below.

% \begin{definition} A T-simulation of $W_{i,j}$ in $\widehat W _{i,j}$ is a continuous function $f \colon \partrees{A_{i,j}} \to \partrees{\widehat A_{i,j}}$ such that for every tree $t\in \partrees{A_{i,j}}$ there exists a function $f_\ast$ mapping final positions of $\widehat \game_{f(t)}$ to final positions of $\game_t$  satisfying the following condition: for each winning strategy $\sigma$ for player $P$ in $\game_t$ there exists a simulating winning strategy $\sigma^S$ for $P$ in $\widehat \game_{f(t)}$ such that $f_\ast(O^S) \subseteq O$ where $O$ and $O^S$ are guarantees of $\sigma$ and $\sigma^S$, respectively. 
% \end{definition}

% It follows immediately, that if $f$ is a T-simulation of $W_{i,j}$ in $\widehat W _{i,j}$, then the restriction of $f$ to total trees is a continuous reduction from $W_{i,j}$ to $\widehat W _{i,j}$.

% Thus, it remains to prove the following lemma.

\new {Our aim now is to prove the following proposition, which forms the technical core of this section. }

\begin{proposition}\label{pro:123_hardness}
For all $i$ and  $j \geq i+2$, $W_{i,j} \leq _W \widehat W_{i,j}$.
\end{proposition}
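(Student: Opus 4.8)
The plan is to extend to arbitrary parity conditions the finite-strategy approximation used in the proof of Lemma~\ref{lm:12_hardness}, proceeding by induction on $j-i$ and peeling off the lowest priority at each step. Fix $(i,j)$ with $j\ge i+2$; by the complementation symmetry of the whole setup it suffices to treat one parity of $i$, say $i$ even (and we use freely that for odd $j$ the language $\widehat W_{i,j}$ is, by definition, the dual of its even-$j$ counterpart). The crux of a single step is the observation that, $i$ being even and smallest, any play of $\game_t$ that visits a node of priority $i$ infinitely often is won by \eve; consequently $t\in\W{i}{j}$ if and only if \eve has a strategy in $\game_t$ in which every branch either visits priority $i$ infinitely often, or from some point on avoids priority $i$ forever and wins, on the region it then stays in, the residual parity game with priorities $\{i+1,\dots,j\}$.

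I would record such a strategy just as in Lemma~\ref{lm:12_hardness}: by an outer family of finite partial strategies, each closing a loop through a node of priority $i$ below every one of its leaves, together with, grafted below those leaves, an encoding of each residual $\{i+1,\dots,j\}$-game that can arise. This data is then coded into a tree $f(t)$ over $\widehat A_{i,j}$ in the style of Lemma~\ref{lm:12_hardness}: right-spines enumerate the admissible extensions of the outer family, a left turn commits to one of them, and each closing of an $i$-loop is marked with the letter $i$ of $\widehat A_{i,j}$, which forces a move emitting priority $i$ in $\widehat\game$ (in the base case below, where $\widehat A_{i,j}$ carries no letter $i$, the branch of the letter $e$ that emits $i$ is used instead). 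Since \eve wants priority $i$ to recur, her winning in $\widehat\game_{f(t)}$ requires the outer family to be \emph{infinitely extendable}, mirroring the priority-$0$ family of the $(0,1)$-case of Lemma~\ref{lm:12_hardness}; dually, for odd $i$ it must be \emph{well-founded}, as in the $(1,2)$-case.

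It remains to encode the residual games. When $j-i\ge 3$ one has $\widehat A_{i+1,j}\subseteq\widehat A_{i,j}$ and $\widehat W_{i+1,j}=\widehat W_{i,j}\cap\trees{\widehat A_{i+1,j}}$, so the inductive hypothesis $\W{i+1}{j}\le_W\widehat W_{i+1,j}$ already provides a reduction of $\W{i+1}{j}$ into $\widehat W_{i,j}$. When $j-i=2$ the residual games have only two priorities, and $\W{i+1}{i+2}\le_W\widehat W_{i+1,i+2}$ is precisely Lemma~\ref{lm:12_hardness} up to a parity-preserving (even) shift of priorities, followed by a routine translation that replaces the auxiliary letter $\top$ (or $\bot$) of the two-priority $\widehat W$-language by a suitable one-sided loop on the letter $e$ (or $a$) of $\widehat A_{i,j}$ --- sound because both are won by the same player. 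Continuity of $f$ is immediate, as every finite part of $f(t)$ depends on finitely many values of $t$. Correctness is proved in both directions as in Lemma~\ref{lm:12_hardness}: the forward direction reads the hierarchy off a winning strategy of \eve in $\game_t$, and the backward direction rebuilds one from a winning strategy of \eve in $\widehat\game_{f(t)}$, using K\"onig's lemma at the outer level (to convert an infinite run of $i$-loop closings into a branch on which priority $i$ recurs) together with the inductive hypothesis on the residual games.

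The step I expect to be the main obstacle is the bookkeeping that makes the $j-i$ nested levels of gadgets fit together so that the liminf condition of $\widehat\game_{f(t)}$ coincides, on the nose, with winning $\game_t$. The delicate points are that the letters $\{i,\dots,j-3\}$ offer \emph{no} choice in $\widehat\game$, so the branching that $\game_t$ exhibits at those priorities must be repackaged as \emph{finitary} data inside the outer families --- legitimate precisely because, once such a priority recurs, that branching is irrelevant to the outcome --- and that the scant emission repertoire of $\widehat A_{i,j}$ sometimes forces \eve to emit a smaller even priority where $\game_t$ would show a larger one (and dually for \adam), which alters the reason a branch is won but not who wins it, provided it is accounted for uniformly across all levels. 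Carrying out this accounting, together with the even/odd case analyses for $i$ and $j$, is where the real work of the proof lies.
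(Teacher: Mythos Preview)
Your inductive, bottom-up plan differs substantially from what the paper does, and as stated it has a real gap.

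The paper gives a single, non-inductive reduction. It introduces an auxiliary game $\tilde\game_t$ (shown equivalent to $\game_t$ in Lemma~\ref{lem:tilde-and-g}) in which, in each round, \adam extends a finite strategy for himself so that all new internal nodes have the top priority $2k$ and all new leaves have priority $\le 2k-1$; then \eve either selects a leaf of priority $\le 2k-2$ and the round restarts from it, or lets \adam extend further. The encoding into $\widehat W_{i,j}$ is then immediate, because the two branching letters $a,e$ of $\widehat A_{i,j}$ emit exactly the top three priorities $2k,2k-1,2k-2$, while each no-choice letter $r\in\{i,\dots,2k-3\}$ simply records the priority of the leaf \eve just picked. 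In short, the paper peels off the \emph{top} two priorities at once, which is precisely where $\widehat A_{i,j}$ supplies branching; the remaining priorities need no gadget at all.

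Your plan peels from the bottom and recurses. Since the letters $i,\dots,2k-3$ carry no choice, all enumeration and leaf-picking at your ``outer'' level must still be routed through $e$ and $a$, with the letter $i$ emitted only after both players commit --- survivable, but awkward. The real gap is one level deeper. You describe the outer family as finite strategies for \eve ``each closing a loop through a node of priority $i$ below every one of its leaves'', with the residual $\{i+1,\dots,j\}$-games ``grafted below those leaves''. But a winning strategy for \eve in $\game_t$ will typically have some branches that see $i$ infinitely often and others that eventually avoid $i$; and below any leaf of your outer family there are still priority-$i$ nodes, so the game rooted there is not an instance of $\W{i+1}{j}$ and the inductive hypothesis does not apply. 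What is missing is an explicit step in which \eve, at each round, may either continue to chase $i$ \emph{or} irrevocably switch to a residual game in which reaching a priority-$i$ node is an immediate loss for her; only after that modification (replacing each priority-$i$ node by a sink losing for \eve) is the sub-game a genuine $\W{i+1}{j}$-instance. Your description conflates these two kinds of leaves, so the two levels do not compose as written. With this repair the induction can likely be completed, but it is heavier than the paper's direct construction.
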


\new{The rest of this section is devoted to the proof of the proposition above. We begin by defining an auxiliary game $\tilde\game_t$ and proving that it is equivalent with $\game_t$. The structure of the game $\tilde\game_t$ corresponds to the possible choices of players in an edelweiss.}

By duality we can assume that $j=2k$.
For $t\in\trees{A_{i,2k}}$, let us consider \new{a} game $\tilde \game_t$ defined as follows. The positions are pairs $(v, \sigma)$,  where $v$ is a node of $t$, and $\sigma$ is \new{a} finite strategy from $v$ for \adam (viewed as a subtree of $t\restr_v$). Initially $v=\varepsilon$ is the root of $t$ and $\sigma=\{\varepsilon\}$. In each round, in a position $(v, \sigma)$, the players make the following moves:
\begin{itemize}
\item  \adam extends $\sigma$ under leaves of priority $2k-1$ to $\sigma'$ in such a way that on every path leading from a leaf of $\sigma$ to a leaf of $\sigma'$ all nodes have priority $2k$, except the leaf of $\sigma'$, which has priority at most $2k-1$;
\item \eve has the following possibilities:
\begin{itemize}
\item select a leaf $v'$ of $\sigma'$ with priority at most $2k-2$, and let the next round start with $(v', \{v'\})$, or
\item if $\sigma'$ has some leaves of priority $2k-1$, continue with $(v,\sigma')$.
\end{itemize}
\end{itemize}
A play is won by \eve if she selects a leaf infinitely many times and the least priority of these leaves seen infinitely often is even, or \adam is unable to extend $\sigma$ in some round. Otherwise, the play is won by \adam.

\begin{lemma}\label{lem:tilde-and-g}
A player $P$ has a winning strategy in $\game_t$ if and only if $P$ has a winning strategy in $\tilde\game_t$.
\end{lemma}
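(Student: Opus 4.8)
The plan is to deduce the equivalence from determinacy of $\game_t$ (a parity game, hence positionally determined). Since $t$ is total, neither game has final positions, so every play has a unique winner and no player can have a winning strategy for both sides; playing a putative winning strategy for \eve against one for \adam in $\tilde\game_t$ would give a play won by both. Hence it suffices to establish the two ``easy'' implications
\[
  \eve\ \text{wins}\ \game_t \ \Longrightarrow\ \eve\ \text{wins}\ \tilde\game_t
  \qquad\text{and}\qquad
  \adam\ \text{wins}\ \game_t \ \Longrightarrow\ \adam\ \text{wins}\ \tilde\game_t.
\]
Indeed, if \eve\ wins $\tilde\game_t$ but not $\game_t$, then by determinacy \adam\ wins $\game_t$, hence (by the second implication) $\tilde\game_t$, contradiction; dually for \adam. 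So both directions of the lemma follow from these two implications.

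\textbf{The implication for \eve.} Fix a positional winning strategy $\tau$ for \eve\ in $\game_t$ and let $T_\tau\subseteq\dom(t)$ be the subtree of $t$ reachable along $\tau$-conforming plays; every infinite branch of $T_\tau$ has even $\liminf$ of priorities, and following $\tau$ from any $v\in T_\tau$ stays in $T_\tau$. In $\tilde\game_t$, \eve\ maintains the invariant that the current node $v$ lies in $T_\tau$ and that every leaf of the current finite \adam-strategy $\sigma$ has priority at most $2k-1$. When \adam\ extends $\sigma$ to $\sigma'$, \eve\ follows the unique play from $v$ determined by $\sigma'$ (giving \adam's moves) and $\tau$ (giving \eve's moves); it ends at a leaf $w$ of $\sigma'$, and by the invariant and the shape of the extension $w\in T_\tau$ and $\Omega(w)\le 2k-1$. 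If $\Omega(w)\le 2k-2$ she selects $w$ and restarts from $(w,\{w\})$; if $\Omega(w)=2k-1$ she plays ``continue'' with $(v,\sigma')$. The $\game_t$-segments traversed between consecutive positions of $\tilde\game_t$ splice together into one infinite $\tau$-conforming play $\pi$. If \eve\ selects infinitely often, then the priorities below $2k-1$ occurring on $\pi$ are exactly those of the selected leaves (between two consecutive selections the segment carries only priorities $2k-1$ and $2k$), so $\liminf\Omega(\pi)$ — which is even, as $\pi$ conforms to $\tau$ — equals the least priority among the infinitely-often-selected leaves; hence \eve\ wins $\tilde\game_t$. If she selected only finitely often, then from some point on every leaf $w$ she reaches has $\Omega(w)=2k-1$, so $\liminf\Omega(\pi)=2k-1$ is odd, contradicting that $\tau$ is winning; this case cannot occur.

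\textbf{The implication for \adam.} Dually, fix a positional winning strategy $\rho$ for \adam\ in $\game_t$, with strategy subtree $T_\rho$ whose infinite branches all have odd $\liminf$. In $\tilde\game_t$ \adam\ keeps the current node in $T_\rho$, and extends each active (priority-$2k-1$) leaf of $\sigma$ by following $\rho$, cutting every branch the first time it meets a node of priority $\le 2k-1$; this is precisely the prescribed shape of an extension. It is finite: an infinite branch of it would be a $\rho$-conforming branch of $\game_t$ staying forever in priority $2k$, with $\liminf=2k$ even, contradicting that $\rho$ wins; so by K\"onig's lemma the extended region is finite and \adam\ is never unable to move. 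Now take any play conforming to this strategy. If \eve\ selects only finitely often she loses $\tilde\game_t$ by definition, so \adam\ wins. If she selects infinitely often, splicing the segments between selections yields an infinite $\rho$-conforming play $\pi$ with odd $\liminf$, and as above this $\liminf$ equals the least priority among the infinitely-often-selected leaves, which is therefore odd — so \eve\ does not win $\tilde\game_t$, i.e.\ \adam\ does.

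\textbf{Main obstacle.} The delicate point is the bookkeeping that translates a play of $\tilde\game_t$ — which manipulates \emph{growing finite} \adam-strategies, i.e.\ finite subtrees of $t$ — into a single infinite play of $\game_t$. The enabling fact is that a positional strategy for one player in $\game_t$, run against a finite strategy for the opponent, determines a unique play ending at a leaf of that finite strategy, and that concatenating these across rounds produces one infinite play whose min-parity value is controlled exactly by the sequence of priorities of the leaves \eve\ selects. A secondary technical point is verifying that \adam\ can always complete an extension (no infinite escape within priority $2k$), which is where winning of $\rho$ together with K\"onig's lemma is used; a little extra care is also needed for the initial configuration.
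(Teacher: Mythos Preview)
Your proof is correct and follows essentially the same approach as the paper's: both arguments establish the two forward implications (a winning strategy for \eve\ in $\game_t$ yields one in $\tilde\game_t$, and likewise for \adam), with the converse coming from determinacy of $\game_t$ together with the fact that $\tilde\game_t$ cannot be won by both players. The concrete strategies you build --- \eve\ following her $\game_t$-strategy through the current finite \adam-strategy to locate the relevant leaf, and \adam\ extending along his $\game_t$-strategy and invoking K\"onig's lemma for finiteness --- are exactly the paper's constructions; your presentation is just somewhat more explicit about splicing the segments into a single $\game_t$-play and reading off the $\liminf$.
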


\begin{proof} 
For a winning strategy $\sigma_\eve$ for \eve in $\game_t$, let $\tilde\sigma_\eve$ be the strategy in $\tilde\game_t$ in which \eve selects a leaf $v'$ in $\sigma'$ if and only if $v'\in\sigma_\eve$. 
% (if there is more then one, she chooses the leftmost one). 
Consider an infinite play conforming to $\tilde\sigma_\eve$. If in the play  \eve selects a leaf infinitely many times, she implicitly defines a path in $t$ conforming to $\sigma_\eve$, and so the play must be winning for $\eve$. Assume that \eve selects a leaf only finitely many times. Then, \adam produces an infinite sequence of finite strategies $\{v\}=\sigma_0\subseteq\sigma_1\subseteq\ldots$ in $\game_t$. Let $\sigma_\infty$ be the union of these strategies. Consider the play $\pi$ in $\game_t$ passing through $v$ and conforming to $\sigma_\infty$ and $\sigma_\eve$. 
Observe that for each $\sigma_i$, the strategy $\sigma_\eve$ must choose some path; hence, either \eve selects a leaf of $\sigma_i$, or this path goes via a leaf of priority $2k-1$. % In either case $\sigma_{i+1}$ can be viewed as a proper extension of $\sigma_i$.
Thus, $\pi$ is infinite and  by the rules of $\tilde\game_t$ priorities at most $2k-1$ are visited infinitely often. Since \eve selects a leaf only finitely many times, priorities strictly smaller than $2k-1$ are visited finitely many times in $\pi$. Hence, $\pi$ is won by \adam, what contradicts the assumption that $\sigma_\eve$ is winning for \eve. 

Now, let $\sigma_\adam$ be a winning strategy for \adam in $\game_t$. Then, for each $v\in \sigma_\adam$ there exists a finite sub-strategy $\sigma'$ of $\sigma_\adam$ from $v$ such that all internal nodes of $\sigma'$ have priority $2k$ and leaves have priority at most \ $2k-1$. This shows that for each current strategy $\sigma\subseteq\sigma_\adam$, $\adam$ is able to produce a legal extension $\sigma' \subseteq \sigma_\adam$. Let $\tilde \sigma_\adam$ be a strategy of $\adam$ in $\tilde\game_t$ that extends every given $\sigma$ by $\sigma'$ as above. Consider any play conforming to $\tilde\sigma_\adam$. By the initial observation, the play is infinite, so priorities strictly smaller then  $2k$ are visited infinitely often. If \eve selects a leaf only finitely many times, priorities strictly smaller then $2k-1$ occur only finitely many times and \adam wins. If \eve selects a leaf infinitely many times, then the lowest priority seen infinitely often must be odd, as otherwise \eve would show a losing path in $\sigma_\adam$. Hence, \adam wins in this case as well. 
 \end{proof} 

Now it remains to encode the game $\tilde\game_t$ as a tree $f(t) \in \trees{\widehat A_{i,2k}}$ in a continuous manner. The argument is similar to the one in Lemma~\ref{lm:12_hardness}. 
Let $\left(\tau_n\right)_{n\in\N}$ be the list of all unlabelled finite trees.
For some pairs $(v,\tau_n)$,  $\tau_n$ induces a strategy in $\game_t$ from the node $v$. For such $(v,\tau_n)$ we define  $t^\adam_{v,\tau_n}$ and $t^\eve_{v,\tau_n}$ co-inductively, as follows: 
\begin{itemize}
\item $t^\adam_{v,\tau_n}(\dR^m) = a$ for all $m$;
\item the subtree of $t^\adam_{v,\tau_n}$ rooted at $\dR^m\dL$ is $t^\eve_{v,\tau_m}$ if $\tau_m$ induces a strategy from $v$ that is a legal extension of $\tau_n$ according to the rules of $\tilde\game_t$, and otherwise the whole subtree is labelled with $e$'s (losing choice for \adam);
\end{itemize}
\begin{itemize}
\item $t^\eve_{v,\tau_n}(\dR^m) = e$ for $m=0, 1, \dots, \ell$, where $v_0, v_1, \dots, v_\ell$ are the leaves in the strategy induced by $\tau_n$ from $v$;
\item the subtree of $t^\eve_{v,\tau_n}$ rooted at $\dR^{\ell+1}$ is $t^\adam_{v,\tau_n}$ if the strategy induced by $\tau_n$ from $v$ has some leaves of priority $2k-1$, otherwise the whole subtree is labelled with $a$'s (losing choice for \eve);
\item for $m\leq \ell$, consider the following cases to define the subtree $s_m$ of $t^\eve_{v,\tau_n}$ rooted at $\dR^{m}\dL$:
\begin{itemize}
\item if $\Omega(v_m)\in\{2k-1,2k\}$ then $s_m$ is labelled everywhere with $a$'s (losing choice for \eve),
\item if $\Omega(v_m)=2k-2$ then $s_m=t^\adam_{v_m, \{v_m\}}$,
\item if $\Omega(v_m)=r<2k-2$ then $s_m(\varepsilon)=r$, the left subtree of $s_m$ is $t^\adam_{v_m, \{v_m\}}$, and the right subtree of $s_m$ is labelled with $a$'s (irrelevant for $\game_t$).
\end{itemize}

%\item for $m\leq \ell$, the subtree of $t^\eve_{v,\tau_n}$ rooted at $R^{m}L$ is $t^\adam_{v_m, \tau_0}$ if $v_m$ is a node of priority $2k-2$ in $t$, and if $v_m$ has priority $2k-1$ or $2k$, then the whole subtree is labelled with $a$'s, if $v_m$ has priority $r < 2k-2$, then
%$t^\eve_{v,\tau_n} (R^{m}L) = r$,  the subtree rooted at $R^{m}LL$ is $t^\adam_{v_m, \tau_0}$ and the subtree rooted at $R^{m}LL$ is labelled with $a$'s (irrelevant for $\game_t$).

\end{itemize}
Let $f(t)$ be $t^\adam_{\varepsilon, \{\varepsilon\}}$. Checking that $f$ is continuous does not pose any difficulties. \new{Lemma~\ref{lem:tilde-and-g} implies that $f$ reduces $W_{i,j}$ to $\widehat W_{i,j}$, which concludes the proof of Proposition~\ref{pro:123_hardness}.}

\subsection{Corollaries}\label{ssec:alt-corols}

As a by-product of the decision procedure we have described, we obtain the following characterization of the levels of the alternating index hierarchy for game languages. 

\begin{proposition}\label{pro:index-depends-on-edels}
For a priority-reduced game automaton $\aut{A}$, $\lang(\aut{A},q_I)\in\RM(\rmin,\rmax)$ if and only if there is no $(\rmin+1,\rmax+1)$-edelweiss reachable from $q_I$ in $\aut{A}$.
\end{proposition}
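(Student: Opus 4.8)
The plan is to treat the two implications separately. For the direction $\lang(\aut{A},q_I)\in\RM(\rmin,\rmax)\Rightarrow$ ``no $(\rmin+1,\rmax+1)$-edelweiss is reachable from $q_I$'' I would prove the contrapositive directly via the topological lower bounds, without using the algorithm at all: if $\aut{A}$ contains a $(\rmin+1,\rmax+1)$-edelweiss reachable from $q_I$, then Lemma~\ref{lemma:edel_hard} gives $\widehat W_{\rmin+1,\rmax+1}\leq_W\lang(\aut{A},q_I)$, while Proposition~\ref{pro:123_hardness} (for $\rmax\geq\rmin+2$), or Lemma~\ref{lm:12_hardness} combined with a shift of the numeric priorities by an even constant (for the width-one case $\rmax=\rmin+1$), gives $\W{\rmin+1}{\rmax+1}\leq_W\widehat W_{\rmin+1,\rmax+1}$. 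Composing the two reductions, $\W{\rmin+1}{\rmax+1}\leq_W\lang(\aut{A},q_I)$, and Corollary~\ref{cor:criterion} then yields $\lang(\aut{A},q_I)\notin\RM(\rmin+2,\rmax+2)=\RM(\rmin,\rmax)$.

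For the converse, assume no $(\rmin+1,\rmax+1)$-edelweiss is reachable from $q_I$. By the contrapositive of Lemma~\ref{lm:invariant}, $\class(\aut{A},q_I)\not\supseteq\RM(\rmin+1,\rmax+1)$. I would then use the order structure of the classes that $\class(\aut{A},q_I)$ can take --- $\cmp_0$ together with $\RMeven m,\RModd m,\cmp_m$ for $m>0$, which (Section~\ref{sec:altindex}) are linearly ordered except that $\RMeven m$ and $\RModd m$ are incomparable, with $\cmp_{m-1}$ just below and $\cmp_m$ just above both. After normalising priorities, $\RM(\rmin,\rmax)$ and $\RM(\rmin+1,\rmax+1)$ are precisely the dual pair $\{\RMeven m,\RModd m\}$ with $m=\rmax-\rmin$, so for every such class $K$ exactly one of $K\subseteq\RM(\rmin,\rmax)$ and $K\supseteq\RM(\rmin+1,\rmax+1)$ holds; consequently $\class(\aut{A},q_I)\subseteq\RM(\rmin,\rmax)$. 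Finally, by the upper bound established in Section~\ref{ssec:upper}, $\lang(\aut{A},q_I)$ is recognised by a $\class(\aut{A},q_I)$-automaton, and therefore $\lang(\aut{A},q_I)\in\RM(\rmin,\rmax)$.

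The routine but fiddly part I expect is the order-theoretic step of the second paragraph: one has to normalise $(\rmin,\rmax)$ to $(0,m)$ or $(1,m+1)$, check that incrementing both coordinates interchanges $\RMeven m$ and $\RModd m$, and verify the dichotomy for $K$ against the inclusions among the $\RM$- and $\cmp$-classes recalled in Section~\ref{sec:altindex}, with some care at the lowest level $m=1$ (there a $(\rmin+1,\rmax+1)$-edelweiss degenerates to a single $(\rmin+1,\rmax+1)$-loop, and its hardness comes from Lemma~\ref{lm:12_hardness} rather than Proposition~\ref{pro:123_hardness}). Everything else is a direct appeal to results already established in this section.
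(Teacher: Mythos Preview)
Your proposal is correct and follows essentially the same route as the paper. The paper proves the implication ``no edelweiss $\Rightarrow \lang(\aut{A},q_I)\in\RM(\rmin,\rmax)$'' contrapositively (from $\lang\notin\RM(\rmin,\rmax)$ it infers $\class(\aut{A},q_I)\geq\RM(\rmin+1,\rmax+1)$ via the upper bound of Section~\ref{ssec:upper}, then invokes Lemma~\ref{lm:invariant}), whereas you run the same chain forward; and for the other direction both you and the paper compose Lemma~\ref{lemma:edel_hard} with Lemma~\ref{lm:12_hardness}/Proposition~\ref{pro:123_hardness} and finish with Corollary~\ref{cor:criterion}. Your explicit discussion of the order-theoretic dichotomy among the possible values of $\class$ is exactly what underlies the paper's terse step ``$\lang\notin\RM(\rmin,\rmax)\Rightarrow\class(\aut{A},q_I)\geq\RM(\rmin+1,\rmax+1)$'', so no genuine difference in approach.
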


\begin{proof}
One direction follows immediately from Lemma~\ref{lm:invariant}: if $\lang(\aut{A},q_I)\notin\RM(\rmin,\rmax)$ then $\class(\aut{A},q_I)\geq\RM(\rmin+1,\rmax+1)$ and $\aut{A}$ contains an $(\rmin+1,\rmax+1)$-edelweiss reachable from $q_I$.
For the opposite direction assume that $\aut{A}$ contains an $(\rmin+1,\rmax+1)$-edelweiss reachable from $q_I$. By Lemma~\ref{lemma:edel_hard} it implies that $\widehat W_{\rmin+1,\rmax+1} \leq_W \lang(\aut{A},q_I)$. Lemma~\ref{lm:12_hardness} together with Proposition~\ref{pro:123_hardness} imply that in that case $W_{\rmin+1,\rmax+1} \leq_W \lang(\aut{A},q_I)$. By Corollary~\ref{cor:criterion}, it means that $\lang(\aut{A},q_I)\notin\RM(\rmin,\rmax)$.
\end{proof}

A further corollary is the converse of Fact~\ref{fact:motivation} for the alternating index. 
 
\begin{proposition}\label{pro:alt-preserve}
For game automata, substitution preserves the alternating index.
\end{proposition}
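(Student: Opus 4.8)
The plan is to mirror the proof of Proposition~\ref{pro:nondet-preserve}, using the edelweiss characterization in place of the characterization of the non-deterministic index. The key input is Proposition~\ref{pro:index-depends-on-edels}: for a priority-reduced game automaton $\aut{A}$ with a state $q_I$, put
\[\mathcal E(\aut{A},q_I)=\bigl\{(a,b)\ :\ a\geq 1,\ \aut{A}\text{ contains an }(a,b)\text{-edelweiss reachable from }q_I\bigr\}.\]
That proposition gives both directions: $(a,b)\in\mathcal E(\aut{A},q_I)$ if and only if $\lang(\aut{A},q_I)\notin\RM(a-1,b-1)$. Hence $\mathcal E(\aut{A},q_I)$ and the set of classes $\RM(\rmin,\rmax)$ containing $\lang(\aut{A},q_I)$ determine each other, so (by Theorem~\ref{thm:alt_index}) $\mathcal E(\aut{A},q_I)$ determines, and is determined by, the alternating index of $\lang(\aut{A},q_I)$.

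Fix game automata $\aut{A},\aut{B}_1,\aut{B}_2$, an occurrence of a state or exit $q$ in a transition $\delta(p,a)$ of $\aut{A}$, and a state $q_I$ of $\aut{A}$, and assume $\lang(\aut{B}_1,q_0^{\aut{B}_1})$ and $\lang(\aut{B}_2,q_0^{\aut{B}_2})$ have the same alternating index. Since $\aut{B}\mapsto\aut{A}_{\aut{B}}$ induces a well-defined operation on recognized languages, I would first replace each $\aut{B}_i$ by an equivalent priority-reduced game automaton (Lemma~\ref{lemma:reduced}), and then pass from $\aut{A}_{\aut{B}_i}$ to an equivalent priority-reduced game automaton $\aut{A}'_i$ (Lemma~\ref{lemma:reduced} again). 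The point to verify is that this last reduction leaves the $\aut{B}_i$-part untouched and acts on the $\aut{A}$-part independently of $i$: as already observed in the proof of Proposition~\ref{pro:nondet-preserve}, no SCC of $\graph(\aut{A}_{\aut{B}_i})$ meets both $Q^{\aut{A}}$ and $Q^{\aut{B}_i}$; moreover $\graph(\aut{A}_{\aut{B}_i})$ restricted to $Q^{\aut{B}_i}$, together with its priorities, is exactly $\graph(\aut{B}_i)$, which is already priority-reduced. Hence the reduction procedure never rescales a priority inside $Q^{\aut{B}_i}$, and its action on $Q^{\aut{A}}$ depends only on $\aut{A}$ and the chosen occurrence (for $\aut{A}_{\aut{B}_2}$ one simply replays the same reduction steps as for $\aut{A}_{\aut{B}_1}$). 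Consequently $\aut{A}'_i$ restricted to $Q^{\aut{A}}$ is a priority-labelled graph $\widehat{\aut{A}}$ independent of $i$; the $\widehat{\aut{A}}$-part connects to the $\aut{B}_i$-part only through the single edge replacing the occurrence of $q$, which enters $\aut{B}_i$ at $q_0^{\aut{B}_i}$; $\aut{A}'_i$ restricted to $Q^{\aut{B}_i}$ is $\aut{B}_i$; and no edge of $\graph(\aut{A}'_i)$ runs from $Q^{\aut{B}_i}$ back into $Q^{\aut{A}}$.

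Next I would compare $\mathcal E(\aut{A}'_1,q_I)$ with $\mathcal E(\aut{A}'_2,q_I)$. Every edelweiss is a family of loops sharing a common root, so it lies inside a single SCC of $\graph(\aut{A}'_i)$, hence either inside $\widehat{\aut{A}}$ or inside $\aut{B}_i$. The $(a,b)$-edelweisses reachable from $q_I$ and rooted in the $\widehat{\aut{A}}$-part depend only on $\widehat{\aut{A}}$ and $q_I$, so they are the same for $i=1,2$. For those rooted in the $\aut{B}_i$-part, reachability from $q_I$ factors through the unique entry point $q_0^{\aut{B}_i}$, so such an edelweiss is reachable from $q_I$ if and only if $q_0^{\aut{B}_i}$ is reachable from $q_I$ in $\widehat{\aut{A}}$ (a condition independent of $i$) and the edelweiss is reachable from $q_0^{\aut{B}_i}$ inside $\aut{B}_i$; applying Proposition~\ref{pro:index-depends-on-edels} to the priority-reduced $\aut{B}_i$, the set of such pairs $(a,b)$ equals $\{(a,b):a\geq 1,\ \lang(\aut{B}_i,q_0^{\aut{B}_i})\notin\RM(a-1,b-1)\}$, which depends only on the common alternating index of $\lang(\aut{B}_1,q_0^{\aut{B}_1})$ and $\lang(\aut{B}_2,q_0^{\aut{B}_2})$. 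Therefore $\mathcal E(\aut{A}'_1,q_I)=\mathcal E(\aut{A}'_2,q_I)$, so by Proposition~\ref{pro:index-depends-on-edels} the languages $\lang(\aut{A}'_1,q_I)$ and $\lang(\aut{A}'_2,q_I)$ belong to exactly the same classes $\RM(\rmin,\rmax)$ and hence have the same alternating index; since $\aut{A}'_i$ is equivalent to $\aut{A}_{\aut{B}_i}$, so do $\lang(\aut{A}_{\aut{B}_1},q_I)$ and $\lang(\aut{A}_{\aut{B}_2},q_I)$, which is what we want.

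I expect the only genuinely delicate step to be the bookkeeping of the second paragraph: substitution at an internal state $q$ deletes an edge from $\graph(\aut{A})$ and may destroy priority-reducedness, so one cannot apply Proposition~\ref{pro:index-depends-on-edels} to $\aut{A}_{\aut{B}_i}$ directly; one must re-run the reduction of Lemma~\ref{lemma:reduced} and check — as above — that it acts only on the $\aut{A}$-part and in a way independent of which $\aut{B}_i$ was plugged in. Everything else (that edelweisses are confined to SCCs, that no SCC of $\graph(\aut{A}_{\aut{B}_i})$ straddles $Q^{\aut{A}}$ and $Q^{\aut{B}_i}$, and that cross-boundary reachability passes through the single substitution edge) is immediate from the definitions, exactly as in the non-deterministic case.
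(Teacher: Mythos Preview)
Your proposal is correct and follows essentially the same approach as the paper: reduce to Proposition~\ref{pro:index-depends-on-edels}, argue that edelweisses are confined to single SCCs and hence lie entirely in the $\aut{A}$-part or entirely in the $\aut{B}_i$-part, and conclude that the set of reachable edelweisses depends only on the alternating index of $\lang(\aut{B}_i,q_0^{\aut{B}_i})$. The one minor difference is in the bookkeeping for priority-reduction: the paper observes that one may assume the substituted $q$ is an exit of $\aut{A}$, so that $\graph(\aut{A}_{\aut{B}_i})\restr_{Q^{\aut{A}}}=\graph(\aut{A})$ exactly and $(\aut{A}_{\aut{B}})'=\aut{A}'_{\aut{B}'}$ on the nose, whereas you argue directly that even if an internal edge is deleted, the reduction of Lemma~\ref{lemma:reduced} still acts on the $\aut{A}$-part independently of $i$ and leaves the already priority-reduced $\aut{B}_i$-part untouched. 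Both routes are fine.
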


\begin{proof} First note that without changing the outcome of the substitution $\aut{A}_{\aut{B}}$, we can always assume that the substituted state of $\aut{A}$ is an exit. We would like to use Proposition~\ref{pro:index-depends-on-edels}, but we first need to ensure that our automata are priority-reduced. The preprocessing that turns a given automaton into a priority reduced one, described in the proof of Lemma~\ref{lemma:reduced}, works independently in each strongly-connected component of the automaton. Hence, as long as the substituted state of $\aut{A}$ is an exit, the preprocessing commutes with substitution; that is, $(\aut{A}_{\aut{B}})' = \aut{A}'_{\aut{B}'}\,$, where primes are used to denote the preprocessed automata. Consequently, we can assume that our initial automata are priority reduced, and so are the results of the substitution.  The claim now follows from Proposition~\ref{pro:index-depends-on-edels}: since the characterization it offers is in terms of strongly connected subgraphs in the automaton, we can reason just like for non-deterministic index of deterministic automata in Proposition~\ref{pro:nondet-preserve}. If the languages recognized by automata $\aut{B}$ and $\aut{C}$ have the same index, then $\aut{B}$ and $\aut{C}$ contain the same edelweisses.  By the definition of substitution, no strongly connected subgraph in $\aut{A}_{\aut{B}}$ can use states from $\aut{A}$ and from $\aut{B}$. Consequently, $\aut{A}_{\aut{B}}$ and $\aut{A}_{\aut{C}}$ contain the same edelweisses reachable from $q_I^{\aut{A}}$, and so $\lang(\aut{A}_{\aut{B}}, q_I^{\aut{A}})$ and $\lang(\aut{A}_{\aut{C}}, q_I^{\aut{A}})$ have the same index. 
%
% It implies that the alternating index of a game automaton depends only on the set of pairs $(\rmin,\rmax)$ such that there is an $(\rmin,\rmax$)-edelweiss of $\aut{A}$ reachable from $q_I$. Since an $(\rmin,\rmax)$-edelweiss is a strongly-connected subgraph of our automaton, the automaton after a substitution $\aut{A}_{\aut{B}}$ contains an $(\rmin,\rmax)$-edelweiss if and only if one of the automata $\aut{A}\restr_{Q\setminus \{q\}}$ or $\aut{B}$ contained the $(\rmin,\rmax)$-edelweiss. If it is $\aut{A}\restr_{Q\setminus \{q\}}$ then the $(\rmin,\rmax)$-edelweiss is also in the automaton $\aut{A}_{\aut{B}}$. If $\aut{B}$ contains the $(\rmin,\rmax)$-edelweiss then $\aut{B}'$ needs to contain it as well because it has the same alternating index. 
\end{proof}

%%% Local Variables:
%%% TeX-master: "journal_TOCL"
%%% End:

\section{Weak alternating index problem}
\label{sec:weak_index}

In this section we provide a procedure computing the weak index for languages given via a game automaton recognizing them.
%Do me explicit the link with quantifier alternation depth for the weak monadic second order logic~\cite{mostowski_hierarchies}. ?
Of course, a game language need not be weakly recognizable. This is because, as we have mentioned in Section \ref{sec:altindex}, languages recognized by weak alternating automata coincide with the class $\cmp_{0}\subseteq \RModd{1}\cap \RMeven{1}$, and, for each $i>0$, there is a language recognized by a game automaton belonging to $\RModd{i}\setminus \RMeven{i}$. 

As an immediate corollary of the proof of Theorem \ref{thm:alt_index}, we have the following decidable characterization of being weakly recognizable for languages recognized by game automata.
\begin{fact}
\label{ft:when-too-high}
Let $\aut{A}$ be a game automaton, and $q$ one of its states. The
language  $\lang(\aut{A},q)$ is weakly recognizable if and only if
$\aut{A}$ contains neither a $(0,1)$-edelweiss  nor a
$(1,2)$-edelweiss reachable from state $q$.   
\end{fact}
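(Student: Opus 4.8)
The plan is to obtain this as a direct consequence of the analysis carried out in Section~\ref{sec:altindex}. After replacing $\aut{A}$ by an equivalent priority-reduced game automaton (Lemma~\ref{lemma:reduced}), I would first recall that a regular language is weakly recognizable exactly when it lies in $\cmp_{0}$, and that, by the correctness of the algorithm of Section~\ref{ssec:algo} established in Sections~\ref{ssec:upper} and~\ref{ssec:lower}, the language $\lang(\aut{A},q)$ is weakly recognizable if and only if $\class(\aut{A},q)=\cmp_{0}$. The task then reduces to showing that $\class(\aut{A},q)=\cmp_{0}$ if and only if $\aut{A}$ contains neither a $(0,1)$- nor a $(1,2)$-edelweiss reachable from $q$.

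For the forward direction I would note that every class the algorithm can return that is strictly above $\cmp_{0}$, i.e.\ $\RMeven{i}$, $\RModd{i}$, or $\cmp_{i}$ for some $i\geq 1$, dominates $\RM(0,1)$ or $\RM(1,2)$: indeed $\RMeven{i}\supseteq\RM(0,1)$, $\RModd{i}\supseteq\RM(1,2)$, and $\cmp_{i}\supseteq\cmp_{1}\supseteq\RMeven{1}\cup\RModd{1}$. Hence, if $\class(\aut{A},q)\neq\cmp_{0}$ then $\class(\aut{A},q)\geq\RM(0,1)$ or $\class(\aut{A},q)\geq\RM(1,2)$, and Lemma~\ref{lm:invariant} immediately produces, respectively, a $(0,1)$-edelweiss or a $(1,2)$-edelweiss rooted at a state reachable from $q$.

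For the converse I would run the topological lower-bound chain in the two base cases. Suppose $\aut{A}$ contains a $(0,1)$-edelweiss reachable from $q$. Since $\aut{A}$ is total, Lemma~\ref{lemma:edel_hard} gives $\widehat W_{0,1}\leq_W\lang(\aut{A},q)$ and Lemma~\ref{lm:12_hardness} gives $\W{0}{1}\leq_W\widehat W_{0,1}$, so $\W{0}{1}\leq_W\lang(\aut{A},q)$ and Corollary~\ref{cor:criterion} yields $\lang(\aut{A},q)\notin\RM(1,2)$; as $\cmp_{0}\subseteq\RMdelta{1}\subseteq\RM(1,2)$, this forces $\lang(\aut{A},q)\notin\cmp_{0}$. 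If instead $\aut{A}$ contains a $(1,2)$-edelweiss reachable from $q$, the same reasoning gives $\W{1}{2}\leq_W\lang(\aut{A},q)$, hence $\lang(\aut{A},q)\notin\RM(2,3)$ by Corollary~\ref{cor:criterion}; since $\cmp_{0}\subseteq\RM(0,1)\subseteq\RM(2,3)$ (obtained by scaling priorities up by two), again $\lang(\aut{A},q)\notin\cmp_{0}$. Combining the two directions finishes the argument.

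I do not expect a real obstacle, since everything is assembled from results already proved in Section~\ref{sec:altindex}. The one point needing care is the class-hierarchy bookkeeping: one must observe that the sandwich $\cmp_{0}\subseteq\RM(0,1)\cap\RM(1,2)$ is precisely what makes it sufficient, for ruling out weak recognizability, to refute membership in $\RM(1,2)$ (coming from a $(0,1)$-edelweiss) or in $\RM(2,3)$ (coming from a $(1,2)$-edelweiss), even though these edelweisses by themselves do not directly witness any stronger index lower bound. One should also confirm that the initial passage to a priority-reduced automaton is harmless; this follows from Lemma~\ref{lemma:reduced} (the recognized language is unchanged) together with the characterization just obtained, which shows that reachability of a $(0,1)$- or $(1,2)$-edelweiss depends only on the recognized language.
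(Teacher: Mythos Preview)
Your proposal is correct and takes essentially the same approach as the paper: one direction via Lemma~\ref{lm:invariant}, the other via the chain Lemma~\ref{lemma:edel_hard} $+$ Lemma~\ref{lm:12_hardness} $+$ Corollary~\ref{cor:criterion}. You are simply more explicit than the paper in routing through $\class(\aut{A},q)$ and in spelling out the class-hierarchy bookkeeping; the paper compresses this into two sentences. The only soft spot is your final remark on priority reduction, which as written is mildly circular; it is cleaner to observe that Lemma~\ref{lemma:edel_hard} does not require the automaton to be priority-reduced, so the implication ``edelweiss $\Rightarrow$ not weakly recognizable'' holds for the original $\aut{A}$ directly, and only the converse direction needs the passage to a priority-reduced automaton.
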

\begin{proof}
For the direction from left to right we reason as follows. Assume
$\lang(\aut{A},q)$ is weakly recognizable but  contains, say, a
$(0,1)$-edelweiss reachable from $q$. Then from Lemmas
\ref{lemma:edel_hard} and \ref{lm:12_hardness}, and Corollary
\ref{cor:criterion}, we have that $\lang(\aut{A},q)\notin \cmp_0$, a
contradiction. 
The other direction is an immediate consequence of Lemma
\ref{lm:invariant}. 
\end{proof}
This characterization of weak recognizability within the class of game
automata was essentially already provided in~\cite{niwinski_gap}. In
this paper it is shown that a deterministic automaton recognizes a
weakly recognizable language if and only if it does not contain a
forbidden pattern called \emph{split}, which corresponds to a
$(1,2)$-edelweiss. 
%consists of a branching transition $\delta(q,a) = (q_L, L) \land
%(q_R, R)$ and paths from $q_L$ to $q$ and from $q_R$ to $q$, one with
%odd minimal priority $j_1$, the other with even minimal priority
%$j_2$, satisfying $j_1<j_2$.  
Since game automata are closed under dualization, they can also
contain a \emph{dual split}, that is a $(0,1)$-edelweiss.   
%which is defined like the split, except that the transition is
%controlled by $\eve$, i.e., $\delta(q,a) = (q_L, L) \lor (q_R, R)$,
%and the minimal priorities satisfy $j_1>j_2$.  
Fact \ref{ft:when-too-high}  is thence an immediate corollary from
the proof of the result of~\cite{niwinski_gap}.

Analogously to what we have done in Section \ref{sec:altindex}, in the aim of providing a precise formulation of the problem we want to solve, we start by introducing some useful notation.

\begin{definition}
For $\rmin< \rmax\in\N$, let $\wRM(\rmin, \rmax)$ denote the class of languages recognized by weak alternating tree automata of index $(\rmin, \rmax)$. Let
\begin{align*}
\wRMeven{\rmax}&=\wRM(0, \rmax),\\
\wRModd{\rmax}&=\wRM(1,\rmax+1),\\
\wRMdelta{\rmax} &= \wRM(0,\rmax) \cap \wRM(1,\rmax+1).
\end{align*}
These classes, naturally ordered by inclusion, constitute the \emph{weak index hierarchy}. The \emph{weak index of a language $L$} is the least class $\autclass{C}$ in the weak index hierarchy such that $L\in {\autclass{C}}$.
\end{definition}

%Our goal is to prove that the problem of deciding the weak index of a language recognised by a game automaton  the following theorem. 
Now we can properly formulate the main result of this section.

\begin{theorem}
\label{thm:weak-dec}
For a game automaton $\aut{A}$ and a state $q$, if $\aut{A}$ does not contain neither a $(0,1)$-edelweiss  nor a $(1,2)$-edewelweiss reachable from the state $q$, then $\lang(\aut{A}, q)$ is weakly recognizable and its weak index can be computed effectively.
\end{theorem}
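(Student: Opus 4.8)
\emph{Overall strategy.} The plan is to follow the three-part pattern of Section~\ref{sec:altindex}. First, a recursive algorithm will compute a class $\wclass(\aut{A},q)$ of the weak index hierarchy, processing the $n$-components of $\graph(\aut{A})$ from $n=\rmax$ downwards, exactly as in Section~\ref{ssec:algo}. Second, an upper-bound argument in the style of Section~\ref{ssec:upper} will construct, component by component, an equivalent \emph{weak} automaton of index $\wclass(\aut{A},q)$; this also establishes weak recognizability --- which, granted, already follows from Fact~\ref{ft:when-too-high}, as the hypothesis gives $\lang(\aut{A},q)\in\cmp_0$. Third, a lower-bound argument in the style of Section~\ref{ssec:lower} will show $\lang(\aut{A},q)\notin\wRM(i,j)$ whenever $\wclass(\aut{A},q)\not\le\wRM(i,j)$. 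Throughout we may assume, by Lemma~\ref{lemma:reduced}, that $\aut{A}$ is priority-reduced.

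\emph{The algorithm.} To each $n$-component $\aut{B}$ I assign a weak class $\wclass(\aut{B})$ among $\wRMdelta 1\subseteq\wRMeven m,\wRModd m\subseteq\wRMdelta{m+1}\subseteq\cdots$, computed from the weak classes of the $(n+1)$-components of $\aut{B}$ and the transitions between them, via suitably adapted operations $K\mapsto K^{\eve}$, $K\mapsto K^{\adam}$ and $\bigvee$; the base case, a component with only priority-$n$ states, is $\wRMdelta 1$. The one genuinely new phenomenon compared with Section~\ref{ssec:algo} is that a weak automaton, unlike a strong parity automaton, cannot absorb a change of priority ``for free'', so that along a branch each passage from a component into a branching sub-component forces an alternation; consequently the quantity being measured is the length of the longest alternating chain of \eve- and \adam-branching components reachable from $q$, the natural generalisation of the splits and dual splits of~\cite{niwinski_gap, murlak_weak_index} (and, as in Theorem~\ref{thm:alt_index}, one expects the classification to collapse inside the classes $\wRMdelta m$). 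Well-foundedness and finiteness of the recursion rest on the hypothesis: by the analysis in the proof of Lemma~\ref{lm:invariant}, the absence of a $(0,1)$- or $(1,2)$-edelweiss reachable from $q$ means that for even $n$ no reachable $n$-component with a priority-$n$ state has an \eve-branching $(n+1)$-component, and dually for odd $n$; hence every alternation takes place strictly between two components of the component structure, and chains have bounded length. The output is $\wclass(\aut{A},q)=\bigvee_\ell\wclass(\aut{A}_\ell)$, where $\aut{A}_\ell$ ranges over the components of minimal leading priority reachable from $q$.

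\emph{Upper bound.} Mirroring Section~\ref{ssec:upper}, I prove by downward induction on $n$ that every $n$-component $\aut{B}$ can be simulated, in the sense of Definition~\ref{def:Asim}, by a \emph{weak} automaton of index $\wclass(\aut{B})$. The compositional construction $\aut{B}\restr_{\Omega^{-1}(n)}\cdot\aut{B}_1^W\cdots\aut{C}_1^R\cdot\aut{C}_1^T\cdots$ from that section is reused almost verbatim, with two adaptations: the priorities of the pieces are reindexed so that no transition ever decreases priority --- which is possible precisely because the forbidden edelweisses are the only obstructions to such a reindexing --- and a priority-$n$ layer that branches in the ``cheap'' direction is folded into the sub-components by the same $\varepsilon$-transition device as $\aut{C}^R_\ell$. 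Checking the simulation conditions (guarantees shrink, infinite plays are preserved) goes through as before, since play-by-play the weak winning condition is only more local than the parity one. This yields an effective construction of a weak $\wclass(\aut{A},q)$-automaton for $\lang(\aut{A},q)$.

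\emph{Lower bound, and the main obstacle.} The remaining direction is the weak analogue of Section~\ref{ssec:lower} and contains the real difficulty. Under the standard correspondence between the finite levels of the weak index hierarchy and the finite levels of the Borel hierarchy (for which regular $\bsigma{m}$- and $\bpi{m}$-complete tree languages exist; cf.~\cite{MullerSS86, rabin_separation, kechris_descriptive}), it suffices to reduce, continuously, the appropriate Borel-complete language to $\lang(\aut{A},q)$ whenever $\aut{A}$ contains a reachable alternating chain of \eve-/\adam-branching components of the relevant depth and leading parity. I would build this reduction by the same three-step scheme as in Lemmas~\ref{lemma:edel_hard} and~\ref{lm:12_hardness} and Proposition~\ref{pro:123_hardness}: pass through an intermediate language of $\widehat W$-type whose internal structure mirrors the chain, and then reduce the Borel-complete language to that intermediate language. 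Since a $\bsigma{m}$-complete language is not $\bpi{m}$ (and conversely), such a reduction witnesses $\lang(\aut{A},q)\notin\wRM(i,j)$. The step I expect to be the bottleneck is exactly this last reduction --- defining the intermediate languages and the chain patterns and proving the reduction correct --- the weak counterpart of the ``technical core'' Proposition~\ref{pro:123_hardness}; the remaining ingredients are a routine transcription of Section~\ref{sec:altindex}. Combining the two bounds, the weak index of $\lang(\aut{A},q)$ equals $\wclass(\aut{A},q)$, which the algorithm computes effectively.
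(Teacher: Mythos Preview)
Your three-part scaffold (algorithm, upper bound, lower bound) is right, but the attempt to transplant Section~\ref{sec:altindex} wholesale does not go through, and the paper's actual argument is organised quite differently.

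\textbf{Algorithm and upper bound.} The paper does \emph{not} recurse on $n$-components. It recurses on the DAG of SCCs of $\aut{A}$: for the SCC $\aut{B}$ containing $q$ it computes $\wclass(\aut{A},q)$ from $\wclass(\aut{A},p)$ for the exits $p$ of $\aut{B}$, weighted by whether $p$ is $(\eve,1)$- or $(\adam,0)$-\emph{replicated} (a loop of the right minimal priority returns through the branching transition). When $\aut{B}$ has no $\adam$-branching transition it is co-deterministic, its dual is deterministic, and the algorithm of~\cite{murlak_weak_index} for the deterministic weak index is invoked as a black box, combined with the exit contributions; when $\aut{B}$ does have $\adam$-branching one prunes the ``bad'' branch to get $\aut{B}^-$ and sets $\wclass(\aut{A},q)=\big(\wclass(\aut{A}^-,q)\big)^\adam$. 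Your ``longest alternating chain of $\eve$/$\adam$-branching components'' is too coarse to recover this: the internal contribution of $\aut{B}$ is the full weak index of a deterministic automaton, not a single parity bit, and the exit contributions depend on the replication priority. Correspondingly, the claim that the construction of Section~\ref{ssec:upper} can be ``reused almost verbatim'' after a monotone reindexing is the main gap: that construction yields a $\cmp$-automaton, and turning it into a \emph{weak} automaton of the \emph{claimed} index is exactly what must be proved. The paper's upper-bound constructions are of a different shape: phased simulations in which the players make explicit $\leave/\stay$ declarations and the automaton moves between copies of $\aut{B}$ with frozen priorities, the phases matching the replication cases in the algorithm.

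\textbf{Lower bound.} Here you are overcomplicating. No analogue of Proposition~\ref{pro:123_hardness} or of the $\widehat W$ languages is needed. The paper uses Skurczy\'nski's Borel-complete tree languages $S_{(i,j)}$, built from $\{\top,\bot\}$ via the monotone operations $L\mapsto L^\adam$ and $L\mapsto L^\eve$, together with a short replication lemma: if $p$ is $(\adam,0)$-replicated by $\aut{B}$ then $\big(\lang(\aut{A},p)\big)^\adam\wadgeq\lang(\aut{A},q)$, and dually; also $\lang(\aut{A}^-,q)\wadgeq\lang(\aut{A},q)$. The lower bound then follows the algorithm case by case, with the deterministic case again delegated to~\cite{murlak_weak_index}. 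Your structural observation from Lemma~\ref{lm:invariant} (no $(0,1)$- or $(1,2)$-edelweiss constrains the branching inside an SCC) is correct and is indeed what the paper exploits, but it is used to control the shape of $\adam$- and $\eve$-branching transitions inside a single SCC, not to bound a chain of $n$-components.
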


The proof consists in a recursive procedure computing the weak class of $\lang(\aut{A}, q)$, denoted $\wclass(\aut{A}, q)$. The procedure itself is given in Subsection~\ref{sec:weak-proc}; Subsections~\ref{sec:weak-upper} and~\ref{sec:weak-lower} prove its correctness by  providing upper and lower bounds, respectively. The upper bounds are simply constructions of a weak alternating automaton of appropriate weak index recognizing the language $\lang(\aut{A}, q)$. The lower bounds show that for lower indices such constructions are impossible; they are obtained by means of simple tools from descriptive set theory. In the final Subsection \ref{sec:weak_vs_Borel}, we obtain as a Corollary that, as for deterministic languages, the weak index and the Borel rank coincide for tree languages recognised by game automata.

\subsection{The algorithm}
\label{sec:weak-proc}
%Recall that each automaton $\aut{A}$ has an associated (possibly empty) set $F$ of exits disjoint from the set $Q$.  

\new{Like for the strong index we assume without loss of generality
that a given automaton $\aut{A}$ is priority-reduced (see
Lemma~\ref{lemma:reduced}).} The procedure works recursively on the
DAG of strongly-connected components, or SCCs, of $\aut{A}$ (maximal
sets of mutually reachable states).  We identify each SCC $\aut{B}$ of
an automaton $\aut{A}$ with the automaton obtained by restricting
$\aut{A}$ to the set of states in $\aut{B}$; the states outside of
$\aut{B}$ accessible via a transition originating in $\aut{B}$ become
the exits of the new automaton (cf. Subsection
\ref{subsec:automata}). \new{Note that the resulting automaton is also
priority-reduced.}
%We identify SCCs of $\aut{A}$ with automata obtained by restricting
%$\aut{A}$ to the set of states in the SCC.  
Our procedure computes $\wclass(\aut{A}, q)$ based on
$\wclass(\aut{A}, p)$ for exits $p$ of the SCC $\aut{B}$ containing
$q$. Those classes are aggregated in a way dependent on the internal
structure of $\aut{B}$, or more precisely, on the way in which the
state $p$ is reachable from $\aut{B}$. The aggregation \new{is done} by
means of auxiliary operations on classes. Two most characteristic are
\[\big(\wRMeven{n-1}\big)^\eve =
\big(\wRMdelta{n}\big)^\eve =
\big(\wRModd{n}\big)^\eve = 
\wRModd{n}, \quad \big(\wRMeven{n}\big)^\adam =
\big(\wRMdelta{n}\big)^\adam =
\big(\wRModd{n-1}\big)^\adam = 
\wRMeven{n}.\]
We also use the bar notation for the dual classes,
\[\overline{\wRMeven{n}} =\wRModd{n}, \quad  \overline{\wRModd{n}}
=\wRMeven{n},\quad  \overline{\wRMdelta{n}} =\wRMdelta{n},\] 
and $\Phi\lor\Psi$ for the least class containing $\Phi$ and $\Psi$.

Let us now describe the conditions that will trigger applying the
operations above to previously computed classes.  We begin with some
shorthand notation. \new{Recall that an $n$-path is a path in which the minimal
priority is $n$, and analogously for $n$-loop.}
Let $q'$, $q''$ be a pair of states in
$\aut{B}$. Let $\max_\Omega(q'\to q'')$ be the maximal $n$ such that
there exists an $n$-path from $q'$ to $q''$ in $\aut{B}$. Observe that
since $\aut{B}$ is an SCC, such $n$ is well-defined (at least
$0$). Also, since the automaton is priority-reduced, for each $n'\leq
\max_\Omega(q'\to q'')$ there exists an $n'$-\new{path} from $q'$ to $q''$
in $\aut{B}$.

\new{A} \emph{\adam-branching} transition in $\aut{B}$ is a transition
of the form $\delta(q', a) = (q_\dL,\dL)\land(q_\dR,\dR)$ with all
three states $q'$, $q_\dL$, $q_\dR$ in $\aut{B}$; dually for
$\eve$. \new{Note that these notions are similar but not entirely
analogous to  $\adam$-branching and $\eve$-branching components
from Section~\ref{ssec:algo}.} 

We say that a state $p$ is \emph{$(\eve,n)$-replicated by $\aut{B}$}
if there are states $q'$, $q''$ in $\aut{B}$ and a letter $a$ such
that $\delta(q',a)=(q'',\dL)\lor (p,\dR)$ (or symmetrically) and
$\max_\Omega(q''\to q') \geq n$.  Dually, $p$ is
\emph{$(\adam,n)$-replicated} if the transition above has the form
$\delta(q',a)=(q'',\dL)\land (p,\dR)$ (or the symmetrical).

We can now describe the procedure.  By duality we can assume
that the minimal priority in $\aut{B}$ is $0$. If $\aut{A}$ contains
no loop reachable from $q$, set $\wclass(\aut{A}, q) =\wRMdelta{1}$.
If it contains an accepting loop reachable from $q$, but no rejecting
loop reachable from $q$, set $\wclass(\aut{A}, q) =
\wRMeven{1}$. Symmetrically, if it contains a rejecting loop reachable
from $q$, but no accepting loop reachable from $q$, set
$\wclass(\aut{A}, q) = \wRModd{1}$. Otherwise, consider \new{the
  following} two cases.    

Assume first that $\aut{B}$ contains no \adam-branching transition. In
that case, for every transition $\delta(q,a)$ of $\aut{B}$ that is
controlled by \adam, at most one of the successors of $\delta(q,a)$ is
a state of $\aut{B}$. Hence, $\aut{B}$ can be seen as a
co-deterministic tree automaton (exits are removed from the
transitions; if both states in a transition are exits, the transition
is set to $\bot$). Thus, the automaton $\aut{\bar {B}}$ dual to
$\aut{B}$ is a deterministic tree automaton. For deterministic tree
automata it is known how to compute the weak
index~\cite{murlak_weak_index}. \new{Denote the weak index of $\aut{\bar{B}}$ as $\wclass(\aut{\bar B},q)$.}

\new{Now, set $\wclass(\aut{A},q)$ to  }
\begin{equation}
\label{eq:weak-no-A}
\wRMdelta{2} \lor \overline{\wclass(\aut{\bar B},q)} \,\lor 
\bigvee_{p\in F} \!\wclass(\aut{A},p) \,\lor 
\bigvee_{p\in F_{\eve,1}} \wclass(\aut{A},p) ^\eve \,\lor 
\bigvee_{p\in F_{\adam,0}} \wclass(\aut{A},p) ^\adam 
\end{equation}
where $F\subseteq Q^{\aut{A}}$ is the set of exits of $\aut{B}$, $F_{\eve,1}\subseteq F$ is the set of states $(\eve,1)$-replicated by $\aut{B}$, and similarly for $F_{\adam,0}$. 

Assume now that $\aut{B}$ does contain an \adam-branching transition. By the hypothesis of the theorem, for every \adam-branching transition $\delta(q', a) = (q_\dL,\dL)\land(q_\dR,\dR)$ in $\aut{B}$, it must hold that $\max_\Omega(q_\dL \to q')\leq 1$ and $\max_\Omega(q_\dR \to q')=0$, or symmetrically. We call a state $q''$ (either $q_\dL$ or $q_\dR$) in an \adam-branching transition \emph{bad} if $\max_\Omega(q''\to q')=0$. Let $\aut{B}^{-}$ be the automaton $\aut{B}$ with all these \emph{bad} states in the \adam-branching transitions changed into $\top$, and let $\aut{A}^{-}$ be the automaton $\aut{A}$ with $\aut{B}$ replaced by $\aut{B}^{-}$. Observe that $\aut{B}^{-}$ contains no \adam-branching transitions. Let us put
\begin{equation}
\label{eq:weak-A}
\wclass(\aut{A},q)= \wRMdelta{2} \lor \big(\wclass(\aut{A}^{-},q)\big)^\adam.
\end{equation}

\subsection{Upper bounds}
\label{sec:weak-upper}

In this section we prove that $\wclass(\aut{A}, q)$ is an upper bound for the weak index of $\lang(\aut{A}, q)$. More precisely, we show the following.  

\begin{lemma}
\label{lem:weak-upper}
If $\wclass(\aut{A}, q)\leq \wRM(\rmin,\rmax)$ then the language
$\lang(\aut{A}, q)$ can be recognised by a weak alternating automaton
of index $(\rmin,\rmax)$. 
\end{lemma}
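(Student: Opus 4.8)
The plan is to prove the lemma by induction on the DAG of strongly connected components (SCCs) of $\aut{A}$ restricted to the states reachable from $q$, mirroring the recursive definition of $\wclass$ from Subsection~\ref{sec:weak-proc}. It is cleaner to establish the slightly stronger claim that for \emph{every} state $q$ the language $\lang(\aut{A},q)$ is recognised by a weak alternating automaton of index exactly $\wclass(\aut{A},q)$; the stated form then follows by padding such an automaton with unused priorities, which never pushes the index past a target $(\rmin,\rmax)\geq\wclass(\aut{A},q)$. By Lemma~\ref{lemma:reduced} I would assume $\aut{A}$ priority-reduced, and by duality that the least priority in the SCC $\aut{B}$ of $q$ is $0$. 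The induction hypothesis supplies, for every exit $p$ of $\aut{B}$ (all of which lie in strictly lower SCCs), a weak $\wclass(\aut{A},p)$-automaton $\aut{S}_p$ recognising $\lang(\aut{A},p)$; the automaton $\aut{S}_q$ will be assembled by plugging the $\aut{S}_p$'s into a ``skeleton'' built from $\aut{B}$ alone, with Fact~\ref{ft:resolve} ensuring that this composition indeed recognises $\lang(\aut{A},q)$. The base cases are immediate once one uses that $\aut{A}$ is total: if no loop is reachable from $q$, the run on any tree is decided within a bounded depth, so $\lang(\aut{A},q)$ is clopen and a $\wRMdelta{1}$-automaton suffices; if an accepting (even) loop but no rejecting (odd) one is reachable, neither a finite nor an infinite play can be won by $\adam$, hence $\lang(\aut{A},q)=\trees{A}$ and a trivial $\wRMeven{1}$-automaton works, and dually in the remaining case.

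For the main case in which $\aut{B}$ has no $\adam$-branching transition, I would first note that removing the exits from the transitions of $\aut{B}$ (a transition both of whose successors are exits becomes $\bot$) turns $\aut{B}$ into a co-deterministic automaton, so the dual $\aut{\bar B}$ is deterministic; by~\cite{murlak_weak_index} its weak index, and a weak automaton of that index for it, are computable, and dualising that automaton yields a weak $\overline{\wclass(\aut{\bar B},q)}$-automaton $\aut{T}$ capturing the ``internal'' behaviour of $\aut{B}$. I would then compose $\aut{T}$ with the $\aut{S}_p$'s, attending to two points. First, weakness forces every priority of (the copy of) $\aut{S}_p$ to be at least the priority at which the exit $p$ is entered; the $\wRMdelta{2}$ disjunct in~\eqref{eq:weak-no-A} is exactly the slack that lets the priorities of $\aut{S}_p$ be scaled up to that level without its index exceeding $\wclass(\aut{A},p)$. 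Second, an exit $p$ that is $(\eve,1)$- or $(\adam,0)$-replicated by $\aut{B}$ is in effect re-entered along a loop of $\aut{B}$, which costs one extra alternation and is recorded by the summands $\wclass(\aut{A},p)^\eve$, resp.\ $\wclass(\aut{A},p)^\adam$; a non-replicated exit is visited at most once along every branch, so no penalty is needed there. A short invariant in the spirit of Definition~\ref{def:Asim} --- whenever the simulating play enters the copy of $\aut{S}_p$, the corresponding play in $\aut{A}$ has reached an exit labelled $p$ --- then shows that the composed automaton recognises $\lang(\aut{A},q)$ with index $\wclass(\aut{A},q)$.

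When $\aut{B}$ does contain an $\adam$-branching transition, the hypothesis of the theorem (no $(1,2)$-edelweiss reachable from $q$) forces every such transition $\delta(q',a)=(q_\dL,\dL)\land(q_\dR,\dR)$ to have a ``bad'' side, namely a successor from which $q'$ is reachable only along even $0$-loops. Replacing each bad state by $\top$ yields $\aut{A}^-$, which falls under the previous case, so it admits a weak $\wclass(\aut{A}^-,q)$-automaton $\aut{S}^-$. I would then reinstate the bad subcomputations: at the $\top$-position produced from a bad state $q''$, let $\adam$ optionally descend into a fresh copy of the subautomaton of $\aut{A}$ rooted at $q''$ (itself spliced from $\aut{S}^-$ and the $\aut{S}_p$'s). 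Since $q''$ reaches $q'$ only along $0$-loops, a descent that loops forever is won by $\eve$, one that returns lands back in $\aut{S}^-$, and one that leaves $\aut{B}$ continues inside an already-constructed $\aut{S}_p$; the net effect is a single $\adam$-controlled excursion, which is precisely what the operation $\big(\,\cdot\,\big)^\adam$ and the additional $\wRMdelta{2}$ of~\eqref{eq:weak-A} account for.

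The hard part will be the index bookkeeping inside the composition: showing that each $\aut{S}_p$ can always be shifted up to sit below the entry priority of its exit $p$ while remaining within $\wclass(\aut{A},p)$, and that a replicated exit costs exactly one alternation --- no more and no less. This is where the precise definitions of the operations $(\cdot)^\eve$ and $(\cdot)^\adam$ and the ubiquitous $\wRMdelta{2}$ slack in~\eqref{eq:weak-no-A} and~\eqref{eq:weak-A} have to be matched against the combinatorics of how an exit can be reached from within an SCC; the re-insertion of bad subtrees in the $\adam$-branching case is a local but fiddly instance of the same difficulty.
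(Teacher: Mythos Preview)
Your inductive framework is right, and you correctly identify the ingredients (the co-deterministic reduction, the exit automata, the $(\cdot)^\eve$ and $(\cdot)^\adam$ penalties). But the actual constructions are missing, and the sketches you give do not match what is needed.

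First, a minor point on the base case: ``accepting loop but no rejecting loop reachable'' does \emph{not} give $\lang(\aut{A},q)=\trees{A}$. The graph $\graph(\aut{A})$ ignores trivial transitions, so $\bot$ may still be reachable; plays reaching $\bot$ loop with odd priority and are won by $\adam$. What this hypothesis does give (via priority-reduction) is that every state has priority $0$, so $\aut{A}$ itself is already a weak $(0,1)$-automaton.

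The real gap is in the no-$\adam$-branching case. Your plan is to take the weak automaton $\aut{T}$ for the co-deterministic $\aut{B}$ (exits removed) and ``compose'' it with the $\aut{S}_p$'s. But $\aut{T}$ has no exits---they were deleted when forming the co-deterministic automaton---so there is nothing to plug the $\aut{S}_p$'s into. Nor can you keep the exits and still invoke~\cite{murlak_weak_index}, since that result applies only to total deterministic automata. The paper's construction is quite different: it uses several flat copies of $\aut{B}$ with all priorities set to a constant, interleaved with moments at which \emph{first one player and then the other declares} whether the play will eventually leave $\aut{B}$ or stay forever. Only after both have committed to ``stay'' does one pass to $\aut{T}$; the ``leave'' branches route to the $\aut{S}_p$'s via the appropriate flat copy. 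Crucially, the order and the constant priorities of these flat layers depend on whether $\rmin=0$ or $\rmin=1$; the two sub-cases yield genuinely different automata, and your proposal does not distinguish them at all. The ``index bookkeeping'' you flag as hard is not the difficulty; the difficulty is designing this declare/commit protocol so that misdeclarations are punished by the parity condition.

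In the $\adam$-branching case your ``reinstate the bad subcomputations by letting $\adam$ descend into a fresh copy rooted at $q''$'' is circular: $q''$ lies in the same SCC $\aut{B}$, so the ``fresh copy'' is the automaton you are building, and after returning $\adam$ may want to descend again. The paper avoids this by going the other way: start in a flat copy of $\aut{B}$ with all priorities $0$; as long as bad transitions keep being taken, $\eve$ wins by parity (each bad transition forces a $0$-loop). At any point $\adam$ may declare that no more bad transitions will occur and jump to the already-built weak automaton for $\lang(\aut{A}^-,q)$, where bad branches are $\top$ and hence punish a false declaration. This is a single extra alternation on top of $\wclass(\aut{A}^-,q)$, which is exactly~\eqref{eq:weak-A}.
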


Let us first deal with the lowest levels. The algorithm never returns $\wRMeven{0} = \wRM(0,0)$ nor $\wRModd{0}=\wRM(1,1)$, so the lowest $(\rmin, \rmax)$ we need to consider are $(0,1)$ and $(1,2)$. Suppose that $(\rmin, \rmax)=(0,1)$. Examining the algorithm we immediately see that this is possible only if automaton $\aut{A}$ does not contain a rejecting loop reachable from state $q$. Since our automaton is priority reduced, it means that it uses only priority $0$. Hence, it is already a $(0,1)$ weak automaton (not $(0,0)$, because of allowed $\bot$ transitions). For $(\rmin, \rmax)=(1,2)$ the argument is entirely analogous. 

For higher indices we consider three cases, leading to three different constructions of weak alternating automata recognizing $\lang(\aut{A},q)$.

\subsubsection{\texorpdfstring{$\aut{B}$ has no $\adam$-branching transitions and $(\rmin,\rmax)=(1,\rmax)$ with $\rmax\geq 3$}{B has no A-branching transitions and (i,j)=(1,j)}}

In an initial part of the weak automaton recognizing $\lang(\aut{A},q)$ the players declare whether during the play on a given tree they would leave the $\aut{B}$ component or not. Since $\aut{B}$ has no $\adam$-branching transitions, as long as the play has not left $\aut{B}$, each choice of $\adam$ amounts to leaving $\aut{B}$ or staying in $\aut{B}$. Hence, each strategy of $\eve$ admits exactly one path staying in $\aut{B}$, finite or infinite. We first let $\eve$ declare $l_\eve\in\{\leave,\stay\}$\new{, where $\leave$ means that the path is finite, and $\stay$ means that it is infinite.}

\begin{itemize}
\item If $l_\eve=\leave$, we move to a copy of $\aut{B}$ with all the priorities set to $1$. By Equation~\eqref{eq:weak-no-A}, for every exit $f$ of $\aut{B}$ we have $\wclass(\aut{A},f)\leq \wRM(1,\rmax)$. Therefore, we can compose this copy of $\aut{B}$ with all the automata for $\lang(\aut{A}, f)$ to obtain an automaton of index $(1,\rmax)$.

\item Assume that $l_\eve=\stay$. Given the special shape of $\eve$'s strategies, this means that $\eve$ claims that the play will only leave $\aut{B}$ if at some point $\adam$ chooses an exit $f$ in a transition whose other end is in $\aut{B}$. Since the minimal priority in $\aut{B}$ is $0$, all these exists are $(\adam,0)$-replicated. We check $\eve$'s claim by substituting all other exits in transitions with rejecting states, i.e.~weak alternating automata of index $(3,3)$ (recall that $\rmax$ is at least $3$). Thus, the only exits that are not substituted are the $(\adam,0)$-replicated ones.  Now, we ask \adam whether he plans to take one of these exists: he declares $l_\adam\in\{\leave,\stay\}$, accordingly.

\begin{itemize}
\item If $l_\adam=\stay$, the play moves to the weak alternating automaton of index $\overline{\wclass_\dt(\aut{\bar B})}$, corresponding to the co-deterministic automaton $\aut{B}$ with the remaining exits removed from transitions (they were only present in transitions of the form $(q_\dL,\dL) \land (q_\dR,\dR)$, with the other state in $\aut{B}$). 

\item Assume that $l_\adam=\leave$. In that case we move to a copy of
  $\aut{B}$ with all the priorities set to $2$. The only exits left
  are the $(\adam,0)$-replicated ones. By
  Equation~\eqref{eq:weak-no-A}, for all such exists $f$,
  \[\wclass(\aut{A}, f)\leq \wRM(0,\rmax-2)\,,\] for otherwise
  $\wclass(\aut{A}, f)\geq \wRM(1,\rmax-1)$, so  $\big(\wclass(\aut{A}, p)\big)^\adam\geq \wRM(0,\rmax-1)$ and $\wRM(0,\rmax-1)$ is not smaller than $\wRM(1,\rmax)$. In particular, we can find a weak alternating automaton of index $(2,\rmax)$ recognizing $\lang(\aut{A}, f)$. So the whole sub-automaton is a weak alternating automaton of index $(2,\rmax)$.
\end{itemize}
\end{itemize}

\subsubsection{\texorpdfstring{$\aut{B}$ has no $\adam$-branching transitions and $(\rmin,\rmax)=(0,j)$ with $\rmax\geq 2$}{B has no A-branching transitions and (i,j)=(0,j)}}

The simulation starts in a copy of $\aut{B}$ with all the priorities set to $0$. If the play leaves $\aut{B}$ at this stage then we move to the appropriate automaton of index $(0,\rmax)$. At any moment $\adam$ can pledge that:
\begin{itemize}
\item the play will no longer visit transitions $\delta(q',a)$ of the form $(f_\dL,\dL)\land(f_\dR,\dR)$, $(f_\dL,\dL)\lor(f_\dR,\dR)$, $(q_\dL,\dL)\lor(f_\dR,\dR)$, $(f_\dL,\dL)\lor(q_\dR,\dR)$, or  $(q_\dL,\dL)\lor(q_\dR,\dR)$, where $\max_{\Omega}(q_\dL\to q')=\max_{\Omega}(q_\dR\to q')=0$ and $f_\dL$, $f_\dR$ are exits of $\aut{B}$;

\item in the transitions he controls, he will always choose the state in $\aut{B}$, and win regardless of $\eve$'s choices.
\end{itemize}
If the play stays forever in $\aut{B}$ but \adam is never able to make such a pledge, he loses by the parity condition---it means that infinitely many times a loop from $q_\dL\to q'$ or $q_\dR\to q'$ is taken with $\max_\Omega(q_d\to q')=0$ therefore, the minimal priority occurring infinitely often is $0$.

After \adam has made the above pledge, \eve has the following choices:
\begin{itemize}
\item She can challenge the first part of $\adam$'s pledge, declaring that at least one such transition is reachable. In that case we move to a copy of $\aut{B}$ with all the priorities set to $1$ and all the transitions controlled by \eve. In this copy, reaching any of the disallowed transitions entails acceptance---the play immediately moves to a $(2,2)$ final component.
\item She can accept the first part of $\adam$'s pledge. 
\end{itemize}

After \eve has accepted the first part of \adam's pledge, we can assume that the rest of the game in $\aut{B}$ is a single infinite branch. Indeed, by the hypothesis of the theorem, for every $\eve$-branching transition $\delta(q',a)=(q_\dL,\dL)\lor(q_\dR,\dR)$ in $\aut{B}$ it must hold that $\max_{\Omega}(q_\dL\to q')=\max_{\Omega}(q_\dR\to q')=0$; otherwise, $\aut{B}$ would contain $(0,1)$-edelweiss.%a dual split. 
Thus, no $\eve$-branching transition can be reached, and since $\aut{B}$ contains no $\adam$-branching transitions at all, the game can continue in $\aut{B}$ in only one way.

Now $\eve$ must challenge the second part of $\adam$'s pledge. We ask her whether she plans to leave $\aut{B}$ or not, and she declares $l_\eve\in\{\leave,\stay\}$.
\begin{itemize}
\item If $l_\eve=\stay$ then we proceed to the weak automaton of index
  $\wclass(\aut{B},q)$, corresponding to $\aut{B}$ treated as a co-deterministic automaton. We are only interested in the behaviour of this automaton over trees in which there is exactly one branch in $\aut{B}$, and it is infinite. Over such trees we want to make sure that \new{neither player} ever chooses to exit. This is already ensured: when $\aut{B}$ is turned into a co-deterministic tree automaton, the exits are simply removed from transitions (if both states are exits, the transition is changed to a transition to a $(2,2)$ automaton, but such transitions will never be used over trees we are interested in).

\item If $l_\eve=\leave$ then we move to a copy of $\aut{B}$ with all the priorities set to $1$. The only available exits of $\aut{B}$ in this copy are those in transitions of the form $\delta(q',q) = (q_\dL,\dL)\lor(f,\dR)$ (or symmetrical) with $\max_{\Omega}(q_\dL\to q')>0$ (in other transitions the exits are removed, if both states are exits, they are replaced by a final $(2,2)$-component); therefore $\wclass(\aut{A}, f)\leq \wRM(1,\rmax)$ and we can simulate it with a $(1,\rmax)$-automaton.  
\end{itemize}

\subsubsection{\texorpdfstring{$\aut{B}$ contains $\adam$-branching}
{B contains A-branching} transitions}

If $\aut{B}$ contains an $\adam$-branching transition, the algorithm returns $\wclass(\aut{A}, q)$ of the form $\wRM(0,\rmax)$. Let us construct a weak automaton of index $(0,\rmax)$ that recognizes $\lang(\aut{A},q)$. The automaton starts in a copy of $\aut{B}$ with all the priorities set to $0$. At any moment \adam can declare that no-one will ever take any \emph{bad} transition in $\aut{B}$.  %(neither he or $\eve$).
If he cannot make such a declaration, it means that \eve can force infinitely many bad transitions to be taken, and she wins. After $\adam$ has made such declaration, we need to recognize the language $\lang(\aut{A}^{-},q)$ (note that the bad transitions in $\aut{A}^{-}$ are made directly losing for $\adam$). For this we can use a weak automaton of index $\wclass(\aut{A}^{-})\leq\wRM(0,\rmax)$,
already constructed.

\subsection{Lower bounds}
\label{sec:weak-lower}

Now we prove lower bounds for the weak index computed by our procedure, as expressed by Lemma~\ref{lem:weak-lower}.

\begin{lemma}
\label{lem:weak-lower}
If $\wclass(\aut{A}, q)\geq \wRM(\rmin, \rmax)$ then $\lang(\aut{A}, q)$ cannot be recognised by a (total) weak alternating automaton of index $(\rmin+1,\rmax+1)$.
\end{lemma}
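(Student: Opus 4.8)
The plan is to obtain the bound from topological hardness, in the spirit of Section~\ref{ssec:lower} but relative to the Borel hierarchy instead of the game languages $\W{\rmin}{\rmax}$. I shall use only the elementary direction of Mostowski's characterisation of the weak index: a \emph{total} weak alternating automaton of index $(\rmin,\rmax)$ recognises a language lying in a fixed finite Borel level, read off from the acceptance condition (with the $\bot,\top$ normalisation of Subsection~\ref{subsec:automata}). Call it $\Gamma_{\rmin,\rmax}$; after rescaling priorities it is $\bpi{k}$ for even $\rmin$ and $\bsigma{k}$ for odd $\rmin$, where $k=\rmax-\rmin\geq 1$, so that index $(\rmin+1,\rmax+1)$ corresponds to the \emph{dual} class, which for $k\geq 1$ genuinely differs from $\Gamma_{\rmin,\rmax}$. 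Fix for each index a regular tree language $H_{\rmin,\rmax}$ complete for $\Gamma_{\rmin,\rmax}$ (such regular languages exist at every finite Borel level); by strictness of the Borel hierarchy, $H_{\rmin,\rmax}$ does not lie in the dual class, hence not in $\wRM(\rmin+1,\rmax+1)$. Since Borel pointclasses are downward closed under continuous reductions, it therefore suffices to prove that $\wclass(\aut{A},q)\geq\wRM(\rmin,\rmax)$ implies $H_{\rmin,\rmax}\leq_W\lang(\aut{A},q)$.

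I would prove this implication by induction on the DAG of strongly connected components (SCCs) of $\aut{A}$, following the recursive computation of $\wclass$ from Subsection~\ref{sec:weak-proc}: the reductions for the exits $p$ of the SCC $\aut{B}$ containing $q$ come from the induction hypothesis and are assembled into a reduction for $q$ according to whichever clause of the algorithm fixes $\wclass(\aut{A},q)$. The assembly is the one behind Lemma~\ref{lemma:edel_hard}: as $\aut{A}$ is a game automaton with no trivial states, each structural feature the algorithm inspects --- an accepting or a rejecting loop, an $\eve$- or an $\adam$-branching transition with a prescribed back-loop, a forbidden subgraph inside the co-deterministic automaton $\aut{B}$ --- is \emph{realised} by resolving partial trees in the sense of Fact~\ref{ft:resolve}, i.e.\ partial trees pinning down the run up to designated holes labelled with the relevant states, into which one plugs trees coming from the exit-reductions, one hole per exit. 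By Fact~\ref{ft:resolve} the acceptance of a plugged tree is determined exactly by the plugged subtrees, and the branching of the pattern converts this into the Boolean/fixpoint combination prescribed by the class operations ($\lor$, $(\cdot)^{\eve}$, $(\cdot)^{\adam}$, and the bar). As continuous reductions compose, this yields the required reduction of $H_{\rmin,\rmax}$ into $\lang(\aut{A},q)$.

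Concretely: in the base cases, a rejecting loop reachable from $q$ with an accepting tree at its root realises a reduction of an open-complete regular language (``a flag eventually appears''), so $\lang(\aut{A},q)$ is not closed; dually an accepting loop with a rejecting tree shows it is not open; both loops present --- which is exactly when the algorithm leaves the base cases --- gives both, so $\lang(\aut{A},q)$ is neither open nor closed, covering the $\wRMdelta{2}$ summand of \eqref{eq:weak-no-A} and \eqref{eq:weak-A}. For \eqref{eq:weak-no-A}: a plain exit $p\in F$ is handled by routing the whole input through the initial resolving tree into the $\lang(\aut{A},p)$-piece; a $(\eve,1)$-replicated exit $p$ carries an $\eve$-branching transition whose $1$-back-loop lets $\eve$ alternately re-enter $\aut{B}$ (crossing a priority $\geq 1$) or commit to the $\lang(\aut{A},p)$-piece, realising exactly $(\cdot)^{\eve}$ applied to $\wclass(\aut{A},p)$; dually for $(\adam,0)$-replicated exits and $(\cdot)^{\adam}$; and, $\aut{B}$ being co-deterministic, its dual $\aut{\bar B}$ is deterministic, so the deterministic weak-index analysis of~\cite{murlak_weak_index} produces inside $\aut{\bar B}$ a forbidden subgraph witnessing $\wclass(\aut{\bar B},q)$, whose dualisation gives $\overline{\wclass(\aut{\bar B},q)}$-hardness of $\lang(\aut{A},q)$ on the trees whose run never leaves $\aut{B}$. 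For \eqref{eq:weak-A}: a bad $\adam$-branching transition $\delta(q',a)=(q_\dL,\dL)\land(q_\dR,\dR)$ with its $0$-back-loop lets $\adam$ force, along an infinite branch, a countable intersection of copies of the $\wclass(\aut{A}^{-},q)$-pattern, which is precisely the operation carrying a $\wclass(\aut{A}^{-},q)$-hard language to a $\big(\wclass(\aut{A}^{-},q)\big)^{\adam}$-hard one. The $\lor$'s are realised by a ``let the opponent choose a summand'' gadget at the entry, as in the definition of $\bigvee$; and whenever the algorithm outputs a class of the form $\wRMdelta{n}$, witnessing structures for both dual half-classes are present, which is enough since the lemma's hypothesis only ever mentions an index $(\rmin,\rmax)$ --- a half-class, never a self-dual $\wRMdelta{n}$.

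The main obstacle I expect is the bookkeeping of this inductive step: checking that plugging the exit-reductions into the resolving trees gives a genuinely \emph{continuous} reduction --- the same finitary observation as in Lemmas~\ref{lemma:edel_hard} and~\ref{lm:12_hardness} --- and, more delicately, a \emph{correct} one, i.e.\ that acceptance of the composed tree matches the intended combination of acceptances of the plugged subtrees. Correctness rests on Fact~\ref{ft:resolve} and on the decomposition of the guarantees of winning strategies in $\rhogame(\aut{A},\cdot,q)$ along SCC boundaries; the delicate point is that the asymmetry between the $(\eve,1)$- and $(\adam,0)$-replication conditions, and between the $1$- and $0$-back-loops in the $\eve$- versus $\adam$-branching cases --- exactly the configurations forbidden under the hypothesis of Theorem~\ref{thm:weak-dec} (no $(0,1)$- and no $(1,2)$-edelweiss) --- has to be used to guarantee that every realised Borel quantifier climbs precisely one level, so that the reduction lands in the class $\wclass(\aut{A},q)$ and not above it.
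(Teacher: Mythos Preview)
Your proposal is essentially the paper's own proof: both argue via Borel hardness (Fact~\ref{fact:weak_borel}), fix a family of Borel-complete regular tree languages (the paper uses Skurczy{\'n}ski's $S_{(\rmin,\rmax)}$, which are exactly your $H_{\rmin,\rmax}$), and prove $S_{(\rmin,\rmax)}\leq_W\lang(\aut{A},q)$ by induction on the SCC DAG, building the reductions from resolving trees via Fact~\ref{ft:resolve} and the replication gadgets for $(\eve,1)$- and $(\adam,0)$-replicated exits. The paper packages the gadget constructions into a separate lemma (Lemma~\ref{lem:replication}) with items for reachability, the passage $\aut{A}^{-}\to\aut{A}$, accepting/rejecting loops, and the two replications; the $\adam$-branching case~\eqref{eq:weak-A} is handled by first deriving the self-replication $\big(\lang(\aut{A},q)\big)^\adam\leq_W\lang(\aut{A},q)$ and then composing with $\lang(\aut{A}^{-},q)\leq_W\lang(\aut{A},q)$, which is a cleaner organisation of what you sketch.

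One small correction: you do \emph{not} need a ``let the opponent choose a summand'' gadget for the $\lor$'s in~\eqref{eq:weak-no-A}. The operation $\bigvee$ on classes is simply the maximum (or the smallest common upper bound), so if the whole expression is $\geq\wRM(\rmin,\rmax)$ then some single summand already is, and the inductive reduction for that summand suffices. The gadget you describe would be relevant for the \emph{upper} bound, not the lower.
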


For this we use a topological argument, relying on the following simple observation~\cite{duparc_weak}, essentially proved already by Mostowski~\cite{mostowski_hierarchies}. Let 
$\bpi{n}$, $\bsigma{n}$, and $\bdelta{n}$ be the finite Borel classes; that is,
$\bsigma{1}$ is the class of the open sets, $\bpi{n}$ consists of the
complements of sets from $\bsigma{n}$, $\bdelta{n} =
\bsigma{n}\cap\bpi{n}$, and $\bsigma{n+1}$ consists of countable
unions of sets from $\bpi{n}$. 

\begin{fact}\label{fact:weak_borel}
If $L$ is recognizable by a weak alternating automaton of index $(0,\rmax)$ then $L\in\bpi{\rmax}$. Dually, for index $(1,\rmax+1)$, we have $L\in\bsigma{\rmax}$.
\end{fact}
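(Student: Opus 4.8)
The plan is to prove the first statement by induction on $\rmax$ and to obtain the second one for free by duality: the dual automaton of a weak automaton of index $(1,\rmax+1)$ (increase priorities by one, swap $\land$ and $\lor$, swap $\top$ and $\bot$) is weak of index $(0,\rmax)$ and recognizes the complementary language, and $\bsigma{\rmax}$ is the dual class of $\bpi{\rmax}$. Consequently, in the inductive step for index $(0,\rmax)$ I will be entitled to use the second statement at index $(1,\rmax)=(1,(\rmax-1)+1)$, which is itself the dual of the first statement at index $(0,\rmax-1)$.

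So fix a total weak automaton $\aut{A}$ of index $(0,\rmax)$, a state $q_I$, and put $L=\lang(\aut{A},q_I)$; I may assume $L$ is non-trivial (otherwise it is clopen and lies in every $\bpi{n}$), so all states of $\aut{A}$ are non-trivial and, since the index is $(0,\rmax)$, some state has priority $0$. The first observation is purely structural: by weakness no transition leads from a state of priority $\geq 1$ to a state of priority $0$, so the set $Q_{\geq 1}$ of states of priority at least $1$ is closed under transitions. I will set $\aut{A}^{\geq 1}=\aut{A}\restr_{Q_{\geq 1}}$; its only exits, the priority-$0$ states, never occur in its transitions, so it is effectively total, and it is a weak automaton of index $(1,\rmax)$ once the $\bot$/$\top$ conventions of Section~\ref{subsec:automata} are taken into account. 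The inductive hypothesis (in its dual form) then gives $\lang(\aut{A}^{\geq 1},q)\in\bsigma{\rmax-1}$ for every $q\in Q_{\geq1}$, equivalently its complement lies in $\bpi{\rmax-1}$. The case $\rmax=1$ is the base case and is easier: a $\top$ reachable inside $Q_{\geq1}$ would need an even priority $\geq 1$, hence priority $\geq 2$, contradicting the index; so $\lang(\aut{A}^{\geq1},q)=\emptyset$, and the argument below specializes with all these complements equal to $\trees{A}$.

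The core of the proof will be a description of $\comp L$ as a countable union of $\bpi{\rmax-1}$ sets. I will split $\agame(\aut{A},t,q_I)$ into its \emph{priority-$0$ part} --- positions $(v,q)$ with $\Omega_\Aa(q)=0$, positions $(v,\top)$, and the (transient) subformula positions inside transitions of priority-$0$ states --- and the rest; from an \emph{exit position} $(v,q)$ with $q\in Q_{\geq1}$ the remaining game is exactly $\agame(\aut{A}^{\geq1},t\restr_v,q)$, $(v,\bot)$ is an immediate win for \adam, while $(v,\top)$ and every play that stays forever in the priority-$0$ part are wins for \eve (the $\liminf$ priority is $0$, the subformula positions being visited only finitely often between consecutive state positions). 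Now suppose $t\notin L$. By positional determinacy of parity games \adam has a winning strategy $\tau$; I will restrict $\tau$ to the priority-$0$ part and argue that the resulting tree --- full branching at \eve's positions, one move at \adam's positions, no leaf at a $\top$ --- has no infinite branch, because such a branch would be a play consistent with $\tau$ staying forever in the priority-$0$ part, hence won by \eve. Since the arena is finitely branching, König's lemma makes this tree \emph{finite}; call it a \emph{skeleton}. Each leaf of a skeleton is $(v,\bot)$ or an exit position $(v,q)$ with $q\in Q_{\geq1}$, and since $\tau$ is winning, $t\restr_v\notin\lang(\aut{A}^{\geq1},q)$ at each such leaf. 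Conversely, any skeleton whose branchings are consistent with the labelling of $t$, with no leaf at a $\top$ and with $t\restr_v\notin\lang(\aut{A}^{\geq1},q)$ at each of its $Q_{\geq1}$-leaves $(v,q)$, completes --- by winning strategies of \adam in the subgames --- to a winning strategy for \adam, and hence witnesses $t\notin L$. Finally, since a skeleton is a finite object there are only countably many of them, say $s_0,s_1,\dots$; for each $m$, being consistent with $s_m$ is a condition on finitely many values of $t$, hence clopen, and each of the finitely many side conditions $t\restr_v\notin\lang(\aut{A}^{\geq1},q)$ defines a $\bpi{\rmax-1}$ set by the inductive hypothesis and continuity of $t\mapsto t\restr_v$ (a clopen set when $\rmax=1$); so the set $U_m$ of trees realizing $s_m$ lies in $\bpi{\rmax-1}$, $\comp L=\bigcup_m U_m\in\bsigma{\rmax}$, and $L\in\bpi{\rmax}$.

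The step I expect to be the crux is the use of König's lemma: it is what turns the a priori analytic statement ``\adam has a winning strategy'' into a countable disjunction over finite combinatorial skeletons, and thereby keeps $\comp L$ on level $\bsigma{\rmax}$ of the Borel hierarchy rather than merely in $\asigma 1$. Everything else is routine, the only mildly delicate point being the bookkeeping with the $\bot$/$\top$ index conventions (and the hypothesis $\rmax\geq 1$) when forming the subautomaton $\aut{A}^{\geq1}$.
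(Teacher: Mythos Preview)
Your proof is correct and is essentially the standard argument for this fact. The paper itself does not give a proof: it states the result as a ``simple observation'' and cites \cite{duparc_weak} and \cite{mostowski_hierarchies}, so there is nothing to compare against in the paper's text. Your induction on $\rmax$, peeling off the bottom priority level, using weakness to guarantee that $Q_{\geq 1}$ is absorbing, and invoking K\"onig's lemma to turn \adam's winning strategy in the priority-$0$ part into a finite skeleton, is exactly the classical proof; the only cosmetic point is that the duality for weak automata is more naturally phrased as \emph{decreasing} priorities by one (taking $(1,\rmax+1)$ directly to $(0,\rmax)$) rather than increasing and then shifting by $2$, but of course this makes no difference.
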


Thus, in order to show that a language is \emph{not} recognizable by weak alternating automaton of index $(0,\rmax)$ it is enough to show that it is not in $\bpi{\rmax}$. This can be shown by providing a continuous reduction to $L$ from some language not in $\bpi{\rmax}$, e.g.~a $\bsigma{\rmax}$-complete language. We shall use languages introduced by Skurczy{\'n}ski~\cite{skurczynski_borel_infinite}. 

One can define Skurczy{\'n}ski's languages by means of two dual operations on tree languages.

\begin{definition} For $L \subseteq \trees{A}$ define 
\[ L^\adam = \left \{t \in \trees{A}\bigm | \forall_{n\in\N}\; t\restr_{\dL^n\dR}\in L\right\}, \quad 
L^\eve = \left \{t  \in \trees{A}\bigm | \exists_{n\in\N}\; t\restr_{\dL^n\dR}\in L\right\}.\]
\end{definition}

It is straightforward to check that these operations are monotone with
respect to the Wadge ordering; that is,
\[ L\wadgeq M \text{ implies } L^\adam \wadgeq M^\adam \text{ and } L^\eve \wadgeq M^\eve.\]

Moreover, for all $n>0$, 
\begin{itemize}
\item if $L$ is $\bsigma{n}$-complete, $L^\adam$ is $\bpi{n+1}$-complete, and 
\item if $L$ is $\bpi{n}$-complete, $L^\eve$ is $\bsigma{n+1}$-complete. 
\end{itemize}

This allows us to define simple tree languages complete for finite levels of the Borel hierarchy.

\begin{definition}[\cite{skurczynski_borel_infinite}]
Consider the alphabet $A=\{\bot,\top\}$. Let \[S_{(0,1)} = \left \{t \in \trees{A} \bigm | t(\epsilon)=\top\right\}^\adam\,,\quad S_{(1,2)} = \left \{t \in \trees{A} \bigm | t(\epsilon)=\bot \right\}^\eve\,.\]  The remaining languages are defined inductively,
\[ S_{(0,\rmax+1)}=(S_{(1,\rmax+1)})^\adam\,, \quad
S_{(1,\rmax+1)}=(S_{(0,\rmax-1)})^\eve\,.\]
For notational convenience, let $S_{(0,0)}=\trees{A}$ and $S_{(1,1)}=\emptyset$.
\end{definition}

Note that the languages are dual to each other:
$S_{(1,\rmax+1)}=\trees{A}\setminus S_{(0,\rmax)}$. A straightforward
reduction shows that $S_{(i',j')}\wadgeq S_{(i,j)}$ whenever $(i,j)$
is at least $(i',j')$. But the crucial property is the following. 

\begin{fact}[\cite{skurczynski_borel_infinite}]
\label{fact:sku}
 $S_{(0,n)} \in \bpi{n} \setminus \bsigma{n}$ and $S_{(1,n+1)} \in \bsigma{n}\setminus \bpi{n}$.
\end{fact}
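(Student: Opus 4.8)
The plan is to prove the stronger statement that, with respect to continuous (Wadge) reducibility, $S_{(0,n)}$ is $\bpi{n}$-complete and $S_{(1,n+1)}$ is $\bsigma{n}$-complete for every $n\geq 1$; the claimed membership and non-membership then follow formally. Indeed, $\trees{A}$ with $A=\{\bot,\top\}$ is homeomorphic to the Cantor space, so the finite Borel classes over it are closed under continuous preimages and form a strictly increasing hierarchy; in particular $\bpi{n}\neq\bsigma{n}$ for $n\geq 1$. Hence a $\bpi{n}$-complete set belongs to $\bpi{n}$ and, being also $\bpi{n}$-hard, cannot belong to $\bsigma{n}$: some fixed $\bpi{n}$ set that is not $\bsigma{n}$ reduces to it, while $\bsigma{n}$ is closed under continuous preimages. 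Dually for $\bsigma{n}$-completeness. So it suffices to establish the completeness claim, and I would do this by induction on $n$, following the recursive definition of the languages.

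For the inductive step I would simply invoke the closure properties recalled immediately before the definition of the $S$'s: if $L$ is $\bsigma{m}$-complete then $L^{\adam}$ is $\bpi{m+1}$-complete, and if $L$ is $\bpi{m}$-complete then $L^{\eve}$ is $\bsigma{m+1}$-complete (these encompass the membership statements $L^{\adam}\in\bpi{m+1}$, $L^{\eve}\in\bsigma{m+1}$, which are clear anyway from $L^{\adam}=\bigcap_k\pi_k^{-1}(L)$ and $L^{\eve}=\bigcup_k\pi_k^{-1}(L)$, where $\pi_k\colon t\mapsto t\restr_{\dL^k\dR}$ is continuous). So fix $n\geq 2$ and assume $S_{(0,n-1)}$ is $\bpi{n-1}$-complete and $S_{(1,n)}$ is $\bsigma{n-1}$-complete. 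From $S_{(1,n+1)}=(S_{(0,n-1)})^{\eve}$ and the first hypothesis, the closure property yields that $S_{(1,n+1)}$ is $\bsigma{n}$-complete; from $S_{(0,n)}=(S_{(1,n)})^{\adam}$ and the second hypothesis, it yields that $S_{(0,n)}$ is $\bpi{n}$-complete. Here the only thing to watch is the index bookkeeping: each application of $(\cdot)^{\adam}$ or $(\cdot)^{\eve}$ raises the Borel level by one, consistently with the second coordinates of the $S$'s.

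The base case $n=1$ must be done by hand, and this is the one genuinely non-mechanical step, because $\{t\colon t(\epsilon)=\top\}$ is merely clopen, hence neither $\bsigma{1}$- nor $\bpi{1}$-complete, so the closure lemmas do not apply. Unwinding the definitions, $S_{(0,1)}=\{t\in\trees{A}\colon \forall m\in\N,\ t(\dL^m\dR)=\top\}$ is a countable intersection of clopen sets, hence closed; to see it is $\bpi{1}$-hard, given a closed set $C=[T]$ for a pruned tree $T\subseteq\{\dL,\dR\}^{<\omega}$, the function sending $x$ to the tree labelled $\top$ at $\dL^m\dR$ when $x\restr m\in T$ and labelled $\top$ everywhere else is continuous and pulls $S_{(0,1)}$ back to $C$. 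Dually, $S_{(1,2)}=\{t\in\trees{A}\colon \exists m\in\N,\ t(\dL^m\dR)=\bot\}$ is a countable union of clopen sets, hence open, and it is $\bsigma{1}$-hard via the analogous reduction: write a given open set as $\bigcup_m N_{s_m}$, where $N_{s_m}$ is the basic clopen set determined by the finite string $s_m$, and send $x$ to the tree that carries $\bot$ at $\dL^m\dR$ exactly when $s_m$ is a prefix of $x$. With the base case settled the induction goes through, and combined with the first paragraph this establishes the Fact.
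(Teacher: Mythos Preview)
The paper does not give its own proof of this fact: it is stated with a citation to Skurczy\'nski and used as a black box. So there is nothing to compare against in the paper itself. Your argument is the standard one and is essentially correct: prove by induction that $S_{(0,n)}$ is $\bpi{n}$-complete and $S_{(1,n+1)}$ is $\bsigma{n}$-complete, using for the inductive step exactly the closure properties the paper records just before the definition (if $L$ is $\bsigma{m}$-complete then $L^{\adam}$ is $\bpi{m+1}$-complete, and dually), and handling $n=1$ directly because the inner clopen set is not complete for either class.

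One slip to fix in the base case: as written, your reduction for $\bpi{1}$-hardness of $S_{(0,1)}$ labels the image tree $\top$ at $\dL^m\dR$ when $x\restr m\in T$ \emph{and} $\top$ everywhere else, which makes the image constantly the all-$\top$ tree. You clearly intend $\bot$ at $\dL^m\dR$ when $x\restr m\notin T$ (and $\top$ at all remaining positions); with that correction the preimage of $S_{(0,1)}$ is exactly $[T]$. The $\bsigma{1}$-hardness reduction for $S_{(1,2)}$ is fine (the edge case of a finite or empty union is handled by repetition or by mapping to the all-$\top$ tree). With this minor fix your proof is complete.
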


Summing up, from Facts~\ref{fact:weak_borel} and~\ref{fact:sku} it follows immediately that if $S_{(\rmin,\rmax)}\leq_{W} L$ then $L$ is not recognizable by a weak alternating automaton of index $(\rmin+1,\rmax+1)$.

Observe that $S_{(\rmin,\rmax)}$ can be recognised by a weak game automaton of index $(\rmin,\rmax)$.  One consequence of this---and Facts~\ref{fact:weak_borel} and~\ref{fact:sku}---is the strictness of the hierarchy.

\begin{corollary}
The weak index hierarchy is strict, even when restricted to languages recognizable by game automata. 
\end{corollary}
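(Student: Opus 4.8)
The plan is simply to collect the facts already in hand. For every $n\geq 1$ Skurczy{\'n}ski's language $S_{(0,n)}$ is recognized by a weak game automaton of index $(0,n)$, so $S_{(0,n)}\in\wRMeven n$ even within the class of languages recognized by game automata; dually $S_{(1,n+1)}\in\wRModd n$. By Fact~\ref{fact:weak_borel} any language in $\wRModd n=\wRM(1,n+1)$ lies in $\bsigma n$, and any language in $\wRMeven n=\wRM(0,n)$ lies in $\bpi n$. But Fact~\ref{fact:sku} says $S_{(0,n)}\in\bpi n\setminus\bsigma n$ and $S_{(1,n+1)}\in\bsigma n\setminus\bpi n$. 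Hence $S_{(0,n)}\in\wRMeven n\setminus\wRModd n$ and $S_{(1,n+1)}\in\wRModd n\setminus\wRMeven n$, and since $\wRMdelta n\subseteq\wRMeven n\cap\wRModd n$, neither language lies in $\wRMdelta n$.

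From this I would read off strictness of all the inclusions in the ladder $\wRMdelta n\subseteq\wRMeven n,\wRModd n\subseteq\wRMdelta{n+1}$. The language $S_{(0,n)}$ witnesses $\wRMdelta n\subsetneq\wRMeven n$, and, lying in $\wRMeven n\subseteq\wRMdelta{n+1}$ but not in $\wRModd n$, also witnesses $\wRModd n\subsetneq\wRMdelta{n+1}$; symmetrically $S_{(1,n+1)}$ witnesses $\wRMdelta n\subsetneq\wRModd n$ and $\wRMeven n\subsetneq\wRMdelta{n+1}$. As all four witnesses are recognized by game automata, the whole weak index hierarchy is strict even when restricted to game-recognizable languages, which is the claim.

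There is no real obstacle: the statement is an immediate corollary of the observation preceding it together with Facts~\ref{fact:weak_borel} and~\ref{fact:sku}. The only thing demanding a moment's attention is bookkeeping --- being sure that each of the four families of strict inclusions in the ladder-shaped hierarchy is matched to the right Skurczy{\'n}ski language, and that $S_{(1,n+1)}\in\wRModd n$ indeed entails $S_{(1,n+1)}\in\wRMdelta{n+1}$ via the trivial inclusions $\wRM(1,n+1)\subseteq\wRM(0,n+1)$ and $\wRM(1,n+1)\subseteq\wRM(1,n+2)$, and dually for $S_{(0,n)}$.
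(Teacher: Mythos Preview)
Your proposal is correct and takes essentially the same approach as the paper: the paper's entire argument is the sentence immediately preceding the corollary, which notes that $S_{(\rmin,\rmax)}$ is recognized by a weak game automaton of index $(\rmin,\rmax)$ and invokes Facts~\ref{fact:weak_borel} and~\ref{fact:sku}. You have simply spelled out the bookkeeping that the paper leaves implicit.
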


Another consequence is that it is relatively easy to give the reductions we need to prove Lemma~\ref{lem:weak-lower}, summarized in the claim below.

\begin{claim}\label{lemma:hardness}
If $\wclass(\aut{A}, q)\geq \wRM(\rmin,\rmax)$ then $S_{(\rmin,\rmax)} \wadgeq \lang(\aut{A}, q)$.
\end{claim}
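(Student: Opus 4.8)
The plan is to prove the claim by induction following the recursive structure of the procedure that computes $\wclass$. Let $\aut{B}$ be the SCC of $\aut{A}$ containing $q$; the induction proceeds on the pair consisting of the number of $\adam$-branching transitions of $\aut{B}$ and the position of $\aut{B}$ in the DAG of SCCs, so that the exits of $\aut{B}$ (which lie in strictly earlier SCCs) and the automaton $\aut{A}^{-}$ (which has strictly fewer $\adam$-branching transitions) are covered by the induction hypothesis. By duality we may assume that the minimal priority in $\aut{B}$ is $0$, and we fix $(\rmin,\rmax)$ with $\wRM(\rmin,\rmax)\leq\wclass(\aut{A},q)$; the goal is a continuous $f$ with $f^{-1}(\lang(\aut{A},q))=S_{(\rmin,\rmax)}$. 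The basic building block is Fact~\ref{ft:resolve}: a partial tree resolving $\aut{A}$ from a state, with one hole carrying a state $p$, lets us plug in at that hole, continuously, an arbitrary tree, reducing membership in $\lang(\aut{A},q)$ to membership in $\lang(\aut{A},p)$; two holes below a $\land$-transition (resp.\ $\lor$-transition) realise a conjunction (resp.\ disjunction) of two such sub-problems. Composing such gadgets along the branch of an $n$-loop of $\graph(\aut{A})$ turns a reduction of a language $S$ into a reduction of $S^{\adam}$ or $S^{\eve}$, and since priority-reduction supplies $n$-loops of every admissible minimal priority, and Skurczy{\'n}ski's operations $(\cdot)^{\adam}$, $(\cdot)^{\eve}$ are Wadge-monotone and shift the languages $S_{(i,j)}$ exactly as the class operations $(\cdot)^{\adam}$, $(\cdot)^{\eve}$ shift the weak index, each class operation used by the procedure is mirrored by such an iteration on the reduction side.

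\noindent\textbf{Low levels and the $\adam$-branching case.}
If the procedure returns $\wRMdelta{1}$, then $\wRM(\rmin,\rmax)\leq\wRMdelta{1}$ forces $S_{(\rmin,\rmax)}\in\{\emptyset,\trees{A}\}$ and a constant map works, since $\aut{A}$ has only non-trivial states, so $\lang(\aut{A},q)\notin\{\emptyset,\trees{A}\}$. If it returns $\wRMeven{1}$ (an accepting loop reachable from $q$, no rejecting one), then $\aut{A}$ uses only priority $0$ on the reachable part and $\lang(\aut{A},q)$ is closed; using the reachable accepting loop together with priority-reduction and non-triviality, the usual resolving-tree gluing expresses it as a countable conjunction of input-controlled clopen conditions, which is exactly what a reduction of $S_{(0,1)}$ needs, and $\wRModd{1}$ is dual. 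When $\aut{B}$ contains an $\adam$-branching transition, $\wclass(\aut{A},q)=\wRMdelta{2}\lor\big(\wclass(\aut{A}^{-},q)\big)^{\adam}$. The first summand is handled as above: an $\adam$-branching transition is a ready-made $\land$-gadget, and looping around it with $0$- and $1$-loops supplied by priority-reduction yields reductions of $S_{(0,1)}$ and $S_{(1,2)}$, which is all that is needed since $\wRM(\rmin,\rmax)\leq\wRMdelta{2}$ leaves only those cases. For the second summand, apply the induction hypothesis inside $\aut{A}^{-}$ to obtain a reduction $g$ of the appropriate Skurczy{\'n}ski language $S'$, and map $t$ to the tree that runs the $\adam$-branching transition of $\aut{B}$ at priority $0$ along a left spine, routes its right child at depth $n$ through the part of $\aut{B}$ leading from the bad state toward the relevant exit and into $\aut{A}^{-}$-behaviour on $g(t\restr_{\dL^n\dR})$, and sends its left child back into the accepting loop; then $\adam$ wins precisely when he can dive at some depth $n$ with $t\restr_{\dL^n\dR}\notin S'$, so the preimage of $\lang(\aut{A},q)$ is $(S')^{\adam}$, which by the definition of the $S_{(\cdot,\cdot)}$ is $S_{(\rmin,\rmax)}$.

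\noindent\textbf{The case without $\adam$-branching in $\aut{B}$.}
Here $\wclass(\aut{A},q)$ is the join of the five terms in~\eqref{eq:weak-no-A}. The $\wRMdelta{2}$ term is handled as before: both an accepting and a rejecting loop are now reachable from $q$, which suffices for the $S_{(0,1)}$ and $S_{(1,2)}$ reductions. For each of $\bigvee_{p\in F}\wclass(\aut{A},p)$, $\bigvee_{p\in F_{\eve,1}}\wclass(\aut{A},p)^{\eve}$, and $\bigvee_{p\in F_{\adam,0}}\wclass(\aut{A},p)^{\adam}$, apply the induction hypothesis at the relevant exit $p$ and compose with resolving-tree gadgets: for the plain join, route directly into $p$; for the $(\cdot)^{\eve}$ term, the $(\eve,1)$-replicating transition $\delta(q',a)=(q'',\dL)\lor(p,\dR)$ together with a rejecting $1$-loop $q''\to q'$ is exactly the iterated $\lor$-gadget turning a reduction of $S'$ into one of $(S')^{\eve}$, and dually for the $(\cdot)^{\adam}$ term from an $(\adam,0)$-replicating transition. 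Finally, for the $\overline{\wclass(\aut{\bar B},q)}$ term: since $\aut{B}$ has no $\adam$-branching transition, $\aut{\bar B}$ is deterministic, so by Theorem~\ref{thm:detindex}(4) together with the fact that for deterministic automata the weak index is witnessed by a continuous reduction from the corresponding Skurczy{\'n}ski language (implicit in~\cite{murlak_weak_index}, or via the Borel-rank characterisation) we obtain $S''\wadgeq\lang(\aut{\bar B},q)$ for the Skurczy{\'n}ski language $S''$ at the weak index of $\aut{\bar B}$; complementing both sides (continuous reductions commute with complementation) and threading the result through a resolving tree that connects $\lang(\aut{B},q)$ with $\lang(\aut{A},q)$ gives $\comp{S''}\wadgeq\lang(\aut{A},q)$, and $\comp{S''}$ is again one of the $S_{(\cdot,\cdot)}$, namely the one required by this term.

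\noindent\textbf{Main obstacle.}
The genuinely laborious part is the $\adam$-branching-free case: one must amalgamate the five independent sources of hardness into a single continuous reduction, maintaining continuity through the infinitely iterated $\land$/$\lor$ gadgets, and in particular import the co-deterministic part from the deterministic theory and thread it through the dualisation and through the resolving trees without disturbing the behaviour at the exits used by the other summands; the low-level gadgets also rely on the priority-reduction hypothesis in an essential way, to obtain loops of exactly the required minimal priority.
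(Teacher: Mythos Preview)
Your approach is essentially the paper's: induction along the recursion of the procedure, with resolving-tree gadgets realising the operations $(\cdot)^\eve$, $(\cdot)^\adam$ on the side of the reduction, and the deterministic case imported as a black box. The paper packages the gadgets into a single auxiliary lemma (Lemma~\ref{lem:replication}) listing six elementary Wadge reductions---reachability, passing from $\aut{A}$ to $\aut{A}^-$, accepting/rejecting loops, and $(\adam,0)$/$(\eve,1)$-replication---and then the proof of the claim is a short case analysis chaining these with monotonicity of $(\cdot)^\eve$, $(\cdot)^\adam$ and the Wadge ordering among the $S_{(\rmin,\rmax)}$. Your in-line constructions amount to the same reductions; the paper's factoring just makes the bookkeeping cleaner.

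There is one genuine misunderstanding in your ``Main obstacle'' paragraph. You write that in the $\adam$-branching-free case ``one must amalgamate the five independent sources of hardness into a single continuous reduction'' and worry about ``not disturbing the behaviour at the exits used by the other summands''. This is not so: $\wclass(\aut{A},q)$ in~\eqref{eq:weak-no-A} is the \emph{join} of the five terms, so $\wclass(\aut{A},q)\geq\wRM(\rmin,\rmax)$ means that \emph{some single} term already exceeds $\wRM(\rmin,\rmax)$, and one handles only that term. Your proof body in fact does exactly this case split, so the error is confined to the commentary, but it suggests you have not fully internalised why the argument works.

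A smaller point: in the $\adam$-branching case, your one-shot construction ``into $\aut{A}^{-}$-behaviour'' hides two separate steps. The paper first shows $\lang(\aut{A}^-,q)\wadgeq\lang(\aut{A},q)$ (plug accepted trees at the bad positions), and separately shows the self-replication $(\lang(\aut{A},q))^\adam\wadgeq\lang(\aut{A},q)$ (the bad state in an $\adam$-branching transition is $(\adam,0)$-replicated, and $q$ is reachable from it inside the SCC). Chaining these with the inductive hypothesis and monotonicity of $(\cdot)^\adam$ is both shorter and avoids the need to talk about ``$\aut{A}^-$-behaviour'' inside a tree for $\aut{A}$.
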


We prove this claim by induction on the structure of the DAG of SCCs
of $\aut{A}$ reachable from $q$, following the cases of the algorithm
just like for the upper bound. One of the cases is covered by the
procedure for deterministic automata, which we use as a black box. But
in order to prove Lemma~\ref{lem:weak-lower} we need to know that it
preserves our invariant. And indeed, just like here, it is a step in
the correctness proof: if the procedure returns at least
$\wRM(\rmin,\rmax)$, then $S_{(\rmin,\rmax)}$ continuously reduces to the
recognised language~\cite{murlak_weak_index}. 

The remaining cases essentially correspond to the items in the following lemma. 

\begin{lemma}\label{lem:replication}
Assume that $q$ is a state of $\aut{A}$, $\aut{B}$ is the SCC of $\aut{A}$ containing $q$, and $p$ is a state of $\aut{A}$ reachable from $q$ (from the same or different SCC). 
\begin{enumerate}
\item \label{item:reach}
$\lang(\aut{A},p)\wadgeq \lang(\aut{A},q)$. 
\item \label{item:minus}
$\lang(\aut{A}^-,q)\wadgeq \lang(\aut{A},q)$.
\item \label{item:aloop}
If an accepting loop is reachable from $q$, then
$S_{(0,1)} \wadgeq \lang(\aut{A},q)$.
\item \label{item:rloop}
If a rejecting loop is reachable from $q$, then
  $S_{(1,2)} \wadgeq \lang(\aut{A},q)$.
\item \label{item:areplicated} 
If $p$ is $(\adam,0)$-replicated by $\aut{B}$ then $\left(\lang(\aut{A}, p)\right)^\adam \wadgeq \lang(\aut{A}, q)$.
\item \label{item:ereplicated}
If $p$ is $(\eve,1)$-replicated by $\aut{B}$ then $\left(\lang(\aut{A}, p)\right)^\eve \wadgeq \lang(\aut{A}, q)$.
\end{enumerate}
\end{lemma}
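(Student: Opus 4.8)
The plan is to prove each item by giving an explicit continuous reduction, all of them built from one ingredient. Since $\aut{A}$ is a game automaton all of whose states are non-trivial, for every path $q\tran{w}r$ in $\graph(\aut{A})$ one can build a partial tree with a single hole that resolves $\aut{A}$ from $q$ and labels that hole with $r$: follow $w$ and at each branching along $w$ plug into the off-path direction a regular tree that is losing there for the owner of the branching (such a tree exists by Rabin's theorem, as $\aut{A}$ has no trivial states), so that Fact~\ref{ft:resolve} applies. If $w$ is replaced by two paths meeting at an $\land$- or $\lor$-transition, the same recipe gives a partial tree with two holes, to which the second clause of Fact~\ref{ft:resolve} applies. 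I shall also use freely that $\aut{A}$ is priority-reduced (Lemma~\ref{lemma:reduced}): inside an SCC, for every $n'\le\max_\Omega(q'\to q'')$ there is an $n'$-path from $q'$ to $q''$.

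Item~\ref{item:reach} is immediate: let $t_I$ be a single-hole resolving tree from $q$ with hole $h$ labelled $p$, and set $f(s)=t_I[h:=s]$; then $f^{-1}(\lang(\aut{A},q))=\lang(\aut{A},p)$ by Fact~\ref{ft:resolve}, and $f$ is continuous. For item~\ref{item:minus}, recall that $\aut{A}^-$ is $\aut{A}$ with the bad successor $q''$ of every $\adam$-branching transition $\delta(q',a)=(q_\dL,\dL)\land(q_\dR,\dR)$ of $\aut{B}$ redirected to $\top$. I would define $f(t)$ by following the forced run of $\aut{A}$ down $t$ and, whenever it meets such a transition, replacing the subtree of $t$ below the bad direction by a fixed regular tree in $\lang(\aut{A},q'')$, continuing only in the other direction. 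In $\rhogame(\aut{A},f(t),q)$ the branch into $q''$ is then won by $\eve$, so $\adam$ never takes it, exactly as if it were $\top$; since a winning strategy for $\adam$ in either game never enters that branch, translating strategies both ways gives $f(t)\in\lang(\aut{A},q)$ iff $t\in\lang(\aut{A}^-,q)$, and $f$ is plainly continuous.

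For items~\ref{item:aloop}--\ref{item:ereplicated} the scheme is uniform: build a spine by iterating a fixed template, with at each level a choice between continuing the spine and running a test of a subtree of $t$ against $\lang(\aut{A},p)$, so that a play staying on the spine forever is decided by the parity of the least priority of a loop one controls. For item~\ref{item:aloop}, an accepting loop $p\tran{w}p$ (least priority even) reachable from $q$ gives, via the construction above, a single-hole template $t_\circ$ resolving $\aut{A}$ from $p$ whose root-to-hole path has even least priority; fixing $r_\bot\notin\lang(\aut{A},p)$ and a single-hole tree $t_I$ from $q$ to $p$, I set $f(t)=t_I[h:=T_0]$ with $T_n=t_\circ[h:=T_{n+1}]$ if $t(\dL^n\dR)=\top$ and $T_n=r_\bot$ otherwise: if all these labels are $\top$, the spine play runs through the even loop forever and is won by $\eve$; otherwise it reaches $r_\bot$ and is won by $\adam$; hence $f^{-1}(\lang(\aut{A},q))=S_{(0,1)}$. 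Item~\ref{item:rloop} is dual, with a rejecting loop, a tree $r_\top\in\lang(\aut{A},p)$, and $S_{(1,2)}$. For item~\ref{item:areplicated}, $(\adam,0)$-replication supplies states $q',q''$ in $\aut{B}$ with $\delta(q',a)=(q'',\dL)\land(p,\dR)$ (or the mirror image) and a $0$-path $q''\tran{u}q'$, so the loop $q'\tran{(a,\dL)}q''\tran{u}q'$ has least priority $0$; the two-hole construction yields a template $t_\ell$ resolving $\aut{A}$ from $q'$ with holes $h_1$ (labelled $q'$, reached through $(a,\dL)$ then $u$) and $h_2$ (labelled $p$, the $\dR$-child of the $a$-node, whose transition is the $\land$ above), and with $T_n=t_\ell[h_1:=T_{n+1},\ h_2:=t\restr_{\dL^n\dR}]$ and $f(t)=t_I[h:=T_0]$ for a single-hole tree $t_I$ from $q$ to $q'$, one checks in $\rhogame(\aut{A},f(t),q)$ that $\adam$ wins iff some level $n$ has $t\restr_{\dL^n\dR}\notin\lang(\aut{A},p)$ (staying on the even loop loses for him), i.e.\ $f^{-1}(\lang(\aut{A},q))=(\lang(\aut{A},p))^\adam$. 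Item~\ref{item:ereplicated} is symmetric: $(\eve,1)$-replication forces $\Omega(q')\ge 1$ (a path from $q''$ to $q'$ of least priority $\ge 1$ contains $q'$), so with a $1$-path $q''\tran{u}q'$ the corresponding $\lor$-loop has least priority $1$, and the $\eve$-controlled spine gives $f^{-1}(\lang(\aut{A},q))=(\lang(\aut{A},p))^\eve$.

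The hard part will be that the composed trees $f(t)$ in items~\ref{item:aloop}--\ref{item:ereplicated} are infinite, so Fact~\ref{ft:resolve} cannot just be iterated; instead one argues directly on $\rhogame(\aut{A},f(t),q)$, using the resolving property to confine any play to the spine and the tested subtrees, and K\"onig's lemma together with the parity of the controlled loop to settle plays that never leave the spine. Continuity of each $f$ is routine, since any finite prefix of $f(t)$ depends only on finitely many labels of $t$. Finally, these six reductions, together with the monotonicity of $L\mapsto L^\adam$ and $L\mapsto L^\eve$ with respect to $\wadgeq$, are exactly what is needed to carry the invariant of Claim~\ref{lemma:hardness} through the recursive cases of the procedure of Section~\ref{sec:weak-proc}: item~\ref{item:areplicated} realises $(\cdot)^\adam$, item~\ref{item:ereplicated} realises $(\cdot)^\eve$, items~\ref{item:aloop}--\ref{item:rloop} give the base cases, and items~\ref{item:reach}--\ref{item:minus} the peeling steps.
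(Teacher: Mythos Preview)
Your proposal is correct and follows essentially the same approach as the paper's proof: both rely on Fact~\ref{ft:resolve} and the construction of resolving trees with one or two holes, build the reductions for items~\ref{item:aloop}--\ref{item:ereplicated} by co-inductively stacking a fixed template along a spine, and argue directly on the parity game $\rhogame$ for the limit behaviour. Your explicit remark that $\Omega(q')\ge 1$ in item~\ref{item:ereplicated} (because $q'$ lies on a path of minimal priority $\ge 1$) is a detail the paper leaves implicit, and your caution that Fact~\ref{ft:resolve} cannot be iterated transfinitely is well placed---the paper handles this the same way, by a direct argument on the spine play.
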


\begin{proof}
The proof is based on Fact~\ref{ft:resolve}. Let us begin with~\eqref{item:reach}. Since all the states of $\aut{A}$ are non-trivial, we can construct a tree $t$ with a hole $h$ such that $t$ resolves $\aut{A}$ from $q$ and the state $\rho(\aut{A}, t, q)(h)$ is $p$. In that case $t[h:=s]\in\lang(\aut{A},q)$ if and only if $s\in\lang(\aut{A}, p)$. Therefore, the function $s\mapsto t[h:=s]$ is a continuous reduction witnessing that $\lang(\aut{A},p)\wadgeq \lang(\aut{A},q)$.

For~\eqref{item:minus}, recall that $\aut{A}^-$  is obtained from
$\aut{A}$ by turning some choices for $\adam$ to $\top$; that is, some
transitions $\delta(r,a)$ of the form $(r_\dL, \dL) \land (r_\dR, \dR)$ are
set to $(r_\dL, \dL)$, $(r_\dR, \dR)$, or $\top$. This means that if a node
$v$ of tree $t$ has label $a$ and gets state $p$ in the associated run
$\rho(\aut{A}^-, t, q)$, then $t\restr_{v\dL}$, $t\restr_{v\dR}$, or both of them,
respectively, are immediately accepted by $\aut{A}^-$. In the
corresponding run of the original automaton $\aut{A}$, however, these
subtrees will be inspected by the players and we should make sure they
are accepted. The way to do it is simple: since $r_\dL$ and $r_\dR$ are
non-trivial in $\aut{A}$, we can replace these subtrees with $t_{r_\dL}
\in \lang(\aut{A},r_\dL)$, or $t_{r_\dR} \in \lang(\aut{A},r_\dR)$,
accordingly. This gives a continuous reduction of $\lang(\aut{A}^-,q)$
to $\lang(\aut{A},q)$. 

To prove~\eqref{item:aloop}, let us fix  a state $p$ on an accepting loop $C$, reachable from $q$. By~\eqref{item:reach} and transitivity of $\wadgeq$, it is enough to show that $S_{(0,1)}\wadgeq \lang(\aut{A},p)$. Let $t$ be a tree with hole $h$ such that $t$ resolves $\aut{A}$ from $p$, the state $\rho(\aut{A},t,p)$ is $p$, and the states on the shortest path from the root to $h$ correspond to the accepting loop $C$. Since all states in $\aut{A}$ are non-trivial, we can also find a full tree $t' \notin \lang(\aut{A}, p)$. Let $t_0=t'$ and $t_n = t[h:=t_{n-1}]$ for $n>0$, and let $t_{\infty}$ be the tree defined co-inductively as \[t_{\infty} = t[h:=t_{\infty}]\,.\] Then, $t_n\notin \lang(\aut{A},p)$ for all $n\geq 0$, but $t_{\infty} \in \lang(\aut{A},p)$. To get a continuous function reducing $S_{(0,1)}$ to $\lang(\aut{A},p)$, map tree $s\in \trees{\{\bot,\top\}}$ to $t_m$, where $m = \min \left \{ i \bigm| s(\dL^i\dR)=\bot\right\}$, or to $t_{\infty}$ if $\left \{ i \bigm| s(\dL^i\dR)=\bot\right\}$ is empty. 

Item~\eqref{item:rloop} is analogous.

For~\eqref{item:areplicated}, let us assume that
$\delta(q,a)=(q_\dL,\dL)\land (p,\dR)$ is the transition witnessing that $p$
is $(\adam,0)$-replicated by $\aut{A}$. Let us also fix the path
$q_\dL\to q$ with minimal priority $0$. Now, let $t$ be a tree with a
hole $h$ that resolves $\aut{A}$ from $q$ and the value of the run of
$\aut{A}$ in $h$ is $q$. Similarly, let $t'$ be the tree with a hole
$h'$ that resolves $\aut{A}$ from $q_L$ and the value of the
respective run is $q$. Let us construct a continuous function that
reduces $\left(\lang(\aut{A}, p)\right)^\adam$ to $\lang(\aut{A},
q)$. Assume that a given tree $s$ has subtrees $s_i$ under the nodes
$L^i R$. Let us define co-inductively $t_i$ as 
\[t_i=a(t'[h':=t_{i+1}], s_i),\]
i.e.~the tree with the root labelled by $a$ and two subtrees:
$t'[h':=t_{i+1}]$ and $s_i$. Finally, let $f(s)$ be $t[h:=t_0]$. Note
that the run $\rho(\aut{A}, f(s), q)$ labels the hole $h$ of $t$ by
$q'$. Therefore, $f(t)\in\lang(\aut{A},q)$ if and only if
$t_0\in\lang(\aut{A}, q')$ and $t_i\in\lang(\aut{A}, q)$ if and only
if $t_{i+1}\in\lang(\aut{A},q)$ and $s_i\in\lang(\aut{A},p)$. Since
the minimal priority on the path from $t_i$ to $t_{i+1}$ is $0$, if no
$s_i$ belongs to $\lang(\aut{A}, p)$ then $f(t)\notin\lang(\aut{A},
q)$. Therefore, $f$ is in fact the desired reduction. 

The proof of~\eqref{item:ereplicated} is entirely analogous.
\end{proof}

Using Lemma~\ref{lem:replication}, and the guarantees for 
deterministic automata discussed earlier, we prove
Lemma~\ref{lem:weak-lower} as follows. 

\begin{proof}[of Lemma~\ref{lem:weak-lower}] 
By induction on the recursion depth of the algorithm execution we
prove that if $\wclass(\aut{A},p)\geq \wRM(\rmin,\rmax)$ then
$S_{(\rmin,\rmax)}\wadgeq \lang(\aut{A}, p)$.

Let us start with the lowest level. Assume that $(\rmin,\rmax)=(0,1)$
(for $(1,2)$ the proof is analogous). Examining the algorithm we see
that this is only possible if there is an accepting loop in $\aut{A}$,
reachable from $q$. Then, by Lemma~\ref{lem:replication} Item~\eqref{item:aloop}, $S_{(0,1)} \wadgeq \lang(\aut{A},q)$.

For higher levels we proceed by case analysis. First we cover the
possible reasons why equation~\eqref{eq:weak-no-A} can give at
least $\wRM(\rmin,\rmax)$.
If $\overline{\wclass(\aut{\bar B},q)}\geq \wRM(\rmin,\rmax)$, the
invariant follows immediately from the guarantees for deterministic
automata, and the duality between indices and between Skurczy{\'n}ski languages.
If $\wclass(\aut{A},p)\geq \wRM(\rmin,\rmax)$ for some $p\in F$,  we use the fact that $\lang(\aut{A},p)\wadgeq \lang(\aut{A},q)$, and get $S_{(\rmin,\rmax)}\wadgeq\lang(\aut{A},q)$ by transitivity.
Then, assume that $\wclass(\aut{A},p)^\eve\geq \wRM(\rmin,\rmax)$ for some $p\in
F_{\eve,1}$ (for $p\in F_{\adam,0}$ the proof is analogous). That
means that $\wclass(\aut{A},p)\geq \wRM(0,\rmax')$ such that
$\left(\wRM(0,\rmax')\right)^\eve = \wRM(1,\rmax'+2) \geq \wRM(\rmin,\rmax)$. 
By the inductive hypothesis
$S_{(0,\rmax')}\wadgeq \lang(\aut{A}, p)$, so by the monotonicity of
$\eve$ and Lemma~\ref{lem:replication} Item~\eqref{item:ereplicated}, 
$S_{(1,\rmax'+2)} = \left(S_{(0,\rmax')}\right)^\eve\wadgeq\left(\lang(\aut{A},p)\right)^\eve\wadgeq
\lang(\aut{A},q)$. But since $\wRM(1,\rmax'+2) \geq
\wRM(\rmin,\rmax)$, by the Wadge ordering of Skurczy{\'n}ski's
languages $S_{(\rmin,\rmax)} \wadgeq S_{(1,\rmax'+2)}$, and \new{consequently}
$S_{(\rmin,\rmax)} \wadgeq \lang(\aut{A},q)$ follows by transitivity.

Finally, assume that $\wclass(\aut{A},q)$ is computed according to~\eqref{eq:weak-A};
that is, the component $\aut{B}$ contains an $\adam$-branching
transition $\delta(q',a) = (q_\dL,\dL)\land(q_\dR,\dR)$. As we have already
observed, the hypothesis of the theorem implies that in that case
$\max_\Omega(q_\dL\to q') = 0$ and $\max_\Omega(q_\dR\to q') \leq 1$ (or
symmetrically). That means that $q_\dR$ is $\adam,0$-replicated by
$\aut{B}$, so by Lemma~\ref{lem:replication} Item~\eqref{item:areplicated},
$\left(\lang(\aut{A}, q_\dR)\right)^\adam \wadgeq \lang(\aut{A},
q)$. But since $\aut{B}$ is strongly connected, $q$ is reachable from
$q_\dL$ and $q_\dL$, so by Lemma~\ref{lem:replication} Item~\ref{item:reach} we have
$\lang(\aut{A}, q)\wadgeq \lang(\aut{A}, q_\dR)$. Since $\forall$ is
monotone, we conclude that
\begin{equation}
\left(\lang(\aut{A},
q)\right)^\adam \wadgeq \lang(\aut{A}, q)\,. \label{eqn:autoreplication}
\end{equation}
(\new{Although it  looks paradoxical, it is not the case since}
$(L^\forall)^\forall \wadgeq L^\forall$ for all $L$.)
Since $\wclass(\aut{A}, q) \geq \wRM(\rmin,\rmax)$, it must hold that
$\wclass(\aut{A}^-, q) \geq \wRM(1,\rmax')$, such that
$\left(\wRM(1,\rmax')\right)^\adam = \wRM(0,\rmax')\geq \wRM(\rmin,\rmax)$.
By the induction hypothesis, $S_{(0,\rmax')}\wadgeq
\lang(\aut{A}^-, q)$. Consequently, by Lemma~\ref{lem:replication} Item~\eqref{item:minus}
and by transitivity, $S_{(0,\rmax')}\wadgeq \lang(\aut{A},
q)$. Thus, by the monotonicity of $\adam$, from~\eqref{eqn:autoreplication} we get 
$ S_{(0,\rmax')} = \left( S_{(1,\rmax')}\right)^\adam\wadgeq \lang(\aut{A},
q)$, and we conclude by the Wadge ordering of Skurczy{\'n}ski's languages. 
\end{proof}

\subsection{Substitution preserves the weak alternating index}

A quick look at how the weak index is computed suffices to show the converse of Fact~\ref{fact:motivation} for weak alternating index. 

\begin{proposition}\label{pro:weak-preserve}
For game automata, substitution preserves the weak alternating index.
\end{proposition}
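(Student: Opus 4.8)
The plan is to follow the template of Propositions~\ref{pro:nondet-preserve} and~\ref{pro:alt-preserve}: the recursive procedure of Section~\ref{sec:weak-proc} inspects only features of the automaton that are confined to a single strongly connected component (SCC), together with the previously computed weak classes of the exits of that component, and substitution never creates a strongly connected subgraph spanning states of $\aut{A}$ and states of the substituted automaton.

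As in the proof of Proposition~\ref{pro:alt-preserve}, I would first observe that without changing $\lang(\aut{A}_{\aut{B}}, q_I^{\aut{A}})$ we may assume the occurrence of the substituted state is an exit of $\aut{A}$; the preprocessing of Lemma~\ref{lemma:reduced} acts independently in each SCC, so it commutes with substitution, and we may assume all automata are priority-reduced. The structural point is that substitution only adds transitions leading from $\aut{A}$ into the initial state of the substituted automaton, and none in the reverse direction. Consequently every SCC of $\aut{A}_{\aut{B}}$ lies entirely in $\aut{A}$ or entirely in $\aut{B}$; the SCCs inside $\aut{A}$ are precisely the SCCs of $\aut{A}$ (with the substituted state still an exit), those inside $\aut{B}$ are precisely the SCCs of $\aut{B}$ reachable from $q_I^{\aut{B}}$, and the DAG of SCCs of $\aut{A}_{\aut{B}}$ is the DAG of $\aut{A}$ with the DAG of $\aut{B}$ hung below the substituted exit.

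I would then verify that each case of the procedure in Section~\ref{sec:weak-proc} is local. For the SCC $\aut{S}$ containing the current state $q$, the value $\wclass(\aut{A},q)$ is computed from: (i) combinatorial data confined to $\aut{S}$ --- whether $\aut{S}$ has accepting or rejecting loops, $\eve$- and $\adam$-branching transitions, which of its exits are $(\eve,1)$- or $(\adam,0)$-replicated, the deterministic weak index of the co-deterministic reading $\aut{\bar S}$ of $\aut{S}$, and the auxiliary component obtained from $\aut{S}$ by turning the \emph{bad} states of its $\adam$-branching transitions into $\top$ --- each depending only on the restriction of the automaton to the states of $\aut{S}$, with outside states treated as exits; and (ii) the already computed values $\wclass(\aut{A},p)$ at the exits $p$ of $\aut{S}$, combined via $(\cdot)^\eve$, $(\cdot)^\adam$, $\overline{(\cdot)}$, and $\lor$ as in \eqref{eq:weak-no-A} and \eqref{eq:weak-A}. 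For the deterministic sub-case I would invoke, as the paper already does, that the weak-index algorithm for deterministic automata~\cite{murlak_weak_index} also depends only on forbidden patterns inside the automaton. Granting this, the procedure runs identically on $\aut{A}_{\aut{B}}$ and on $\aut{A}_{\aut{C}}$ from $q_I^{\aut{A}}$ over all SCCs lying inside $\aut{A}$, and its recursion bottoms out at exits of $\aut{A}$, in particular at the substituted exit, whose value is $\wclass(\aut{B},q_I^{\aut{B}})$, respectively $\wclass(\aut{C},q_I^{\aut{C}})$. By correctness of the procedure (Lemma~\ref{lem:weak-upper} together with Lemma~\ref{lem:weak-lower}, the latter via Claim~\ref{lemma:hardness}), $\wclass$ equals the true weak index; hence equality of the weak indices of $\lang(\aut{B},q_I^{\aut{B}})$ and $\lang(\aut{C},q_I^{\aut{C}})$ gives $\wclass(\aut{B},q_I^{\aut{B}}) = \wclass(\aut{C},q_I^{\aut{C}})$, and therefore $\wclass(\aut{A}_{\aut{B}},q_I^{\aut{A}}) = \wclass(\aut{A}_{\aut{C}},q_I^{\aut{A}})$.

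Finally, one must ensure we are in the regime where the procedure applies, i.e.~that $\lang(\aut{A}_{\aut{B}}, q_I^{\aut{A}})$ is weakly recognizable. Having a common weak index presupposes that $\lang(\aut{B},q_I^{\aut{B}})$ and $\lang(\aut{C},q_I^{\aut{C}})$ are both weakly recognizable, so by Fact~\ref{ft:when-too-high} neither $\aut{B}$ nor $\aut{C}$ contains a $(0,1)$- or $(1,2)$-edelweiss reachable from its initial state; since an edelweiss is strongly connected it lies inside $\aut{A}$ or inside $\aut{B}$ (resp.~$\aut{C}$), and one reachable from $q_I^{\aut{A}}$ lying inside $\aut{B}$ would be reachable from $q_I^{\aut{B}}$, which is excluded. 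Thus the bad edelweisses reachable from $q_I^{\aut{A}}$ in $\aut{A}_{\aut{B}}$ are exactly those lying inside $\aut{A}$, the same as for $\aut{A}_{\aut{C}}$: either there are none, both languages are weakly recognizable, and the computation above applies, or there are some, neither language is weakly recognizable, and there is nothing to prove. The only genuine work beyond replaying the pattern of Propositions~\ref{pro:nondet-preserve} and~\ref{pro:alt-preserve}, and hence the main obstacle, is the locality check for the two less transparent ingredients --- the co-deterministic reading $\aut{\bar S}$ and the auxiliary automaton with the \emph{bad} states replaced by $\top$ --- including verifying that this last modification leaves the SCC structure outside $\aut{S}$ untouched.
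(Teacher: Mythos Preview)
Your proposal is correct and follows essentially the same approach as the paper: both argue that $\wclass(\aut{A},q)$ depends only on the internal structure of the SCC containing $q$ together with the values of $\wclass$ at its exits, so that substituting two automata of equal weak index yields the same outcome, and then invoke correctness of the procedure. Your version is considerably more detailed---in particular you spell out the locality of the $\aut{B}^-$ and $\aut{\bar B}$ constructions and explicitly treat the case where the substituted automata (or $\aut{A}$ itself) fall outside the weakly recognizable regime via Fact~\ref{ft:when-too-high}, points the paper's three-sentence proof leaves implicit.
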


\begin{proof}
As we have argued in the proof of Proposition~\ref{pro:alt-preserve}, we can assume that our automata are priority reduced and so are the results of the substitutions. Notice that $\wclass(\aut{A}, q)$ depends only on the internal structure of the SCC containing $q$ and the value of $\wclass(\aut{A}, q)$ computed for the states outside this component. Therefore, substituting either of two game automata recognizing languages of the same weak index results in the same outcome of the procedure. The claim follows by the correctness of the procedure. 
\end{proof}

\subsection{Weak index VS Borel rank}\label{sec:weak_vs_Borel}

Another notable feature of tree languages recognised by deterministic
automata is that within this class, the properties of being Borel and being weakly
recognizable are coextensive. Since the former is
decidable~\cite{niwinski_gap}, the latter is also decidable. This
correspondence can be made even more precise: for languages recognised
by deterministic automata, the weak index and the Borel rank
coincide~\cite{murlak_weak_index}. Notice that this implies that the
Borel rank for deterministic languages is also decidable, a result
originally proved in~\cite{Murlak05}.  As a corollary of the work
presented in the previous part of this Section, we obtain that the same is true
for game automata.

\begin{corollary}
Under restriction to languages recognised by game automata, 
 the weak index hierarchy
coincides with the Borel hierarchy, and both are decidable.
%\item the class of weakly recognizable languages coincides with the
%  class of Borel languages, and both are decidable. 
%\end{itemize}
\end{corollary}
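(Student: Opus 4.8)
The plan is to read the corollary off from the bounds already established in this section, combined with the topological facts about game languages and about Skurczy\'nski's languages. Writing $L = \lang(\aut{A}, q)$, I would first split according to whether $L$ is weakly recognizable. By Fact~\ref{ft:when-too-high} this fails precisely when $\aut{A}$ contains a $(0,1)$-edelweiss or a $(1,2)$-edelweiss reachable from $q$, an effectively checkable condition on $\graph(\aut{A})$. If such an edelweiss is present, Lemma~\ref{lemma:edel_hard} together with Lemma~\ref{lm:12_hardness} and Proposition~\ref{pro:123_hardness} give $\W{0}{1}\leq_W L$ or $\W{1}{2}\leq_W L$; since $\W{0}{1}$ is $\asigma{1}$-complete and $\W{1}{2}$ is $\api{1}$-complete, neither of these is Borel, and as Borel classes are closed under continuous preimages, $L$ is not Borel. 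Hence, within game languages, being Borel coincides with being weakly recognizable, and this property is decidable by the edelweiss test.

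For the remaining case I would assume $L$ is weakly recognizable, so that Theorem~\ref{thm:weak-dec} computes its weak index $\wclass(\aut{A}, q)$. For the upper bound on the Borel rank I would invoke Lemma~\ref{lem:weak-upper}: if $\wclass(\aut{A}, q) \leq \wRM(0,n)$ then $L$ is recognized by a weak alternating automaton of index $(0,n)$, hence $L \in \bpi{n}$ by Fact~\ref{fact:weak_borel}; dually $\wclass(\aut{A}, q) \leq \wRM(1,n+1)$ gives $L \in \bsigma{n}$; and combining the two for $\wclass(\aut{A}, q) = \wRMdelta{n}$ yields $L \in \bdelta{n}$. For the matching lower bound I would use Claim~\ref{lemma:hardness}: if $\wclass(\aut{A}, q) \geq \wRM(\rmin,\rmax)$ then $S_{(\rmin,\rmax)} \leq_W L$, and since $S_{(0,n)} \notin \bsigma{n}$ and $S_{(1,n+1)} \notin \bpi{n}$ by Fact~\ref{fact:sku}, continuity of the reduction and closure of Borel classes under preimages force $L \notin \bsigma{n}$, respectively $L \notin \bpi{n}$; in the case $\wclass(\aut{A}, q) = \wRMdelta{n}$ one applies Claim~\ref{lemma:hardness} to the classes $\wRM(0,n-1)$ and $\wRM(1,n)$ lying below $\wRMdelta{n}$, getting $S_{(0,n-1)} \leq_W L$ and $S_{(1,n)} \leq_W L$, hence $L \notin \bsigma{n-1}\cup\bpi{n-1}$. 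Putting the upper and lower bounds together, the least Borel class containing $L$ is exactly the one matched to $\wclass(\aut{A}, q)$ under the translation $\wRM(0,n) \leftrightarrow \bpi{n}$, $\wRM(1,n+1) \leftrightarrow \bsigma{n}$, $\wRMdelta{n} \leftrightarrow \bdelta{n}$, which is the claimed coincidence of the two hierarchies on game languages.

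Decidability of both problems then follows immediately: whether $L$ is Borel is settled by the edelweiss test, and if it is, its exact Borel level is read off from $\wclass(\aut{A}, q)$, computed by Theorem~\ref{thm:weak-dec}. I do not expect a genuine obstacle here, as all the work is carried by Lemma~\ref{lem:weak-upper}, Claim~\ref{lemma:hardness}, and the placement results for the languages $\W{i}{j}$ and $S_{(i,j)}$; the only point requiring care is the bookkeeping between the two indexing conventions, in particular checking via Fact~\ref{fact:sku} that each $S_{(\rmin,\rmax)}$ sits at precisely the Borel level associated with weak index $(\rmin,\rmax)$, and treating the $\wRMdelta{n}$ case on both the upper and the lower side as indicated above.
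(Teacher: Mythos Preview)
Your proposal is correct and follows essentially the same route as the paper: the lower bound on the Borel rank comes from Claim~\ref{lemma:hardness} combined with Fact~\ref{fact:sku}, and decidability from Theorem~\ref{thm:weak-dec}. The only difference is in how the Borel \emph{upper} bound is obtained. You derive it internally from Lemma~\ref{lem:weak-upper} and Fact~\ref{fact:weak_borel} (weak index $(0,n)$ implies $\bpi{n}$), whereas the paper appeals to~\cite{murlak_weak_index} for the stronger statement $\lang(\aut{A},q)\leq_W S_{(\rmin,\rmax)}$ directly; since $S_{(\rmin,\rmax)}$ is complete for its Borel class by Fact~\ref{fact:sku}, the two formulations are equivalent, and your version has the advantage of being self-contained. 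You also spell out the non-Borel case via the edelweiss test and the $\asigma{1}$/$\api{1}$-completeness of $\W{0}{1}$ and $\W{1}{2}$, which the paper leaves implicit (it is discussed in the text preceding the corollary rather than in the proof itself).
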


\begin{proof} 
From~\cite{murlak_weak_index}, we know that if 
$\wclass(\aut{A}, q)\leq \wRM(\rmin,\rmax)$ then 
$\lang(\aut{A}, q) \wadgeq S_{(\rmin,\rmax)}$.
The coincidence between weak index and Borel rank thence follows by applying Claim~\ref{lemma:hardness} and Fact~\ref{fact:sku}. Decidability is a consequence of Theorem~\ref{thm:weak-dec}.
\end{proof}

%%%% Local Variables:
%%%% TeX-master: "journal_TOCL"
%%%% End:

\section{Recognizability by game automata}
\label{sec:isgame}

In this section we give an effective characterization of the class of languages recognized by game automata within the class of all regular languages. The characterization is inspired by the one for deterministic automata~\cite{niwinski_gap}, however, due to the alternation of players, the arguments here are more involved.

We begin with a handful of definitions. Let us fix a finite alphabet $A$. A \emph{trace} is a finite word $w$ over $A \cup \{\dL,\dR\}$, with letters from $A$ on even positions, and directions from $\{\dL,\dR\}$ on odd positions. If the last symbol of $w$ is a letter, the trace is \emph{labelled}, otherwise it is \emph{unlabelled}.  A trace $w$ can be seen as a partial tree $t_w \in\partrees{A}$ consisting of a single path: 
% encodes a path in a tree together with the labels (if $w$ is unlabelled, the last label is missing).
for a labelled trace $w=a_0d_1a_1\dots d_k a_k$,  $\dom(t_w)=\{d_1 d_2 \dots d_i \bigm | i \leq k\}$ and $t_w(d_1 d_2 \dots d_i) = a_i$ for all $i\leq k$. Abusing the notation, we write $w$ instead of $d_1d_2\dots d_k$. The tree $t_w$ has two \emph{final holes}, $w\dL$ and $w\dR$, and \emph{side holes} $d_1d_2\dots d_{i-1} \bar d_i$ for $i\leq k$. For an unlabelled trace $w=a_0d_1a_1\dots d_k$, $t_w$ is defined similarly, but this time it has only one final hole: $d_1d_2\dots d_k$. We shall also write $w$ for this hole. 

A partial tree $t\in\partrees{A}$ is a \emph{realization} of a trace $w$ if it is obtained from $t_w$ by putting some \emph{total} trees in all the side holes of $t_w$. If $w$ is an unlabelled trace, $t$ still has a hole $w$. We write $t(t')$ for the  tree obtained by putting $t'$ in the hole $w$, and $t^{-1}M$ for $\{ t' \bigm |  t(t')\in M\}$. Similarly, if $w$ is a labelled trace, we write $t(t_\dL, t_\dR)$ for the total tree obtained by putting $t_\dL$, $t_\dR$ in the holes $w\dL$ and $w\dR$, respectively, and we define $t^{-1}M$ as $\{ (t_\dL, t_\dR) \bigm | t(t_\dL,t_\dR)\in M \}$. Additionally, $a^{-1}M$ stands for $t_a^{-1}M$ for the root-only tree $t_a$ with $t_a(\varepsilon)=a$.

A language  $Z$ is \emph{non-trivial} if neither $Z$ nor its complement $Z^\complement$ is empty. The following notions are semantic counter-parts of states and transitions of game automata. 

\begin{definition} \label{def:profile}
A \emph{unary profile} is $\ast$ (standing for trivial) or a non-trivial regular tree language $Z$.
A \emph{binary profile} is $\ast$, $\emptyset$,  $\trees{A} \times \trees{A}$,  or \new{a subset} of $\trees{A} \times \trees{A}$ in one of the forms $Z_\dL \times \trees{A}$, $\trees{A}\times Z_\dR$,  $(Z_\dL \times \trees{A}) \cup (\trees{A}\times Z_\dR)$, or $Z_\dL \times Z_\dR$, 
for some non-trivial regular tree languages $Z_\dL,Z_\dR$\new{; the first three (i.e.~$\ast$, $\emptyset$, and $\trees{A} \times \trees{A}$) are called trivial and the remaining ones non-trivial}. 
\end{definition}

We shall see that the binary profiles (except $\ast$) correspond to transitions of the form $\bot$, $\top$, $(q_\dL, \dL)$, $(q_\dR, \dR)$, $(q_\dL, \dL) \lor (q_\dR, \dR)$, and $(q_\dL, \dL) \land (q_\dR, \dR)$, respectively. As a first step, let us relate traces to profiles. 

\begin{definition}\label{def:profsat}
A trace $w$ has non-trivial profile $Z$ in a regular language $M$, if for each realization $t$ of $w$  either $t^{-1}M$ is trivial or  $t^{-1}M = Z$, and for some realization $t_0$, $t_0^{-1}M = Z$; here $Z$ is unary for unlabelled $w$ and  binary for labelled $w$. 

An unlabelled trace $w$ has profile $\ast$ in $M$ if for each realization $t$ of $w$, $t^{-1}M$ is trivial. A labelled trace $wa$ has profile $Z\in\{ \emptyset, \trees{A}\times\trees{A}\}$ in $M$ if $w$ has a non-trivial profile $Z'$ and $a^{-1} Z' = Z$; if $w$ has profile $\ast$, so does the trace $wa$.
\end{definition}

Note that each trace, labelled or unlabelled, has at most one profile in $M$. We write $p_M$ for the partial function assigning profiles to traces. We say that $M$ is \emph{locally game} if each trace has a profile in $M$. The following lemma shows that it is equivalent to assume that all unlabelled traces have profiles in $M$.

\begin{lemma}\label{lm:unarygivebinary}
Given the profiles of traces $w$, $wa\dL$ and $wa\dR$ in $M$, one can effectively compute the profile of $wa$ in $M$.
\end{lemma}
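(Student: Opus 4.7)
My plan is to case-split on $p_M(w)$. If $p_M(w)=\ast$, then Definition~\ref{def:profsat} directly gives $p_M(wa)=\ast$. Otherwise $p_M(w)=Z'$ is a non-trivial regular tree language, and the first step is to compute the regular language
$Z'' := a^{-1}Z' = \{(t_\dL,t_\dR)\in\trees{A}\times\trees{A}\,:\, a(t_\dL,t_\dR)\in Z'\}$,
which is effectively constructible from $Z'$.

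Any realization $t$ of $wa$ is obtained from a unique realization $t'$ of $w$ by plugging $a(\cdot,\cdot)$ into the final hole of $t'$, so $t^{-1}M = a^{-1}((t')^{-1}M)$. Since $p_M(w)=Z'$, we have $(t')^{-1}M\in\{\emptyset,\trees{A},Z'\}$, and therefore $t^{-1}M\in\{\emptyset,\trees{A}\times\trees{A},Z''\}$; moreover the value $Z''$ is witnessed, because some realization $t'_0$ of $w$ does satisfy $(t'_0)^{-1}M=Z'$. Consequently $p_M(wa)=\emptyset$ if $Z''=\emptyset$, $p_M(wa)=\trees{A}\times\trees{A}$ if $Z''=\trees{A}\times\trees{A}$, and otherwise the only remaining candidate is $Z''$, provided it fits one of the non-trivial binary profile shapes of Definition~\ref{def:profile}.

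The heart of the proof is to verify that $Z''$ does have such a shape, using the existence of $p_M(wa\dL)$ and $p_M(wa\dR)$. For fixed $t_\dR$, the vertical fibre $\{t_\dL : (t_\dL,t_\dR)\in Z''\}$ equals $s^{-1}M$ for the realization $s$ of $wa\dL$ obtained from a $Z'$-witnessing realization $t'_0$ of $w$ by plugging $t_\dR$ into the $wa\dR$-hole; hence by the profile of $wa\dL$ this fibre lies in $\{\emptyset,\trees{A},Z_\dL\}$, where $Z_\dL := p_M(wa\dL)$ when non-trivial (and the set is $\{\emptyset,\trees{A}\}$ when $p_M(wa\dL)=\ast$). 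Symmetrically every horizontal fibre of $Z''$ lies in $\{\emptyset,\trees{A},Z_\dR\}$.

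The main obstacle, though in the end a modest one, is to verify by a finite case analysis on whether $p_M(wa\dL)$ and $p_M(wa\dR)$ are $\ast$ or not that this double fibre constraint forces $Z''$ into one of the permitted shapes. For example, if $p_M(wa\dL)=\ast$ then every vertical fibre is $\emptyset$ or $\trees{A}$, so $Z'' = \trees{A}\times F$ for some $F\in\{\emptyset,\trees{A},Z_\dR\}$ by the horizontal constraint; if both profiles are non-trivial, the only shapes consistent with both constraints are $Z_\dL\times Z_\dR$ (when no vertical fibre equals $\trees{A}$) and $(Z_\dL\times\trees{A})\cup(\trees{A}\times Z_\dR)$ (when both $\trees{A}$- and $Z_\dL$-fibres occur vertically, which forces the complementary horizontal pattern); the one-sided non-trivial cases give $Z_\dL\times\trees{A}$ and $\trees{A}\times Z_\dR$. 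Since $Z''$ is a concrete regular tree language of pairs and each candidate form can be built from the given $Z_\dL$, $Z_\dR$, the correct form is identified by standard decidable equality tests on regular languages, completing the effective computation of $p_M(wa)$.
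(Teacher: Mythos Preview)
Your approach is essentially the paper's: compute $Z''=a^{-1}Z'$ and analyse its sections (what you call fibres), using the profiles of $wa\dL$ and $wa\dR$ to constrain them to $\{\emptyset,\trees{A},Z_\dL\}$ and $\{\emptyset,\trees{A},Z_\dR\}$ respectively.

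There is one genuine gap in your ``both profiles non-trivial'' case. You assert that the only shapes consistent with the two fibre constraints are $Z_\dL\times Z_\dR$ and $(Z_\dL\times\trees{A})\cup(\trees{A}\times Z_\dR)$, but this is false for the constraints as you have stated them: the set $Z_\dL\times\trees{A}$ also has every vertical fibre in $\{\emptyset,\trees{A},Z_\dL\}$ (all equal $Z_\dL$) and every horizontal fibre in $\{\emptyset,\trees{A},Z_\dR\}$ (all equal $\emptyset$ or $\trees{A}$). What you have not used is the \emph{witness} clause of Definition~\ref{def:profsat}. If $p_M(wa\dR)=Z_\dR$ is non-trivial, the witnessing realization $s_0$ of $wa\dR$ must arise from some realization $t'$ of $w$ with $(t')^{-1}M=Z'$ (otherwise $s_0^{-1}M$ would be trivial), and then $s_0^{-1}M$ is precisely a horizontal fibre of $Z''$; hence $Z_\dR$ actually \emph{occurs} as a fibre, not merely lies in the allowed set. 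Symmetrically $Z_\dL$ occurs as a vertical fibre. With this extra information your two-shape conclusion is correct: writing $X_{\trees{A}},X_{Z_\dR},X_\emptyset$ for the partition of $\trees{A}$ by horizontal-fibre value, one has $X_{Z_\dR}\neq\emptyset$, the two vertical fibres $X_{\trees{A}}$ and $X_{\trees{A}}\cup X_{Z_\dR}$ are distinct elements of $\{\emptyset,Z_\dL,\trees{A}\}$, and one of them equals $Z_\dL$, leaving exactly the two possibilities you list. The paper sidesteps this by organising the case split differently (on which of $X_{\trees{A}},X_\emptyset$ are empty), so that shapes like $Z_\dL\times\trees{A}$ appear naturally in the third case rather than having to be excluded.
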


\begin{proof}
Let us \new{assume that} $w$ has a non-trivial profile $K \subseteq \trees{A}$, and  $wa\dL$ and $wa\dR$ have profiles $K_\dL$ and $K_\dR$. Then, by Definition~\ref{def:profsat},  $wa$ cannot have profile $\ast$. \linebreak Let $a^{-1} K = \left \{ (s,t) \in \trees{A}\times\trees{A} \bigm | a(s,t) \in K\right \}$. Is easy to see that $wa$ has a profile $Z\subseteq \trees{A} \times \trees{A}$ if and only if $Z=a^{-1} K$. Thus, it remains to check that $a^{-1} K$ is of one of the forms allowed by Definition~\ref{def:profile}. 

For a set $U \subseteq X\times Y$, we define the \emph{lower section of  $U$ by $x\in X$} as $U_x = \{ y\in Y \bigm | (x,y) \in U\}$, and the \emph{upper section of $U$ by $y\in Y$} as $U^y = \{ x\in X \bigm | (x,y) \in U\}$.

Since $wa\dL$ and $wa\dR$ have profiles $K_\dL$ and $K_\dR$, respectively, it follows easily that
\begin{itemize}
\item each lower section of  $a^{-1} K$ is \new{either $\emptyset$, or $\trees{A}$, or $K_\dR$}; and 
\item each upper section of $a^{-1} K$ is \new{either $\emptyset$, or $\trees{A}$, or $K_\dL$}. 
\end{itemize}

The following three sets form a partition of  $\trees{A}$:
\begin{align*}
X_{\trees{A}} &= \left \{ s_L \bigm| (a^{-1} K)_{s_\dL} =\trees{A} \right\}, \\
X_{K_\dR} &= \left \{ s_L \bigm | (a^{-1} K)_{s_\dL} = K_\dR \right \} \quad
\new{\text{(if $K_{\dR}$ is trivial, take $X_{K_\dR}=\emptyset$)}},\\
X_\emptyset &=\left \{ s_L \bigm| (a^{-1} K)_{ s_\dL} =\emptyset \right \},
\end{align*}
and $a^{-1} K =  X_{\trees{A}} \times \trees{A} \cup  X_{K_\dR}  \times K_\dR \cup X_\emptyset  \times \emptyset $.

 First, assume that $K_\dR$ is trivial. Then $a^{-1}K=X_{\trees{A}}\times\trees{A}$ is a binary profile---either $\emptyset$, \new{or} $\trees{A}\times \trees{A}$, or $Z_\dL\times \trees{A}$, depending on $X_{\trees{A}}$. 

Now, assume that $X_{\dR}$ is non-trivial and  fix $s_\dR\in K_\dR$ and  $s'_\dR\in  \trees{A} \setminus K_\dR$. It follows that
\begin{align*}
(a^{-1}K)^{s_\dR} &=  %(X_{\trees{A}} \times \trees{A} \cup  X_{K_R}  \times K_R)^{s_R} = 
 X_{\trees{A}}  \cup X_{K_\dR} \,,\\
(a^{-1}K)^{s'_\dR} &= % (X_{\trees{A}} \times \trees{A} \cup  X_{K_R}  \times K_R)^{s'_R} = 
 X_{\trees{A}}\,.
\end{align*}
Hence, by the initial observation on upper sections, $X_{\trees{A}}$ is $\trees{A}$, $K_\dL$, or $\emptyset$, and similarly for  $X_{\trees{A}}  \cup X_{K_\dR}\,$.
\new{
We distinguish three cases (see Fig.~\ref{fig:profile-sections}).
\begin{enumerate}
\item If $X_{\trees{A}}=\emptyset$ then $a^{-1}K=K_\dR\times X_{K_\dR}$ is a binary profile.
\item If $X_\emptyset=\emptyset$ then $a^{-1}K = \trees{A}\times X_{\trees{A}}\cup K_\dR\times \trees{A}$ is a binary profile (either trivial or of the form $Z_\dR\times\trees{A}\ \cup\ \trees{A}\times Z_\dL$).
\item Finally, assume that both $X_{\trees{A}}$ and $X_\emptyset$ are non-empty. In that case both upper sections $(a^{-1}K)^{s_\dR}$ and $(a^{-1}K)^{s'_\dR}$ are non-trivial and therefore are both equal to $K_\dL$. It means that $a^{-1}K=\trees{A}\times \big(X_{\trees{A}}\cup X_{K_{\dR}}\big)$ is a binary profile (either trivial or of the form $\trees{A}\times Z_\dL$).
\end{enumerate}
This concludes the proof.}
\end{proof}

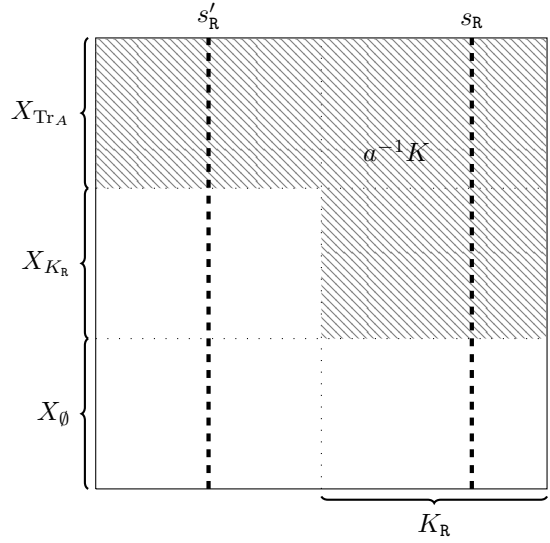
\begin{figure}%[scale=2.0]
\centering
\begin{tikzpicture}

\tikzstyle{split} = [draw, loosely dotted]
\tikzstyle{section} = [draw, ultra thick, dashed]

\path[draw] (0,0) -- (6,0) -- (6,6) -- (0,6) -- (0,0);

\path[split] (3,0) -- (3,6);

\path[split] (0,2) -- (6,2);
\path[split] (0,4) -- (6,4);

\path[fill, pattern=north west lines, pattern color=gray] (0,6) -- (6,6) -- (6,2) -- (3,2) -- (3,4) -- (0,4) -- (0,6);

\node at (4,4.5) {$a^{-1}K$};

\path[ubrace] (3,0) --node {$K_\dR$} (6,0);

\path[lbrace] (0,0) --node {$X_{\emptyset}$} (0,2);
\path[lbrace] (0,2) --node {$X_{K_{\dR}}$} (0,4);
\path[lbrace] (0,4) --node {$X_{\trees{A}}$} (0,6);

\path[section] (1.5,0) -- (1.5,6);
\path[section] (5.0,0) -- (5.0,6);

\node[anchor=south] at (1.5, 6) {$s_\dR'$};
\node[anchor=south] at (5.0, 6) {$s_\dR$};

\end{tikzpicture}
\caption{An illustration of the set $a^{-1}K$ split into a union of products.}
\label{fig:profile-sections}
\end{figure}

%\medskip

Let us now examine the \new{connections} between profiles, and states and transitions of game automata. Let $\aut{B}$ be a total game automaton and  let $q_I$ be a state of $\aut{B}$. For a trace $w$, let $\rho_w=\rho(\aut{B}, t_w, q_I)$ be the run over the tree $t_w$ associated with $w$.

 If $w$ is an unlabelled trace, define $p_{\aut{B},q_I}(w)=\lang(\aut{B}, q)$, if $\rho_w(w)=q\in Q^\aut{B}$; if $\rho_w(w) \notin Q^\aut{B}$, set $p_{\aut{B},q_I}(w)=\ast$.

If $w$ is a labelled trace, set $p_{\aut{B},q_I}(w) =\ast$ if $\rho_w(w) \notin Q^\aut{B}$; otherwise, let $\rho_w(w) = q$ and $b_w = \delta_B(q,a)$, where $a$ is the last symbol of $w$, and set $p_{\aut{B},q_I}(w) = \lang(\aut{B}, b_w)$ where $\lang(\aut{B}, b)$ is the \emph{profile of the transition $b$ in $\aut{B}$}, defined as 
\begin{align*}
\emptyset &\text{ for } b=\bot;\\
\trees{A}\times\trees{A} &\text{ for } b=\top;\\
\lang(\aut{B},q_\dL)\times \trees{A} & \text{ for } b=(q_\dL,\dL);  \\
\trees{A} \times \lang(\aut{B},q_\dR)& \text{ for }b=(q_\dR,\dR);\\
\lang(\aut{B},q_\dL){\times}\trees{A}\,  \cup\,  \trees{A}{\times}\lang(\aut{B},q_\dR)& \text{ for }b=(q_\dL,\dL) \lor (q_\dR,\dR);\\
\lang(\aut{B},q_\dL) \times \lang(\aut{B},q_\dR)& \text{ for }b=(q_\dL,\dL)\land (q_\dR,\dR).
\end{align*}

The following is an easy consequence of Fact~\ref{ft:resolve}.

\begin{lemma}\label{lem:is_locally}
For each total game automaton $\aut{B}$ and state $q_I\in Q^\aut{B}$, for each trace $w$, \[p_{\aut{B}, q_I}(w) = p_{\lang(\aut{B}, q_I)}(w).\]
\end{lemma}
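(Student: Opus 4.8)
The plan is to prove Lemma~\ref{lem:is_locally} by induction on the length of the trace $w$, exploiting the uniqueness of profiles (so it suffices to show that $p_{\aut{B},q_I}(w)$ \emph{is} a valid profile of $w$ in $M:=\lang(\aut{B},q_I)$), and invoking Fact~\ref{ft:resolve} at the inductive step. First I would handle the base case $w=\varepsilon$, the empty (unlabelled) trace: here $t_\varepsilon$ has a single hole $\varepsilon$, $t_\varepsilon$ resolves $\aut{B}$ from $q_I$ trivially, and $\rho_\varepsilon(\varepsilon)=q_I$, so by Fact~\ref{ft:resolve} for any realization $t$ of $\varepsilon$ we get $t^{-1}M = \lang(\aut{B},q_I) = p_{\aut{B},q_I}(\varepsilon)$, which is exactly the required non-trivial profile (recall $\aut{B}$ has only non-trivial states).

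Next comes the inductive step, split by the two ways a trace grows. Case~1: extend an unlabelled trace $w$ by a letter $a$ to get the labelled trace $wa$. If $p_{\aut{B},q_I}(w)=\ast$, then $\rho_w(w)\notin Q^{\aut{B}}$, so $\rho_{wa}(wa)\notin Q^{\aut{B}}$ and $p_{\aut{B},q_I}(wa)=\ast$; by the induction hypothesis $w$ has profile $\ast$ in $M$, and by Definition~\ref{def:profsat} so does $wa$. Otherwise $\rho_w(w)=q$, the induction hypothesis gives that $w$ has non-trivial profile $\lang(\aut{B},q)$ in $M$, and I would compute $a^{-1}\lang(\aut{B},q)$ directly from $\delta_{\aut{B}}(q,a)$: for instance if $\delta_{\aut{B}}(q,a)=(q_\dL,\dL)\lor(q_\dR,\dR)$ then a tree $a(s,t)$ lies in $\lang(\aut{B},q)$ iff $s\in\lang(\aut{B},q_\dL)$ or $t\in\lang(\aut{B},q_\dR)$, i.e.\ $a^{-1}\lang(\aut{B},q)=\lang(\aut{B},q_\dL)\times\trees{A}\cup\trees{A}\times\lang(\aut{B},q_\dR)$, which is precisely $p_{\aut{B},q_I}(wa)$; the other transition shapes are analogous. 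Then Definition~\ref{def:profsat} (the clause for labelled traces $wa$) gives that $wa$ has profile $a^{-1}\lang(\aut{B},q)$ in $M$.

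Case~2: extend a labelled trace $w$ (ending in letter $a$) by a direction $d$ to get the unlabelled trace $wd$. Again, if $p_{\aut{B},q_I}(w)=\ast$ the claim is immediate from the definitions. Otherwise $\rho_w(w)=q$, $b_w=\delta_{\aut{B}}(q,a)$, and $\rho_{wd}(wd)$ is read off $b_w$. The key point is that one can choose a realization of the labelled trace $w$ that resolves $\aut{B}$ from $q_I$: since $\aut{B}$ has no trivial states, at each side hole of $t_w$ one plugs in a regular total tree making the corresponding position losing for its owner (using Rabin's theorem to find such regular witnesses, exactly as in the proof of Lemma~\ref{lemma:edel_hard}), and at the two final holes $w\dL,w\dR$ one similarly picks witnesses so that whichever of $w\dL,w\dR$ is the ``non-chosen'' direction in $b_w$ is losing; this yields a partial tree $\bar t$ with (at most) the hole $wd$, resolving $\aut{B}$ from $q_I$, with $\rho_{\bar t}(wd)=\rho_{wd}(wd)$. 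Then Fact~\ref{ft:resolve} (single-hole case) gives $\bar t^{-1}M = \lang(\aut{B},\rho_{wd}(wd)) = p_{\aut{B},q_I}(wd)$ when this state is in $Q^{\aut{B}}$, and more generally that every realization $t$ of $wd$ has $t^{-1}M$ either trivial or equal to $p_{\aut{B},q_I}(wd)$ — indeed for arbitrary $t$, replacing each side subtree and the non-chosen final subtree with the corresponding coordinate of $t$ only affects the outcome through positions that are already decided unless they agree with the chosen branch, so a short argument via Fact~\ref{ft:resolve} applied after re-resolving shows $t^{-1}M\in\{\emptyset,\trees{A},p_{\aut{B},q_I}(wd)\}$. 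The case $\rho_{wd}(wd)\in\{\top,\bot,\ast\}$ is handled by the trivial clauses of Definition~\ref{def:profsat}.

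The main obstacle I anticipate is the book-keeping in Case~2 (and in the ``side hole'' part of the labelled case): showing that \emph{every} realization $t$, not just the carefully chosen resolving one, gives $t^{-1}M$ trivial or equal to the claimed profile. The clean way around this is to note that replacing the subtree in a side hole by an arbitrary total tree can only change whether the owner of that branching point wins locally, which by the resolving property is irrelevant to the global outcome — so one re-applies Fact~\ref{ft:resolve} to the newly formed partial tree (which still resolves $\aut{B}$ from $q_I$, or collapses the hole's value to $\top$/$\bot$, making $t^{-1}M$ trivial). I would phrase this as a small auxiliary observation — ``if $t$ resolves $\aut{B}$ from $q_I$ with a single hole $h$, and $t'$ is obtained by replacing a side subtree of $t$, then $t'$ still resolves $\aut{B}$ from $q_I$ or $\rho(h)$ becomes $\ast$'' — and then the lemma follows mechanically from Fact~\ref{ft:resolve} and the uniqueness of profiles.
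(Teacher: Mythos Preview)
Your inductive scheme is workable in spirit, but it diverges from the paper and has a concrete gap in Case~2.

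The paper does not induct on trace length. For an unlabelled trace $w$ it argues directly: given \emph{any} realization $t$ with $\rho(w)=q\in Q^{\aut{B}}$, either some player $P$ has a winning strategy in $\rhogame(\aut{B},t,q_I)$ that avoids the node $w$ (then $t^{-1}M$ is $\emptyset$ or $\trees{A}$), or every winning strategy passes through $w$ (then acceptance of $t[w{:=}s]$ reduces to $\eve$ winning from $(w,q)$, i.e.\ $s\in\lang(\aut{B},q)$, so $t^{-1}M=\lang(\aut{B},q)$). The witness $t_0$ with $t_0^{-1}M=\lang(\aut{B},q)$ is the resolving realization you also construct. Labelled traces are then dispatched in one line via Lemma~\ref{lm:unarygivebinary}: once all unlabelled traces have profiles, so do all labelled ones, and matching them with $\lang(\aut{B},b_w)$ is a routine check. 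This bypasses your entire Case~1.

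The gap in your Case~2 is precisely what the paper's strategy dichotomy handles. Your auxiliary observation is false as stated: replacing a side subtree of a resolving tree never changes $\rho(h)$, because the run depends only on the labels along the path to $h$; so the alternative ``$\rho(h)$ becomes $\ast$'' simply cannot occur. What actually happens when the modified tree $t'$ fails to resolve is that at some ancestor $v$ of $h$ the owner of $v$ now wins by going to the side subtree; but then that owner wins the whole game regardless of what sits at $h$, and $(t')^{-1}M$ is trivial. This is exactly the paper's ``some winning strategy avoids $w$'' case. Once you replace your auxiliary observation with this dichotomy your induction goes through---but at that point you have reproduced the paper's direct argument inside an unnecessary inductive wrapper, and you still need a separate treatment of labelled traces that the paper gets for free from Lemma~\ref{lm:unarygivebinary}.
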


\begin{proof}
Let $M=\lang(\aut{B}, q_I)$.

First consider the case of an unlabelled trace $w$. Let $\rho=\rho(\aut{B}, t_w, q_I)$ be the run. If $\rho(w)=\ast$ then by the definition $p_{\aut{B},q_I}(w)=\ast$. For every realization $t$ of $w$ the position $w$ is not accessible in the game $\rhogame(\aut{B}, t, q_I)$ so $t^{-1}(M)$ is either $\emptyset$ or $\trees{A}$ and by the definition $w$ has profile $\ast$ in $M$.

Now let $\rho(w)=q\in Q^\aut{B}$. In that case $p_{\aut{B}, q_I}(w)=\lang(\aut{B}, q)$. Let $t$ be any realization of $w$. Observe that either:
\begin{enumerate}
\item Player $P$ has a winning $\sigma$ strategy in $\rhogame(\aut{B}, t, q_I)$ such that $w\notin\sigma$. Then $t^{-1}(M)$ is either $\emptyset$ or $\trees{A}$ depending on $P$.
\item Every winning strategy $\sigma$ of $P$ in $\rhogame(\aut{B}, t, q_I)$ contains $w$. In that case $t^{-1}(M)=\lang(\aut{B}, q)$ since the following conditions are equivalent:
\begin{itemize}
\item a composition $t[w:=s]$ belongs to $M$,
\item there exists a winning strategy for \eve in the game $\rhogame(\aut{B}, t[w:=s], q_I)$,
\item \eve can win $\rhogame(\aut{B}, t[w:=s], q_I)$ from $w$,
\item \eve has a winning strategy in the game $\rhogame(\aut{B}, s, q)$,
\item $s\in\lang(\aut{B}, q)$.
\end{itemize}
\end{enumerate}
Recall that there exists a tree $t_0$ that realizes $w$ and resolves $\aut{B}$ from $q_I$---we plug subtrees in the side holes of $t_w$ accordingly to the states assigned by $\rho$. By Fact~\ref{ft:resolve} we obtain that $t_0^{-1}M=\lang(\aut{B}, q)$ so $t_0$ is a witness that $w$ has profile $\lang(\aut{B}, q_I)$.

For the case when $w$ is a labelled trace we use Lemma~\ref{lm:unarygivebinary}---since every unlabelled trace has a profile, we know that every labelled trace also has a profile. It is then easy to verify that the respective equality holds. 
\end{proof}

\begin{corollary}\label{rem:is_locally}
Languages recognized by game automata are locally game. 
\end{corollary}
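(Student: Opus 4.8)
The plan is to derive this directly from Lemma~\ref{lem:is_locally}: once we know that the profile assignment $p_{\aut{B},q_I}$ attached to a game automaton is \emph{total} (defined on every trace) and takes its values among the profiles of Definition~\ref{def:profile}, the equality $p_M = p_{\aut{B},q_I}$ furnished by Lemma~\ref{lem:is_locally} says precisely that every trace has a profile in $M$, which is the definition of being locally game.

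Concretely, let $M$ be recognized by a game automaton. If $M$ is trivial, then for every realization $t$ of every unlabelled trace $w$ the preimage $t^{-1}M$ is $\emptyset$ or $\trees{A}$, hence trivial, so $w$ has profile $\ast$; and then, following the clauses of Definition~\ref{def:profsat}, every labelled trace also gets profile $\ast$, so $M$ is locally game. Otherwise $M$ is non-trivial, and by the conventions of Section~\ref{subsec:automata} we may write $M = \lang(\aut{B}, q_I)$ for a total game automaton $\aut{B}$ with only non-trivial states and some state $q_I$. Then $p_{\aut{B},q_I}$ is total: for an unlabelled trace $w$ the run $\rho_w = \rho(\aut{B}, t_w, q_I)$ is always defined, and $p_{\aut{B},q_I}(w)$ is $\ast$ when $\rho_w(w)\notin Q^{\aut{B}}$ and the non-trivial regular language $\lang(\aut{B}, \rho_w(w))$ otherwise; for a labelled trace it is $\ast$ or the profile of the relevant transition, which by the defining case split is one of the binary-profile shapes of Definition~\ref{def:profile}, all non-degenerate because $\aut{B}$ has only non-trivial states. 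Invoking Lemma~\ref{lem:is_locally}, $p_M(w) = p_{\aut{B},q_I}(w)$ for every trace $w$, so $p_M$ is total, i.e.\ $M$ is locally game.

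There is essentially no real obstacle here: the work has already been done in Lemma~\ref{lem:is_locally} (which rests on Fact~\ref{ft:resolve} and Lemma~\ref{lm:unarygivebinary}). The only points needing a moment's care are the reduction to automata all of whose states are non-trivial and the observation that the transition-profile table preceding Lemma~\ref{lem:is_locally} lists exactly the binary-profile shapes permitted by Definition~\ref{def:profile}; both are routine.
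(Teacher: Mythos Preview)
Your proposal is correct and takes essentially the same approach as the paper, which treats the corollary as immediate from Lemma~\ref{lem:is_locally} without further comment. You simply make explicit what the paper leaves implicit: that $p_{\aut{B},q_I}$ is total and valued in the profiles of Definition~\ref{def:profile}, and you handle the trivial-$M$ case separately to justify the standing assumption on non-trivial states.
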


Being locally game is necessary but not sufficient to be recognizable by a game automaton.

\begin{proposition}
There exists a regular tree language $L$ such that $L$ is locally game but $L$ cannot be recognized by a game automaton.
\end{proposition}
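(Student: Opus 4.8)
The plan is to exhibit a concrete regular tree language that is locally game but fails to be recognized by any game automaton. The natural candidate is a union $L_a \cup L_b$ of the type discussed right after the definition of game automata, suitably modified so that, unlike the plain example $L_a \cup L_b$ over a two-letter alphabet, \emph{every} trace has a profile. The plain example $\{t : t(\dL) = t(\dR) = a\} \cup \{t : t(\dL) = t(\dR) = b\}$ is actually \emph{not} locally game: the empty trace $\varepsilon$ (an unlabelled trace) has no profile, because putting the root-only tree $a$ or the root-only tree $b$ into the hole gives two different non-trivial languages in the two sections, and these are not sections of a single allowed binary profile. So the construction must be more subtle: I would push the ``choice'' between the two alternatives one step down, so that at the root the profile is forced (say $\trees{A}\times\trees{A}$ after the root letter is read in the right way) and the obstruction only appears deeper, where the ambiguity is genuinely unavoidable.

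\textbf{Key steps.} First, I would fix the alphabet, say $A = \{a, b, c\}$, and define $L$ so that a tree $t$ is in $L$ iff, roughly, reading $c$'s down the leftmost branch, at the first non-$c$ node $v$ one has $t(v\dL) = t(v\dR) = t(v)$ (with $t(v) \in \{a,b\}$); equivalently $L = t_c^{*}\text{-closure of } (L_a \cup L_b)$ where the prefix of $c$'s is arbitrary. Then, second, I would verify $L$ is \emph{locally game}: by Lemma~\ref{lm:unarygivebinary} it suffices to check that every \emph{unlabelled} trace $w$ has a profile. The traces split into finitely many behavioural types (still in the $c$-stem, off the $c$-stem on the left, off the $c$-stem on the right, having passed the decisive node, etc.), and for each type one computes $t^{-1}M$ over all realizations $t$ and checks it is always trivial or always equal to a fixed non-trivial $Z$ — here $Z$ will be something like ``$t(\varepsilon)=a$'' or ``$t(\varepsilon)=b$'' or $\trees{A}$, each of which is a legitimate unary profile. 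The only delicate type is the trace ending exactly at the decisive node with a chosen letter, say $a$: its profile as a labelled trace is $\{(s_\dL,s_\dR) : s_\dL = s_\dR = \text{(fixed tree)}\}$-like, but in fact one must check it is $Z_\dL \times Z_\dR$ with $Z_\dL = Z_\dR = \{t : t(\varepsilon) = a\}$, which \emph{is} an allowed binary profile. Third, and this is the crux, I would show $L$ is \emph{not} recognized by a game automaton. Suppose $\aut{B}$ is a game automaton with $\lang(\aut{B},q_I) = L$. By Lemma~\ref{lem:is_locally}, $p_{\aut{B},q_I}$ computes exactly the profiles. Now consider the trace $w$ of $c$'s reaching depth $n$ for $n$ larger than $|Q^{\aut{B}}|$, read as an unlabelled trace ending at a node on the $c$-stem; its profile is $\trees{A}$ (every extension can still be completed either way, so $t^{-1}M = \trees{A}$ for a resolving realization). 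So the run assigns a state $q$ with $\lang(\aut{B},q) = \trees{A}$ — but $\aut{B}$ has only non-trivial states, contradiction. Wait: more carefully, the profile along the $c$-stem need not be $\trees{A}$; I need to recompute. The honest contradiction is: take the decisive node; there the transition $\delta(q,a)$ must have profile $Z_\dL \times Z_\dR$ with both non-trivial and equal to ``first letter $a$'', and $\delta(q,b)$ must have profile ``first letter $b$'' $\times$ ``first letter $b$''; but the \emph{state} $q$ at the decisive node is the same for both letters, and one can pump the $c$-stem to force $q$ to recur, while simultaneously the side subtrees off the stem must have been checked, leading to an ambiguous transition somewhere — this is where I would invoke a pumping/pigeonhole argument on the stem combined with the side-hole constraints to locate a forced ambiguous transition, contradicting the definition of game automaton.

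\textbf{Main obstacle.} The hard part will be the impossibility direction: I must turn the informal ``$\aut{B}$ would need to remember whether it is committed to $a$ or to $b$, but game automata cannot branch on a direction, so they cannot record the correct subtree on the off-stem side'' into a rigorous argument. The cleanest route is probably \emph{not} pumping but a direct profile-algebra argument: using Lemma~\ref{lem:is_locally}, compute the profile of the labelled trace $w_n = c\,\dL\,c\,\dL\cdots c\,\dL\,?$ that descends the $c$-stem and then reads the decisive letter; since the automaton has finitely many states, two stem-lengths $n < n'$ give the same state $q$ at the decisive node, hence the \emph{same} transition profile $\lang(\aut{B},\delta(q,a))$; but the actual language-theoretic profile of $w_n$ and $w_{n'}$ at that point \emph{differs} in a way that forces $\delta(q,a)$ (or the transition at the side hole it delegates to) to be an ambiguous transition — or more simply, one shows directly that no non-ambiguous transition can have the profile $(\{t:t(\varepsilon)=a\})^{2}$ reachable in the required way while also being compatible with the $\trees{A}$-profile further up the stem. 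I would set up the bookkeeping so that the contradiction is: a single state must supply two different non-trivial unary profiles to its two directions depending on a label, which forces its transition to read both children non-trivially, i.e. a transition $(q_\dL,\dL)\land(q_\dR,\dR)$ with both non-trivial — which \emph{is} allowed — so the real contradiction must come from the \emph{union} structure: $L$ locally looks like $(Z_\dL\times Z_\dR)\cup(Z'_\dL\times Z'_\dR)$ at the decisive labelled trace for the \emph{mixed} letter situation, and no single binary profile of Definition~\ref{def:profile} equals such a union unless it degenerates. Pinning down exactly which trace exhibits this union-of-two-products behaviour, and showing it is genuinely forced, is the technical heart I would spend the most care on.
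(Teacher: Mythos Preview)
Your proposal has a genuine and fatal gap: the language you construct \emph{is} recognized by a game automaton, so the impossibility direction cannot go through. Concretely, over $A=\{a,b,c\}$ take the automaton with initial state $q_0$ and transitions
\[
\delta(q_0,c)=(q_0,\dL),\quad
\delta(q_0,a)=(q_a,\dL)\land(q_a,\dR),\quad
\delta(q_0,b)=(q_b,\dL)\land(q_b,\dR),
\]
where $q_a$ accepts iff the root letter is $a$ (and dually $q_b$). This is a perfectly legal game automaton recognizing exactly your $L$. Your intuition that ``the union $(Z_a\times Z_a)\cup(Z_b\times Z_b)$ is not an allowed binary profile'' is correct, but this union is \emph{never} the profile of any trace in your language: a labelled trace ending at the decisive node already fixes the letter there, so its profile is the single product $Z_a\times Z_a$ or $Z_b\times Z_b$, each of which \emph{is} allowed. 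There is no ``mixed letter situation'' to exploit. (Incidentally, your analysis of why the original $L_a\cup L_b$ fails to be locally game is also slightly off: the empty unlabelled trace always has profile $M$; the trace that lacks a profile is the single-letter \emph{labelled} trace, whose induced set of pairs is precisely the forbidden union of two products.)

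The paper takes a completely different route. It uses the language $\thins$ of thin trees over $\{a,b\}$: trees whose maximal $a$-labelled subtree containing the root has only countably many infinite branches. Every trace containing a $b$ has profile $\ast$, and every purely-$a$ trace has profile $\thins$ (unlabelled) or $\thins\times\thins$ (labelled); so $\thins$ is locally game. The key observation is that all non-trivial binary profiles are of the product form $Z_\dL\times Z_\dR$, which by Lemma~\ref{lem:is_locally} forces every transition of a hypothetical recognizing game automaton to be conjunctive---hence the automaton is \emph{deterministic}. One then invokes the standard fact that $\thins$ is not deterministically recognizable. The moral: rather than trying to manufacture a trace with a forbidden profile (which, by definition of ``locally game'', cannot exist), the paper finds a locally game language whose profiles are so restricted that any game automaton for it would have to be deterministic, and then appeals to a known non-determinizability result.
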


\ident{\thins}{\mathrm{Thin}}
\ident{\cut}{Cut}

\begin{proof}
Consider the alphabet $A=\{a,b\}$. Let $t\in\trees{A}$ be a tree. Let us denote $\cut(a,t)$ as the subtree of $t$ containing those nodes that are accessible by only letters $a$ from the root of $t$. A total tree $t\in\trees{A}$ is called \emph{thin} if $\cut(a,t)$ has only countably many infinite branches. Let $\thins$ \new{be the language of all thin trees}. This language is regular by the \new{equivalence of the following conditions for each tree $t\in\trees{A}$}:
\begin{enumerate}
\item \new{$t$ is not thin},
\item there exists an embedding of the full binary tree $\{\dL,\dR\}^\ast$ into $\cut(a,t)$.
\end{enumerate}

Note that every trace $w$ has a profile $Z_w$ in $\thins$:
\begin{itemize}
\item if $w$ contains a letter $b$ then $Z_w=\ast$,
\item otherwise either $w$ is labelled and therefore $Z_w=\thins\times\thins$, or $w$ is unlabelled and $Z_w=\thins$.
\end{itemize}
This means that $\thins$ is locally game.

Assume that $\thins$ is recognized by a game automaton $\aut{B}$. In that case all transitions of $\aut{B}$ have profile $\thins \times \thins$ (see Lemma~\ref{lem:is_locally} in Section~\ref{sec:isgame}), so $\aut{B}$ is a deterministic automaton. However, a standard argument shows that $\thins$ is not recognizable by any deterministic automaton.
\end{proof}

\medskip

In what follows, for a given locally game language $M$ we construct a game automaton $\aut{G}_M$ that locally computes the profiles and globally reflects the infinitary aspects of $M$. We show that $M$ is recognized by a game automaton if and only if it is recognized by $\aut{G}_M$.

%%%%%%%%%%%%%%%%%%%

We say that a DFA $\aut{A} = \langle A, Q, q_I, \delta, F\rangle$ computes a \emph{partial} function $\parfun{f}{A^*}{X}$ if $\aut{A}$ recognizes $\dom(f)$ and it comes equipped with a function $\tau^\aut{A} \colon F \to \rg(f)$,  such that $\tau^\aut{A} (\delta(q_I, w)) = f(w)$ for each $w\in\dom(f)$, where $\delta(q, v)$ is the state of $\aut{A}$ after reading word $v$ from state $q$.

The following lemma shows that for each regular tree language $M$ one can effectively construct a DFA $\aut{A}$ that computes (a finite representation of) the profile in $M$ of given trace $w$. In particular, it is decidable whether $M$ is locally game, and the set $\prof_M$ of all possible profiles of traces in $M$ is finite and can be computed from $M$. 

% Hence, we can effectively construct an automaton computing profiles of finite traces, i.e., a deterministic finite automaton $\aut{A}$ over the alphabet $A\cup\{L,R\}$ equipped with a function $\tau_\aut{A} \colon Q_\aut{A} \times A \to \prof_M$ such that the state $q$ after reading an unlabelled trace $w$ satisfies $\tau_\aut{A}(q,a) = p_M(wa)$ for each $a\in A$;  here $\prof_M$ is the set of all possible profiles of labelled and unlabelled traces in $M$.

\begin{lemma}\label{lm:regular}
Let $M$ be a regular tree language over an alphabet $A$. There exists a finite automaton that reads a word $w$ over $A\cup\{\dL,\dR\}$ and outputs:
\begin{itemize}
\item $\errTrace$ if $w$ is not a trace;
\item $\errProfile$ if $w$ is a trace but $w$ has no profile in $M$; and 
\item a finite representation of $p_M(w)$ if $w$ is a trace and has a profile in $M$.
\end{itemize}
\end{lemma}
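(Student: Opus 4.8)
The plan is to build a finite automaton $\aut{C}$ over the input alphabet $A\cup\{\dL,\dR\}$, equipped with an output function on its states, that scans $w$ left to right and maintains, for the currently scanned trace prefix $v$, two pieces of data: the finite set $S_v=\{\,t^{-1}M : t\text{ is a realization of }v\,\}$, and the value $p_M(v)$ (using $\errProfile$ as a dummy value when $v$ has no profile in $M$). In parallel it runs a three‑state check of the parity condition defining a trace. The output in a state is $\errTrace$ if that check has already failed, and the stored value $p_M(v)$ otherwise. Note that for an \emph{unlabelled} $v$ the value $p_M(v)$ is already a function of $S_v$ (namely $\ast$ if all members of $S_v$ are trivial, the unique non‑trivial member if there is exactly one, and $\errProfile$ if there are at least two); the only reason to store $p_M(v)$ separately is that two \emph{labelled} traces with the same $S_v$ can still have different profiles — for instance $S_v=\{\emptyset,\trees{A}\times\trees{A}\}$ is compatible both with $p_M(v)=\ast$ and with $p_M(v)=\errProfile$, by Definition~\ref{def:profsat}.

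The first ingredient is that this data ranges over a finite, effectively computable domain. We use the standard fact that a regular language $M$ of infinite trees has only finitely many quotients $\mathcal{Q}(M)=\{\,t^{-1}M : t\in\partrees{A}\text{ has exactly one hole}\,\}$, that all of them are regular, and that a finite list of automata for the members of $\mathcal{Q}(M)$ can be computed from $M$; moreover, by Rabin's theorem, emptiness and equality of regular tree languages are decidable, so membership in $\mathcal{Q}(M)$, the position of a given language within $\mathcal{Q}(M)$, and triviality can all be decided effectively. For an unlabelled trace $v$ every element of $S_v$ is such a quotient, so $S_v\in\mathcal{P}(\mathcal{Q}(M))$. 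For a labelled trace $v$ whose last symbol is a letter $a$, every realization of $v$ is a realization of the trace $v$ minus its last symbol (an unlabelled trace) with the hole node relabelled $a$; hence each element of $S_v$ has the form $a^{-1}Z$ with $Z\in\mathcal{Q}(M)$, and there are only finitely many such binary languages. So the state space of $\aut{C}$ is finite, and it remains to describe the transitions and argue they are effective and correct.

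The transitions follow the two ways of extending a trace. Extending an unlabelled trace $v$ by a letter $a$: a realization of $va$ is a realization $t$ of $v$ with the hole node relabelled $a$, so its residual is $a^{-1}(t^{-1}M)$, giving $S_{va}=\{\,a^{-1}Z : Z\in S_v\,\}$, which is computable by the previous paragraph. For the profile we use Definition~\ref{def:profsat}: if $p_M(v)=\ast$ then $p_M(va)=\ast$; if $p_M(v)=Z'$ is non‑trivial then $p_M(va)=a^{-1}Z'$ (possibly a trivial binary profile); and if $p_M(v)=\errProfile$, then — since in that case no clause of Definition~\ref{def:profsat} can assign $va$ either a trivial binary profile or $\ast$, and every member of $S_{va}$ is witnessed by a realization — $p_M(va)$ is the unique non‑trivial member of $S_{va}$ if there is exactly one, and $\errProfile$ otherwise. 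Extending a labelled trace $v$ (last letter $a$) by a direction $d$: a realization of $vd$ is a realization $t$ of $v$ with a total tree $r$ plugged into the side hole $v\bar d$, so its residual is a section of $t^{-1}M$ by $r$ (the upper section if $d=\dL$, the lower section if $d=\dR$); each such section of a member $a^{-1}Z$ of $S_v$ is again a quotient of $M$, so $S_{vd}\subseteq\mathcal{Q}(M)$ and is computable (only finitely many sections occur, and the case analysis is exactly that in the proof of Lemma~\ref{lm:unarygivebinary}). Since the definition of the profile of an unlabelled trace does not refer to any shorter trace, $p_M(vd)$ is then read off directly from $S_{vd}$ as above. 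The parity‑check component has a state ``expecting a letter'' (also the start state, corresponding to the empty trace, with initial data $S_\varepsilon=\{M\}$ and $p_M(\varepsilon)=M$ or $\ast$ according to whether $M$ is non‑trivial), a state ``expecting a direction'', and an absorbing ``dead'' state entered on a symbol of the wrong kind; the profile updates above are applied exactly on the legal moves, and in the dead state the output is $\errTrace$.

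Correctness is then an induction on $|w|$: since the set of traces is prefix‑closed, the dead state is reached iff $w$ is not a trace; and when $w$ is a trace, the invariant ``after reading $w$ the state stores $(S_w,p_M(w))$'' is preserved by each transition type via the identities above, so the output is $p_M(w)$ (as a finite description read off the state). I expect the only genuinely delicate point to be the finiteness and effective computability of $\mathcal{Q}(M)$ together with the derived operations on it (quotient by a letter, sectioning, deciding triviality and equality, locating a language within $\mathcal{Q}(M)$) — this is what turns the construction into a bona fide finite automaton, and it rests on the structure theory of regular tree languages and on Rabin's decidability theorem; the remaining verification is bookkeeping, the per‑step profile updates being precisely the elementary computations already performed in the proof of Lemma~\ref{lm:unarygivebinary}.
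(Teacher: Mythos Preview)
Your approach is sound and rests on the same key ingredient as the paper's—the composition-method fact that a regular tree language has only finitely many one-hole context quotients. The paper makes this concrete via \emph{types} with respect to a fixed nondeterministic automaton for $M$: a state $T\subseteq\types$ encodes the language $\lang(T)=\{t:\type(t)\in T\}$, i.e., a single member of your $\mathcal{Q}(M)$. Where you diverge is the invariant maintained. The paper stores only (an encoding of) the current profile and moves to an \emph{absorbing} $\errProfile$ state the first time a labelled trace fails to have one, whereas you carry the whole set $S_v$ of residuals. Your invariant is strictly stronger and makes the induction immediate; in particular it transparently handles the case where $wa$ has no profile but $wad$ does, which can occur: if $p_M(w)=Z'$ with $a^{-1}Z'=(K\times L_1)\cup(\trees{A}\times L_2)$ for $K$ non-trivial, $L_1\not\subseteq L_2$, $L_1\cup L_2\neq\trees{A}$, then $a^{-1}Z'$ is not a binary profile, yet all its upper sections lie in $\{\emptyset,K,\trees{A}\}$, so $wa\dL$ has profile $K$. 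Your $S_{wa\dL}$ recovers this directly; the paper's absorbing-state design does not. So your extra bookkeeping buys a cleaner correctness argument at the price of a larger state space.

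Two minor tightenings. In your labelled update, when $p_M(v)=Z'$ is non-trivial (and likewise when you pick the unique non-trivial member of $S_{va}$ in the $\errProfile$ branch) you must also check that the candidate has one of the shapes in Definition~\ref{def:profile}; if it is non-trivial but not of that form, set $p_M(va)=\errProfile$. And your appeal to ``the case analysis in the proof of Lemma~\ref{lm:unarygivebinary}'' for computing $S_{vd}$ is slightly off—that argument presupposes the profiles of $wa\dL$ and $wa\dR$ already exist. What you actually need, and have, is only that every section of every $a^{-1}Z$ with $Z\in\mathcal{Q}(M)$ is again in $\mathcal{Q}(M)$, and that for each candidate $Q\in\mathcal{Q}(M)$ the question ``is $Q$ a section of $a^{-1}Z$?'' is an MSO property and hence decidable by Rabin's theorem.
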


\new{A proof could easily be obtained by} the composition method~\cite{shelah_composition}. However, to make the paper self-contained, we give a direct reasoning. The crucial observation is that if a tree $t'$ is put in a hole of a tree $t$, then the only thing that matters for the acceptance of $t$ is \emph{the type of $t'$}. For the sake of this proof let us fix a regular tree language $M$ recognized by a non-deterministic tree automaton $\aut{B}$ from an initial state $q_I\in Q$.

The type of a total tree $t\in\trees{A}$ is defined as follows:
\[\type(t)=\{q\in Q:\ t\in\lang(\aut{B},q)\}\subseteq Q.\]

The set of types of all total trees is finite and effective, we denote it by $\types\subseteq\mathcal{P}(Q)$. For a set $T\subseteq\types$, by $\lang(T)$ we denote the language of all total trees $t$ such that  $\type(t)\in T$. 

\begin{fact}\label{ft:substitute}
Let $t_\dL,t_\dR,t_\dL',t_\dR'\in\trees{A}$, let $q$ be a state of $\aut{B}$, and $t$ be a tree with two holes. If
\[(\type(t_\dL),\type(t_\dR))=(\type(t_\dL'),\type(t_\dR'))\]
then
\[t(t_\dL,t_\dR)\in\lang(\aut{B}, q) \iff t(t_\dL',t_\dR')\in\lang(\aut{B}, q).\]
In particular, the type $\type\left(t(t_\dL,t_\dR)\right)$ does not depend on the choice of representatives $t_\dL,t_\dR$.
\end{fact}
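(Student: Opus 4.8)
\medskip\noindent\textbf{Proof plan.} The plan is to show that whether $\aut{B}$ accepts $t(t_\dL,t_\dR)$ depends on $t_\dL$ and $t_\dR$ only through their types, by cutting an accepting run at the two holes of $t$ and re-gluing it with accepting runs on $t_\dL'$ and $t_\dR'$. Recall that, $\aut{B}$ being non-deterministic, a total tree $s$ lies in $\lang(\aut{B},q)$ exactly when there is an \emph{accepting run} of $\aut{B}$ on $s$ from $q$, i.e.\ a $Q$-labelling $r$ of $\{\dL,\dR\}^*$ with $r(\varepsilon)=q$, locally consistent with $\delta$, and such that along every branch the least priority occurring infinitely often is even; for non-deterministic $\aut{B}$ this is the same as player $\eve$ having a winning strategy in $\agame(\aut{B},s,q)$.

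First I would fix the two holes $h_\dL,h_\dR$ of $t$ (filled by $t_\dL,t_\dR$, respectively by $t_\dL',t_\dR'$), assume $t(t_\dL,t_\dR)\in\lang(\aut{B},q)$, and pick an accepting run $r$ of $\aut{B}$ on $t(t_\dL,t_\dR)$ from $q$. Set $p_\dL=r(h_\dL)$ and $p_\dR=r(h_\dR)$. The restriction of $r$ to the subtree rooted at $h_\dL$ is an accepting run of $\aut{B}$ on $t_\dL$ from $p_\dL$ (removing the finite prefix of the branch from $\varepsilon$ to $h_\dL$ does not change the $\liminf$ of priorities), so $p_\dL\in\type(t_\dL)$, and likewise $p_\dR\in\type(t_\dR)$. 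By the hypothesis $\type(t_\dL)=\type(t_\dL')$ and $\type(t_\dR)=\type(t_\dR')$, so there are accepting runs $r_\dL'$ of $\aut{B}$ on $t_\dL'$ from $p_\dL$ and $r_\dR'$ on $t_\dR'$ from $p_\dR$. I would then define $r'$ on $t(t_\dL',t_\dR')$ to coincide with $r$ on $\dom(t)$, with $r_\dL'$ on the copy of $t_\dL'$ at $h_\dL$, and with $r_\dR'$ on the copy of $t_\dR'$ at $h_\dR$. Local consistency of $r'$ is inherited inside each of the three pieces, and at $h_\dL,h_\dR$ the labels are unchanged ($r'(h_\dL)=p_\dL=r(h_\dL)$, similarly at $h_\dR$), so the transition used by $r$ at the parent of each hole is still legal for $r'$.

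The only step that requires genuine care---and the one I expect to be the main point---is checking that $r'$ is \emph{accepting}, i.e.\ that re-gluing does not break the parity condition at the seam. Each branch $\pi$ of the total tree $t(t_\dL',t_\dR')$ is of exactly one of three kinds: it remains in $\dom(t)$ forever; or it eventually enters the copy of $t_\dL'$ at $h_\dL$; or it eventually enters the copy of $t_\dR'$ at $h_\dR$. In the first case $r'$ and $r$ agree along $\pi$, so $\pi$ is accepting because $r$ is. In the second case $\pi$ is a finite prefix inside $\dom(t)$ followed by a branch of $r_\dL'$, and since removing a finite prefix does not change the $\liminf$ of the priorities along $\pi$, $\pi$ is accepting because $r_\dL'$ is; the third case is symmetric. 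Thus $r'$ witnesses $t(t_\dL',t_\dR')\in\lang(\aut{B},q)$, and the reverse implication follows by swapping primed and unprimed trees. Finally, for the ``in particular'' part, I would apply the equivalence just established with $q$ ranging over all states in $Q$: then $q\in\type(t(t_\dL,t_\dR))$ iff $q\in\type(t(t_\dL',t_\dR'))$, so $\type(t(t_\dL,t_\dR))=\type(t(t_\dL',t_\dR'))$, which is exactly the asserted independence of the type from the representatives $t_\dL,t_\dR$. Everything apart from the seam argument is bookkeeping about runs of non-deterministic automata.
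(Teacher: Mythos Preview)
The paper states this as a \textbf{Fact} without proof, treating it as standard automata-theoretic folklore (compositionality of acceptance for non-deterministic tree automata via cut-and-paste of runs). Your argument is exactly the expected one and is correct.

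One small caveat worth making explicit: in the paper's model, non-deterministic transitions are disjunctions of deterministic transitions that may have the form $(q_d,d)$, $\top$, or $\bot$, so a run (equivalently, $\eve$'s winning strategy in $\agame(\aut{B},\cdot,q)$) need not visit every node of $\{\dL,\dR\}^*$. In particular $r(h_\dL)$ or $r(h_\dR)$ may be undefined. This does not harm your argument: if a hole is not reached by the run, the subtree plugged there is irrelevant for acceptance and can be replaced freely; if it is reached, your gluing argument applies verbatim. With that case distinction added, the proof is complete and matches what the paper silently assumes.
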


By the fact above, we can write $t(\tau_\dL,\tau_\dR)$ for the type of $t(t_\dL,t_\dR)$ for any $t_\dL,t_\dR$ with $(\type(t_\dL),\type(t_\dR))=(\tau_\dL,\tau_\dR)$.

Our aim is to construct a finite automaton $\aut{A}$ that reads a finite word $w \in (A\cup\{\dL,\dR\})^*$, checks that $w$ is a trace, and computes a representation of $p_M(w)$, provided that $w$ has profile.

First let us fix
\begin{align*}
Q_1 &= {\mathcal{P}}(\types),\\
Q_2 &= \left\{S\subseteq\types^2\bigm |\ \text{$\lang(S)$ is a profile}\right\}\cup\{\emptyset, \types^2\},\\
Q_E &= \{\errTrace, \errProfile\},\\
Q^\aut{A} &= Q_1\cup Q_2\cup Q_E,\\
q_I^\aut{A} &= \{T\subseteq\types: q_I\in T\}\in Q_1.
\end{align*}

Our aim is to define the transition function $\delta^\aut{A}$ in such a way that Lemma~\ref{lm:Acorrectness}, given below, is satisfied. First, for every $T\in Q_1$, $S\in Q_2$, $U\in Q^\aut{A}$, $a\in A$, $d\in\{\dL,\dR\}$, and $l\in A\cup\{\dL,\dR\}$ we put
\begin{itemize}
\item $\delta^\aut{A}(T,d)=\errTrace$,
\item $\delta^\aut{A}(S,a)=\errTrace$,
\item $\delta^\aut{A}(U,l)=U$.
\end{itemize}

Second, assume that the current state is $T\in Q_1$ and a letter $a$ is given. We define the successive state $\delta^\aut{A}(T,a)\in Q_2\cup\{\errProfile\}$. Let us define the following set of pairs of types
\begin{equation}\label{eq:defS}
S=\left\{(\tau_\dL,\tau_\dR): a(\tau_\dL,\tau_\dR)\in T\right\}.
\end{equation}
Note that, given $S$, we can decide if $\lang(S)$ is a profile and, if it is, we define $\delta^\aut{A}(T,a)=S$. Otherwise we put $\delta^\aut{A}(T,a)=\errProfile$.

Third, assume that the current state is $S\in Q_2$ and a direction $d$ is given. We define the successive state $\delta^\aut{A}(S,d)\in Q_1$. By the symmetry assume that $d=\dL$. Consider the following cases:
\begin{itemize}
\item if $S=T_\dL{\times}\types\cup\types{\times}T_\dR$ for some $T_\dL,T_\dR\subseteq \types$ then $\delta^\aut{A}(S,d)=T_\dL$,
\item otherwise, $\delta^\aut{A}(S,d)=\pi_1(S)$---the projection of $S$ onto the first coordinate.
\end{itemize}
In the case $d=\dR$ we consider $T_\dR$ instead of $T_\dL$ and the projection onto the second coordinate of $S$.

\begin{lemma}\label{lm:Acorrectness}
Let $w$ be a word and $U$ be the state of $\aut{A}$ after reading $w$. The following conditions hold
\begin{enumerate}[(i)]
\item if $U\in \left(Q_1\cup Q_2\right)\setminus\{\emptyset,\types,\types^2\}$ then $w$ is a trace and $\lang(U)$ is the profile of $w$ in $M$,\label{it:corr_prof}
\item if $U\in \{\emptyset,\types,\types^2\}$ then $w$ is a trace and has profile $\ast$ in $M$,\label{it:corr_ast}
\item if $U=\errTrace$ then $w$ is not a trace,
\item if $U=\errProfile$ then $w$ is a trace but has no profile in $M$.
\end{enumerate}
\end{lemma}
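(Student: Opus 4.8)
The plan is to prove all four claims simultaneously by induction on the length of $w$, carrying the strengthened invariant that whenever $\aut{A}$ is in a state $T\in Q_1$ (resp.\ $S\in Q_2$) after reading $w$, then $w$ is an unlabelled (resp.\ labelled) trace whose profile in $M$ is $\lang(T)$ (resp.\ $\lang(S)$); in particular every profile that arises is trivial or a union of type classes $\lang(\cdot)$, which is exactly what lets the whole computation proceed on the finite codes in $\mathcal{P}(\types)$ and $\mathcal{P}(\types^2)$. The workhorse throughout is Fact~\ref{ft:substitute}: membership of a tree obtained by plugging total trees into holes depends only on the types of the plugged trees.

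For the base case $w=\varepsilon$ I would note that the state is $q_I^{\aut{A}}=\{T\in\types:\ q_I\in T\}\in Q_1$, that $\varepsilon$ is an unlabelled trace whose only realisation is the one-hole tree, and that plugging $s$ there gives $t(s)=s$; hence the profile of $\varepsilon$ is $\{s:\ s\in M\}=\lang(\aut{B},q_I)=M=\lang(q_I^{\aut{A}})$ (if $M$ is trivial then $q_I^{\aut{A}}\in\{\emptyset,\types\}$ and claim (ii) applies). For the inductive step I would split on the state $U'$ reached after the prefix of length $|w|-1$. When $U'$ is an error state the step is routine, since appending a symbol to a word that is not a trace keeps it a non-trace, and from a non-error state a symbol of the wrong parity leads to $\errTrace$, matching (iii); the remaining error situations are handled the same way.

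The heart of the argument is the two transitions between $Q_1$ and $Q_2$. If $U'=T\in Q_1$ after the unlabelled trace $w$ with profile $\lang(T)=K$, and a letter $a$ is read, then every realisation of the labelled trace $wa$ comes from a realisation $t'$ of $w$ by inserting $a$ at the hole, so $t^{-1}M=a^{-1}(t'^{-1}M)\in\{\emptyset,\ \trees{A}\times\trees{A},\ a^{-1}K\}$ with $a^{-1}K$ attained; and by Fact~\ref{ft:substitute}, $a^{-1}K=\lang(S)$ for the set $S$ defined by~\eqref{eq:defS}. Hence $wa$ has a profile exactly when $\lang(S)$ is a binary profile, and then it equals $\lang(S)$ — which is precisely what the prescribed transition to $S\in Q_2$ or to $\errProfile$ records, giving (i), (ii), (iv) for $wa$. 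Dually, if $U'=S\in Q_2$ after the labelled trace $w$ with profile $\lang(S)$, and a direction $d=\dL$ is read, then realisations of $w\dL$ come from realisations $t$ of $w$ by plugging a total tree into the $\dR$-hole, so the new section is an upper section of a member of $\{\emptyset,\ \trees{A}\times\trees{A},\ \lang(S)\}$. Going through the possible shapes of the binary profile $\lang(S)$ — exactly the computation from the proof of Lemma~\ref{lm:unarygivebinary} — shows that every non-trivial upper section equals the ``left factor'' of $\lang(S)$, attained for a suitable choice of the plugged tree, and that the two clauses defining $\delta^{\aut{A}}(S,\dL)$ return a code for exactly this factor ($\lang(T_\dL)$ when $S=T_\dL\times\types\cup\types\times T_\dR$, and $\lang(\pi_1(S))$ when $\lang(S)$ is a proper product). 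So $w\dL$ has profile $\lang(\delta^{\aut{A}}(S,\dL))$, which is (i) or (ii); the case $d=\dR$ is symmetric.

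I expect the main obstacle to be this last step: checking that ``left factor'' is well-defined, i.e.\ that all non-trivial upper sections of a binary profile coincide — here the invariant that profiles are unions of type classes is indispensable, since it keeps all sections of the same shape and reduces the verification to the finite bookkeeping on $\mathcal{P}(\types^2)$ already essentially done in the proof of Lemma~\ref{lm:unarygivebinary} — together with matching the syntactic case split in $\delta^{\aut{A}}(\cdot,d)$ to the semantic classification of binary profiles from Definition~\ref{def:profile}. The $Q_1\to Q_2$ direction is comparatively easy, being a direct application of Fact~\ref{ft:substitute} plus the decidability of whether $\lang(S)$ is a binary profile.
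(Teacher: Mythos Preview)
Your proposal is correct and is essentially the same inductive argument as the paper's, only more explicit: the paper declares items (i)--(iii) to ``follow easily from the definition of profile'' and concentrates on item (iv) via a minimal-counterexample argument (which invokes (i)--(ii) as already established), whereas you carry all four items through a single simultaneous induction and spell out the $Q_2\to Q_1$ step that the paper sweeps under ``easy''. The case analysis on the shape of the binary profile that you flag as the main obstacle is sound and indeed parallels the upper/lower section analysis in the proof of Lemma~\ref{lm:unarygivebinary}.
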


\begin{proof}
The first three items follow easily from the definition of profile. 

What remains is to show that if $U=\errProfile$ then the trace $w$ has no profile in $M$.

Assume the contrary and consider a minimal counterexample. Notice that, by definition, such minimal counterexample is a trace of the form $wa$ for some letter $a$. Assume that $wa$ has profile $Z$ in $M$. Let $T$ be the state of $\aut{A}$ after reading $w$ and let $S$ be the set computed according to equation \eqref{eq:defS}. If $T\in\{\emptyset, \types\}$ then $w$ has profile $\ast$ in $M$ by Item~(\ref{it:corr_ast}). Then $wa$ has also profile $\ast$ in $M$ and the state of $\aut{A}$ after reading $wa$ belongs to $\{\emptyset,\types^2\}$. Assume that $T\notin\{\emptyset,\types\}$. By Item~(\ref{it:corr_prof}) we obtain that $w$ has profile $\lang(T)$ in $M$.

Let $t_0$ be a realization of $w$ such that $t_0^{-1}(M)=\lang(T)$. Then, by the definition of $t^{-1}$, we obtain that
\begin{equation}\label{eq:TandZ}
\left(t_0[w:=a]\right)^{-1}(M)=a^{-1}(\lang(T))=Z.
\end{equation}
It is enough to show that $\lang(S)=Z$ and thus $S\in Q_2$ and $S\neq \errProfile$.

To see this, note that the following conditions are all equivalent for a pair of trees $t_\dL,t_\dR\in\trees{A}$:
\begin{enumerate}[(a)]
\item $(t_\dL,t_\dR)\in Z$,
\item $a(t_\dL,t_\dR)\in \lang(T)$,% [by equation~\eqref{eq:TandZ}],
\item $a(\type(t_\dL),\type(t_\dR))\in T$;% [by the definition of $a(\tau_\dL,\tau_\dR)$],
\item $(\type(t_\dL),\type(t_\dR))\in S$,% [by equation~\eqref{eq:defS}],
\item $t_\dL,t_\dR\in\lang (S)$.% [by the definition of $\lang(S)$],
\end{enumerate}
 Indeed, the equivalence between conditions (a) and (b) is given by equation~\eqref{eq:TandZ}, the equivalence between conditions (b) and (c) follows by the definition of $a(\tau_\dL,\tau_\dR)$, the equivalence between conditions (c) and (d) is a consequence of equation~\eqref{eq:defS}, and finally the equivalence between conditions (d) and (e) is by the definition of $\lang(S)$.
\end{proof}

\medskip

%%%%%%%%%%%%%%%%%%%%%%%%%%%%%%%%%%%%%%%%%%%

The infinitary aspects of $M$ are captured by the notion of  \emph{correct} infinite traces. An infinite trace is an infinite word $\pi$ over $A\cup\{\dL,\dR\}$ with letters from $A$ on even positions and directions from $\{\dL,\dR\}$ on odd positions. Just like a finite trace, $\pi$ can be seen as a tree $t_\pi$ consisting of a single infinite branch which has only side holes. A tree $t$ realizes $\pi$ if it is obtained by plugging total trees in the side holes of $t_\pi$.

Assume that $M$ is locally game and let $p_M(w)$ be the profile of $w$ in $M$. We say that \emph{$t$ resolves $M$ up to $\pi$} if $t$ realizes $\pi$ and for each labelled trace $w$ that is a prefix of $\pi$, if $wL$ is a prefix of $\pi$ then
\begin{itemize}
%% \item if $p_M(w) =\trees{A} \times Z_R$ then  $t\restr_{wR} \,\in Z_R$,
%% \filip{This condition is redundant: later we demand that $t\in M$, which implies this.}
\item $t\restr_{w\dR} \notin Z_\dR$  if $p_M(w) = (Z_\dL \times \trees{A}) \cup (\trees{A} \times Z_\dR)$,
\item $t\restr_{w\dR} \in Z_\dR$ if $p_M(w) = Z_\dL \times Z_\dR$,
\end{itemize}
%% (for other values of  $p_M(w)$, $t\restr_{wR}$  can be  arbitrary),
and symmetrically if $w\dR$ is a prefix of $\pi$.
An infinite trace $\pi$ is \emph{$M$-correct} if some tree $t\in M$ resolves $M$ up to $\pi$. 

\new{
There is an automata-theoretic counterpart of the notion of $M$-correct infinite traces. Consider a game automaton $\aut{C}$, an initial state $q_I$, and an infinite trace $\pi$. Notice that $\pi$ corresponds to a play of the game $\rho=\rho(\aut{C}, t_\pi, q_I)$ associated with $\aut{C}$. We say that $\aut{C}$ \emph{accepts $\pi$ from $q_I$} if either $\rho(v) =\ast$ for some $v\in\dom(t_\pi)$ and $\rho(w) \neq \bot$ for all $w\in\dom(t_\pi)$; or \eve wins the play corresponding to $\pi$ in $\rho$.
}

%\filip{We assume that if the path passes a node with label $\top$, the corresponding play just loops around this node.} 
 
\begin{lemma}\label{lm:is_branches}
\new{A game automaton} $\aut{C}$ accepts $\pi$ from $q_I$ if and only if $\pi$ is $\lang(\aut{C},q_I)$-correct. %\qed
\end{lemma}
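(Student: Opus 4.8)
The plan is to trace the run $\rho=\rho(\aut{C},t_\pi,q_I)$ along the single branch of $\pi$ and to match its behaviour with the shape of realizations that resolve $M:=\lang(\aut{C},q_I)$ up to $\pi$. Write $v_0=\varepsilon,v_1,v_2,\dots$ for the nodes of $\pi$ and let $m$ be the least index with $\rho(v_m)\notin Q^{\aut{C}}$ (possibly $m=\infty$). Inspecting the transition $\delta(\rho(v_{m-1}),t_\pi(v_{m-1}))$, one sees that for finite $m$ exactly one of three things occurs: it is a non-branching transition reading the direction that leaves $\pi$, so $\rho(v_m)=\ast$; or it is $\top$, so $\rho(v_m)=\top$; or it is $\bot$, so $\rho(v_m)=\bot$. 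By the definition of ``$\aut{C}$ accepts $\pi$ from $q_I$'', the first two cases give acceptance and the $\bot$ case gives rejection, while for $m=\infty$ the play corresponding to $\pi$ is the infinite path $v_0v_1\cdots$, won by \eve exactly when $\liminf_i\Omega^{\aut{C}}(\rho(v_i))$ is even. (If $M$ is trivial the lemma is an immediate separate check; otherwise we may assume, as usual, that $\aut{C}$ has only non-trivial states.)

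The link to the language is Lemma~\ref{lem:is_locally}: for every prefix $w$ of $\pi$ one has $p_M(w)=p_{\aut{C},q_I}(w)$, so the profile of a labelled prefix of $\pi$ ending at $v_i$ is the profile of the transition $\delta(\rho(v_i),t_\pi(v_i))$. Hence a profile along $\pi$ is disjunctive, $(Z_\dL\times\trees{A})\cup(\trees{A}\times Z_\dR)$, exactly at $\lor$-transitions, and conjunctive, $Z_\dL\times Z_\dR$, exactly at $\land$-transitions; every other transition ($(q_c,c)$, $\top$, $\bot$) yields a profile ($Z\times\trees{A}$, $\trees{A}\times Z$, $\trees{A}\times\trees{A}$, $\emptyset$) for which ``$t$ resolves $M$ up to $\pi$'' imposes no constraint. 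So a realization $t$ of $\pi$ resolves $M$ up to $\pi$ precisely when, at each $\lor$- (resp.\ $\land$-) branching of $\rho$ along $\pi$, the off-branch subtree is chosen \emph{outside} (resp.\ \emph{inside}) the language of the off-branch state -- which is exactly the ``resolving'' configuration of Fact~\ref{ft:resolve}: the off-branch move is losing for its owner (\eve at $\lor$, \adam at $\land$).

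With this dictionary the two implications follow from the same analysis of $\rhogame(\aut{C},t,q_I)$. For ``$\aut{C}$ accepts $\pi$ $\Rightarrow$ $\pi$ is $M$-correct'', build $t$ by planting, at each $\lor$- (resp.\ $\land$-) branching along $\pi$, a regular tree outside (resp.\ inside) the off-branch state language -- such trees exist by Rabin's theorem since all states are non-trivial -- and filling the remaining side holes arbitrarily. Then $\rho(\aut{C},t,q_I)$ agrees with $\rho$ on $\pi$, and in $\rhogame(\aut{C},t,q_I)$ the winner is forced to follow $\pi$ as far as $\rho$ stays in $Q^{\aut{C}}$: any deviation at a branching enters a planted subtree where the deviating player loses. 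Following $\pi$, the play either loops on $\top$, or leaves the branch through a side hole while the state $\ast$ is ignored, or runs forever with $\liminf\Omega$ even; in all three cases \eve wins, so $t\in M$ and $\pi$ is $M$-correct. Conversely, take any $t$ realizing $\pi$ and resolving $M$ up to $\pi$; the same reasoning shows that when $\rho(v_m)=\bot$, or when $m=\infty$ with $\liminf\Omega$ odd, \adam can force the play down $\pi$ to the $\bot$-loop (resp.\ along $\pi$ forever), punishing any \eve-deviation inside a subtree outside the off-branch language -- so $t\notin M$, i.e.\ $\pi$ is not $M$-correct. This is the contrapositive of ``$\pi$ is $M$-correct $\Rightarrow$ $\aut{C}$ accepts $\pi$''.

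The step I expect to be most delicate is the treatment of the boundary node $v_{m-1}$: one must verify that the labelled prefix ending there has a profile ($\trees{A}\times\trees{A}$, $\emptyset$, $Z\times\trees{A}$ or $\trees{A}\times Z$) that triggers no condition in ``resolves $M$ up to $\pi$'', and at the same time that every play of $\rhogame(\aut{C},t,q_I)$ reaching $v_{m-1}$ behaves as claimed -- a finite play terminating in a side hole when $\rho(v_m)=\ast$, and an infinite loop of priority $0$ or $1$ when $\rho(v_m)\in\{\top,\bot\}$. Everything else is routine once Lemma~\ref{lem:is_locally} has been used to translate profiles into transitions and Fact~\ref{ft:resolve} has been read along the branch $\pi$.
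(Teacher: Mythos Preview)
Your approach is the same as the paper's: translate profiles along $\pi$ into transitions of $\aut{C}$ via Lemma~\ref{lem:is_locally}, build (or analyse) a realization whose off-branch subtrees witness the resolving condition, and read off the winner of $\rhogame(\aut{C},t,q_I)$ along the branch. Your write-up is in several places more explicit than the paper's (which e.g.\ says ``let $t'$ be some realization'' where a specific one is meant).

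There is, however, a small but genuine gap in your forward direction. When $m<\infty$ and $\rho(v_m)=\ast$, the transition at $v_{m-1}$ is $(q',\bar d)$ with $\bar d$ pointing \emph{off} $\pi$; in $\rhogame(\aut{C},t,q_I)$ the play is then \emph{forced} into the subtree at $v_{m-1}\bar d$, which you have filled ``arbitrarily''. Your clause ``leaves the branch through a side hole while the state $\ast$ is ignored \dots\ \eve wins'' is therefore unjustified: $t$ is a \emph{total} tree, so this is not a final position, and \eve wins in that subtree only if it belongs to $\lang(\aut{C},q')$. (Your last paragraph repeats the slip, speaking of ``a finite play terminating in a side hole''.) The fix is immediate: plant at this one hole a tree in $\lang(\aut{C},q')$; this does not interfere with ``resolves $M$ up to $\pi$'' since the profile at $v_{m-1}$ is of the form $Z\times\trees{A}$ or $\trees{A}\times Z$ and triggers no constraint. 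With this adjustment your argument is complete.
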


\begin{proof}
Let use denote $\lang(\aut{C},q_I)$ as $M'$. Let $\pi\in (A\times \{\dL,\dR\})^\omega$ be an infinite trace. First assume that $\pi$ is $M'$-correct. Let $t'\in M'$ be a tree witnessing it. Recall, that $t'$ is obtained by putting some total trees in the holes of $t_\pi$. Since $t'\in M'$ so there exists a winning strategy $\sigma$ for \eve in $\rhogame(\aut{C}, t', q_I^D)$. Since, whenever \eve could make a choice to leave the branch $\pi$, the respective subtree in $t'$ is losing for her. \new{Let $\rho=\rho(\aut{C}, t', q_I^D)$. If there exists $v\in\dom(t_\pi)$ such that $\rho(v)=\ast$, then there cannot be any $w\prec \dom(t_\pi)$ with $\rho(w)=\bot$ (otherwise the strategy $\sigma$ would not be winning). In the opposite case the whole branch corresponding to $\pi$ must be contained in the strategy $\sigma$. In both cases $\aut{C}$ accepts $\pi$ from $q_I^{\aut{C}}$.}

Now assume that $\aut{C}$ accepts $\pi$ from $q_I^{\aut{C}}$. Let $t'$ be some realization of $\pi$. Consider $\sigma$ to be the strategy of \eve in $\rhogame(\aut{C}, t', q_I^{\aut{C}})$ defined as follows:
\begin{itemize}
\item in all the nodes along $\pi$ follow this branch,
\item whenever \adam selects to go off the branch $\pi$, use some winning strategy in the respective subtree (it exists by the construction).
\end{itemize}

By the definition, $\sigma$ is a winning strategy for \eve in the game $\rhogame(\aut{C}, t', q_I^{\aut{C}})$. Therefore, $t'\in M'$ so $t'$ is a witness that $\pi$ is an $M'$-correct branch.
\end{proof}

\ident{\trep}{TRep}

\begin{lemma} \label{lm:composition_branch}
For each regular tree language $M$ one can effectively construct a deterministic parity automaton $\aut{D}$ recognizing the set of $M$-correct infinite traces. %\qed
\end{lemma}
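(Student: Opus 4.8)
The plan is to view an infinite trace $\pi$ over $A\cup\{\dL,\dR\}$ as an $\omega$-word that the automaton $\aut{D}$ reads left-to-right, maintaining three pieces of information in its (finite) state: (i) enough data about the type-automaton $\aut{B}$ for $M$ to know the current profile $p_M(w)$ of the finite prefix $w$ read so far --- this is exactly the deterministic automaton of Lemma~\ref{lm:regular}, so the state space $Q_1\cup Q_2\cup Q_E$ from that proof can be reused, and along a genuine trace we stay inside $Q_1\cup Q_2$; (ii) a one-bit flag recording, each time we pass through a labelled trace and then commit to a direction $d$, whether the side tree $t\restr_{w\bar d}$ that a resolving tree is forced to plug in is required to be \emph{inside} or \emph{outside} the relevant component $Z_{\bar d}$ (this is dictated solely by $p_M(w)$ and $d$, via the definition of ``resolves $M$ up to $\pi$''), and whether such a constraint is even consistent with non-triviality; (iii) a priority, to be explained below. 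First I would check that $\aut{D}$ can indeed be taken deterministic: since each letter of $A$ on an even position updates a $Q_1$-state to a $Q_2$-state and each direction updates a $Q_2$-state to a $Q_1$-state exactly as in Lemma~\ref{lm:regular}, the transition structure on traces is deterministic; non-traces get sent to a rejecting sink.

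The heart of the argument is to show that $M$-correctness of $\pi$ is an $\omega$-regular, in fact deterministic-parity, property of the sequence of profiles $(p_M(w))_{w\prec\pi}$ together with the direction choices. I would argue this by a back-and-forth between two equivalent descriptions of $M$-correctness. On one side, ``$\pi$ is $M$-correct'' means there is a tree $t\in M$ resolving $M$ up to $\pi$; on the other, by Lemma~\ref{lm:is_branches}, if $M$ happened to be recognized by a game automaton $\aut{C}$ this would be equivalent to $\aut{C}$ accepting $\pi$, which is a parity condition on the run of $\aut{C}$ along $\pi$. The point is that even when $M$ is \emph{not} game-recognizable, the set of $M$-correct traces depends only on the locally-game data $p_M(\cdot)$, which is finite-state by Lemma~\ref{lm:regular}; so I would build a canonical ``game automaton of profiles'' $\aut{G}_M$ whose states are the non-trivial profiles in $\prof_M$ and whose transitions mirror the profile computation of Lemma~\ref{lm:unarygivebinary} (a profile $Z_\dL\times\trees A$ goes to $Z_\dL$ on direction $\dL$, etc.), and then read priorities off $\aut{G}_M$. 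Concretely, $\aut{D}$ simulates $\aut{G}_M$ along $\pi$ and declares $\pi$ accepted iff either the run hits a $\ast$-profile without ever hitting the $\emptyset$-profile, or it never hits a $\ast$-profile and the parity condition of $\aut{G}_M$ along $\pi$ is satisfied for \eve --- precisely the ``$\aut{C}$ accepts $\pi$'' clause transplanted to $\aut{G}_M$. The equivalence with the resolving-tree definition is then a direct application of (the single-branch version of) Fact~\ref{ft:resolve}: a resolving tree $t$ up to $\pi$ is exactly a realization in which every side subtree is forced to land on the correct side of its $Z$-component, so membership $t\in M$ reduces, branch point by branch point, to the parity outcome of the run of $\aut{G}_M$ along $\pi$, just as in the proof of Lemma~\ref{lm:is_branches}.

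In more detail, the key steps in order are: (1) recall from Lemma~\ref{lm:regular} the DFA computing $p_M(w)$ and note $\prof_M$ is finite and effective; (2) define the deterministic ``profile game automaton'' $\aut{G}_M$ over $A$ with state set $\prof_M$ (plus a priority assignment to be pinned down in step~4); (3) show that for every infinite trace $\pi$, ``$t$ resolves $M$ up to $\pi$ for some $t\in M$'' is equivalent to ``$\aut{G}_M$ accepts $\pi$ from the profile of the root'', by running the proof of Lemma~\ref{lm:is_branches} verbatim with $\aut{G}_M$ in place of $\aut{C}$ and invoking Fact~\ref{ft:resolve} for the side-hole substitutions; (4) observe that the priority function of $\aut{G}_M$ is recoverable from $M$ --- this is where I would appeal to the fact (used implicitly throughout Sections~\ref{sec:altindex}--\ref{sec:weak_index}) that profiles carry a well-defined priority, or, failing a ready-made reference, simply take the game automaton $\aut{G}_M$ as constructed in the forthcoming characterization of Section~\ref{sec:isgame} and use its priorities; (5) let $\aut{D}$ be the deterministic parity automaton reading $\omega$-words over $A\cup\{\dL,\dR\}$ that checks the input is a trace (deterministic syntactic check), simulates $\aut{G}_M$ along it, and uses the acceptance rule of step~3, with the ``hit $\ast$ but never $\emptyset$'' case handled by an accepting sink and the ``hit $\emptyset$'' case by a rejecting sink.

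The main obstacle I anticipate is step~4 --- making precise and self-contained the claim that the \emph{priorities} of $\aut{G}_M$ (not just its transition structure) are determined by, and computable from, $M$. The transition structure is pure profile bookkeeping and is handled by Lemmas~\ref{lm:regular} and~\ref{lm:unarygivebinary}, but the acceptance condition along a branch genuinely depends on which priorities the putative game automaton would assign to the states realizing each profile; one must show two game automata recognizing $M$ agree on the branch-parity outcome, or circumvent this by defining $M$-correctness directly in terms of ``eventual behaviour of the profile sequence'' (e.g.\ the profile that appears cofinally often, whether the branch gets ``trapped'' below an \adam-branching profile forever, etc.) and then verifying that this combinatorial condition on $(p_M(w))_{w\prec\pi}$ is $\omega$-regular. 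I would take the second route as the safer one: express $M$-correctness as a Boolean combination of "some $\ast$-profile is reached" / "no $\emptyset$-profile is reached" with an eventual-parity condition on the cofinal profiles, prove its equivalence to the resolving-tree definition using Fact~\ref{ft:resolve}, and read off a deterministic parity automaton from that combinatorial description directly, without routing through an external priority assignment.
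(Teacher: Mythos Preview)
Your proposal has a genuine gap, and it is precisely the one you flag in step~4, but the ``safer route'' you sketch does not close it. The construction of $\aut{G}_M$ later in this section \emph{uses} the automaton $\aut{D}$ of Lemma~\ref{lm:composition_branch}: its priorities are inherited from $\aut{D}$. So appealing to $\aut{G}_M$ here is circular. More fundamentally, your whole plan rests on the assertion that $M$-correctness of $\pi$ ``depends only on the locally-game data $p_M(\cdot)$''. This is not established anywhere, and there is no reason to expect it for an arbitrary regular $M$. The definition of $M$-correctness asks for a resolving tree $t$ that lies in $M$; the resolving constraints only pin down, for each side hole, whether the plugged subtree belongs to the profile component $Z_{\bar d}$ or to its complement. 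Two subtrees on the same side of $Z_{\bar d}$ may still have different \emph{types} with respect to a recognizer $\aut{B}$ for $M$, and those types are what govern whether the composed tree lands in $M$. Your ``eventual-parity condition on the cofinal profiles'' never touches this issue: profiles are tree languages, they carry no intrinsic priority, and you give no argument why the acceptance along $\pi$ should be a function of the profile sequence at all. Lemma~\ref{lm:is_branches} and Fact~\ref{ft:resolve} cannot be ``run verbatim with $\aut{G}_M$'' because both are statements about game automata recognizing $M$, whereas here $M$ is an arbitrary regular language that need not be game-recognizable.

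The paper's proof bypasses the whole question of priorities-from-profiles. It fixes a non-deterministic automaton $\aut{B}$ for $M$ and writes an MSO formula over $\omega$ expressing $M$-correctness of $\pi$: existentially guess, for each side hole of $t_\pi$, a \emph{type} $\tau_n \in \types$ (a subset of $Q^{\aut{B}}$, not a profile), check that these types satisfy the resolving constraints dictated by the profile automaton of Lemma~\ref{lm:regular}, and simultaneously guess a run of $\aut{B}$ along the branch that is consistent with those side types and satisfies the parity condition. B\"uchi--McNaughton then yields the deterministic parity automaton $\aut{D}$. The key difference is that the paper works with full types relative to $\aut{B}$, which genuinely determine membership in $M$ by Fact~\ref{ft:substitute}, rather than with profiles, which do not.
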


\begin{proof}
Let $\aut{B}$ be a non-deterministic automaton recognizing the given regular tree language $M$ from a state $g_I$. We show how to express in monadic second-order logic over $\omega$ the fact, that a given $\omega$-word $\pi$ is an $M$-correct branch. By the results \new{of} B\"uchi and McNaughton such a formula $\varphi$ can be effectively translated into a deterministic parity $\omega$-word automaton $\aut{D}$.

\new{As in the proof of Lemma~\ref{lem:is_locally}, we make use of compositional tools. By }$\types\subseteq\mathcal{P}(Q^\aut{B})$ we denote the set of all types of total trees with respect to the automaton~$\aut{B}$.

Intuitively, the formula $\varphi$ guesses the types of the total subtrees that need to be put in the side holes of $t_\pi$ to obtain a tree $t$ that resolves $M$ up to $\pi$. Basing on \new{these} guessed types $\varphi$ can verify that $t\in M$.

Recall that an infinite trace $\pi$ is defined as a word in the language $\left(A\cdot \{\dL,\dR\}\right)^\omega$. A witness for the existence of $t$ will be an infinite word over the alphabet $A\cup\left(\{\dL,\dR\}\times\types\right)$ denoted $\widehat{\pi}$ and called \emph{enrichment of $\pi$ by types}. We require that every even position of $\widehat\pi$ belongs to $A$ and every odd position belongs to $\{\dL,\dR\}\times\types$.
\[\widehat\pi=a_0\cdot(d_0, \tau_0)\cdot a_1\cdot (d_1,\tau_1)\cdot\ldots\]
If $w=a_0\cdot d_0\cdot a_1\cdot d_1\cdot\ldots \cdot d_{n-1}\cdot a_n$ is a labelled trace that is a prefix of $\pi$ we say that $d_n$ is the \emph{final direction of $w$} and $\tau_n$ is the \emph{final type of $w$}.

Let the formula $\psi_\dR$ express that for every trace $w\prec \pi$ with final direction $d$, final type $\tau$, and such that $S$ is the state of $\aut{A}$ (see Lemma~\ref{lem:is_locally}) after reading $w$, the following conditions are satisfied:
\begin{itemize}
\item if $S=S_\dL\times\types\cup\types\times S_\dR$ then \new{$\tau\notin S_{\bar{d}}$},
\item otherwise $\tau\in\pi_i(S)$ where $i=1,2$ for $d=\dL,\dR$ respectively.
\end{itemize}

Note that every $\widehat\pi$ that is \new{an} enrichment of $\pi$ by types induces a total tree $t_{\widehat\pi}$ over the alphabet $A$ where in the \new{$n$-th} side hole of $\pi$ we put some total tree of type $\tau_n$. Note also that $\widehat\pi\models\psi_\dR$ if and only if $t_{\widehat\pi}$ resolves $M$ up to $\pi$. In particular the exact subtrees we put into side holes of $t_\pi$ are irrelevant, we only need to take care of their types.

What remains is to express in MSO logic on $\widehat\pi$ that $t_{\widehat\pi}\in M$. For this we say that there exists an infinite word $\rho$ coding a run the non-deterministic automaton $\aut{B}$ on $t_\pi$. Formally a \emph{word coding a run $\rho$} is defined as a word over the alphabet $Q^B\cup\{\ast\}\cup D^B$ where $D^B$ is the set of all deterministic transitions appearing in the transitions of $\aut{B}$. The elements of $Q^B\cup\{\ast\}$ are supposed to appear on even positions of $\rho$ and elements of $D^B$ are supposed to appear on odd positions of $\rho$:
\[\rho=q_0\cdot b_0\cdot q_1\cdot b_1\cdot\ldots\]

Let the formula $\psi_M$ express the following facts about the combination of words $\widehat\pi\otimes\rho$ in the language \[\left[A\times\left(Q^B\cup\{\ast\}\right)\cdot \left(\left(\{\dL,\dR\}\times\types\right)\times D^B\right)\right]^\omega\]
\begin{itemize}
\item the state $q_0$ equals $q_I$,
\item for every $n$ the transition $b_n$ is one of the deterministic transitions appearing in $\delta^B(q_n,a_n)$,
\item for every $n$ the state assigned to $\bar{d_n}$ by $b_n$ (if any) belongs to $\tau_n$,
\item for every $n$ the state assigned to $d_n$ by $b_n$ (if any) equals $q_{n+1}$, if there is no such state then $q_{n+1}=\ast$,
\item either from some point on $b_n=\ast$ or the parity condition is satisfied by the sequence of states $q_0,q_1,\ldots$.
\end{itemize}

Note, that $\widehat\pi\otimes\rho\models \psi_M$ if and only if $\rho$ encodes an accepting run of $\aut{B}$ on $t_\pi$ that assigns to the $n$'th hole of $t_\pi$ a state belonging to $\tau_n$. Therefore, $\widehat{\pi}\otimes\rho\models\psi_M$ if and only if the run encoded by $\rho$ can be extended to an accepting run of $\aut{B}$ on $t_{\widehat\pi}$.

Let $\varphi$ express for a given infinite trace $\pi$ that there exists an enrichment of $\pi$ by types $\tau_n$ and an encoding of run $\rho$ such that $\widehat\pi\models\psi_\dR$ and $\widehat\pi\otimes\rho\models\psi_M$. Note that $\pi\models\varphi$ if and only if there exists a tree $t=t_{\widehat\pi}\in M$ that realizes $\pi$ and that resolves $M$ up to $\pi$.
\end{proof}

\medskip

We define $\aut{G}_M$ as a product of $\aut{A}$ and $\aut{D}$, with priorities inherited from $\aut{D}$ and the types of transitions ($\lor$, $\land$, etc.) determined by the type of profile computed by $\aut{A}$. More precisely, for $a\in A$, $(p,q) \in Q^\aut{A}\times Q^\aut{D}$, $\tau =\tau^\aut{A}(\delta^{\aut{A}}(p,a))$, define  $\delta\big((p,q), a\big)$ as
\begin{align*} 
\top &\quad \text{if} \quad \tau\in \{ \ast, \trees{A}\times \trees{A}\};\\
\bot &\quad \text{if} \quad\tau= \emptyset;\\
\beta(p, q, a, \dL) &\quad \text{if} \quad\tau = Z_\dL \times \trees{A};\\
\beta(p, q, a, \dR) &\quad \text{if} \quad\tau= \trees{A} \times Z_\dR ;\\
\beta(p, q, a, \dL) \lor \beta(p, q, a, \dR)  &\quad \text{if} \quad
 \tau = Z_\dL \times \trees{A} \cup  \trees{A} \times Z_\dR;\\
\beta(p, q, a, \dL) \land \beta(p, q, a, \dR) &\quad \text{if} \quad\tau = Z_\dL \times Z_\dR;
\end{align*}
where $\beta(p, q, a, d)$ is defined as $\big ( (\delta^\aut{A} (p, ad), \delta^\aut{D} (q, ad)), d\big)$.
Let $q_M=(q_I^\aut{A}, q_I^\aut{D})$.

\begin{theorem}
A regular language $M$ is recognized by a game automaton iff $M$ is locally game and $\lang(\aut{G}_M, q_M) = M$.
\end{theorem}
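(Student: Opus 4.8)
The right-to-left implication is immediate: $\aut{G}_M$ is by construction a game automaton, so if $\lang(\aut{G}_M, q_M) = M$ then $M$ is recognized by a game automaton. For the converse, suppose $M = \lang(\aut{B}, q_I^{\aut{B}})$ for a game automaton $\aut{B}$. We may assume that $M$ is non-trivial (the statement being read under the standing non-triviality conventions), and hence, again by convention, that every state of $\aut{B}$ is non-trivial. By Corollary~\ref{rem:is_locally} the language $M$ is locally game, so $\aut{G}_M$ is well-defined, and it remains to prove $\lang(\aut{G}_M, q_M) = M$. Fix a total tree $t$ and write $w_v$ for the trace read along the path to a node $v$; the plan is to show that the games $\rhogame(\aut{G}_M, t, q_M)$ and $\rhogame(\aut{B}, t, q_I^{\aut{B}})$ have the same winner.

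The first and main step is to show that \emph{the two runs on $t$ have the same shape}. By induction on $|v|$ one checks that for every node $v$ one of the following holds: either (a) $\rho(\aut{G}_M, t, q_M)(v)$ is a state $(p,q)$ with $p = \delta^{\aut{A}}(q_I^{\aut{A}}, w_v)$ and $q = \delta^{\aut{D}}(q_I^{\aut{D}}, w_v)$, while $\rho(\aut{B}, t, q_I^{\aut{B}})(v)$ is a state $r$ with $\lang(\aut{B}, r) = p_M(w_v)$ a non-trivial unary profile; or (b) both runs take the same value in $\{\top,\bot,\ast\}$ at $v$. The base case $v=\varepsilon$ is immediate, since $p_M(\varepsilon)=M$ is non-trivial. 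For the step, in case (a) the profile of the labelled trace $w_v\, t(v)$ equals $t(v)^{-1} p_M(w_v)$; on the one hand this is the value $\tau^{\aut{A}}(\delta^{\aut{A}}(p, t(v)))$ produced by $\aut{A}$ (Lemma~\ref{lm:Acorrectness}), and on the other it equals $\lang(\aut{B}, \delta_{\aut{B}}(r, t(v)))$ (Lemma~\ref{lem:is_locally}). Since $p_M(w_v)$ is non-trivial this profile is not $\ast$; and since $\aut{B}$ has only non-trivial states, a binary profile of one of the forms $\emptyset$, $\trees{A}\times\trees{A}$, $Z_\dL\times\trees{A}$, $\trees{A}\times Z_\dR$, $(Z_\dL\times\trees{A})\cup(\trees{A}\times Z_\dR)$, $Z_\dL\times Z_\dR$ forces $\delta_{\aut{B}}(r, t(v))$ to have respectively the shape $\bot$, $\top$, $(\cdot,\dL)$, $(\cdot,\dR)$, $\lor$, $\land$ --- which is exactly the shape the construction assigns to $\delta_{\aut{G}_M}((p,q), t(v))$. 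Hence the two children of $v$ again satisfy (a) or (b), consistently. In particular the two games have the same arena, the same partition into $\eve$- and $\adam$-positions, and the same positions looping at $\top$ (a win for $\eve$) and at $\bot$ (a win for $\adam$); they can differ only in the priorities attached to state-positions.

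The second step is that \emph{infinite plays receive the same winner in both games}. Let $\pi$ be an infinite branch of the common arena, i.e.\ an infinite trace along which both runs stay in case (a). Along $\pi$ the priorities used by $\aut{G}_M$ are, by construction, those of the run of $\aut{D}$ on $\pi$, so $\eve$ wins the play along $\pi$ in $\rhogame(\aut{G}_M, t, q_M)$ if and only if $\aut{D}$ accepts $\pi$, i.e.\ (Lemma~\ref{lm:composition_branch}) if and only if $\pi$ is $M$-correct. For $\aut{B}$, along $\pi$ the run never reaches $\ast$ or $\bot$, so $\aut{B}$ accepts $\pi$ from $q_I^{\aut{B}}$ precisely when $\eve$ wins the corresponding play; by Lemma~\ref{lm:is_branches} this happens if and only if $\pi$ is $\lang(\aut{B}, q_I^{\aut{B}})$-correct, i.e.\ $M$-correct. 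So both games award $\eve$ the win on $\pi$ for exactly the same trees, and on a finite play (a $\top$- or $\bot$-loop) the winner coincides by the first step.

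Consequently a subtree $\sigma$ of $t$ is a winning strategy for $\eve$ in $\rhogame(\aut{G}_M, t, q_M)$ if and only if it is one in $\rhogame(\aut{B}, t, q_I^{\aut{B}})$; since exactly one player wins each game, $\eve$ wins the first if and only if she wins the second, that is, $t \in \lang(\aut{G}_M, q_M)$ if and only if $t \in M$. As $t$ was arbitrary, $\lang(\aut{G}_M, q_M) = M$. The delicate part of the argument is the first step --- in particular, verifying that a node which is ``live'' for $\aut{G}_M$ can never carry a labelled trace of profile $\ast$ (so that the clause sending profile $\ast$ to $\top$ in the definition of $\aut{G}_M$ is never triggered at a live position), and more generally that the bookkeeping of $\ast$, $\top$ and $\bot$ in the two run definitions stays synchronized.
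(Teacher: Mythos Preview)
Your proof is correct and follows essentially the same approach as the paper: show that the run-induced games of $\aut{G}_M$ and $\aut{B}$ on a fixed tree $t$ are isomorphic ignoring priorities (your first step, the paper's appeal to Lemma~\ref{lem:is_locally}), and then that along every infinite branch the two parity conditions agree because both reduce to $M$-correctness of the trace (your second step, the paper's appeal to Lemma~\ref{lm:is_branches} and the construction). Your write-up is simply more explicit about the induction establishing the shared arena and about the handling of $\top$/$\bot$ loops, and you correctly isolate the point that non-triviality of $\aut{B}$'s states is what makes the map from transition shapes to binary-profile forms injective.
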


\begin{proof}
Assume that $M = \lang(\aut{B}, q_I^\aut{B})$ for some game automaton $\aut{B}$ and $q_I^\aut{B}\in Q^\aut{B}$.  By Corollary~\ref{rem:is_locally}, $M$ is locally game. Fix $t\in\trees{A}$ and let $\rho_M=\rho(\aut{G}_M, t, q_M)$ and $\rho_\aut{B}=\rho(\aut{B}, t,q_I^\aut{B})$. By Lemma~\ref{lem:is_locally}, $p_M(w)$ determines the profiles of the corresponding transitions in $\rho_\aut{B}$ and $\rho_M$. Hence, the games associated to these runs are isomorphic if the priorities are ignored. Let $\pi$ be an infinite trace in $t$. By the construction, $\aut{G}_M$ accepts $\pi$ from $q_M$ iff $\pi$ is $M$-correct. By Lemma~\ref{lm:is_branches}, $\pi$ is $M$-correct iff $\aut{B}$ accepts $\pi$ from $q_I^\aut{B}$. It follows that $\rho_\aut{B}$ is accepting iff $\rho_M$ is accepting.
\end{proof}

\medskip

As an immediate corollary we obtain the following.

\begin{theorem}\label{thm:game} 
Given an alternating automaton $\aut{A}$ and a state $q_I$, it is decidable whether $\lang(\aut{A},q_I)$ is recognized by a game automaton.  If so, some game automaton recognizing $\lang(\aut{A},q_I)$ can be effectively constructed from $\aut{A}$ and $q_I$.
\end{theorem}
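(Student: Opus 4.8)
The plan is to obtain Theorem~\ref{thm:game} directly from the preceding theorem, which characterizes recognizability by a game automaton: writing $M = \lang(\aut{A}, q_I)$, the language $M$ is recognized by a game automaton if and only if $M$ is locally game and $\lang(\aut{G}_M, q_M) = M$. So it suffices to show that (a) it is decidable whether $M$ is locally game, (b) in that case $\aut{G}_M$ and $q_M$ can be constructed effectively, and (c) the equality $\lang(\aut{G}_M, q_M) = M$ is decidable.

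For (a) and (b), first note that from the alternating automaton $\aut{A}$ and the state $q_I$ one effectively obtains a (say, non-deterministic) automaton recognizing the regular language $M$ by the standard simulation and complementation constructions. Lemma~\ref{lm:regular} then yields a finite automaton that, on a word over $A \cup \{\dL,\dR\}$, reports $\errTrace$, $\errProfile$, or a finite representation of $p_M(w)$; inspecting its (finitely many) reachable states we can decide whether $\errProfile$ is ever output, i.e.\ whether every trace has a profile in $M$, which is exactly being locally game. Moreover, by the same lemma the set $\prof_M$ of profiles is finite and computable, so this automaton together with its labelling $\tau^\aut{A}$ is precisely the automaton $\aut{A}$ used in the definition of $\aut{G}_M$; and Lemma~\ref{lm:composition_branch} provides an effective construction of the deterministic parity automaton $\aut{D}$ recognizing the $M$-correct infinite traces. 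Forming the product exactly as in the definition of $\aut{G}_M$---priorities inherited from $\aut{D}$, and the shape of each transition ($\top$, $\bot$, single direction, $\lor$, or $\land$) dictated by the profile computed by $\aut{A}$---gives the game automaton $\aut{G}_M$ and its initial state $q_M = (q_I^\aut{A}, q_I^\aut{D})$, all effectively from $\aut{A}$ and $q_I$.

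For (c), $\aut{G}_M$ and $\aut{A}$ are finite automata for regular tree languages, and equivalence of regular tree languages is decidable (e.g.\ by Rabin's complementation together with closure under intersection and decidability of emptiness). Hence the decision procedure is: check whether $M$ is locally game; if not, answer \emph{no}; if so, build $\aut{G}_M$ and test whether $\lang(\aut{G}_M, q_M) = M$; answer \emph{yes}, returning $\aut{G}_M$, if the equality holds, and \emph{no} otherwise. Correctness is immediate from the preceding theorem. The only non-routine ingredients are already isolated in the cited lemmas---the effectivity of computing profiles (Lemma~\ref{lm:regular}), which rests on compositionality of MSO over trees, and the construction of the $\omega$-word automaton for $M$-correct traces (Lemma~\ref{lm:composition_branch}); relative to these, the present statement is a straightforward assembly of effective constructions and the decidability of equivalence for regular tree languages, so I do not expect a genuine obstacle here.
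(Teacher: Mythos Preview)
Your proposal is correct and takes essentially the same approach as the paper: the paper presents Theorem~\ref{thm:game} as an immediate corollary of the preceding characterization theorem, relying on the effectivity already established in Lemmas~\ref{lm:regular} and~\ref{lm:composition_branch} together with decidability of equivalence for regular tree languages. You have simply spelled out the routine assembly that the paper leaves implicit.
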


%%% Local Variables:
%%% TeX-master: "journal_TOCL"
%%% End:

%\clearpage
% Bibliography
\bibliographystyle{ACM-Reference-Format-Journals}
\bibliography{mskrzypczak}

% History dates
\received{June 2015}{XXX}{XXX}

\end{document}